\documentclass[a4paper,USenglish,cleveref, autoref, thm-restate,numberwithinsect]{lipics-v2021}


\newcommand{\remove}[1]{}

\newcommand{\vc}{\textsf{vc}}
\newcommand{\ndd}{\textsf{nd}}
\newcommand{\clp}{\textsf{pc}}
\newcommand{\vi}{\textsf{vi}}

\newcommand{\defproblem}[3]{
  \vspace{3mm}
\noindent\fbox{
  \begin{minipage}{.95\textwidth}
  \begin{tabular*}{\textwidth}{@{\extracolsep{\fill}}lr} #1  \\ \end{tabular*}
  {\bf{Input:}} #2  \\
  {\bf{Objective:}} #3
  \end{minipage}
  }
  \vspace{2mm}
  }

\newcommand{\defdecproblem}[3]{
  \vspace{3mm}
\noindent\fbox{
  \begin{minipage}{.95\textwidth}
  \begin{tabular*}{\textwidth}{@{\extracolsep{\fill}}lr} #1  \\ \end{tabular*}
  {\bf{Input:}} #2  \\
  {\bf{Question:}} #3
  \end{minipage}
  }
  \vspace{2mm}
  }

\newcommand{\cA}{{\mathcal A}}

\newcommand{\CC}{{\mathcal C}}
\newcommand{\DD}{{\mathcal D}}

\newcommand{\FF}{\ensuremath{\mathcal{F}}\xspace}
\newcommand{\GG}{{\mathcal G}}

\newcommand{\OO}{\mathcal{O}}

\newcommand{\UU}{{\mathcal U}}

\newcommand{\YY}{{\mathcal Y}}

\newcommand{\nn}{{\mathbb N}}

\newcommand{\nka}{${\rm NP \subseteq coNP/poly}$}

\newcommand{\sol}{\sf Sol}

\newcommand{\MC}{\sc Matching Cut}

\newcommand{\dcut}{\sc $d$-Cut}

\newcommand{\enumdcut}{\sc Enum $d$-Cut}
\newcommand{\enummaxdcut}{\sc Enum Max-$d$-Cut}
\newcommand{\enummindcut}{\sc Enum Min-$d$-Cut}

\newcommand{\nd}{{\rm nd}}
\newcommand{\mch}{\sf mch}

\newcommand{\sv}[1]{}

\newif\iflong\longtrue

\usepackage{tikz}
\usetikzlibrary{decorations.shapes,decorations.pathreplacing,decorations.pathmorphing}
\usetikzlibrary{arrows,matrix,shapes}
\usetikzlibrary{positioning}
\tikzset{
        stars/.style={star,inner sep=2pt}
    }

\usepackage[disable]{todonotes}
\usepackage[ruled,vlined,linesnumbered]{algorithm2e}
\usepackage{hyperref}
\usepackage{xcolor}
\usepackage{cleveref}

\usepackage{microtype}

\newtheorem{reduction rule}{\bf Reduction Rule}[section]

\newtheorem{obs}{\bf Observation}[section]

\bibliographystyle{plainurl}

\iflong \title{Enumeration Kernels of Polynomial Size for Cuts of Bounded Degree}
\else \title{Enumeration Kernels of Polynomial Size for Cuts of Bounded Degree}
  \fi

\titlerunning{Enumeration Kernels for Bounded Degree Cuts} 

\author{Christian Komusiewicz}{University of Jena, Germany \and \url{https://www.fmi.uni-jena.de/en/6050/prof-dr-christian-komusiewicz}}{c.komusiewicz@uni-jena.de}{https://orcid.org/0000-0003-0829-7032}{}

\author{Diptapriyo Majumdar}{Indraprastha Institute of Information Technology Delhi, New Delhi, India \and \url{https://diptapriyomajumdar.wixsite.com/toto}}{diptapriyo@iiitd.ac.in}{https://orcid.org/0000-0003-2677-4648}{}


\authorrunning{C. Komusiewicz and D. Majumdar} 

\Copyright{Christian Komusiewicz and Diptapriyo Majumdar} 

\ccsdesc[100]{Theory of computation~Parameterized complexity and exact algorithms} 

\keywords{Parameterized Complexity, Enumeration Algorithms, $d$-Cut, Parameterized Enumeration, Polynomial-Delay Enumeration Kernelization} 




\funding{Research of Diptapriyo Majumdar has been supported by Science and Engineering Research Board (SERB) grant SRG/2023/001592.}


\nolinenumbers 

\EventEditors{Jane Open Access and Joan R. Public}
\EventNoEds{2}
\EventLongTitle{42nd Conference on Very Important Topics (CVIT 2016)}
\EventShortTitle{CVIT 2016}
\EventAcronym{CVIT}
\EventYear{2016}
\EventDate{December 24--27, 2016}
\EventLocation{Little Whinging, United Kingdom}
\EventLogo{}
\SeriesVolume{42}
\ArticleNo{23}

\begin{document}

\maketitle

\begin{abstract}
Enumeration Kernelization was first proposed by Creignou et al. [TOCS 2017] and was later refined by Golovach et al. [JCSS 2022, STACS 2021] into two different variants: fully-polynomial enumeration kernelization and polynomial-delay enumeration kernelization.
A central problem in the literature on 
this topic is the {\MC}. 
In this paper, we consider the more general NP-complete {\dcut} problem. 
%
%
%
The decision version of {\dcut} asks if a given undirected graph $G$ has a cut $(A, B)$ such that every $u \in A$ has at most $d$ neighbors in $B$ and every $v \in B$ has at most $d$ neighbors in $A$.
%
In this paper, we study three different enumeration variants of {\dcut} -- {\enumdcut}, {\enummindcut} and {\enummaxdcut} in which one aims to enumerate the $d$-cuts, the {\em minimal} $d$-cuts and the {\em maximal} $d$-cuts, respectively.
We consider various structural parameters of the input graph such as its vertex cover number, neighborhood diversity, and clique partition number.
When vertex cover number (${\vc}$) and neighborhood diversity (${\ndd}$) are considered as the parameter, we provide polynomial-delay enumeration kernels of polynomial size for {\enumdcut} and {\enummaxdcut} and fully-polynomial enumeration kernels of polynomial size for {\enummindcut}.
When clique partition number (${\clp}$) is considered as the parameter, we provide bijective enumeration kernels for each of {\enumdcut}, {\enummindcut}, and {\enummaxdcut}.
\end{abstract}



\newpage

\section{Introduction}
\label{sec:intro}

%
Kernelization \cite{CyganFKLMPPS15,DowneyF13,FominLSZ19} is a central subfield of parameterized complexity that was initially explored for decision problems.
The objective of a kernelization algorithm (or kernel) is to shrink the input instance in polynomial time into an equivalent instance  the size of which is dependent only on the parameter.
The effectiveness of a kernelization (or a preprocessing algorithm) is determined by the size bound of the kernel.
Later, several extensions of kernelization have been adopted for parameterized optimization problems \cite{LokshtanovPRS17}, counting problems \cite{JansenS23,LokshtanovM0Z24}, and enumeration problems \cite{CreignouMMSV17,Damaschke06,GolovachKKL22}.
\iflong In the case of parameterized enumeration problems, a small kernel can be viewed as a ``compact representation'' of all  solutions of the original instance. 
The first attempt was suggested by Damaschke \cite{Damaschke06} where the aim was to shrink the input instance to an output instance containing all minimal solutions of size at most $k$.
This definition, however, is limited to subset minimization problems.
Observing this as a drawback, Creignou et al.~\cite{CreignouKMMOV19,CreignouMMSV17,CreignouKPSV19} introduced a modified notion which asks for two algorithms:
\else To make the kernelization framework flexible enough to handle enumeration problems with many solutions, Creignou et al.~\cite{CreignouKMMOV19,CreignouMMSV17,CreignouKPSV19} introduced a notion which asks for two algorithms:
\fi
The first algorithm is the {\em kernelization algorithm} that shrinks the input to an instance (the kernel) whose size is bounded by the parameter.
The second algorithm is the {\em solution-lifting algorithm} that given a solution of the kernel, outputs a collection of solutions to the input instance. 
The definition of Creignou et al.~\cite{CreignouKMMOV19,CreignouMMSV17,CreignouKPSV19} allows the solution-lifting algorithm to run in FPT-delay.

Golovach et al.~\cite{GolovachKKL22} observed that allowing FPT-delay is, however, too weak: In this case, the solution-lifting algorithm may  {\em ignore} the kernel when outputting all solutions to the input instance which implies that many problems with FPT enumeration algorithms have kernels of constant size. Consequently, the instance produced by the kernel is not necessarily a representation of the original instance.  To remedy this, Golovach et al.~\cite{GolovachKKL22} introduced two {\em refined} notions of enumeration kernels: {(i)} {\em fully-polynomial enumeration kernels} and {(ii)} {\em polynomial-delay enumeration kernels} where the solution-lifting algorithms must run in polynomial time and polynomial delay, respectively (see Definitions \ref{defn:poly-delay-enum-kernel}).
  This definition has several desirable consequences~\cite{GolovachKKL22}: First an enumeration problem admits a polynomial-delay (or polynomial-time) enumeration algorithm if and only if it admits a polynomial-delay (fully-polynomial) enumeration kernel of constant size. This is analogous to the fact that a problem admits a polynomial-time algorithm if and only if it admits a kernel of constant size. 
  Second, a parameterized enumeration problem admits an FPT-delay algorithm if and only if it admits a polynomial-delay enumeration kernel. This \iflong automatic enumeration\fi kernel however has exponential size\iflong with respect to the parameter\fi.
As for decision problems, this begs the following question: if a parameterized enumeration problem admits an FPT-delay algorithm, does it admit a polynomial-delay enumeration kernel of {\em polynomial size}?
Such a kernel gives us a \emph{compact} representation of all solutions of the original instance, and a guarantee that a collection of `equivalent' solutions for each kernel solutions can be enumerated very efficiently. In this work we aim to extend the set of enumeration problems for which such kernelizations are known. 

Motivated by the work on polynomial-delay enumeration kernelization for {\MC} by Golovach et al.~\cite{GolovachKKL22} and the work on standard kernelization for the more general {\dcut} problem by Gomes and Sau \cite{GomesS21}, we initiate a systematic study of the {\dcut} problem from the perspective of enumeration kernelizations, with the focus of providing fully-polynomial enumeration kernelizations and polynomial-delay enumeration kernelizations. The {\dcut} problem can be formalized as follows.
Given a graph $G = (V, E)$, a {\em cut} is a partition $(A, B)$ of $V(G)$.
A cut $(A, B)$ is a {\em $d$-cut} of $G$ if every vertex of $A$ has at most $d$ neighbors in $B$ and every vertex of $B$ has at most $d$ neighbors in $A$. For a fixed integer $d \geq 1$, the decision version of  {\dcut} is now defined as follows.

\defdecproblem{{\dcut}}{A simple undirected graph $G = (V, E)$.}{Is there a cut $(A, B)$ of $G$ such that every vertex $u \in A$ has at most $d$ neighbors in $B$ and every vertex $u \in B$ has at most $d$ neighbors in $A$?}

For the enumeration variants of {\dcut}, it is more meaningful to characterize $d$-cuts via \emph{edge cuts}. For two disjoint vertex sets~$A$ and~$B$ in $G$, let~$E_G(A,B)$ denote the set of edges with one endpoint in~$A$ and the other endpoint in~$B$. If~$(A,B)$ is a cut, then $E_G(A, B)$ is an {\em edge cut} and an edge cut~$E_G(A,B)$ is a~$d$-cut if every vertex of~$V(G)$ is an endpoint of at most~$d$ edges of~$E_G(A,B)$. This viewpoint of the problem allows to reason about minimality of and maximality of cuts: A $d$-cut $(A, B)$ is {\em (inclusion) minimal} if there is no $d$-cut $(A', B')$ such that $E_G(A',B') \subsetneq E_G(A, B)$.
Similarly, a $d$-cut $(A, B)$ of $G$ is {\em (inclusion) maximal} if there does not exist any $d$-cut $(A', B')$ of $G$ such that $E_G(A, B) \subsetneq E_G(A', B')$.
  This gives rise to the following three enumeration variants of {\dcut}\iflong denoted as {\enummindcut}, {\enummaxdcut}, and {\enumdcut}, respectively\fi.

\iflong 
\defproblem{{\enumdcut}}{A simple undirected graph $G = (V, E)$.}{Enumerate all the $d$-cuts of $G$.}

\defproblem{{\enummindcut}}{A simple undirected graph $G = (V, E)$.}{Enumerate all minimal $d$-cuts of $G$.}

\defproblem{{\enummaxdcut}}{A simple undirected graph $G = (V, E)$.}{Enumerate all maximal $d$-cuts of $G$.}
\else
\defproblem{{\enumdcut}/{\enummindcut}/{\enummaxdcut}}{A simple undirected graph $G = (V, E)$.}{Enumerate all/all minimal/all maximal $d$-cuts of $G$.}

\fi

We underline that in our definition, two $d$-cuts $(A, B)$ and $(A', B')$ are {\em identical} if their edge cuts are the same, that is, if $E_G(A, B) = E_G(A', B')$, and  {\em distinct} otherwise.
If we put $d = 1$, then the {\dcut} is precisely the {\MC} problem.
Both {\MC} and {\dcut} are NP-complete problem even in special classes of graphs \cite{Bonsma09,ChenHLLP21,Chvatal84,KratschL16,LeL19,LeT22,LuckePR22,PatrignaniP01} and have been explored from the perspective of parameterized complexity \cite{AravindKK17,AravindKK22,AravindS21,GomesS21,KomusiewiczKL20}.


\subparagraph{Our Contributions and Motivations Behind the Choice of Parameters.}
\todo{C: This should be updated at the very end. }
We apply both types of enumeration kernelizations to {\enummindcut}, {\enumdcut}, and {\enummaxdcut}.
As {\dcut} is NP-hard \cite{Chvatal84}, it is unlikely that {\enumdcut} and other variants admit polynomial-delay enumeration algorithms.
The structural parameters considered in this paper are the size of a minimum vertex cover~(${\vc}$), the neighborhood diversity~(${\ndd}$), and the clique-partition number~(${\clp}$) of the input graph.
The motivation behind the choice of these parameters is 
that none of these problems admit enumeration kernelizations of polynomial size when parameterized by a combined parameter of {\em solution size}, {\em maximum degree}, and {\em treewidth of the input graph} (see \cite{GomesS21}).
In fact, polynomial-size kernels are also excluded for parameterization by the vertex-integrity of the input graph (see Proposition~\ref{prop:no-poly-kernel-treedepth}).
Hence, it is natural to select structurally larger parameters such as minimum vertex cover size (${\vc}$) and neighborhood diversity (${\ndd}$).
\iflong
\begin{figure}[t]
\centering
	\includegraphics[scale=0.26]{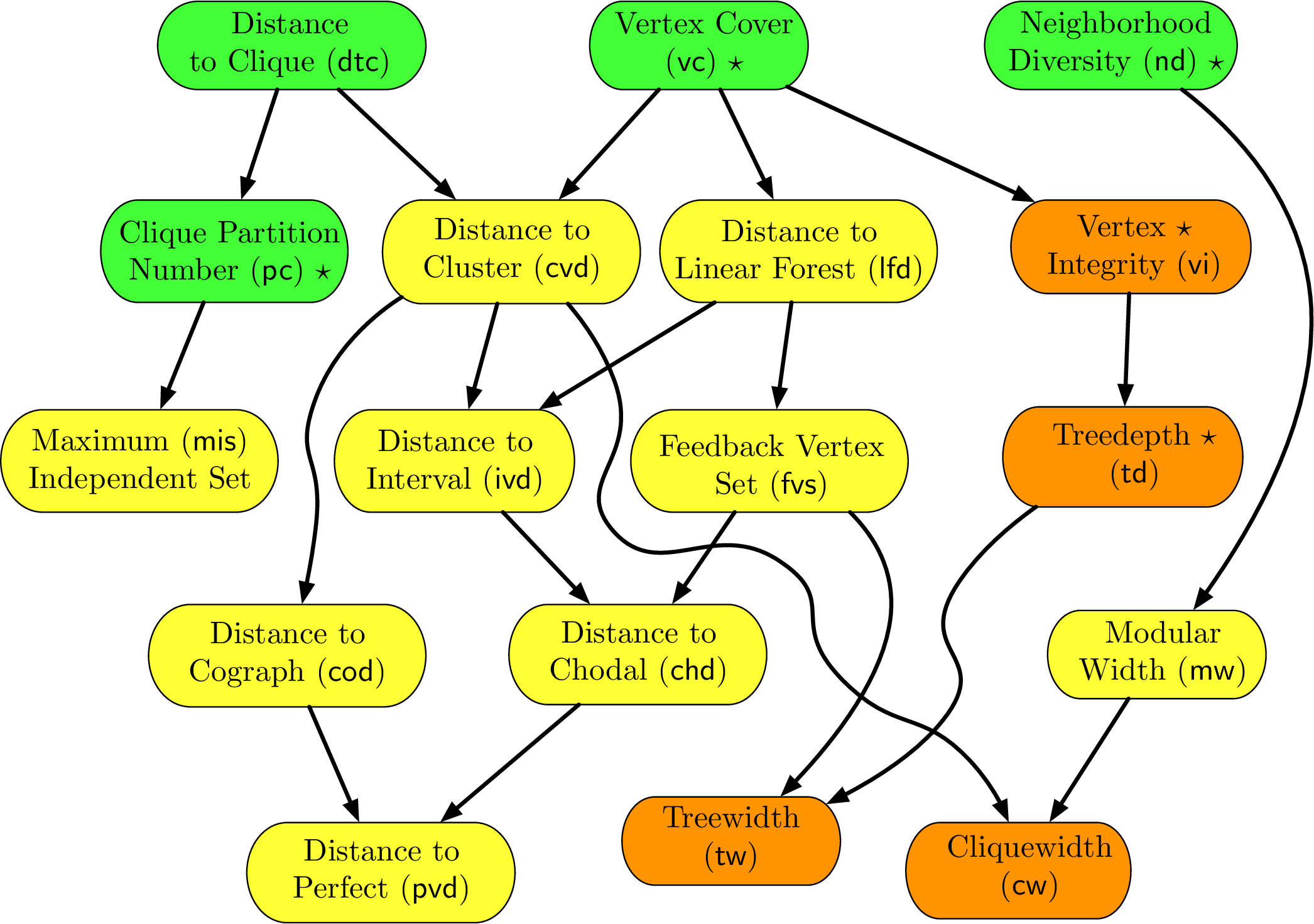}
	\caption{An overview of parameter hierarchy. 
          A green box shows the existence of (polynomial-delay) enumeration kernelizations of polynomial size. An orange box indicates non-existence of enumeration kernelizations of polynomial size unless {\nka}. A pink box indicates the NP-hardness for a constant parameter value. A yellow box indicates an open status of the existence of (polynomial-delay) enumeration kernelizations of polynomial size.
	The results of this paper are marked with a $\star$.}
\label{fig:parameter-hierarchy} 
\end{figure} \fi
Refer to \iflong Figure \ref{fig:parameter-hierarchy} for an illustration of parameters and to \fi Table~\ref{table-our-results-summary} for a short summary of our results.

Our main results for {\enummindcut}, {\enumdcut}, and {\enummaxdcut} are for the parameterization by  the vertex cover number (${\vc}$) of the input graph.
We prove these three results in Section~\ref{sec:improved-vc-kernels}.

\begin{restatable}{theorem}{ThmVCminimal}
\label{thm:min-d-cut-vc-enum}
For every fixed $d \geq 1$, {\enummindcut} parameterized by ${\vc}$ admits a fully-polynomial enumeration kernel with $\OO(d {\vc}^2)$ vertices.
\end{restatable}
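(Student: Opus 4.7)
\emph{Proof plan.} Compute a vertex cover $S$ of $G$ with $|S| \le 2\vc$ by $2$-approximation, set $I = V(G) \setminus S$, and partition $I$ by neighborhood: $I_T = \{v \in I : N_G(v) = T\}$ for each $T \subseteq S$. Fix a minimal $d$-cut $(A, B)$ and write $S_A = S \cap A$, $S_B = S \cap B$. Two structural observations drive the kernelization. \emph{(a)} By a single-vertex flip of $v \in I_T$ from side $X$ to its opposite: if $T \cap S_X = \emptyset$ and $T \neq \emptyset$, the flip removes the $|T|$ edges $v$ contributes to the cut, adds none, and preserves the $d$-cut property, contradicting minimality. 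Hence $T \subseteq S_A$ forces $I_T \subseteq A$ (and symmetrically), so such types contribute no cut edges. \emph{(b)} Whenever $T$ straddles the cut, the degree bound at any $u \in T \cap S_A$ forces $|I_T \cap B| \le d$ and dually $|I_T \cap A| \le d$; in particular, $|I_T| > 2d$ forces $T$ to be non-straddling in every $d$-cut.

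These observations justify three reductions. \textbf{(R1)} Delete every $v \in I$ with $|N(v)| \le 1$: by (a) they are uniquely placed in any minimal cut (on the side of their $S$-neighbor, if any), and carry no cut edges. \textbf{(R2)} For each type $T$ with $|I_T| > 2d+1$, delete all but $2d+1$ representatives of $I_T$; by (b) this keeps $T$ non-straddling in the kernel whenever it was in $G$, and the representatives are interchangeable since they share the neighborhood $T$. \textbf{(R3)} For each unordered pair $\{u, w\} \subseteq S$, mark up to $2d+1$ vertices $v \in I$ with $\{u, w\} \subseteq N_G(v)$, then delete every unmarked vertex. After (R1) every surviving $v$ has $|N(v)| \ge 2$ and is hence a common neighbor of at least one pair, so (R3) is well-defined and leaves at most $|S| + (2d+1)\binom{|S|}{2} = \OO(d\,\vc^2)$ vertices.

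The lifting algorithm, on input a minimal $d$-cut $(A', B')$ of the kernel, reads off $(S_{A'}, S_{B'})$ and reconstructs all minimal $d$-cuts of $G$ that project to it. For each deleted $v \in I_T$: if $T$ is non-straddling under $(S_{A'}, S_{B'})$ then its side is forced by (a); if $T$ straddles, the algorithm enumerates the polynomially many valid distributions of the original $I_T$-vertices across $A$ and $B$ subject to the bounds in (b) and the same-side witness condition from (a). Every minimal $d$-cut of $G$ is obtained exactly once via this correspondence (by the interchangeability of $I_T$-vertices under a fixed $(S_A, S_B)$), and the total running time per kernel solution is polynomial, yielding the fully-polynomial guarantee.

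The hardest step is (R3): it must be shown that the marking suffices to preserve \emph{minimality} of every $d$-cut of $G$, not merely feasibility. Concretely, a deleted $v \in I_T$ may have acted as the unique same-side witness for some marked vertex in a minimal $d$-cut; verifying that a retained common neighbor can always play this role, across all pairs and all $T$ with $|T| \ge 3$, is the technical crux and requires a careful joint analysis of (R2) and (R3).
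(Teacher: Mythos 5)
Your kernelization (marking $2d{+}1$ common neighbors per pair of vertex-cover vertices) is essentially the paper's, and observation (b) correctly recovers the paper's monochromaticity argument. But the solution-lifting step has a genuine gap, and it is exactly the gap the paper's construction is engineered to close.

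Observation~(a) is false in the one case that matters. When $v \in I_T$ has degree $\le d$ and sits alone on side $X$, flipping $v$ does not yield a smaller $d$-cut: it empties $X$, so the result is no longer a cut at all. Thus $(\{v\}, V(G)\setminus\{v\})$ \emph{is} a minimal $d$-cut whenever $G$ is connected and $\deg(v)\le d$, and your (a) does not rule it out. Consequently, every deleted vertex $v$ of degree at most $d$ contributes a minimal $d$-cut of $G$ that is absent from the kernel $H$. These cuts do not ``project'' to any minimal $d$-cut of $H$: their restriction to $E(H)$ is empty, and for a connected $H$ the empty set is not a minimal $d$-cut. Your lifting rule for deleted $v$ (``its side is forced by (a)'') reproduces only the unique extension of each kernel cut back to $G$ and never produces these singleton cuts. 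The ``$T$ straddles'' branch is vacuous here, since by your own marking every pair inside $T=N_G(v)$ retains $\ge 2d{+}1$ common neighbors, so $T$ is monochromatic in any $d$-cut and never straddles. (Your (R1) makes the miss even starker: degree-one vertices are deleted outright although each gives a minimal $d$-cut, and in the extreme $|S|=1$ case $H$ has no $d$-cuts at all while $G$ has many.) The ``technical crux'' you flag at the end --- preserving a same-side witness across pairs --- is not in fact the obstacle; the $2d{+}1$-per-pair marking already guarantees that deleted vertices have monochromatic neighborhoods (the paper's Lemma~\ref{lem:removed-mono}), so there is nothing to rebuild there.

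What is actually needed is a way to \emph{assign} the lost singleton cuts to some kernel solution so that $\{S_s\}_{s\in\sol(H)}$ partitions $\sol(G)$, and the solution-lifting algorithm, which sees only one kernel solution at a time, must be able to recognize in polynomial time whether the solution it holds is the designated bucket. Your scheme has no canonical marker for that. The paper solves this by attaching a pendant clique (of $\Theta(d)$ vertices) to a retained low-degree vertex or to a freshly added pendant, thereby creating a single structurally recognizable ``distinguished'' minimal $d$-cut in $H$; the lifting algorithm outputs all the missing singleton cuts of $G$ exactly when it is handed that distinguished solution, and outputs only $F'$ itself otherwise. Without this device (or an equivalent canonical bucket mechanism), your lifting cannot be made to satisfy Definition~\ref{defn:poly-delay-enum-kernel}(ii*).
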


\begin{restatable}{theorem}{ThmVCalldCuts}
\label{thm:all-d-cut-new}
For every fixed $d \geq 1$, {\enumdcut} admits a polynomial-delay enumeration kernel with $\OO(d {\vc}^2)$ vertices.
\end{restatable}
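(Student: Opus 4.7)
The plan is to build on the kernel construction from Theorem~\ref{thm:min-d-cut-vc-enum}: compute a vertex cover~$X$ of size $\mathcal{O}(\vc)$ (via $2$-approximation), partition the independent set $I := V(G) \setminus X$ into twin classes by their neighborhoods in~$X$, and apply marking rules that retain $\mathcal{O}(d)$ representative twins per relevant neighborhood type. This yields a kernel~$K$ on $\mathcal{O}(d \, \vc^2)$ vertices. The marked vertices together with~$X$ carry sufficient information to certify the presence and placement of every $d$-cut of~$G$; the deleted twins are recorded in an auxiliary data structure indexed by type for use during lifting. Since the number of $d$-cuts of~$G$ can be exponential in $|V(G)|$, we cannot hope for a fully-polynomial lifting as in Theorem~\ref{thm:min-d-cut-vc-enum}; instead we design a polynomial-delay solution-lifting algorithm.

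The main technical work lies in the lifting: given a $d$-cut $(A_K, B_K)$ of~$K$, we enumerate with polynomial delay all $d$-cuts $(A, B)$ of~$G$ consistent with $(A_K \cap X, B_K \cap X)$. For each deleted twin~$v$ of type $N \subseteq X$, the feasibility of placing~$v$ in~$A$ (resp.\ in~$B$) depends only on $N$ and on the residual per-$x$ quotas $q_x = d - |N(x) \cap \text{(opposite side of } x)|$ for $x \in N$. Because twins are fully interchangeable, it suffices to decide, for each type~$N$, how many of its deleted vertices go to each side; the admissible distributions form the integer points of a polytope described by per-$x$ linear inequalities on these counts and the sign constraints coming from the per-vertex requirement $|N \cap B| \le d$ or $|N \cap A| \le d$.

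We enumerate lifted $d$-cuts using a flashlight/backtracking procedure: order the twin classes and the deleted twins within each class, and at each node of the search tree try every valid assignment of the next twin while advancing lexicographically. Polynomial delay is guaranteed by a feasibility oracle at every recursion node that decides in polynomial time whether at least one completion exists; this oracle reduces to a bipartite $b$-matching (equivalently, a flow) feasibility question with one side a vertex per remaining twin and the other side the remaining quotas $q_x$, which is polynomial-time solvable. Between two consecutive outputs, the backtracking descends at most $|V(G)|$ levels and performs at most one such feasibility check per level, yielding polynomial delay.

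The main obstacle is to certify (i) that restricting to the kernel-selected representatives does not omit any $d$-cut of~$G$—this will follow from twin-interchangeability together with the marking rules guaranteeing at least $d{+}1$ ``free'' representatives per relevant slot so that side-swaps among the kept representatives induce all distinct $d$-cuts of~$G$ on the deleted twins—and (ii) that the flow-based feasibility check correctly characterizes the extensions that lead to valid $d$-cuts, so that no branch of the backtracking is ever wasted. Both ingredients should follow once the interplay between the per-type and per-$x$ constraints is formalized, closing the argument.
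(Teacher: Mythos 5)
Your plan is in the right spirit---marking scheme, kernel $H=G[S\cup Z]$ of size $\OO(d\,\vc^2)$, and a backtracking lifting that flips deleted vertices from their monochromatic default side subject to residual per-$x$ quotas---and it matches the paper's route at a high level. However, there are two genuine problems with the details as written.

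First, you propose to index the lifted set by the restriction $(A_K\cap X, B_K\cap X)$ of the kernel $d$-cut to the vertex cover. This does not yield a partition of $\sol(G)$ over $\sol(K)$, as required by Definition~\ref{defn:poly-delay-enum-kernel}(\ref{delay-enum-partition-prop}): two distinct kernel $d$-cuts $(A_K^1,B_K^1)$ and $(A_K^2,B_K^2)$ can agree on $X$ while differing on a kept $I$-vertex, and your lifting would then emit the \emph{same} collection for both, so the sets $S_s$ would overlap (in fact coincide). What is needed is the stronger statement proved in Lemma~\ref{lemma:legal-extension-property}: every $d$-cut of $G$ is a legal extension of a \emph{unique} full kernel $d$-cut $(A',B')$, precisely because all extra cut edges are incident with removed vertices, so the restriction to $V(H)$ (not just to $X$) recovers $(A',B')$. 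The lifting must therefore enumerate extensions of the full kernel cut, keeping the kept $I$-vertices fixed, not merely the $X$-partition.

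Second, the flashlight/$b$-matching feasibility oracle is unnecessary and, as a reduction, not obviously correct. The family of admissible flip sets $(P,Q)$ is \emph{downward closed}: if $(P,Q)$ is legal with respect to $(A',B')$, then so is any $(P'\subseteq P, Q'\subseteq Q)$, since removing a vertex from $P$ or $Q$ can only decrease every per-vertex count and every per-$x$ load. Consequently, every internal node of the search tree that represents a legal partial assignment already \emph{is} a $d$-cut of $G$ (with the remaining vertices on their default monochromatic side), and the algorithm can simply output it before recursing; polynomial delay follows from the polynomial depth, with no flow/matching subroutine at all. This is exactly how Algorithm~\ref{alg:two-new-all-d-cut} works. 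Moreover your reduction sketch does not clearly encode the problem: placing one twin of type $N$ on the wrong side consumes a unit from \emph{every} $q_x$, $x\in N$, simultaneously, which is a covering/packing constraint rather than a matching edge, so the stated bipartite $b$-matching model would need justification were it actually required. Once you replace the $X$-restriction indexing by full-kernel-cut extension and drop the oracle in favor of the downward-closedness argument, the proof goes through and essentially coincides with the paper's.
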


\begin{restatable}{theorem}{ThmVCOne}
\label{thm:vc-result-main-1}
For every fixed $d \geq 1$, 
	{\enummaxdcut}
	parameterized by~${\vc}$ admits polynomial-delay enumeration kernel with $\OO(d^3 {\vc}^{d+1})$ vertices.
\end{restatable}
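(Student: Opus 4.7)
The plan is to follow the vertex-cover-based twin-reduction strategy used for Theorems~\ref{thm:min-d-cut-vc-enum} and~\ref{thm:all-d-cut-new}, but with a more generous marking scheme to preserve inclusion-maximality. First, I would compute a vertex cover $C$ of size $\OO(\vc)$ and let $I := V(G) \setminus C$. Partition $I$ into twin classes $T_N := \{v \in I : N_G(v) = N\}$ indexed by subsets $N \subseteq C$. For any $d$-cut $(A,B)$, once the partition $(C_A, C_B) := (A \cap C, B \cap C)$ of $C$ is fixed, the admissible side of each $v \in T_N$ is determined solely by $|N \cap C_A|$ and $|N \cap C_B|$ relative to $d$; thus vertices in the same twin class play interchangeable roles, which is the source of all blow-up that must be controlled.

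The key marking step selects, for every subset $S \subseteq C$ with $|S| \leq d+1$, a pool of roughly $\OO(d^2)$ carefully chosen representatives drawn from twin classes whose neighborhood extends $S$. Since the number of such subsets is $\OO(\vc^{d+1})$, this together with an additional $\OO(d)$ base representatives retained per twin class yields a kernel on $\OO(d^3 \vc^{d+1})$ vertices. The rationale for indexing by $(d+1)$-subsets is that whether the $d$-cut property is destroyed by a single-vertex flip is witnessed by a set of at most $d+1$ neighbors in $C$ becoming tight on one side, so the marking always provides a witness of every potential flip.

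For the solution-lifting algorithm, given a maximal $d$-cut $(A', B')$ of the kernel, the cuts it represents in $G$ are obtained by distributing the unmarked vertices of each twin class $T_N$ between the two sides in a way consistent with $(C_A', C_B')$: if both sides admit a vertex of $T_N$ under the $d$-constraint, the enumeration branches on every possible split. Since these choices are independent across twin classes and each valid choice yields a distinct maximal cut in $G$, the lifted family forms a product space that can be enumerated in Gray-code-like order with polynomial delay.

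The main obstacle will be establishing the correctness of the marking with respect to maximality: naively keeping too few representatives can cause a kernel cut to appear maximal while its lift is not, or cause a genuinely maximal cut of $G$ to have no surviving representative in the kernel. The key lemma to prove is that the $(d+1)$-subset marking is a \emph{maximality certificate} in both directions: whenever an extending $d$-cut exists for a lifted cut in $G$, a matching extension exists in the kernel, witnessed by a flipped vertex whose relevant neighborhood in $C$ has size at most $d+1$ and is therefore covered by the marked pool; conversely, every kernel-level extension lifts. Proving this lemma, and tuning the precise sizes of the per-subset pools and per-class base pools to $\OO(d^2)$ and $\OO(d)$ respectively so as to meet the claimed $\OO(d^3 \vc^{d+1})$ bound, is where the bulk of the technical work will lie.
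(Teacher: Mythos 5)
Your high-level scaffolding (vertex cover, twin classes $T_N$, per-subset marking indexed by small subsets of $C$, branching on the placement of unmarked twins) matches the skeleton of the paper's construction, but there is a genuine gap in the part you flag as ``the bulk of the technical work'', and the gap is not where you think it is. You frame the risk as a \emph{maximality} question (a kernel cut looking maximal while its lift is not). The real obstruction is the \emph{partition} requirement of Definition~\ref{defn:poly-delay-enum-kernel}: the sets $S_s$ output by the solution-lifting algorithm must partition $\sol(G)$, so for every maximal $d$-cut $F$ of $G$ you must designate a \emph{unique} kernel $d$-cut responsible for producing it. Your lifting rule (``distribute the unmarked vertices of each twin class ... branching on every possible split'') makes no such designation: a cut $F$ of $G$ that puts $r$ unmarked vertices of $T_N$ on the minority side can arise as a lift of many different kernel cuts, namely any kernel cut that puts \emph{some} $r$-subset of the marked vertices of $T_N$ on the minority side. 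Nothing in your marking breaks this symmetry.

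The paper breaks the symmetry with an explicit fingerprint device that your proposal does not have: to $d$ selected marked representatives in each over-full twin class it attaches pendant cliques of \emph{pairwise distinct sizes} $2d+1,\dots,3d$. A bad subset $T$ is then called $r$-suitable with respect to a kernel cut $F'$ exactly when the $r$ flipped representatives are the ones carrying cliques of sizes $2d+1,\dots,2d+r$ (Definition~\ref{defn:vc-r-suitable-subsets}), and Lemma~\ref{lemma:vc-equivalence-of-d-cuts}~(ii) shows that for every $F$ in $G$ using unmarked vertices there is a \emph{unique} such $r$-suitable kernel cut. This gadget is also why the kernel is not an induced subgraph of $G$ and why $S_0$ has to be shown inert for $d$-cuts (Observation~\ref{obs:vc-S0-empty-H-intersection}, Lemma~\ref{lemma:vc-d-cuts-without-S0}). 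Two further points your sketch glosses over: first, your ``choices are independent across twin classes, so the lifted family is a product space'' is not true for \emph{maximal} $d$-cuts, because flipping vertices of one class changes the residual budgets $d-h(u)$ for $u\in S$ and thus constrains other classes; the paper sidesteps this by defining equivalence so that per-vertex degree counts are \emph{invariant} across the equivalence class (Definition~\ref{defn:vc-eqv-d-cuts}), which is a much more restricted lifting than free redistribution. Second, the backtracking enumeration has delay $k^{\OO(d)}$, so polynomial delay only follows after the separate case distinction that $|V(G)|>k^d$ (otherwise the kernelization just outputs $G$). Without the fingerprint gadget and the invariance-based equivalence, the proposal as written cannot meet condition~(ii) of the definition.
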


\iflong
Next, we consider ${\ndd}$ the neighborhood diversity of the input graph, as parameter and prove the following result in Section~\ref{sec:neighborhood-diversity-modular-width}.
\else
Next, we consider two other parameters of the input graph, neighborhood diversity (${\ndd}$) and clique partition number (${\clp}$), and prove the following two results in Section \ref{sec:nd-clique-partition}.
\fi

\begin{restatable}{theorem}{ThmNDComb}
\label{thm:nd-result-main-1}
For every fixed $d \geq 1$,
{\enummindcut} parameterized by ${\ndd}$ admit a fully-polynomial enumeration kernel with $\OO(d^2 {\ndd})$ vertices.
Moreover, {\enumdcut} and {\enummaxdcut} admit polynomial-delay enumeration kernels with $\OO(d^2 {\ndd})$ vertices.
\end{restatable}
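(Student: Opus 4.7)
My plan is to exploit the twin-class structure given by neighborhood diversity. Let $T_1, \ldots, T_\ell$ denote the twin classes of $G$, where $\ell = {\ndd}$; each class is either a clique (true twins) or an independent set (false twins), and the bipartite subgraph between any two distinct classes is either complete or empty. I first establish two structural observations about $d$-cuts $(A,B)$ of $G$. If $T_i$ is a clique with vertices on both sides, then each vertex of $T_i \cap A$ has $|T_i \cap B|$ neighbors in $B$ from within $T_i$ alone, so $|T_i \cap A|, |T_i \cap B| \le d$ and hence $|T_i| \le 2d$. If $T_i$ is an independent class that is split and has an adjacent class $T_j$ with $T_j \cap A \neq \emptyset$, then $|T_i \cap B| \le d$ (otherwise a vertex of $T_j \cap A$ would have more than $d$ neighbors in $B$), and symmetrically for the other side.

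Guided by these observations, I would design reduction rules that cap each twin class at $\OO(d^2)$ vertices. For a clique-type class I retain $\min(|T_i|, 2d+1)$ vertices, so that whenever $|T_i| \ge 2d+1$ the class is forced to lie entirely on one side of any $d$-cut. For an independent-type class I retain $\Theta(d^2)$ representatives: enough to encode the possible split patterns $(a_i, b_i)$ --- entirely on one side, or split with $\min(a_i, b_i) \le d$ --- and enough bulk vertices to correctly witness the $d$-cut constraints induced by neighboring classes. The resulting kernel $G'$ then has $\OO(d^2 \cdot {\ndd})$ vertices, and the twin-class partition together with all class-to-class adjacencies is preserved in $G'$.

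The solution-lifting algorithm takes a $d$-cut $(A', B')$ of $G'$ and enumerates all $d$-cuts $(A,B)$ of $G$ that restrict to $(A', B')$ on the retained vertices. The only freedom lies within each twin class: for class $T_i$ one may place any of the $|T_i| - |T_i'|$ removed vertices on either side, subject to the $d$-cut inequalities in $G$. Because distinct choices of which removed vertices go where produce distinct edge cuts whenever $T_i$ has an adjacent class with vertices on the opposite side, this amounts to enumerating subsets of prescribed sizes, which is a standard polynomial-delay task and yields the polynomial-delay enumeration kernels for \enumdcut{} and \enummaxdcut{}. For \enummindcut{} the extra observation is that a minimal $d$-cut of $G$ is determined by its restriction to $V(G')$ together with a canonical, inclusion-minimal extension, so only polynomially many minimal extensions exist per kernel cut and the lifting runs in fully polynomial total time.

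The hardest part will be calibrating the reduction so that the correspondence between $d$-cuts of $G$ and (kernel cut, extension) pairs is a bijection --- without duplicates or missed solutions --- and, especially for the minimal and maximal variants, ensuring that inclusion-minimality and maximality in $G$ are correctly mirrored by the lifting. For the maximal variant, one must argue that no extension of a kernel cut can be strictly enlarged in $G$ except via an extension of a different kernel cut; for the minimal variant, one must characterize precisely which extensions of a given kernel cut yield inclusion-minimal edge cuts in $G$ and verify that this characterization admits a polynomial-time generator.
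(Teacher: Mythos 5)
Your structural observations are correct and match the paper's Lemma~\ref{lem:neighborhood-break-up}: a split clique class has at most $2d$ vertices, and a split independent class with an occupied neighboring class has a minority side of size at most $d$. Your proposed per-class cap of $\OO(d^2)$ vertices and the resulting $\OO(d^2\,\ndd)$ size bound are also in line with the paper. The gap is in the solution-lifting correspondence, which you flag as ``the hardest part'' but do not resolve, and the mechanism you sketch --- enumerate all $d$-cuts of $G$ that restrict to the given $d$-cut of $G'$ on the retained vertices --- is actually broken.

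Concretely, consider a nice independent module $X$ with $|N_G(X)|\le d$ and $|X|$ much larger than your retained set, and a $d$-cut of $G$ whose minority side consists entirely of removed vertices of $X$ (e.g.\ the minimal cut $(\{u\},V(G)\setminus\{u\})$ for an unmarked $u\in X$ of degree at most $d$). Its restriction to $V(G')$ is $(\emptyset, V(G'))$, which is not a cut at all, and its edge-cut restriction $F\cap E(G')$ is empty, which is not a $d$-cut of the connected kernel. So this solution has no preimage under your correspondence. Symmetrically, when the minority side of a $G$-cut contains some removed and some marked vertices of $X$, the kernel cut you would read off the marked vertices is one among several equally good kernel cuts (all those that split off the same number of marked vertices of $X$), so the lifting is ill-defined without a canonical representative. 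The paper resolves both problems with a device you do not have: it attaches pendant cliques of pairwise distinct sizes $2d+1,\ldots,3d$ to $d$ marked vertices of each large nice module, and defines a kernel cut to be ``$r$-suitable'' for $X$ precisely when the $r$ marked vertices of $X$ it splits off carry the pendant cliques of sizes $2d+1,\ldots,2d+r$. This singles out a unique canonical kernel cut for each equivalence class of $G$-cuts, so the lifting algorithm knows exactly when to enumerate extensions containing unmarked vertices and when to output only the kernel cut itself; it also guarantees that every $G$-cut, including the degenerate case above, does have a preimage. Without this (or an equivalent canonicalization mechanism), the partition required by Definition~\ref{defn:poly-delay-enum-kernel}(ii) cannot be established, and your minimal- and maximal-variant arguments inherit the same difficulty.
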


\iflong
Finally, in Section \ref{sec:clique-partition}, we consider the clique partition number of the input graph, denoted ${\clp}$, as the parameter and prove the following result.
\else
\fi

\begin{restatable}{theorem}{ThmCLP}
\label{thm:clique-partition-result}
For every fixed $d \geq 1$,
{\enumdcut}, {\enummindcut}, and {\enummaxdcut} parameterized by ${\clp}$ admit bijective enumeration kernels with $\OO({\clp}^{d+2})$ vertices.	
\end{restatable}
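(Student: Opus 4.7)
The plan is to exploit the clique partition $\CC = \{C_1, \ldots, C_k\}$ of $G$, with $k = \clp(G)$, by classifying vertices within each clique by their external interactions and keeping only $\OO(k^{d+1})$ representatives per clique. The starting observation is that for every $d$-cut $(A,B)$ of $G$ and every $C_i \in \CC$, the internal edges of $C_i$ force both $|C_i \cap A| \leq d$ and $|C_i \cap B| \leq d$, because every vertex of $C_i \cap A$ is adjacent inside $C_i$ to every vertex of $C_i \cap B$. Hence any clique $C_i$ with $|C_i| > 2d$ is monochromatic in every $d$-cut. I call such cliques \emph{big} and the rest \emph{small}; the small cliques together contain at most $2dk$ vertices and will be kept verbatim in the kernel, so the core task is to reduce each big clique to $\OO(k^{d+1})$ marked representatives.

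For each big clique $C_i$ I will build $M_i \subseteq C_i$ via a marking scheme indexed by tuples of external cliques. First I reserve $2d+1$ arbitrary base vertices of $C_i$ so that $M_i$ remains monochromatic in every $d$-cut of the kernel. Then, for every tuple of at most $d+1$ external cliques $(C_{j_1}, \ldots, C_{j_s})$ and every demand pattern $(t_1, \ldots, t_s) \in \{1, \ldots, d+1\}^s$, I add to $M_i$ one vertex $v \in C_i$ (if any exists) with $|N(v) \cap C_{j_\ell}| \geq t_\ell$ for each $\ell$, and add $t_\ell$ of its chosen neighbors in each $C_{j_\ell}$ to $M_{j_\ell}$. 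Since $d$ is fixed, the number of (tuple, pattern) indices is $\OO(k^{d+1})$ per clique and each contributes $\OO(1)$ vertices to the various $M_j$'s, so $|M_i| = \OO(k^{d+1})$ after summing self-contributions and contributions from other cliques, yielding $|V(G')| = \OO(k^{d+2})$ for the kernel $G' = G\bigl[\bigcup_{i \text{ small}} C_i \cup \bigcup_{i \text{ big}} M_i\bigr]$.

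The bijection is forced by the monochromaticity of big cliques. Given a $d$-cut $(A',B')$ of $G'$, I extend it to $G$ by placing each removed vertex $v \in C_i \setminus M_i$ on the same side as $M_i$; conversely I restrict any $d$-cut of $G$ to $V(G')$. These operations are mutual inverses because in every $d$-cut of $G$ a removed vertex sits on the same side as its big clique. Validity of the extension will be verified vertex by vertex: the $d$-cut inequality at each kept vertex holds because the marking preserves cross-clique degrees up to threshold $d+1$, while at a removed vertex $v$ the $(d+1)$-tuple marking ensures that any violation $|N_G(v) \cap (\text{opposite side})| > d$ in $G$ would be mirrored by a marked representative of $C_i$ already violating the same inequality in $(A',B')$, contradicting that $(A',B')$ is a $d$-cut of $G'$.

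The main obstacle will be calibrating the marking so that, for every external vertex $w$ and every big clique $C_i$, the kernel count $|N(w) \cap M_i|$ equals $\min(|N(w) \cap C_i|, d+1)$: otherwise a drop from $d+1$ to $d$ caused by removal of an unmarked neighbor of $w$ would allow the kernel to admit a spurious $d$-cut placing $w$ opposite to $C_i$ that cannot be lifted. The scheme resolves this by reserving, for each external $w$ with edges into $C_i$, up to $d+1$ witnesses in $M_i$, which fits within the $\OO(k^{d+1})$ budget after bucketing the $w$'s by their own clique. Finally, for {\enummindcut} and {\enummaxdcut} the bijection preserves inclusion of edge cuts, since the extra edges $E_G(A,B) \setminus E_{G'}(A',B')$ added by the lifting are determined by which side each big clique occupies in $(A',B')$; so the bijection carries minimal (respectively maximal) $d$-cuts of $G$ to minimal (maximal) $d$-cuts of $G'$, completing the proof for all three enumeration variants.
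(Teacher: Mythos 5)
Your high-level plan --- cliques of size $>2d$ are monochromatic, keep $\OO(\clp^{d+1})$ marked representatives per large clique indexed by tuples of at most $d+1$ external cliques, and argue a bijection between $d$-cuts of $G$ and of the kernel --- is recognizably the same route the paper takes, and the $\OO(\clp^{d+2})$ target and the tuple-based indexing match. But there are two concrete gaps.

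First, you keep small cliques ($|C_i|\le 2d$) verbatim instead of splitting them into singletons as the paper's Reduction Rule~\ref{rule:splitting-small-cliques-partition} does. This is not cosmetic: after splitting (together with the merging of Reduction Rule~\ref{rule:clique-partition-merging-large-cliques}), every clique in the paper's instance is monochromatic under every $d$-cut, so a $d$-cut of $G'$ or of $G_1$ \emph{is} a two-coloring of the clique set (Observation~\ref{obs:d-cut-property-clique-partition}), and the bijection is the identity on these colorings. In your setting a small clique $D$ can be split nontrivially by a $d$-cut, and your demand-pattern witness $v'$ for a removed $v\in C_i$ only matches $v$'s \emph{counts} $|N(\cdot)\cap D|\ge t_D$, not its specific $D$-neighbors. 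If $D$ is split so that the marked $D$-neighbors of $v'$ land on $v'$'s own side while the $D$-neighbors of $v$ lie opposite, the lifted cut can give $v$ more than $d$ cross-edges even though every vertex of $G'$ respects the $d$-bound, so the ``extension is a $d$-cut'' half of your bijection fails. To close this you would need to reduce small cliques to singletons, or index the marking by tuples of large cliques \emph{and} individual small-clique vertices. Second, the calibration step you flag as ``the main obstacle'' is not established. Reserving up to $d+1$ witnesses in $M_i$ for every kept external $w$ with edges into $C_i$ is circular (what is ``kept'' depends on the reservation), and even fixing an order, the kept vertices of a single other big clique $C_j$ already number $\OO(\clp^{d+1})$; reserving $d+1$ witnesses for each and summing over the $\le\clp$ choices of $j$ gives $\OO(\clp^{d+2})$ reserved vertices \emph{per} $M_i$, i.e.\ $\OO(\clp^{d+3})$ overall, over budget. ``Bucketing by the clique of $w$'' does not obviously help, since distinct kept $w\in C_j$ may have pairwise-disjoint neighborhoods inside $C_i$. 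The paper avoids this step altogether: once the instance is clique-partition-respecting, the proof of Lemma~\ref{lemma:d-cut-preservation-clique-partition} only needs that for every source clique $C$ and every critical tuple of target cliques, \emph{one} witness in $C$ together with \emph{that one witness's} neighbors inside the tuple survives; a violation in $G_1$ is transferred to the marked witness with the same critical configuration, without preserving the cross-degrees of every kept vertex individually.
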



\begin{table}[t]\caption{Summary of our results. Here, ``delay enum-kernel'' means polynomial-delay enumeration kernelization, ``full enum-kernel'' means fully-polynomial enumeration kernelization, and ``bi-enum-kernel'' means bijective enumeration kernelization. A ($\star$) means the lower bound assumes ${\rm NP \not\subseteq coNP/poly}$. The results marked as ($\star \star$) are the results proved in this paper.}

\begin{tabular}{ |p{2.2cm}||p{3.7cm}|p{3.7cm}|p{3.7cm}|  }
 \hline
 {Parameter $k$} & {\enumdcut} & {\enummindcut} & {\enummaxdcut}\\
 \hline
 vertex cover number ($\star \star$)   & $\OO(d {k}^{2})$-vertex delay enum-kernel (Theorem \ref{thm:all-d-cut-new}) & $\OO(d {k}^{2})$-vertex full enum-kernel (Theorem \ref{thm:min-d-cut-vc-enum}) &   $\OO(d^3 {k}^{d+1})$-vertex delay enum-kernel (Theorem \ref{thm:vc-result-main-1}) \\
 \hline
 neighborhood diversity ($\star \star$) &  $\OO(d^2 k)$-vertex delay enum-kernel (Theorem \ref{thm:nd-result-main-1})  & $\OO(d^2 k)$-vertex full enum-kernel (Theorem \ref{thm:nd-result-main-1}) & $\OO(d^2 k)$-vertex delay enum-kernel (Theorem \ref{thm:nd-result-main-1}) \\
 \hline
clique partition number ($\star \star$) & $\OO({k}^{{d}+2})$-vertex bijective enum-kernel (Theorem \ref{thm:clique-partition-result}) & $\OO({k}^{{d}+2})$-vertex bi-enum-kernel (Theorem \ref{thm:clique-partition-result}) &  $\OO({k}^{d+2})$-vertex bijective enum-kernel (Theorem \ref{thm:clique-partition-result}) \\
\hline
 vertex-integrity ($\star\star$)  & No polynomial size delay enum-kernel ($\star$) (Proposition \ref{prop:no-poly-kernel-treedepth})& No polynomial size delay enum-kernel ($\star$) (Proposition \ref{prop:no-poly-kernel-treedepth}) &  No polynomial size delay enum-kernel ($\star$) (Proposition \ref{prop:no-poly-kernel-treedepth}) \\
 \hline
\end{tabular}
\label{table-our-results-summary}
\end{table}

\subparagraph{Further Related Work.} 
We refer to a survey by Wasa \cite{Wasa16arXiv} for a detailed overview of enumeration algorithms.
Bentert et al. \cite{BentertFNN19} introduced a notion of {\em advice enumeration kernels}, an extension to the enumeration problem by Creignou et al. \cite{CreignouMMSV17} where the solution-lifting algorithm does not need to known the whole input but rather only some possibly smaller advice. Another application of polynomial-delay enumeration kernels was given recently for structural parameterizations of the \textsc{Long Path} problem~\cite{KMS25}.   
\iflong Counting problems are a different type of extension of decision problems that can be viewed as an intermediary problem between decision and enumeration.
Thurley~\cite{Thurley07} defined compactors, a kernelization analog for  parameterized counting problems. Compactors also have a solution-lifting algorithm that computes for each solution of the kernel a number of corresponding solutions in the input instance. This is quite similar to enumeration kernels as described above.
Thilikos~\cite{Thilikos21} surveys some results that were obtained for compactors.  
Lokshtanov et al.~\cite{LokshtanovM0Z24} recently introduced a different framework of kernelization for counting problems where the solution-lifting algorithm computes the number of solutions of the input instance directly from the number of the solutions of the kernel. For counting problems, this is desirable, since there is no need to explicitly visit all kernel solutions as one has to do for compactors. In the enumeration setting, however, visiting all solutions cannot be avoided.
\fi


\section{Preliminaries}
\label{sec:prelims}


\iflong \subsection{Sets, Numbers, and Graph Theory}
\else \subparagraph{Sets, Numbers, and Graph Theory.}
  \fi
\label{sec:prelim-part-1}

We use $[r]$ to denote $\{1,\ldots,r\}$ for some $r \in \nn$ and $A \uplus B$ to denote the disjoint union of the sets $A$ and $B$.
For a set $A$, we use $2^{A}$ to denote the {\em power set} of $A$.
For a finite set $A$ and a nonnegative integer $d \geq 0$, we use ${{A}\choose{d}}, {{A}\choose{\leq d}}$, and ${{A}\choose{\geq d}}$ to denote the collection of all subsets of $A$ of size equal to $d$, at most $d$, and at least $d$ respectively.
We use standard graph-theoretic notations from the book of Diestel~\cite{Diestel-Book}.
Throughout this paper, we consider simple undirected graphs.
For a subset $X \subseteq V(G)$, $G[X]$ denotes the subgraph induced by the vertex subset $X$.
Similarly, $G - X$ denotes the graph obtained after deleting the vertex set $X$; i.e. $G - X = G[V(G) \setminus X]$; for a vertex $u$, we use $G - u$ to represent $G - \{u\}$ for the simplicity.
Similarly, for a set of edges $F \subseteq E(G)$ (respectively for $e \in E(G)$), we write $G - F$ (respectively $G - e$) to denote the graph obtained after deleting the edges of $F$ (respectively after deleting $e$) from $G$.
Equivalently, for an edge $F \subseteq E(G)$, we use $G - F$ to denote the subgraph $G(V, E \setminus F)$
For a vertex $v$, we denote by $N_G(v)$ the {\em open neighborhood} of $v$ in $G$, i.e. the set of vertices that are adjacent to $v$ in $G$ and by $N_G[v]$, the {\em closed neighborhood} of $v$ in $G$.
Formally, $N_G[v] = N_G(v) \cup \{v\}$.
Given a simple graph $G = (V, E)$, a {\em cut} is a bipartition of its vertex set $V(G)$ into two nonempty sets $A$ and $B$, denoted by $(A, B)$.
The set of all edges with one endpoint in $A$ and the other in $B$ is called the {\em edge cut} or the set of {\em crossing edges} of $(A, B)$.
We use $F = E_G(A, B)$ to denote the edge cut of $(A, B)$.
If the graph is clear from the context, we omit the subscript.
\todo[inline]{Diptapriyo: Some of these lines can be omitted.}

A set of vertices $S \subseteq V(G)$ is called {\em vertex cover} if for every $uv \in E(G)$, $u \in S$ or $v \in S$ (or both).
The size of a minimum vertex cover of a graph is known as the {\em vertex cover number} and is denoted by ${\vc}$.
%
Given a graph $G$, a {\em neighborhood decomposition} of $G$ is a partition $\UU = \{X_1,\ldots,X_k\}$ of $V(G)$ such that every set~$X_1$ is a clique or an independent set and all the vertices of~$X_i$ have the same neighborhood outside of~$X_i$.
The {\em neighborhood diversity} of $G$ is the size of a smallest neighborhood decomposition of $G$.
A partition $\CC = \{C_1,\ldots,C_k\}$ of a graph $G$ is said to be a {\em clique partition} of $G$ if for all $1 \leq i \leq k$, $C_i$ is a clique.
The {\em clique partition number} of $G$ is denoted by ${\clp}(G)$ is the minimum $k$ such that $G$ has a clique partition with $k$ cliques.
The {\em vertex-integrity number} of $G$, denoted by ${\vi}(G)$ is the minimum number~$k$ such that $G$ has a vertex set~$X$ of size at most~$k'\le k$ with the largest connected component in~$G-X$ having size~$k-k'$.

\iflong \subparagraph*{Attaching a Pendant Clique.}\fi
We make use of the following graph operation.
Let $G$ be an undirected graph such that $u \in V(G)$.
We define the graph operation of {\em attaching a pendant clique of $r$ vertices to $u$} as follows: add a clique $C$ of $r$ vertices disjoint from~$V(G)$ to $G$ such that for every $v \in C$, $N(v) \setminus C = \{u\}$.
In short form, we often refer to this graph operation as {\em attaching an $r$-vertex pendant clique $C$ to $u$}.

\iflong \subsection{Some Fundamental Properties of $d$-Cuts}
\else  \subparagraph{Some Fundamental Properties of $d$-Cuts.}
  \fi
\label{sec:prelim-part-2}

A set $T \subseteq V(G)$ is {\em monochromatic} if for every $d$-cut $(A, B)$ of $G$, either $T \subseteq A$ or $T \subseteq B$.
The following lemma gives fundamental structural characterizations of a $d$-cuts of $G$. Some of them, were already observed by Gomes and Sau~\cite{GomesS21}. We give a proof of all of them for the sake of completeness.

\begin{lemma}
\label{lem:neighborhood-break-up}
Let $G$ be an undirected graph.
\begin{enumerate}[(i)]
	\item\label{it:clique-mono} If $C$ is a clique of $G$ with at least $2d+1$ vertices, then $C$ is monochromatic.
	
	\item\label{it:overlap-mono} If $X$ and $Y$ are monochromatic sets with a nonempty intersection, then $X \cup Y$ is monochromatic. 
    
    \item\label{it:d-neighbor-mono} If a vertex~$u$ has more than~$d$ neighbors in a monochromatic set~$X$, then~$X\cup \{v\}$ is monochromatic.
	
	\item\label{it:common-neighborhood-mono} Let $T$ be a subset of $V(G)$ such that there are more than $2d$ vertices in $V(G) \setminus T$ whose neighborhoods contain $T$.
	Then, $T$ is monochromatic.
	
\end{enumerate}
\end{lemma}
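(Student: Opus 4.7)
The plan is to prove each of the four parts by contradiction, in each case assuming that some $d$-cut $(A,B)$ witnesses non-monochromaticity of the relevant set, and then producing a vertex with more than $d$ neighbors on the opposite side.

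For part \eqref{it:clique-mono}, I would suppose that a $d$-cut $(A,B)$ splits the clique $C$ nontrivially, so that both $C\cap A$ and $C\cap B$ are nonempty. Without loss of generality $|C\cap A|\le |C\cap B|$, and from $|C|\ge 2d+1$ one obtains $|C\cap B|\ge d+1$. Since $C$ is a clique, any vertex $v\in C\cap A$ is adjacent to every vertex of $C\cap B$, which immediately violates the $d$-cut condition. Part \eqref{it:overlap-mono} is even shorter: monochromaticity of $X$ and of $Y$ forces each of them, under any $d$-cut $(A,B)$, to lie entirely in one side; sharing a common vertex pins them to the \emph{same} side, hence $X\cup Y$ lies entirely in one side.

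For part \eqref{it:d-neighbor-mono} (reading the conclusion as $X\cup\{u\}$ monochromatic), I would fix an arbitrary $d$-cut $(A,B)$ and use monochromaticity of $X$ to assume $X\subseteq A$ without loss of generality. If $u$ lay in $B$, then $u$ would have more than $d$ neighbors in $A$, since its neighbors in $X$ all lie in $A$; this is impossible, so $u\in A$ and $X\cup\{u\}\subseteq A$. Because $(A,B)$ was arbitrary, $X\cup\{u\}$ is monochromatic.

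The main obstacle is part \eqref{it:common-neighborhood-mono}, which needs a slightly more careful case analysis. Let $U:=\{v\in V(G)\setminus T: T\subseteq N_G(v)\}$, so $|U|>2d$, and suppose toward contradiction that a $d$-cut $(A,B)$ splits $T$, with both $T\cap A$ and $T\cap B$ nonempty. Since $|U|>2d$, at least one of $|U\cap A|$ and $|U\cap B|$ exceeds $d$; by symmetry assume $|U\cap A|\ge d+1$. Pick any $v\in T\cap B$ (nonempty by assumption); then $v\in B$ is adjacent to every vertex of $U$, hence to at least $d+1$ vertices of $A$, contradicting the $d$-cut property. The delicate point is simply to choose the witness vertex from the correct side of $T$, matching the majority side of $U$; once this is done the bound $|U|>2d$ exactly provides the required $d+1$ neighbors across the cut.
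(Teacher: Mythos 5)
Your proposal is correct and follows essentially the same approach as the paper's proof for all four items: in each case one fixes an arbitrary $d$-cut $(A,B)$, puts the larger side of the relevant set (or the monochromatic set) in $A$ without loss of generality, and observes that any stray vertex on the wrong side would have at least $d+1$ neighbors across the cut. The only cosmetic differences are that you phrase the arguments by contradiction where the paper argues directly, and you correctly read the typo in item~(\ref{it:d-neighbor-mono}) (concluding $X\cup\{u\}$ rather than $X\cup\{v\}$) just as the paper's proof does.
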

\begin{proof}
We prove the items in the given order.
It is sufficient to prove each of the statement for an arbitrary $d$-cut $(A, B)$ of $G$.
\begin{enumerate}[(i)]
	\item 
	Since $C$ has at least $2d+1$ vertices, we may assume without loss of generality that~$|C \cap A| \geq d+1$. Now, if~$C\cap B$ contains a vertex~$u$, then this vertex has at least~$d+1$ neighbors in~$C \cap A$, contradicting the fact that $(A, B)$ is a~$d$-cut. Hence,~$C\subseteq A$.
	\item Since $X$ is monochromatic, we may assume without loss of generality~$X\subseteq A$. Since there is a vertex~$u\in X\cap Y$ and since~$Y$ is monochromatic, we have~$Y\subseteq A$ and thus~$X\cup Y\subseteq A$.
        \item Since~$X$ is monochromatic, we may assume~$X\subseteq A$. Then,~$u$ has at least~$d+1$ neighbors in~$A$. Consequently,~$u$ is contained in~$A$ as well.  
	\item Let~$I$ denote the vertices in $V(G)\setminus T$ whose neighborhood contains $T$. Since~$|I|>2d+1$, we may assume that  there are at least $d+1$ vertices in $I\cap A$. Hence, every vertex in~$T$ has more than~$d$ neighbors in~$A$. Consequently,~$T\subseteq A$.
\end{enumerate}
\end{proof}

\todo[inline]{C: In the end, check the references to the lemma and see whether the items are still correctly numbered, replace hard-coded item number by ref-commands }
As a consequence of item (\ref{it:overlap-mono}) of Lemma \ref{lem:neighborhood-break-up}, maximal monochromatic sets are disjoint.
Thus, we obtain the following.

\begin{observation}
\label{obs:monochromatic-partition}
Let ${\mch}(S)$ be the collection of inclusion-maximal subsets of $S$ that are monochromatic.
Then, ${\mch}(S)$ forms a partition of $S$.
\end{observation}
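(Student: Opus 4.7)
The plan is to verify the two defining properties of a partition: \emph{covering} (every element of $S$ lies in some member of ${\mch}(S)$) and \emph{pairwise disjointness} (distinct members of ${\mch}(S)$ share no element). The statement is essentially an immediate corollary of item (\ref{it:overlap-mono}) of Lemma~\ref{lem:neighborhood-break-up}, so I expect the ``proof'' to be very short; the only thing to be careful about is ensuring that the family ${\mch}(S)$ is actually nonempty on every element, which requires a small sanity check.

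For the covering property I would first observe that for any vertex $s \in S$ the singleton $\{s\}$ is trivially monochromatic, since for every $d$-cut $(A,B)$ of $G$ the vertex $s$ lies entirely in $A$ or entirely in $B$. In particular, the collection of monochromatic subsets of $S$ containing $s$ is nonempty, so by taking any inclusion-maximal such subset (which exists because $S$ is finite) we conclude that $s$ belongs to some element of ${\mch}(S)$.

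For disjointness, suppose for contradiction that $X, Y \in {\mch}(S)$ are distinct but $X \cap Y \neq \emptyset$. Applying item~(\ref{it:overlap-mono}) of Lemma~\ref{lem:neighborhood-break-up} to $X$ and $Y$ yields that $X \cup Y$ is monochromatic, and $X \cup Y$ is a subset of $S$ that strictly contains at least one of $X, Y$ (since $X \neq Y$). This contradicts the inclusion-maximality of $X$ or $Y$ within the monochromatic subsets of $S$. Hence any two distinct members of ${\mch}(S)$ are disjoint, completing the argument.

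The main (and really the only) obstacle is noticing that item~(\ref{it:overlap-mono}) of Lemma~\ref{lem:neighborhood-break-up} is stated for monochromatic subsets of $V(G)$ and we are applying it inside $S$; but this causes no issue because the defining condition of being monochromatic depends only on the $d$-cuts of $G$ and not on any ambient set, so a monochromatic subset of $S$ is monochromatic as a subset of $V(G)$. Everything else is a direct, purely set-theoretic consequence.
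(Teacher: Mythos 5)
Your proof is correct and follows the same route as the paper: the paper also derives the observation directly from item~(\ref{it:overlap-mono}) of Lemma~\ref{lem:neighborhood-break-up}, noting that maximal monochromatic sets must be disjoint. Your extra remark that singletons are monochromatic (ensuring the family covers $S$) is a useful explicit sanity check that the paper leaves implicit.
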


We also state the following observation that holds true for a minimal $d$-cut of any graph.

\begin{observation}
\label{obs:minimal-obs-trivial}
If $G$ is a disconnected graph, then $\emptyset$ is the unique minimal $d$-cut of $G$.
If $G$ is connected and has a vertex $v$ that has degree at most $d$, then $(\{v\}, V(G) \setminus \{v\})$ is a minimal $d$-cut of $G$.
\end{observation}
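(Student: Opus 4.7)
The plan is to handle the two parts separately since they are quite different in flavor. For the first part, whenever $G$ is disconnected with connected components $C_1, \ldots, C_k$, the cut $(C_1, V(G) \setminus C_1)$ has no crossing edges and hence empty edge cut, which is trivially a $d$-cut and automatically minimal since no edge set can be a strict subset of $\emptyset$. Under the paper's convention of identifying $d$-cuts by their edge cuts, uniqueness then follows: any $d$-cut with a nonempty edge cut cannot be minimal (the empty-edge-cut $d$-cut is always available), and all $d$-cuts with empty edge cut are identified as the same one.

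For the second part, I would first verify the $d$-cut property, which is immediate: $v$ has $\deg(v) \leq d$ neighbors in $V(G) \setminus \{v\}$ by hypothesis, and each other vertex has at most one neighbor in the singleton $\{v\}$, bounded by $d$ since $d \geq 1$. The substantive step is minimality. I would argue by contradiction: suppose a $d$-cut $(A', B')$ exists with $E_G(A', B') \subsetneq E_G(\{v\}, V(G) \setminus \{v\})$ and assume without loss of generality that $v \in A'$. Every edge of $E_G(A', B')$ must be incident to $v$, so no edge of $G$ connects $A' \setminus \{v\}$ to $B'$; equivalently, each connected component of $G - v$ lies entirely in one of the two sets $A' \setminus \{v\}$ and $B'$.

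When $v$ is not a cut vertex of $G$, the graph $G - v$ is connected and so the partition of $V(G)\setminus\{v\}$ into $A'\setminus\{v\}$ and $B'$ must be trivial. This forces $A' = \{v\}$ (making $E_G(A', B') = E_G(\{v\}, V(G) \setminus \{v\})$, contradicting strict inclusion) or $B' = \emptyset$ (violating the bipartition requirement). The main obstacle is the case when $v$ is a cut vertex: if $G - v$ has multiple components, one can in principle distribute them nontrivially between the two parts and obtain a strictly smaller $d$-cut. A concrete illustration is $G = P_3$ with $d = 2$ and $v$ the middle vertex: the cut $(\{v\}, V(G) \setminus \{v\})$ has edge-cut size $2$ but admits a proper sub-$d$-cut of size~$1$. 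I would therefore either verify that the authors implicitly restrict attention to non-cut vertices (which is automatic when $d = 1$, since a degree-$1$ vertex cannot separate the graph) or strengthen the hypothesis accordingly before completing the minimality argument via the component analysis above.
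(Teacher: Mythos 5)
Your proof of the first part is correct. Your analysis of the second part is also correct and in fact exposes a genuine error in the observation: as stated, the second claim is false for $d \geq 2$. Your $P_3$ example with $d = 2$ and $v$ the middle vertex is a valid counterexample --- the cut around an endpoint is a $2$-cut whose edge set $\{av\}$ is a proper subset of $E_G(\{v\}, V(G)\setminus\{v\}) = \{av, vb\}$, so $(\{v\}, V(G)\setminus\{v\})$ is not minimal. The cut-vertex obstruction you isolated is exactly what goes wrong.

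The paper gives no proof of this observation and does not appear to reference it by label, so the slip looks inconsequential for the paper's results, but it is worth noting how the authors sidestep it where they actually need a claim of this flavor: in the proof of Lemma~\ref{lemma:new-distinguished-minimal-d-cut-enumeration} they output $(\{u\}, V(G)\setminus\{u\})$ only for vertices $u$ with $\deg(u) \leq d$ whose neighborhood $N_G(u)$ is additionally guaranteed to be \emph{monochromatic} (via Lemma~\ref{lem:removed-mono}). Under that hypothesis the minimality claim is sound, since a $d$-cut whose edge set is a proper nonempty subset of the edges at $u$ would split $N_G(u)$ across the two sides, and the empty edge set is not a $d$-cut of a connected graph. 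Your proposed repair --- requiring $v$ not to be a cut vertex --- is also sufficient via the connectivity argument you sketch, and is in fact a weaker hypothesis: a cut vertex $v$ of degree at most $d$ never has a monochromatic neighborhood, since placing one component of $G - v$ together with $v$ and the rest on the other side already gives a $d$-cut that separates $N_G(v)$. Either strengthening (or simply restricting to $d = 1$, where a degree-one vertex is automatically a non-cut vertex with a trivially monochromatic singleton neighborhood) would make the second half of the observation correct; as currently written it is not.
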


\iflong \subsection{Parameterized Complexity, Kernelization, and Parameterized Enumeration}
\label{sec:prelims-part-3}

\subparagraph{Parameterized Complexity and Kernelization.}
A {\em parameterized problem} is denoted by $L \subseteq \Sigma^* \times \nn$ where $\Sigma$ is a finite alphabet.
An instance of a parameterized problem is denoted $(x, k) \in \Sigma^* \times \nn$ where and $k$ is considered as the {\em parameter}.
A parameterized problem $L$ is said to be {\em fixed-parameter tractable} if there is an algorithm $\cA$ that takes an input instance $(x, k) \in \Sigma^* \times \nn$, runs in $f(k)|x|^{\OO(1)}$-time and correctly decides if $(x, k) \in L$ for a computable function $f: \nn \rightarrow \nn$.
The algorithm $\cA$ is called {\em fixed-parameter algorithm} (or FPT algorithm).
An important step to design parameterized algorithms is kernelization (or parameterized preprocessing).
Formally, a parameterized problem $L \subseteq \Sigma^* \times \nn$ is said to admit a {\em kernelization} if there is a polynomial-time procedure that takes $(x, k) \in \Sigma^* \times \nn$ and outputs an instance $(x',k')$ such that {\bf (i)} $(x,k) \in L$ if and only if $(x',k') \in L$, and {\bf (ii)} $|x'| + k' \leq g(k)$ for some computable function $g: \nn \rightarrow \nn$.
This computable function $g(\cdot)$ is the {\em size} of the kernel.
If $g(\cdot)$ is a polynomial function, then $L$ is said to admit a {\em polynomial kernel}.
Compression of a parameterized problem is a generalized notion of kernelization of a parameterized problem.
Formally, a parameterized problem $L \subseteq \Sigma^* \times \nn$ is said to admits a {\em compression} into $L'$ if there is a polynomial-time preprocessing algorithm that transforms an instance $(x, k)$ of a parameterized problem $L$ into an instance $y$ of problem $L' \subseteq \Sigma^*$ such that {\bf (i)} $y \in L'$ if and only if $(x, k) \in L$, and {\bf (ii)} $|y| \leq g(k)$ for some computable function $g: \nn \rightarrow \nn$.
If $g(\cdot)$ is a polynomial function, then $L$ is said to admit a {\em polynomial compression}.
It is well-known that a decidable parameterized problem is FPT if and only if it admits a kernelization \cite{CyganFKLMPPS15}.
For further details on details on parameterized complexity and kernelization, we refer to \cite{CyganFKLMPPS15,DowneyF13}.
 \fi


\subparagraph{Parameterized Enumeration and Enumeration Kernelization.}
We use the framework for parameterized enumeration that was proposed by Creignou et al. \cite{CreignouMMSV17}.
An {\em enumeration problem} over a finite alphabet $\Sigma$ is a tuple $(L, {\sol})$ such that
\begin{enumerate}[(i)]
	\item $L\subseteq \Sigma^*$ is a decision problem, and
	\item ${\sol}: \Sigma^* \rightarrow 2^{\Sigma^*}$ is a computable function such that ${\sol}(x)$ is a nonempty finite set if and only if $x \in L$.
\end{enumerate}
Here $x \in \Sigma^*$ is an instance, and ${\sol}(x)$ is the set of solutions to instance $x$. Observe that~$L$ is decidable since~$\sol$ is a computable function.
A {\em parameterized enumeration problem} is defined as a triple $\Pi = (L, \sol, \kappa)$ such that $(L, {\sol})$ satisfy the same as defined in item (i) and (ii) above. 
In addition to that $\kappa: \Sigma^* \rightarrow \nn$ is the {\em parameter}.
We define here the parameter as  a computable function $\kappa(x)$\iflong; it is natural to assume that the parameter is given with the input or $\kappa(x)$ can be computed in polynomial time\fi.
An {\em enumeration algorithm} $\cA$ for a parameterized enumeration problem $\Pi$ is a deterministic algorithm that given $x \in \Sigma^*$, outputs ${\sol}(x)$ exactly without duplicates and terminates after a finite number of steps. 
If $\cA$ outputs exactly ${\sol}(x)$ without duplicates and eventually terminates in $f(\kappa(x))|x|^{\OO(1)}$-time, then $\cA$ is called an {\em FPT-enumeration algorithm}.
For $x \in L$ and $1 \leq i \leq |{\sol}(x)|$, the $i$-th {\em delay} of $\cA$ is the time taken between outputting the $i$-th and $(i+1)$-th solution of ${\sol}(x)$.
The $0$-th delay is the {\em precalculation} time that is the time from the start of the algorithm until the first output.
The $|{\sol}(x)|$-th delay is the {\em postcalculation} time that is the time from the last output to the termination of $\cA$.
If the enumeration algorithm $\cA$ on input $x \in \Sigma^*$, outputs ${\sol}(x)$ exactly without duplicates such that every delay is $f(\kappa(x))|x|^{\OO(1)}$, then ${\cA}$ is called an {\em FPT-delay enumeration algorithm}.

\begin{definition}
\label{defn:poly-delay-enum-kernel}
Let $\Pi = (L, {\sol}, \kappa)$ be a parameterized enumeration problem.
A {\em polynomial-delay enumeration kernel(ization)} for $\Pi$ is a pair of algorithms $\cA$ and $\cA'$ such that.
\begin{enumerate}[(i)]
	\item\label{delay-enum-prop-1} For every instance $x$ of $\Pi$, the {\em kernelization algorithm} $\cA$ computes in time polynomial in $|x| + \kappa(x)$ an instance $y$ of $\Pi$ such that $|y| + \kappa(y) \leq f(\kappa(x))$ for a computable function~$f$.
	\item\label{delay-enum-partition-prop} For every $s \in {\sol} (y)$, the {\em solution-lifting algorithm} $\cA'$ computes with delay in polynomial in $|x| + |y| + \kappa(x) + \kappa(y)$ a nonempty set of solutions $S_s \subseteq {\sol} (x)$ such that $\{S_s \mid s \in {\sol} (y)\}$ is a partition of ${\sol} (x)$.
\end{enumerate}
We call~$f$ the {\em size} of the polynomial-delay enumeration kernel(ization).
If $f$ is a polynomial function, then $\Pi$ is said to admit a polynomial-delay enumeration kernel(ization) of {\em polynomial size}. 
\end{definition}

Observe that by (ii) of the definition above, $x \in L$ if and only if $y \in L$.
In addition, Property (ii) implies that ${\sol}(x) \neq \emptyset$ if and only if ${\sol}(y) \neq \emptyset$.
An enumeration kernel is {\em bijective} if for every $s \in {\sol}(y)$, the solution-lifting algorithm produces a unique solution to $\hat s \in {\sol}(x)$ giving a bijection between ${\sol}(y)$ and ${\sol}(x)$.

The notion of {\em fully-polynomial enumeration kernel(ization)} is defined in a similar way.
The condition (i) from Definition \ref{defn:poly-delay-enum-kernel} remains the same and the only difference is in condition (ii) which is replaced by the following condition.
\begin{description}
	\item[(ii*)] For every $s \in {\sol} (y)$, the {\em solution-lifting algorithm} $\cA'$ computes in time polynomial in $|x| + |y| + \kappa(x) + \kappa(y)$ a nonempty set of solutions $S_s \subseteq {\sol} (x)$ such that $\{S_s \mid s \in {\sol} (y)\}$ is a partition of ${\sol} (x)$.
\end{description}
In the condition (ii), the solution-lifting algorithm for a fully-polynomial enumeration kernel runs in polynomial time but the solution-lifting algorithm for a polynomial-delay enumeration kernelization runs in polynomial delay.
Every bijective enumeration kernelization is a fully-polynomial enumeration kernelization and every fully-polynomial enumeration kernelization is a polynomial-delay enumeration kernelization \cite{GolovachKKL22}. 
\iflong Let $\Pi = (L, {\sol}, \kappa)$ be a parameterized enumeration problem. 
Golovach et al. \cite{GolovachKKL22}  proved that {\bf (i)} $\Pi$ admits an FPT-enumeration algorithm (an FPT-delay enumeration algorithm) if and only if $\Pi$ admits a fully-polynomial enumeration kernel (polynomial-delay enumeration kernel), and {\bf (ii)} in fact, $\Pi$ can be solved in polynomial time (or with polynomial delay) if and only if $\Pi$ admits a fully-polynomial (polynomial-delay) enumeration kernelization of constant size.
Notice that the enumeration kernelizations that can be inferred from the above does not necessarily have polynomial size.
\fi

\iflong
  \subsection{Nonexistence of Enumeration Kernels for $\mathbf{d}$-Cut.}
\else
  \subparagraph{Nonexistence of Enumeration Kernels for $\mathbf{d}$-Cut.}
\fi
\label{sec:prelim-part-5}

We now show some lower bounds for enumeration kernels for the considered problems. First, we show the nonexistence of enumeration kernels of polynomial size for the vertex-integrity of the input graph. \iflong No tailored lower-bound framework is yet known for parameterized enumeration problems. However, by the definition of enumeration kernelization, the kernelization algorithm also is a kernelization algorithm for the corresponding decision problem. 
Thus, if a parameterized enumeration problem $\Pi = (L, {\sol}, \kappa)$ admits a polynomial-delay enumeration kernel of polynomial size, then its decision version $L$ admits a polynomial kernel when parameterized by $\kappa$.
Therefore, if for some a parameterized enumeration problem $\Pi = (L, {\sol}, \kappa)$ the decision problem $L$ parameterized by $\kappa$ does not admit a polynomial kernel unless {\nka}, then $\Pi$ does not admit a polynomial-delay/fully-polynomial enumeration kernel of polynomial size unless {\nka}.
We thus directly get the following.

\begin{proposition}
\label{prop:dec-kernel-enum-kernel}
Let $\Pi = (L, {\sol}, \kappa)$ be a parameterized enumeration problem.
If $\Pi$ admits a polynomial-delay enumeration kernelization of size $f(\kappa)$, then $L$ parameterized by $\kappa$ also admits a kernel of size $f({\kappa})$.
\end{proposition}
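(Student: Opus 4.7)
The plan is to argue that the kernelization algorithm $\cA$ from the polynomial-delay enumeration kernelization of $\Pi$ is, on its own, already a kernelization algorithm for the decision problem $L$ when parameterized by $\kappa$. Concretely, suppose $\Pi$ admits a polynomial-delay enumeration kernelization $(\cA, \cA')$ of size $f(\kappa)$. Given an input $x$ of $L$, I would run $\cA$ to produce an instance $y$ in time polynomial in $|x| + \kappa(x)$ with $|y| + \kappa(y) \leq f(\kappa(x))$, which is exactly the size bound required for a kernel.

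The only remaining thing to verify is the equivalence $x \in L \iff y \in L$. For this I would invoke the partition property of the solution-lifting algorithm $\cA'$ guaranteed by Definition~\ref{defn:poly-delay-enum-kernel}(ii). That property says $\{S_s \mid s \in \sol(y)\}$ is a partition of $\sol(x)$ where each $S_s$ is nonempty. Hence $\sol(x)$ is empty if and only if $\sol(y)$ is empty. By the definition of a parameterized enumeration problem, $\sol(z)$ is a nonempty finite set if and only if $z \in L$, so this immediately yields $x \in L \iff y \in L$.

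Combining these two observations, $\cA$ together with the trivial post-processing ``output $y$'' constitutes a polynomial-time computable mapping from instances of $L$ to equivalent instances of $L$ of size at most $f(\kappa(x))$, i.e. a kernelization of size $f(\kappa)$. No step is genuinely difficult here; the only subtlety is to note that one does not need $\cA'$ itself to verify the decision equivalence, one only needs the structural guarantee that $\cA'$ produces nonempty fibers that partition $\sol(x)$, which is already encoded in the definition of a polynomial-delay enumeration kernelization.
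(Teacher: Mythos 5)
Your argument is correct and is essentially the paper's own: the paper also observes (in the discussion preceding the proposition and immediately after Definition~\ref{defn:poly-delay-enum-kernel}) that the kernelization algorithm $\cA$ already constitutes a decision kernel, with the equivalence $x \in L \iff y \in L$ following from the nonemptiness of the fibers $S_s$ in the partition of $\sol(x)$ and the defining property that $\sol(z) \neq \emptyset \iff z \in L$. Nothing genuinely different here.
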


Gomes and Sau \cite{GomesS21} showed that, unless {\nka},  {\dcut} parameterized solely by the treewidth (${\sf tw}$) or by cliquewidth of the input graph does not admit a polynomial kernel.
Therefore, the following statement holds true.

\begin{lemma}[\cite{GomesS21}]
\label{prop:no-poly-kernel-treewidth}
For any fixed $d \geq 1$,
{\enumdcut}, {\enummaxdcut} and {\enummindcut} do not admit any polynomial-delay enumeration kernelizations of polynomial size when parameterized by the treewidth (or cliquewidth respectively), unless {\nka}.
\end{lemma}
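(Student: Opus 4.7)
The plan is to derive this lemma directly from the previously stated Proposition \ref{prop:dec-kernel-enum-kernel} together with the already known kernelization lower bound of Gomes and Sau \cite{GomesS21} for the decision problem {\dcut} parameterized by treewidth (respectively cliquewidth). Proposition \ref{prop:dec-kernel-enum-kernel} asserts that if a parameterized enumeration problem $(L,\sol,\kappa)$ admits a polynomial-delay enumeration kernelization of size $f(\kappa)$, then the underlying decision problem $L$ parameterized by $\kappa$ admits a kernelization of the same size. Thus, the strategy is to identify the underlying decision problem for each enumeration variant, apply the proposition contrapositively, and conclude via the Gomes-Sau lower bound.

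First, I would verify that for each of the three enumeration variants, the solution set $\sol(G)$ is nonempty if and only if $G$ admits a $d$-cut. For {\enumdcut} this is immediate from the definition. For {\enummindcut} and {\enummaxdcut}, I would appeal to the elementary fact that any nonempty finite poset contains minimal and maximal elements, applied to the poset of all $d$-cuts of $G$ ordered by inclusion of their edge cuts; hence $G$ has a $d$-cut if and only if $G$ has a minimal $d$-cut if and only if $G$ has a maximal $d$-cut. In all three cases, the underlying decision problem therefore coincides exactly with {\dcut}, and the parameter value (treewidth or cliquewidth) is an invariant of the input graph that is unchanged between the decision and enumeration formulations.

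Now suppose, toward contradiction, that one of the three enumeration variants admits a polynomial-delay enumeration kernelization of polynomial size when parameterized by treewidth (respectively cliquewidth). Then Proposition \ref{prop:dec-kernel-enum-kernel} yields a polynomial kernel for {\dcut} under the same parameter, contradicting Gomes and Sau \cite{GomesS21} unless {\nka}. There is no genuine obstacle in the argument, as the only content beyond invoking the two prior results is the routine verification that the decision problem underlying each enumeration variant is precisely {\dcut}; once that observation is made the lemma becomes a direct transfer of the decision-setting lower bound into the enumeration setting.
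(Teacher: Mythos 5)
Your proposal is correct and follows exactly the paper's reasoning: the paper also derives this lemma by combining Proposition~\ref{prop:dec-kernel-enum-kernel} with the Gomes--Sau kernelization lower bound for the decision version of {\dcut} parameterized by treewidth (respectively cliquewidth). The extra step you spell out---that the underlying decision problem of all three enumeration variants is {\dcut}, since a graph has a (minimal/maximal) $d$-cut if and only if it has a $d$-cut---is left implicit in the paper but is exactly what makes the transfer go through.
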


We pass on a short proof illustrating that {\dcut} also does not admit a polynomial kernelization when parameterized by the vertex-integrity~(${\sf vi}$) of the graph.  The proof is based on the framework of OR-cross composition introduced in \cite{BodlaenderJK14}.
In the following, we define the framework for the sake of completeness.

\begin{definition}[Polynomial Equivalence Relation]
\label{defn:poly-eqv-relation}
An equivalence relation $R$ on $\Sigma^*$ is a {\em polynomial equivalence relation} if the following two conditions hold:
\begin{enumerate}[(i)]
	\item Given two strings $x, y \in \Sigma^*$, there is an algorithm that correctly decides whether $x$ and $y$ belong to the same equivalence class with respect to the equivalence relation $R$ in $(|x| + |y|)^{\OO(1)}$-time, and
	\item for any finite set $Q \subseteq \Sigma^*$, the equivalence relation partitions the elements of $Q$ into at most $(\max_{x \in Q} |x|)^{\OO(1)}$ many equivalence classes.
\end{enumerate}
\end{definition}

Now, we formally state the definition of OR-cross composition that was introduced by Bodlaender et al. \cite{BodlaenderJK14} as follows.

\begin{definition}[OR-cross composition]
\label{defn:OR-cross-composition}
Let $L_1 \subseteq \Sigma^*$ be a classical problem and $L_2$ be a parameterized problem.
The problem $L_1$ {\em cross composes} into $L_2$ if there is a polynomial equivalence relation $R$ on $\Sigma^*$ and an algorithm which given $2^t$ instances $x_1,\ldots,x_{2^t} \in \Sigma^*$, belonging to the same equivalence class of $R$, computes an instance $(x^*, k^*) \in \Sigma^* \times \nn$ in time polynomial in $\sum\limits_{i=1}^{2^t} |x_i|$ such that:
\begin{enumerate}[(i)]
	\item $(x^*, k^*) \in L_2$ if and only if there exists $i \in [2^t]$ such that $x_i \in L_1$, and
	\item $k^*$ is bounded by polynomial in $(t + \max\limits_{i=1}^{2^t} |x_i|)$.
\end{enumerate}
\end{definition}

With the definitions above, Bodlander et al. \cite{BodlaenderJK14} proved the following.

\begin{proposition}[\cite{BodlaenderJK14}]
\label{prop:cross-compose-no-kernel}
If $L_1 \subseteq \Sigma^*$ is an NP-hard problem and cross composes into a parameterized problem $L_2 \subseteq \Sigma^* \times \nn$, then $L_2$ parameterized by $k^*$ does not admit a polynomial compression unless {\nka}.
\end{proposition}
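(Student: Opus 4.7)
The plan is to derive this as a contrapositive, assuming that $L_2$ admits a polynomial compression and then building a so-called OR-distillation of the NP-hard problem $L_1$, from which the collapse $\text{NP} \subseteq \text{coNP/poly}$ follows through the classical result of Fortnow and Santhanam (and its consequence via Yap's theorem). In more detail, I would suppose for contradiction that $L_2$ parameterized by $k^*$ has a polynomial compression into some problem $Q \subseteq \Sigma^*$, i.e.\ a polynomial-time algorithm that maps an instance $(x^*, k^*)$ of $L_2$ to an instance $y$ of $Q$ with $|y| \le p(k^*)$ for some polynomial $p$, and $y \in Q$ iff $(x^*,k^*) \in L_2$.

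The key steps are then as follows. First, take an arbitrary sequence of $N$ instances $x_1,\dots,x_N$ of $L_1$, pad them if necessary so that they all have the same length $s$, and use the polynomial-time equivalence check of $R$ together with a bucketing argument to partition them into equivalence classes; it suffices to handle one class at a time (instances from different classes can be ORed trivially by running the distillation on each class and ORing the outcomes). Within one class, I would group the inputs into batches of size $2^t$ with $t = \lceil \log N \rceil$ and apply the cross-composition to obtain an instance $(x^*,k^*)$ of $L_2$ with $k^* \le q(t + s)$ for some polynomial $q$; crucially, the parameter depends only polylogarithmically on $N$. Applying the hypothetical polynomial compression then yields an instance $y$ of $Q$ of size $|y| \le p(q(t+s))$, i.e.\ polynomial in $\log N + s$, whose membership in $Q$ is equivalent to the OR of membership of $x_1,\dots,x_N$ in $L_1$. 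This is precisely an OR-distillation of $L_1$.

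Finally, by the Fortnow--Santhanam theorem, the existence of such an OR-distillation for the NP-hard problem $L_1$ implies $\text{NP} \subseteq \text{coNP/poly}$: one reduces an arbitrary coNP problem to $\overline{L_1}$ via the NP-hardness of $L_1$, and then shows that the distilled instance, together with a short advice string describing which of the $N$ inputs should be satisfied, gives a coNP certificate of polynomial length, contradicting the assumption unless the collapse happens. The main subtlety, and the step I would spend most care on, is matching the parameter bound: the condition $k^* \le (t + \max |x_i|)^{O(1)}$ must feed through the polynomial compression so that the final distilled instance has size polynomial in $\log N + s$ rather than in $N$; this is exactly where using batches of size $2^t$ (as opposed to $t$) is essential, and where padding to a common length within an equivalence class becomes necessary so that the size of the distilled instance cannot grow with the number of inputs.
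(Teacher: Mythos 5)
This proposition is stated in the paper as a cited result from Bodlaender, Jansen and Kratsch~\cite{BodlaenderJK14}; the paper gives no proof of its own, so there is nothing internal to compare against. Your sketch does reconstruct the standard BJK argument correctly at a high level: assume a polynomial compression, pad the $L_1$-instances to common length, split by the polynomial equivalence relation, apply the cross-composition within a class (choosing $t = \lceil \log N \rceil$ so that $k^*$ is $\mathrm{poly}(\log N + s)$), and feed the result through the compression to obtain something distillation-like; the collapse then follows from Fortnow--Santhanam.

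Two places where the sketch is looser than the actual argument deserve flagging. First, after splitting into equivalence classes you obtain \emph{polynomially many} compressed instances, one per class, rather than a single one; saying one can ``OR the outcomes trivially'' glosses over the fact that the Fortnow--Santhanam / Dell--van Melkebeek machinery has to be invoked in the form that tolerates a list of $\mathrm{poly}(s)$ small outputs whose disjunction is equivalent to the original OR (or, equivalently, one must concatenate them and verify the total bit-length is still $\mathrm{poly}(s)$ for $N = \mathrm{poly}(s)$). Second, your description of the Fortnow--Santhanam step itself is not accurate: the advice is not ``which of the $N$ inputs should be satisfied.'' The argument shows $\overline{L_1}\in\mathsf{NP/poly}$ via a covering argument -- one fixes, for each input length, a polynomial-size set of ``covering'' distilled no-instances as advice, and the nondeterministic machine guesses the companion no-instances to place alongside the query and checks that the distillation lands in the advice set. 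Since you treat FS as a black box the conclusion still goes through, but the parenthetical explanation as written would not survive scrutiny.
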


We provide a cross composition of {\dcut} to itself as follows.

\begin{lemma}
\label{lemma:d-cut-cross-compose}
For any fixed $d \geq 1$, {\dcut} does not admit a polynomial compression when parameterized by the vertex-integrity of the graph, unless {\nka}.
\end{lemma}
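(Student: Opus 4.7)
My plan is to prove the lemma by giving an OR-cross-composition (Definition~\ref{defn:OR-cross-composition}) from {\dcut} to itself with the parameter ${\vi}$, and then invoking Proposition~\ref{prop:cross-compose-no-kernel}. Since {\dcut} is NP-hard for every fixed $d \geq 1$ by \cite{Chvatal84}, exhibiting such a cross-composition is enough.

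As polynomial equivalence relation I would declare two instances equivalent iff they are connected and share the same number of vertices; all other inputs can be collected in one trivial class whose OR is computable directly. Given $2^t$ equivalent instances $G_1,\ldots,G_{2^t}$ on $n$ vertices each, I would build a single instance $G^{\star}$ of {\dcut} such that $G^{\star}$ is a YES-instance if and only if at least one $G_i$ is, and such that ${\vi}(G^{\star})$ is polynomially bounded in $n+t$.

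The construction employs a bit-selection skeleton. For each $j \in [t]$ I introduce a $(2d+1)$-clique $K_j$; by Lemma~\ref{lem:neighborhood-break-up}(\ref{it:clique-mono}) each $K_j$ is forced to lie on exactly one side of any $d$-cut of $G^{\star}$, so the sides of $K_1,\ldots,K_t$ encode a binary signature $\sigma\in\{0,1\}^t$ attaining all $2^t$ values. For every $i\in[2^t]$ I include a disjoint copy of $G_i$ together with an attachment gadget that joins $G_i$ to the cliques $K_1,\ldots,K_t$ in a way depending on the binary representation of $i$. Using pendant-clique attachments together with edge bundles of cardinality larger than $d$, one arranges, through iterated applications of Lemma~\ref{lem:neighborhood-break-up}(\ref{it:d-neighbor-mono}) and (\ref{it:overlap-mono}), that whenever $\sigma$ differs from the binary representation of $i$ in at least one coordinate the entire copy of $G_i$ becomes monochromatic, while in the exact-match case $G_i$ remains free to realise any $d$-cut of its own inside $G^{\star}$.

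Let $X$ consist of all $K_j$ together with the vertices used in every attachment gadget; then $|X| = O(td)$, and removal of $X$ from $G^{\star}$ leaves the disjoint union of unmodified copies of the $G_i$, each a component of at most $n$ vertices. Consequently ${\vi}(G^{\star}) \leq |X| + n = O(td + n)$, which is polynomial in $n+t$ as required by Definition~\ref{defn:OR-cross-composition}. The main obstacle will be the explicit design and verification of the attachment gadgets: the tight degree budget of a $d$-cut makes it delicate to force monochromaticity on every mismatched copy without overconstraining the matched copy, and one still has to check that any $d$-cut of $G^{\star}$ projects onto a valid $d$-cut of the single $G_{\sigma}$ singled out by the signature, through a case analysis on the possible side-patterns of $K_1,\ldots,K_t$.
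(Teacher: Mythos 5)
Your high-level plan (OR-cross-composition of {\dcut} into itself, the equivalence relation by vertex count, and the appeal to Proposition~\ref{prop:cross-compose-no-kernel}) is correct and matches the paper's framing. However, the construction you propose is both unnecessary in its complexity and incomplete in the part that actually carries the proof, and it has a few points that would fail as stated.

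The core gap is that you leave the attachment gadgets undesigned and yourself flag them as ``the main obstacle.'' But that obstacle \emph{is} the proof: without an explicit gadget and a verification that (a) a mismatched $G_i$ is forced monochromatic, (b) a matched $G_\sigma$ is left completely unconstrained so that any $d$-cut of $G_\sigma$ survives in $G^\star$ and, conversely, any $d$-cut of $G^\star$ projects to a $d$-cut of $G_\sigma$, and (c) the skeleton alone admits no spurious $d$-cut, nothing has been shown. In addition, your vertex-integrity accounting $|X|=O(td)$ only works if the per-instance gadgets add \emph{no} new vertices; if each of the $2^t$ copies gets even a single pendant-clique vertex, $|X|$ or the component sizes after deleting $X$ blow up (you would need to argue the extra vertices fall into the components with the $G_i$ and stay $\mathrm{poly}(n)$, which you do not do). You also overlook that $(A,B)$ and $(B,A)$ denote the same $d$-cut, so the sides of $K_1,\dots,K_t$ give only $2^{t-1}$ usable signatures, and you never address why the whole construction cannot admit a trivial $d$-cut (e.g., splitting two components) that selects no $G_i$ at all.

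The reason all of this is avoidable is that an index-selection mechanism is simply not needed for an \emph{OR}-composition of {\dcut}. The paper's construction is much lighter: pick one vertex $u_i$ in each connected $G_i$, add a single clique $C$ on exactly $2d$ vertices, and join every $u_i$ to all of $C$. Then ${\vi}(G)\le 2d+n$ immediately (delete $C$). For the backward direction, each $C\cup\{u_i\}$ is a $(2d+1)$-clique and hence monochromatic by Lemma~\ref{lem:neighborhood-break-up}(\ref{it:clique-mono}); by item (\ref{it:overlap-mono}) the union $C\cup\{u_1,\dots,u_{2^t}\}$ is monochromatic. Since the composed graph is connected, any $d$-cut is nonempty, so some crossing edge must lie entirely inside some $G_j$; and because $u_j$ lies on the same side as all of $C$, restricting the cut to $G_j$ does not change any cut-degree, so $G_j$ has a $d$-cut. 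The forward direction extends a $d$-cut of $G_i$ by placing $C$ and every other $G_j$ on the side containing $u_i$. This single shared-clique trick does the entire job with no gadget, no signature encoding, and no delicate degree-budget balancing. You should discard the bit-selection skeleton and use this much simpler shared-anchor construction.
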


\begin{proof}
Let $G_1,\ldots,G_{2^t}$ denote $2^t$ instances of {\dcut} all having $n$ vertices and each being connected graph.
Without loss of generality, we assume that $t > n + 2d$.
We say that two graphs $G_i$ and $G_j$ are equivalent if they have the same number of vertices.
Clearly, this is a polynomial equivalence relation and all $G_1,\ldots,G_{2^t}$ appear in the same equivalence class.
We construct a graph $G$ from $G_1,\ldots,G_{2^t}$ as follows.
For every $i \in [2^t]$, we choose $u_i \in V(G_i)$ arbitrarily.
Add a clique $C$ with exactly $2d$ vertices and for every $i \in [2^t]$, make $u_i$ adjacent to all vertices of $C$.
This completes the construction of $G$ and we set $k = {\sf td}(G)$.
Note that $G$ can be constructed in time polynomial in $\sum\limits_{i=1}^{2^t} |V(G_i)|$.

First, observe that ${\sf vi}(G)$, the vertex-integrity of $G$ is at most $2d + |V(G_i)| = 2d + n$.
Note that $2d + n$ is upper-bounded by a polynomial in $t + n$ as $t > 2d$.
It remains to argue that there exists $i \in [2^t]$ such that $G_i$ has a $d$-cut if and only if $G$ has a $d$-cut. What is crucial here is that $G$ is a connected graph.

For the forward direction ($\Rightarrow$), let $G_i$ has a $d$-cut $(A_i, B_i)$ for some $i \in [2^t]$.
As $G_i$ is connected, $(A_i, B_i)$ is nonempty $d$-cut.
We extend $(A_i, B_i)$ into a $d$-cut $(A, B)$ of $G$ as follows.
Initialize $A := A_i$ and $B := B_i$.
If $u_i \in A_i$, then we add $C$ into $A$.
Subsequently, for every $j \neq i$, we add all vertices of $G_j$ into $A$.
Otherwise, $u_i \in B_i$, and we add $C$ into $B$.
Subsequently, for every $j \neq i$, we add all vertices of $G_j$ into $B$.
Observe that there is no edge between the vertices of $A_i$ and $A_j$ (respectively between the vertices of $B_i$ and $B_j$, and the vertices of $A_i$ and $B_j$), this ensures us that $(A, B)$ is a $d$-cut of $G$.

Conversely, for the backward direction ($\Leftarrow$), let $(A, B)$ be a $d$-cut of $G$.
As $G$ is connected, $(A, B)$ is a nonempty $d$-cut.
Note that for every $i \in [2^t]$, $C \cup \{u_i\}$ is a clique with $2d+1$ vertices.
Due to the item (\ref{it:clique-mono}) of Lemma \ref{lem:neighborhood-break-up}, $C \cup \{u_i\} \subseteq A$ or $C \cup \{u_i\} \subseteq B$.
\todo[inline]{Diptapriyo: actually this follows both because of item (\ref{it:clique-mono}) and (\ref{it:d-neighbor-mono}) of Lemma \ref{lem:neighborhood-break-up} Referring one of them should be good enough, I hope.}
If $C \cup \{u_i\} \subseteq A$, then for every $j \in [2^t]$, $C \cup \{u_{j}\} \subseteq A$.
{As for every $i \in [2^t]$, $C \cup \{u_i\}$ is monochronatic, due to item (\ref{it:overlap-mono}) of Lemma \ref{lem:neighborhood-break-up}, $(\cup_{i \in [2^t]} \{u_i\}) \cup C$ is monochromatic.
As $(\cup_{i \in [2^t]} \{u_i\}) \cup C$ is monochromatic, and} $(A, B)$ is a nonempty $d$-cut of $G$, there must be some $j \in [2^t]$ such that $V(G_j) \cap A, V(G_j) \cap B \neq \emptyset$.
\todo{Diptapriyo: added an extra argument line as otherwise, I had a feel that I was handwaving before.}
As $G_j$ is a subgraph of $G$, the vertex bipartition $(A \cap V(G_i), B \cap V(G_j))$ is a $d$-cut of $G_j$.
Therefore, there exists $j \in [2^t]$ such that $G_j$ has a $d$-cut.

The above construction thus satisfies Definition~\ref{defn:OR-cross-composition}, and {\dcut} is an NP-hard problem.
Hence, due to Proposition \ref{prop:cross-compose-no-kernel}, {\dcut} parameterized by the vertex-integrity of the graph does not admit a polynomial compression unless {\nka}.
\end{proof}

Note that vertex-integrity of a graph is always at least as large as its treedepth. 
Using the above lemma and Proposition \ref{prop:dec-kernel-enum-kernel}, we thus have the following. \fi
\begin{proposition}
\label{prop:no-poly-kernel-treedepth}
For every fixed $d \geq 1$, {\enumdcut}, {\enummaxdcut}, and {\enummindcut} when parameterized ${\vi}$ or by ${\sf td}$ and do not admit polynomial-delay enumeration kernels of polynomial size unless {\nka}.
\end{proposition}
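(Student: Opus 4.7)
The plan is to reduce the enumeration-kernel lower bound to a decision-kernel lower bound and then rule out the decision kernel by an OR-cross composition. By Definition~\ref{defn:poly-delay-enum-kernel}\,(\ref{delay-enum-prop-1}), the kernelization algorithm of a polynomial-size polynomial-delay enumeration kernel produces an instance $y$ with $\sol(y)\neq\emptyset$ iff $\sol(x)\neq\emptyset$, and is therefore a polynomial compression for the decision problem {\dcut}. Thus it suffices to show that {\dcut} parameterized by ${\vi}$ admits no polynomial compression unless \nka; the corresponding bound for ${\sf td}$ then follows because ${\sf td}(G) \leq {\vi}(G)$ for every graph $G$, so any polynomial-size kernel parameterized by ${\sf td}$ would also be polynomial size in ${\vi}$.

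For the decision lower bound I would construct an OR-cross composition from the (NP-hard) {\dcut} problem into itself. Starting from $2^t$ connected equivalent instances $G_1,\ldots,G_{2^t}$ each on $n$ vertices (with the polynomial equivalence relation identifying instances having the same vertex count, and assuming $t > n + 2d$), choose any $u_i \in V(G_i)$ for each $i$, form the disjoint union of the $G_i$, introduce a fresh clique $C$ on $2d$ vertices, and add every edge between each $u_i$ and every vertex of $C$. The output parameter is $k^* = 2d + n$: removing $C$ leaves components of size $n$, so ${\vi}(G) \leq 2d + n$, which is polynomial in $t + n$ as required by Definition~\ref{defn:OR-cross-composition}.

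The correctness of the composition is the main technical step and rests on Lemma~\ref{lem:neighborhood-break-up}. For every $i$ the set $C \cup \{u_i\}$ is a clique on $2d + 1$ vertices and is hence monochromatic in every $d$-cut of $G$ by item~(\ref{it:clique-mono}); because these cliques all share $C$, iterating item~(\ref{it:overlap-mono}) forces the entire block $C \cup \{u_1,\ldots,u_{2^t}\}$ to be monochromatic. In the forward direction, any $d$-cut $(A_i, B_i)$ of some $G_i$ lifts to $G$ by placing $C$ and all other $G_j$ on the side containing $u_i$; no new crossing edges arise since distinct $G_j$ are pairwise nonadjacent. Conversely, any $d$-cut of the connected graph $G$ is nonempty; the monochromatic block sits entirely on one side, so the cut must split some $G_j$, yielding a $d$-cut of that $G_j$.

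The principal obstacle I foresee is guaranteeing that every vertex of the gadget clique $C$ is really forced to the same side across every $d$-cut, even though each vertex is individually allowed up to $d$ crossing edges; the choice $|C| = 2d$ is precisely what makes item~(\ref{it:clique-mono}) of Lemma~\ref{lem:neighborhood-break-up} applicable to each $C \cup \{u_i\}$, and this is what pins down the construction. Once the composition is in place, Proposition~\ref{prop:cross-compose-no-kernel} yields no polynomial compression for {\dcut} parameterized by ${\vi}$ (hence also by ${\sf td}$) unless \nka, which together with the opening observation proves Proposition~\ref{prop:no-poly-kernel-treedepth}.
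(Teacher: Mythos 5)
Your proof is correct and follows essentially the same route as the paper: an OR-cross composition of {\dcut} into itself using the $2d$-vertex clique gadget $C$ joined to one vertex $u_i$ of each instance, with Lemma~\ref{lem:neighborhood-break-up}~(\ref{it:clique-mono}) and~(\ref{it:overlap-mono}) forcing $C\cup\{u_1,\ldots,u_{2^t}\}$ to be monochromatic, then invoking the enumeration-to-decision kernel transfer (the paper's Proposition~\ref{prop:dec-kernel-enum-kernel}) and ${\sf td}(G)\leq{\vi}(G)$. One minor slip: the equivalence ${\sol}(x)\neq\emptyset$ iff ${\sol}(y)\neq\emptyset$ is a consequence of item~(\ref{delay-enum-partition-prop}) of Definition~\ref{defn:poly-delay-enum-kernel}, not item~(\ref{delay-enum-prop-1}).
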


For enumeration kernels, we can also exclude fully-polynomial enumeration kernels by showing that the number of solutions is too large. We do this for two parameterizations considered in this work, the vertex cover number~$\vc(G)$ and the neighborhood diversity~$\nd(G)$, and the problems of enumerating all or all maximal $d$-cuts.
\iflong 
\begin{proposition}
  \label{lem:dcut-sol-vc-nd}
  For each constant~$d\ge 1$ and each number~$k>2d$, there are arbitrarily large graphs~$G$ with~$((n-k)/k)^{k}$ maximal $d$-cuts for~$k= {\vc}(G)$ and~$k = {\ndd}(G)-1$ where~$c>0$ is some fixed constant.   
\end{proposition}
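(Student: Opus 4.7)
The plan is to exhibit an explicit family of graphs achieving the claimed number of maximal $d$-cuts. I take $G$ to be the disjoint union of $k$ copies of the star $K_{1,m}$, where $m=\lfloor(n-k)/k\rfloor$, so $|V(G)|\approx n$. I first verify $\vc(G)=k$: the $k$ star centers form a vertex cover, and any vertex cover must contain at least one vertex from every star (either the center or all $m$ of its leaves), so $\vc(G)\ge k$.

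To count maximal $d$-cuts I exploit the disconnected structure. Every edge cut of $G$ is the disjoint union of edge cuts over the stars, with the $d$-cut condition applying component-wise. In a single $K_{1,m}$ with $m>d$, a subset $F$ of the star-edges forms a $d$-cut iff $|F|\le d$ (the leaves satisfy the constraint trivially, and the center has exactly $|F|$ crossing edges), and such an $F$ is inclusion-maximal iff $|F|=d$. Hence each star contributes $\binom{m}{d}$ maximal $d$-cuts and $G$ has $\binom{m}{d}^{k}\ge m^{k}\ge ((n-k)/k-1)^{k}$ of them, matching the claimed asymptotic for the vertex-cover parameterization.

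The main obstacle is the neighborhood-diversity parameterization: the disjoint-star construction has $\ndd(G)=2k$ rather than $k+1$, since each star contributes one twin class for its center and one for its leaves. I will therefore either provide a second graph family tailored to the $\ndd$ bound or modify the first construction so as to collapse twin classes without losing cuts. The delicate point is that forcing twin classes to merge---for instance by making the star centers share additional neighbors to equate their outside-neighborhoods---creates cross-class constraints that, via items~(\ref{it:clique-mono}) and~(\ref{it:common-neighborhood-mono}) of Lemma~\ref{lem:neighborhood-break-up}, can trivialise all $d$-cuts whenever $k>2d$ (the typical failure mode is that large common neighborhoods force the whole structure to be monochromatic). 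The technical heart of the proof is therefore designing a graph whose $k$ leaf-side binary choices remain mutually independent and each individually produce $\Omega(m)$ options, yet whose twin-class partition still has size $k+1$, so that the product $\Omega(m^{k})$ lower bound on the number of maximal $d$-cuts survives.
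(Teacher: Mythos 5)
Your $\vc$ argument is correct: the disjoint-star graph has $\vc(G)=k$ and, by the component-wise count you give, $\binom{m}{d}^k$ maximal $d$-cuts, matching the paper's bound. But the proposal is incomplete on the $\ndd$ side: you explicitly defer the construction achieving both parameter values simultaneously, so as written this only establishes the $\vc$ clause of the proposition.

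The paper's own construction is a one-line modification of yours: keep each center's $(n-k)/k$ pendant leaves private, but make the $k$ centers into a clique $S$. Your worry about trivialization does not apply to this choice. The clique $S$ of size $k>2d$ is indeed forced monochromatic by item~(\ref{it:clique-mono}) of Lemma~\ref{lem:neighborhood-break-up}, but each pendant sees only its own center, so the leaves are not dragged along; each $u\in S$ still independently chooses which $d$ of its $m$ leaves to place on the opposite side, and the count $\binom{m}{d}^k$ survives unchanged. The failure mode you feared (large common neighborhoods forcing everything monochromatic) only arises if centers share pendants, which the paper avoids. That said, your suspicion about the exact value of $\ndd$ is well-founded: in the clique-plus-pendants graph, every center $u$ has a distinct outside neighborhood (its own leaf set) and hence is a singleton module, and every leaf set is itself a module, giving $\ndd(G)=2k$ rather than the claimed $k+1$ --- the same value your disjoint-star graph attains. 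The stated $\ndd(G)=k+1$ appears to be a slip in the paper; it does not affect Proposition~\ref{cor:no-fully-poly-enum-kernels}, which only needs the solution count to grow like $n^{\Omega(k)}$ for bounded $\ndd(G)$ as $n\to\infty$, but the exact relation $k=\ndd(G)-1$ is not realized by either construction.
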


\begin{proof}
Consider the following family of graphs~$G_n$,~$n\in \mathbb{N}$ with~$(n-k)\mod k=0$. Each graph~$G_n$ contains a clique~$S$ of size~$k$. For each~$u\in S$,~$G_n$ contains~$(n-k)/k$ degree-one vertices to~$G$ that are adjacent to~$u$. Observe that~${\vc}(G)=k$ and~${\ndd}(G)=k+1$.
  Then, the maximal $d$-cuts of this graph contain for each~$u\in S$ exactly~$d$ edges between~$u$ and its degree-one neighbors. The total number of maximal~$d$-cuts is thus~$\binom{(n-k)/k)}{d}^k \ge ((n-k)/k)^k$.   
\end{proof}

Now, the above size bound grows roughly like~$n^k$. In contrast, problems with fully-polynomial enumeration kernels have at most~$f(k)\cdot n^{\OO(1)}$ solutions  for some function~$f(k)$ as discussed above. Hence, we obtain the following. 
\fi
\begin{proposition}
\label{cor:no-fully-poly-enum-kernels}
  {\enumdcut} and {\enummaxdcut} parameterized by ${\vc}(G)$ or ${\ndd}(G)$ do not admit fully-polynomial enumeration kernels of any size. 
\end{proposition}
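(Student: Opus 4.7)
The plan is to use a standard counting argument: a fully-polynomial enumeration kernel implies that the total number of solutions on any instance $x$ is bounded by $g(\kappa(x))\cdot |x|^{\OO(1)}$ for some computable function $g$, and then combine this with the lower bound on the number of maximal $d$-cuts established in the preceding proposition to obtain a contradiction.

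First, I would formalize the upper bound. Suppose $(\cA, \cA')$ is a fully-polynomial enumeration kernel of size $f$. On input $x$ with $k=\kappa(x)$, the kernel $y$ satisfies $|y|+\kappa(y)\leq f(k)$. By Definition \ref{defn:poly-delay-enum-kernel}(ii*), for every $s\in \sol(y)$ the solution-lifting algorithm produces $S_s$ in time polynomial in $|x|+|y|+\kappa(x)+\kappa(y)$, so $|S_s|\leq p(|x|+f(k)+k)$ for some fixed polynomial $p$. Since the sets $\{S_s\}_{s\in \sol(y)}$ partition $\sol(x)$ and $|\sol(y)|\leq 2^{|y|}\leq 2^{f(k)}$, we get
\[
|\sol(x)| \;\leq\; 2^{f(k)}\cdot p(|x|+f(k)+k) \;=\; g(k)\cdot |x|^{c}
\]
for a computable function $g$ and a fixed constant $c$ (namely the degree of $p$).

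Second, I would plug in the lower-bound family. The preceding proposition yields, for every fixed $d\geq 1$ and every $k>2d$, an infinite family of graphs $G_n$ with $\vc(G_n)=k$, $\ndd(G_n)=k+1$, and at least $((n-k)/k)^{k}$ maximal $d$-cuts. Each such maximal $d$-cut is a $d$-cut, so the solution counts for both \enumdcut{} and \enummaxdcut{} are $\Omega(n^k)$ on this family, for both the $\vc$- and the $\ndd$-parameterization.

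The potential obstacle is that the polynomial degree $c$ in the upper bound depends on the hypothetical kernel, so I cannot use a single value of $k$. This is resolved by the fact that the lower-bound family exists for every fixed $k>2d$: given any alleged kernel and its associated constant $c$, I simply instantiate the family with any fixed $k>\max(2d,c)$; then the upper bound $g(k)\cdot n^{c}$ is eventually dominated by the lower bound $((n-k)/k)^{k} = \Omega(n^{k})$ as $n\to\infty$, a contradiction. This rules out fully-polynomial enumeration kernels of any size for both problems and both parameterizations.
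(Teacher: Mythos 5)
Your proposal is correct and follows essentially the same route as the paper: combine the lower bound of roughly $n^k$ maximal $d$-cuts from the preceding proposition with the observation that a fully-polynomial enumeration kernel would force $|\sol(x)| \le f(k)\cdot |x|^{\OO(1)}$, then choose $k$ large enough to beat the (kernel-dependent) polynomial degree. You spell out the bookkeeping ($|\sol(y)| \le 2^{|y|}$, the degree $c$ of the lifting polynomial) more explicitly than the paper's one-line remark, but there is no substantive difference.
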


\section{Parameterization by the Vertex Cover Number}
\label{sec:d-cut-vc}

In this section, we provide a fully-polynomial enumeration kernel for {\enummindcut} and a polynomial-delay enumeration kernel for {\enumdcut}. For both kernels, we assume that the input consists of the graph~$G$ together with a vertex cover $S$ of size at most $2{\vc}(G)$. We can make this assumption without loss of generality since such a vertex cover can be computed in linear time~\cite{Savage82}.
Let $|S| = k \leq 2{\vc}(G)$ and $I = V(G) \setminus S$.

\subsection{Enumerating Minimal $d$-Cuts}
\label{sec:new-enum-min-d-cut}

\subparagraph{Marking Scheme and Kernelization Algorithm.} 
If the input graph $G$ is disconnected, then our kernelization algorithm outputs $H = 2K_1$ as the output graph.
Otherwise, we proceed as follows.
As $S$ is a vertex cover of $G$, every vertex of $I$ has at least one neighbor in $S$ since~$G$ is connected.
We partition $I = I_1 \uplus I_2$ such that $I_1$ denotes the vertices of degree exactly one, and $I_2$ denotes the vertices of degree at least two.
Our marking scheme works as follows.
\begin{enumerate}
	\item[(i)] For every $x \in S$, mark an arbitrary vertex of $N_G(x) \cap I_1$.
	\item[(ii)] For every $x, y \in S$, choose an arbitrary set of $\min\{2d+1, |N_G(x) \cap N_G(y) \cap I_2|\}$ vertices from $I_2$ that are adjacent to both $x$ and $y$ and mark them for the pair $\{x, y\}$.
\end{enumerate}

If~$I$ contains an unmarked vertex~$z$ of degree at most~$d$, we do the following.
First, we check if~$I$ contains a marked vertex~$u$ of degree at most $d$.
If this is the case, then attach a pendant clique $C$ of $2d+2$ vertices to $u$ and mark them.
Otherwise, we choose some $x \in N_G(z) \cap S$, attach a pendant neighbor $y$ to $I$, mark~$y$, and attach a pendant clique $C$ to~$y$.
Let $Z$ be the set of all marked vertices from $I$ and $H = G[S \cup C \cup Z]$.
Output~$H$.
This completes the kernelization algorithm. Clearly, the algorithm has a polynomial running time. The main idea of the kernelization algorithm is that we remove vertices of~$I$ as long as we guarantee that their neighborhood is monochromatic. This is formalized as follows.
\begin{lemma}\label{lem:removed-mono}Assume that~$G$ is connected and let~$u$ be a vertex in~$V(G)\setminus V(H)$, then~$N_G(u)$ is monochromatic in~$G$ and in~$H$.
\end{lemma}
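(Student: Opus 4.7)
The plan is to use the marking scheme, which guarantees that every pair of vertices in $S$ that has at least one "forgotten" common neighbor still has at least $2d+1$ common neighbors surviving in $H$. This is the hook that will let me apply item (\ref{it:common-neighborhood-mono}) of Lemma \ref{lem:neighborhood-break-up}.

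First I would observe that since $S\subseteq V(H)$ and marked vertices from $I$ also belong to $V(H)$ by construction, any vertex $u\in V(G)\setminus V(H)$ must be an unmarked vertex of $I$. I would then split the argument into two cases according to whether $u\in I_1$ or $u\in I_2$. The case $u\in I_1$ is immediate: $u$ has exactly one neighbor in $G$, so $N_G(u)$ is a singleton and hence trivially monochromatic in both $G$ and~$H$.

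The main case is $u\in I_2$, where $|N_G(u)|\ge 2$ and $N_G(u)\subseteq S$ since $I$ is independent. For every pair $\{x,y\}\subseteq N_G(u)$, the vertex $u$ belongs to $N_G(x)\cap N_G(y)\cap I_2$ but is unmarked; by rule (ii) of the marking scheme this forces $|N_G(x)\cap N_G(y)\cap I_2|\ge 2d+2$ and forces the kernelization algorithm to have marked $2d+1$ such common neighbors. All of those marked witnesses lie in $V(H)$ and, since $H$ is an induced subgraph of the graph on which the marking was performed, they remain common neighbors of $\{x,y\}$ in $H$ as well as in $G$. Applying item (\ref{it:common-neighborhood-mono}) of Lemma \ref{lem:neighborhood-break-up} with $T=\{x,y\}$ then shows that $\{x,y\}$ is monochromatic in both $G$ and $H$. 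Finally, enumerating $N_G(u)=\{x_1,\dots,x_\ell\}$ and iteratively combining the monochromatic pairs $\{x_i,x_{i+1}\}$ via item (\ref{it:overlap-mono}) of Lemma \ref{lem:neighborhood-break-up} yields that $N_G(u)$ itself is monochromatic in both graphs.

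The only real obstacle is the bookkeeping around the additional gadgets (the pendant clique $C$ and, possibly, the fresh marked vertex $y$) inserted in the low-degree cleanup step: I would verify that those gadgets neither remove marked witnesses of the pair $\{x,y\}$ nor create new $d$-cuts that split~$\{x,y\}$, so that the conclusion still applies unchanged in~$H$. Once this routine check is dispensed with, both statements of the lemma follow.
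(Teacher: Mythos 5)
Your proposal is correct and matches the paper's own argument: both reduce to showing that each pair $\{v,w\}\subseteq N_G(u)$ has at least $2d+1$ common neighbors surviving in $H$ (because otherwise $u$ would have been marked by rule~(ii)), then invoke item~(\ref{it:common-neighborhood-mono}) of Lemma~\ref{lem:neighborhood-break-up}, and implicitly or explicitly combine the pairs via item~(\ref{it:overlap-mono}). The extra care you take with the $I_1$ case and with the pendant-clique gadgets is sound but not strictly necessary: a singleton neighborhood is vacuously monochromatic, and the gadgets only add vertices and edges, so the $2d+1$ marked witnesses for $\{v,w\}$ remain present in $H$ and item~(\ref{it:common-neighborhood-mono}) applies unchanged.
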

\begin{proof}
  We show that for every pair of vertices~$v$ and~$w$ in $N_G(u)$, the set~$\{v,w\}$ is monochromatic. 
  Since~$u$ is not in~$V(H)$, the vertices~$v$ and~$w$ have at least~$2d+1$ common neighbors in~$G$ and, by the construction of the kernel, also in~$H$. Thus, by item~(\ref{it:common-neighborhood-mono}) of Lemma~\ref{lem:neighborhood-break-up},~$\{v,w\}$ is monochromatic in~$G$ and in~$H$.  
\end{proof}
We may observe the following bound on the size of the kernel.

\begin{observation}
\label{obs:new-minimal-vc-H-size}
If $G$ is disconnected, then $H$ is isomorphic to $2K_1$.
If $G$ is connected, then $H$ is connected and has $\OO(d\cdot {\vc}^2)$~vertices.
\end{observation}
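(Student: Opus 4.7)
The disconnected case is immediate: the kernelization algorithm outputs $H = 2K_1$ directly.

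For the connected case, I plan to establish the size bound and the connectivity claim separately. The size bound follows by summing the contributions to $V(H) = S \cup Z \cup C$: the set $S$ contributes $|S| \le 2\vc$ vertices; Step~(i) of the marking scheme adds at most one vertex per element of $S$, contributing $\OO(\vc)$ vertices; Step~(ii) adds at most $2d+1$ vertices for each of $\binom{|S|}{2} = \OO(\vc^2)$ unordered pairs in $S$, contributing $\OO(d \cdot \vc^2)$ vertices; the optional pendant neighbor $y$ contributes $1$; and the pendant clique $C$ contributes $2d+2$. Summing these terms yields the claimed total of $\OO(d \cdot \vc^2)$.

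For connectivity, I would first reduce the problem to two sub-claims: (a) every vertex of $V(H) \setminus S$ is connected in $H$ to some vertex of $S$, and (b) any two vertices of $S$ are connected in $H$. Part (a) is routine: a marked vertex of $Z$ has at least one neighbor in $S$ because $S$ is a vertex cover of the connected graph $G$; the auxiliary vertex $y$ was attached directly to some $x \in S$; and the pendant clique $C$ is attached either to a vertex in $Z$ or to $y$, both of which are already covered by the preceding observations. For part (b), given $a, b \in S$, I would take a shortest $a$-$b$ path $v_0, v_1, \ldots, v_\ell$ in $G$ and reroute it to lie inside $V(H)$: whenever an intermediate $v_i$ belongs to $I \setminus V(H)$, its path-neighbors $v_{i-1}, v_{i+1}$ are distinct vertices of $S$ (since $I$ is independent and the path is simple), and $v_i \in I_2$ is a common neighbor of them. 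Because $v_i$ was not marked for the pair $\{v_{i-1}, v_{i+1}\}$, Step~(ii) of the marking scheme must have marked exactly $2d+1$ common neighbors, any of which provides a replacement $v_{i-1} - w - v_{i+1}$ inside $H$. Iterating this replacement along the entire path gives an $a$-$b$ walk in $H$.

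The main step to justify carefully is the rerouting argument: I need to verify that every omitted intermediate vertex on a shortest $S$-to-$S$ path automatically lies in $I_2$ and is flanked by two distinct $S$-vertices (both consequences of the path being simple combined with $I$ being independent), and that whenever such a vertex is dropped, Step~(ii) of the marking scheme leaves behind at least one marked common neighbor in $V(H)$. Once these two points are confirmed, the rest of the argument is elementary bookkeeping.
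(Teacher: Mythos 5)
Your proposal is correct and takes essentially the same approach as the paper: the size bound is the same direct count of marked vertices plus $S$ and the pendant clique, and your path-rerouting argument for connectivity is a more detailed unfolding of the paper's terser statement that the marking scheme preserves connectivity between $S$-vertices that are adjacent or share a common neighbor, and that every marked vertex has a neighbor in $S$.
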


\begin{proof}
The running time and the claim for disconnected~$G$ follow directly from the description of the marking scheme. 
In case $G$ is connected, then $|Z| \leq {{|S|}\choose{2}}(2d+1) + |S| + 1$ since~$S$ has size at most~$2\cdot \vc$. The clique $C$, it it is added, has size~$2d+2$. 
Therefore, $H$ has $\OO(d {\vc}^2)$ vertices and has a vertex cover $C \cup S$ of size at most $|S| + 2d+2$.
Moreover, $G$ is connected since the marking scheme preserves the connectivity between every pair of vertices of the vertex cover $S$ that is adjacent or has some common neighbors and since every vertex of~$Z$ has a neighbor in~$S$.
\end{proof}

\subparagraph{Equivalence Classes of Minimal $d$-Cuts.} It is not hard to see that $H$ does not keep the information of all minimal $d$-cuts of $G$. For example, when some vertex~$v$ of~$I_1$ is not contained in~$H$, then the kernel does not contain a cut directly corresponding to the minimal $d$-cut~$(\{v\},V\setminus \{v\})$.
To establish a relation between the minimal $d$-cuts of $G$ and the minimal $d$-cuts of $H$, we define an equivalence relation as follows.

Let $F_1$ and $F_2$ be two edge sets of a connected graph $G$ (respectively, of $H$).
We say that $F_1$ and $F_2$ are {\em equivalent} if $F_1 = F_2$ or there are two distinct vertices $u, v \in I$ of degree at most~$d$ such that $F_1 = \{uw \in E(G) \mid w \in N_G(u)\}$ and $F_2 = \{vw \in E(G) \mid w \in N_G(v)\}$.
Observe that the above defined relation is an equivalence relation.
The same notion of equivalence relation can be defined between an edge set $F_1 \subseteq E(G)$ and an edge set $F_2 \subseteq E(H)$ as well.
The following observation holds true as per the definition of the equivalence relation.

\begin{observation}
\label{obs:minimal-vc-eqv-d-cut-new}
Let $F_1$ and $F_2$ be two edge sets of $G$ (respectively, two edge sets of $H$, or $F_1 \subseteq E(G)$ and $F_2 \subseteq E(H)$) that are equivalent to each other.
Then, $F_1$ is a $d$-cut of $G$ (respectively, of $H$), if and only if $F_2$ is a $d$-cut of $G$ (respectively, of $H$).	
\end{observation}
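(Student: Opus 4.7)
My plan is to prove the observation by splitting on the two cases in the definition of the equivalence relation, since each case is essentially immediate.

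\emph{Case 1: $F_1 = F_2$.} When both edge sets live in the same graph (either both in $G$ or both in $H$), the biconditional is trivial since ``being a $d$-cut'' is a property of an edge set in a fixed graph. For the mixed setting $F_1 \subseteq E(G)$, $F_2 \subseteq E(H)$ with $F_1 = F_2$, I will use the fact that $H$ is by construction a vertex-induced subgraph of $G$. Hence every edge in $F_1 = F_2$ has both endpoints in $V(H)$, and for any vertex $w$ incident to an edge of $F_1$, the number of incident edges of $F_1$ in $G$ equals the number in $H$. Thus the per-vertex degree conditions defining a $d$-cut coincide in both graphs.

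\emph{Case 2: $F_1 \neq F_2$.} By the defining property of the equivalence relation, there exist distinct vertices $u, v \in I$ of degree at most $d$ (in the appropriate graphs) such that $F_1 = \{uw : w \in N(u)\}$ and $F_2 = \{vw : w \in N(v)\}$. The plan here is to observe that each such singleton-incident edge set is automatically a $d$-cut of its graph, so both sides of the biconditional hold simultaneously. Indeed, $F_1$ is precisely the edge cut of the bipartition $(\{u\}, V\setminus\{u\})$; the vertex $u$ is incident to $\deg(u) \leq d$ edges of $F_1$, and every other vertex is incident to at most one edge of $F_1$ (namely $uw$ when $w \in N(u)$), which is at most $d$ since $d \geq 1$. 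The same argument applies verbatim to $F_2$. Hence both $F_1$ and $F_2$ are $d$-cuts, so the biconditional holds trivially in this case.

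The only mild subtlety is the mixed subcase under equality, which needs the fact that the kernelization produces $H$ as a vertex-induced subgraph of $G$; this is evident from the construction $H = G[S \cup C \cup Z]$. Beyond that, the entire proof is a direct unpacking of the definition of the equivalence relation together with the one-line observation that the edge set incident to a single low-degree vertex always forms a $d$-cut.
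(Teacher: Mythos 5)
Your case split follows the two branches of the equivalence relation, and Case~2 is handled correctly: the set of edges incident to a single vertex $u\in I$ of degree at most $d$ is exactly the edge cut of $(\{u\}, V\setminus\{u\})$, and both the degree bound and cut realizability are immediate, so $F_1$ and $F_2$ are automatically $d$-cuts and the biconditional is vacuous there. The same-graph subcase of Case~1 is trivial, as you say.

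The mixed subcase of Case~1 ($F_1=F_2\subseteq E(H)$, compared as a cut of $G$ versus a cut of $H$) has a genuine gap. A $d$-cut is not just an edge set in which every vertex has at most $d$ incident cut edges; it must be realizable as $E(A,B)$ for some bipartition $(A,B)$ of the vertex set. Your argument only matches per-vertex incidence counts across the two graphs (which does follow from $H$ being an induced subgraph), but it never addresses whether $F_1$ is an edge cut of $G$ precisely when $F_2$ is an edge cut of $H$. Restricting a bipartition of $V(G)$ to $V(H)$ gives one direction, but the reverse — extending a bipartition $(A',B')$ of $V(H)$ with $E_H(A',B')=F_2$ to a bipartition of $V(G)$ with the same edge cut — requires that every $u\in V(G)\setminus V(H)$ have its entire neighborhood on one side of $(A',B')$, otherwise placing $u$ anywhere creates new cut edges not in $F_1$. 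This is exactly the monochromaticity guarantee of Lemma~\ref{lem:removed-mono}, a nontrivial consequence of the marking scheme, not of the definitions or of ``$H$ is induced'' alone. Your proof should invoke that lemma at this point; as written, the extension step in the mixed subcase is unjustified. (The paper states this observation without proof, so there is no author argument to compare against, but the authors' earlier Lemma~\ref{lem:removed-mono} is the missing ingredient.)
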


\subparagraph{Challenges to Avoid Duplicate Enumeration.} We need to ensure that our solution-lifting algorithm does not enumerate duplicate $d$-cuts.
To do this, we define the notion of distinguished minimal $d$-cut of $H$ as follows.
Let $F' = E_H(A', B')$ be a minimal $d$-cut of~$H$.
We say that $(A', B')$ is {\em distinguished} if $A' = C \cup \{v\}$ for some $v \in I$, and $B' = V(H) \setminus A'$ such that $C$ is a pendant clique of $2d+1$ vertices attached to $v$ and is disjoint from $G$.
The polynomial-time solution-lifting algorithm now works as follows. 
If the given minimal $d$-cut $F' = E_H(A', B')$ of $H$ is distinguished, then the solution-lifting algorithm outputs $F'$ and all minimal $d$-cuts of $G$ that are not minimal $d$-cuts of $H$.
For every other minimal $d$-cut $F' = E_H(A', B')$ of $H$ that is not distinguished, the solution-lifting algorithm only outputs $F'$.
In the next lemma, we formally prove that we can design such an enumeration algorithm when a distinguished minimal $d$-cut of $H$ is given as part of the input.

\begin{lemma}
\label{lemma:new-distinguished-minimal-d-cut-enumeration}
Given a distinguished minimal $d$-cut $F' = E_H(A', B')$ of $H$, there exists a polynomial-time algorithm that outputs all minimal $d$-cuts of $G$ that are not minimal $d$-cuts of $H$. 
\end{lemma}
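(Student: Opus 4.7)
The plan is to show that the minimal $d$-cuts of $G$ absent from $H$ are precisely the trivial ones at removed low-degree vertices. Concretely, let $U := \{u \in V(G)\setminus V(H) : \deg_G(u)\le d\}$ and, for each $u\in U$, let $F_u := E_G(\{u\}, V(G)\setminus\{u\})$. I claim that the set of minimal $d$-cuts of $G$ that are not minimal $d$-cuts of $H$ equals $\{F_u : u\in U\}$. The algorithm iterates over $V(G)\setminus V(H)$, selects the vertices of $G$-degree at most $d$, and outputs $F_u$ for each. Since $|U| \le |V(G)|$ and each $F_u$ has at most $d$ edges, this runs in polynomial time, and distinct vertices in $U$ yield distinct edge sets (their unique non-$S$ endpoint differs), so no duplicates arise.

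For the easy direction, each $F_u$ is a minimal $d$-cut of $G$ by Observation~\ref{obs:minimal-obs-trivial} (the input is connected, as the disconnected case is handled separately, and $\deg_G(u)\le d$). Moreover, every edge of $F_u$ has $u \notin V(H)$ as an endpoint, so $F_u$ is disjoint from $E(H)$ and hence is not a minimal $d$-cut of $H$. The existence of a distinguished $d$-cut in $H$ as input merely certifies that the pendant-clique construction was triggered and therefore that $U$ may be nonempty; the output itself does not depend on the particular distinguished cut.

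The harder direction is to show every minimal $d$-cut of $G$ outside $H$ has the form $F_u$. I would take an arbitrary minimal $d$-cut $(A,B)$ of $G$ and inspect each removed vertex $u \in V(G)\setminus V(H)$. By Lemma~\ref{lem:removed-mono}, $N_G(u)$ is monochromatic; say $N_G(u)\subseteq A$. If $u\in B$, then $(A\cup\{u\}, B\setminus\{u\})$ is still a $d$-cut (no other vertex's cross-count increases and $u$'s drops to $0$) with strictly fewer crossing edges, contradicting minimality unless $B=\{u\}$, in which case the cut equals $F_u$. So we may assume every removed vertex lies on the same side as its monochromatic neighborhood and thus contributes no crossing edge. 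Then $E_G(A,B)\subseteq E(H)$, and placing the pendant clique $C\cup\{v\}$, which by items~(\ref{it:clique-mono}) and~(\ref{it:d-neighbor-mono}) of Lemma~\ref{lem:neighborhood-break-up} is forced to be monochromatic, on the side of $v$ yields a cut $(A_H,B_H)$ of $H$ with the same crossing edges as $(A,B)$. The main obstacle I anticipate is the symmetric lifting step needed to conclude minimality of $(A_H,B_H)$ in $H$: one must show that any strictly smaller $d$-cut of $H$ could be lifted to $G$ by re-assigning each removed vertex to the side of its (still monochromatic in $H$) neighborhood, producing a strictly smaller $d$-cut of $G$ and contradicting the minimality of $(A,B)$. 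This reduces $(A,B)$ to being identified with a minimal $d$-cut of $H$, completing the characterization.
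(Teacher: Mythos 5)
Your proposal is correct and matches the paper's approach: both output the trivial cut $F_u$ for each removed low-degree vertex $u \in V(G)\setminus V(H)$ and justify this via Lemma~\ref{lem:removed-mono}. The ``lifting'' step you flag as the main obstacle is routine rather than a gap — Lemma~\ref{lem:removed-mono} guarantees that removed vertices have monochromatic neighborhoods in $H$ as well, and any strictly smaller $d$-cut of $H$ necessarily avoids the pendant clique (which together with its attachment vertex is a monochromatic clique), so its edge set lies in $E(G[S\cup Z]) \subseteq E(G)$ and extends to a $d$-cut of $G$ with the same, strictly smaller edge set by placing each removed vertex on the side of its neighborhood, contradicting minimality of $(A,B)$ in $G$. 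You are if anything slightly more explicit than the paper here: the paper's own proof only addresses minimal $d$-cuts of $G$ that fail to be $d$-cuts of $H$, tacitly relying on this same lifting observation to rule out the case of a minimal $d$-cut of $G$ that is a $d$-cut of $H$ but not minimal there.
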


\begin{proof}
Let $F' = E_H(A', B')$ be a distinguished $d$-cut of $H$.
Then, $A' = C \cup \{v\}$ such that $C$ is a pendant clique having $2d+1$ vertices that is attached to $v$ and is disjoint from~$G$.
Since $C \cup \{v\}$ has at least $2d+1$ vertices, due to item (i) of Lemma \ref{lem:neighborhood-break-up}, $C \cup \{v\}$ is monochromatic.
We consider every vertex $u \in I \setminus Z$ one by one.
If the degree of $u$ is more than $d$, then we move to the next vertex.
Otherwise, the degree of $u$ is at most $d$, and we output the cut $(\{u\},V(G) \setminus \{u\})$.

First, observe that every cut~$(A,B)$ output by the algorithm is a minimal $d$-cut: Since the degree of $u$ is at most $d$,~$(A,B)$ is a~$d$-cut. Morover, by Lemma~\ref{lem:removed-mono},~$N_G(u)$ is monochromatic.  Consequently, $(\{u\}, V(G) \setminus \{u\})$ is the only minimal $d$-cut of $G$ that contains edges incident with~$u$. Hence, $(A,B)$ is a minimal $d$-cut of $G$.

It remains to argue that every minimal $d$-cut of $G$ that is not a $d$-cut of $H$ is output by the above algorithm. Any such~$d$-cut contains at least one edge incident with some vertex $u \in I \setminus Z$.
Again, by Lemma~\ref{lem:removed-mono}, $N_G(u)$ is monochromatic. Thus, the only minimal cut containing edges incident with~$u$ is $(\{u\}, V(G) \setminus \{u\})$ is the only minimal $d$-cut of $G$. If~$u$ has degree at most~$d$, then we output this cut. Otherwise, when $u$ has degree more than $d$, then there is no~$d$-cut containing edges incident with~$u$.
Therefore, our algorithm outputs every $d$-cut of $G$ that is not a $d$-cut of $H$.
\end{proof}

Using the above lemma, we are ready to prove our result.

{\ThmVCminimal*}

\begin{proof}
Our enumeration kernel has two parts.
The first part is the kernelization algorithm and the second part is the solution-lifting algorithm.
Let $G$ be the input graph and $S$ be a vertex cover of $G$ such that $|S| = k \leq 2{\vc}(G)$.
The kernelization algorithm invokes the \textit{marking scheme} and outputs the graph $H$.
It follows from Observation \ref{obs:new-minimal-vc-H-size} that $H$ has $\OO(d|S|^2) = \OO(d {\vc}^2)$ vertices.

Our solution-lifting algorithm works as follows.
If $H$ is a disconnected graph, then due to Observation \ref{obs:new-minimal-vc-H-size}, $H$ is isomorphic to $2K_1$ and $G$ is disconnected as well.
Clearly, $\emptyset$ is the unique minimal $d$-cut of both $G$ and $H$.
Hence, given a minimal $d$-cut $\emptyset$ of $H$, the solution-lifting algorithm outputs $\emptyset$.

If $G$ is connected, then due to Observation \ref{obs:new-minimal-vc-H-size}, $H$ is also connected.
Therefore, every minimal $d$-cut of $H$ is nonempty.
First, we check whether the given minimal $d$-cut $(A', B')$ of $H$ is distinguished or not.
We check whether $A' = C \cup \{v\}$ and $B' = V(H) \setminus A'$ such that $C$ is a pendant clique of $2d+1$ vertices attached to $v$ and $C$ is disjoint from the graph $G$ or not.
If $A' = C \cup \{v\}$ such that $C$ is a pendant clique of $2d+1$ vertices attached to $v$ and $C$ is disjoint from the graph $G$, then first output $(A', B')$ and then invoke Lemma \ref{lemma:new-distinguished-minimal-d-cut-enumeration} to output all minimal $d$-cuts of $G$ that are not minimal $d$-cuts of $H$ in polynomial time.
If $(A', B')$ is not distinguished, then output $(A', B')$.
It is not hard to see that no minimal $d$-cut of $G$ is outputted for two distinct minimal $d$-cuts of $H$.
This completes the proof of a fully-polynomial enumeration kernel with $\OO(d {\vc}^2)$ vertices for {\enummindcut} parameterized by the vertex cover number.
\end{proof}

\subsection{Enumerating all $d$-Cuts}
\label{sec:new-enum-all-d-cuts}

In this section, we describe a polynomial-delay enumeration kernelization for {\enumdcut}.
While the kernelization is almost the same, the solution-lifting algorithm will be very different from the one in {\enummindcut}.
As above, we assume that the input instance consists of $G$ and a vertex cover $S$ such that $|S| \leq 2{\vc}(G)$ and denote $I = V(G) \setminus S$.
We partition $I = I_0 \uplus I_1 \uplus I_2$ such that $I_0$ denotes the set of all isolated vertices, $I_1$ denotes the vertices that are pendants, and $I_2$ denotes the vertices degree at least two in $G$.

\subparagraph{Marking Scheme and Kernelization Algorithm.} 
Our marking scheme works as follows.
\begin{enumerate}
	\item Mark an arbitrary vertex from $I_0$.
	\item For every $x \in S$, mark an arbitrary vertex from $N_G(x) \cap I_1$.
	\item For every $\{x, y\} \in {{S}\choose{2}}$, choose an arbitrary set of $\min\{2d+1, |N_G(x) \cap N_G(y) \cap I_2|\}$ vertices from $I_2$ that are adjacent to both $x$ and $y$, and mark those vertices for the pair $\{x, y\} \in {{S}\choose{2}}$.
\end{enumerate}

Let $Z$ be the set of all marked vertices from $I$ and $H = G[S \cup Z]$ be the output graph which is the kernel.
This completes the description of the kernelization algorithm. As above, we see that the removed vertices have a monochromatic neighborhood.
\begin{lemma}\label{lem:removed-mono-2}Let~$u$ be a vertex in~$V(G)\setminus V(H)$, then~$N_G(u)$ is monochromatic in~$G$ and in~$H$.
\end{lemma}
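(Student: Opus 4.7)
The plan is to mirror the proof of Lemma \ref{lem:removed-mono}, but handle the fact that here~$G$ need not be connected, so the removed vertex~$u$ may belong to~$I_0$,~$I_1$, or~$I_2$. I would do a case distinction based on which of these three parts contains~$u$, observing that in the first two cases $N_G(u)$ has size at most one and is therefore trivially monochromatic, while the nontrivial case is~$u\in I_2$.

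For the trivial cases I would note: if~$u\in I_0$, then $N_G(u)=\emptyset$, which is monochromatic in both~$G$ and~$H$ vacuously; if~$u\in I_1$, then~$N_G(u)$ is a single vertex in~$S$, and any single vertex lies entirely on one side of any cut. I would also remark that, by the first rule of the marking scheme, at most one isolated vertex is removed from~$V(G)\setminus V(H)$, and by the second rule every pendant neighbor is removed only when some other pendant neighbor of its unique neighbor in~$S$ has been marked, so both trivial cases are consistent with the construction.

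For the main case~$u\in I_2$, I would pick any two vertices~$v,w\in N_G(u)\subseteq S$ and argue that~$\{v,w\}$ is monochromatic in both~$G$ and~$H$. Since~$u$ was not marked, the third rule of the marking scheme must have marked~$2d+1$ vertices of~$N_G(v)\cap N_G(w)\cap I_2$ for the pair~$\{v,w\}$, all of which are in~$V(H)$. Thus~$v$ and~$w$ have at least~$2d+1$ common neighbors outside~$\{v,w\}$ in~$H$ (and, a fortiori, in~$G$), so item~(\ref{it:common-neighborhood-mono}) of Lemma~\ref{lem:neighborhood-break-up} applied in~$H$ (respectively~$G$) yields that~$\{v,w\}$ is monochromatic in~$H$ (respectively~$G$).

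Finally, I would lift pairwise monochromaticity to the entire neighborhood via item~(\ref{it:overlap-mono}) of Lemma~\ref{lem:neighborhood-break-up}. Fixing one vertex $v_1 \in N_G(u)$ and iterating over the remaining $v_i \in N_G(u)$, the sets $\{v_1,v_i\}$ are monochromatic and all share $v_1$, so their union $N_G(u)$ is monochromatic in both~$G$ and~$H$. The main (mild) obstacle is simply checking that the modified marking scheme, which introduces the extra class~$I_0$ and does not use the connectivity assumption or the pendant-clique gadget, still guarantees the~$2d+1$ common neighbors needed to invoke item~(\ref{it:common-neighborhood-mono}); once that bookkeeping is done, the proof closes immediately.
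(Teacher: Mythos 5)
Your proof is correct and takes essentially the same route the paper uses for the analogous Lemma~\ref{lem:removed-mono}: for $u\in I_2$ the pair $\{v,w\}\subseteq N_G(u)$ has $2d+1$ marked common neighbors retained in $H$ (since $u$ itself was left unmarked by rule (3), the common neighborhood must have exceeded $2d+1$), so item~(\ref{it:common-neighborhood-mono}) gives pairwise monochromaticity, and item~(\ref{it:overlap-mono}) lifts this to all of $N_G(u)$; the cases $u\in I_0,I_1$ are vacuous as you note. The only blemish is the side remark that ``at most one isolated vertex is removed from $V(G)\setminus V(H)$'' (you mean at most one isolated vertex is \emph{kept}), but this remark is not load-bearing.
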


We have the following size bound.

\begin{observation}
\label{obs:kernel-size-vc-all-d-cut-enum-new}
Let $H$ be the graph obtained from $G$ after invoking the above marking scheme.
Then, $H$ has $\OO(d\cdot |S|^2)$ vertices and at most one isolated vertex.
\end{observation}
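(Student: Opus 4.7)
The proof has two parts: bounding $|V(H)|$ and bounding the number of isolated vertices. For the size bound I will simply tally the three marking steps. For the isolation claim I will argue that every vertex of $V(H)$ that is not the (unique) representative of $I_0$ has at least one neighbor in $H$, using the fact that the marking scheme was designed precisely to preserve the neighborhood of each vertex of $S$ inside the kernel.

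\textbf{Size bound.} I would go through the marking steps one by one. Step~1 contributes at most $1$ vertex. Step~2 contributes at most one vertex per element of $S$, so at most $|S|$ vertices. Step~3 contributes at most $2d+1$ vertices for each of the $\binom{|S|}{2}$ unordered pairs, hence at most $(2d+1)\binom{|S|}{2}$ vertices. Summing, $|Z|\le 1+|S|+(2d+1)\binom{|S|}{2}=\OO(d\cdot |S|^2)$, and so $|V(H)|=|S|+|Z|=\OO(d\cdot |S|^2)$, as required. (We may assume that $S$ contains no isolated vertices of $G$, since the standard $2$-approximation vertex cover of~\cite{Savage82} avoids them.)

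\textbf{At most one isolated vertex.} Let $u\in V(H)$; I want to show that if $u$ is isolated in $H$ then $u$ is the marked representative from $I_0$. I would split according to whether $u\in Z$ or $u\in S$. If $u\in Z\cap I_1$, then $u$ has its unique neighbor in $S$, and that edge is in $H$. If $u\in Z\cap I_2$, then $u$ was marked for some pair $\{x,y\}$ and is therefore adjacent to $x\in S\subseteq V(H)$. If $u\in Z\cap I_0$ then $u$ is indeed isolated in $H$; but by Step~1 at most one such vertex is marked, giving the claimed bound. Finally, for $u\in S$: by our assumption $u$ has a neighbor $v$ in $G$. If $v\in S$, this edge is in $H$. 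If $v\in I_1$, then Step~2 applied to $u$ marks some vertex of $N_G(u)\cap I_1$, which is in $H$ and adjacent to $u$. If $v\in I_2$, then since $S$ is a vertex cover and $v$ has degree at least two, $v$ has some other neighbor $y\in S$; Step~3 then marks at least one vertex of $N_G(u)\cap N_G(y)\cap I_2$, which is a neighbor of $u$ in $H$.

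\textbf{Main obstacle.} The argument is essentially a bookkeeping exercise; the only subtle point is the case $u\in S$ with all neighbors of $u$ in $I_2$, where one has to notice that every element of $I_2$ has (by definition) at least two neighbors in $S$ and is therefore captured by some pair in Step~3. Once this is observed, the whole observation is immediate.
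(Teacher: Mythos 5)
Your proof is correct and follows the same direct tallying-and-case-analysis approach as the paper's, but it is in fact more thorough on the isolated-vertex claim: the paper's proof only observes that at most one vertex from $I_0$ is marked and immediately concludes, whereas you explicitly verify that no vertex of $S$ can become isolated in $H$ — a step the paper leaves implicit. Your argument correctly identifies that this step relies on the (mild, and standard for a matching- or DFS-based $2$-approximation) assumption that $S$ contains no vertices that are isolated in $G$, and then covers all cases cleanly, including the key observation that a vertex of $I_2$ adjacent to $u \in S$ has a second neighbor $y\in S$ so that Step 3 for the pair $\{u,y\}$ certifies a retained neighbor of $u$.
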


\begin{proof}
It is clear from the marking scheme that at most one isolated vertex is marked from $G$.
Hence, $H$ has at most one isolated vertex.
Since for every unordered pair of vertices $(x, y)$ from $S$, at most $2d+1$ common neighbors of $x$ and $y$ from $I$ are marked.
Therefore, $|Z| \leq (2d+1){{|S|}\choose{2}}$.
Since $H = G[S \cup Z]$, it follows that $H$ has $\OO(d\cdot |S|^2)$ vertices.
\end{proof}

\subparagraph{Equivalence Classes of $d$-Cuts.} Let $F'$ be the edge set of a $d$-cut $(A', B')$ of $H$.
Let $S_A' = S \cap A'$, $S_B' = S \cap B'$, $I_A' = Z \cap A'$, and $I_B' = Z \cap B'$.
Therefore, for every $u \in I \setminus Z$, it holds that either $N_G(u) \subseteq A'$ or $N_G(u) \subseteq B'$.
Let $J_A = \{u \in I \setminus Z \mid N_G(u) \subseteq A', N_G(u) \neq \emptyset\}$ and $J_B = \{u \in I \setminus Z \mid N_G(u) \subseteq B', N_G(u) \neq \emptyset\}$.
Observe that $(A' \cup J_A, B' \cup J_B)$ is a $d$-cut of $G$.
In particular, the edge cut of $(A', B')$ and $(A' \cup J_A, B' \cup J_B)$ are the same.
Hence, if the edge cut of some $d$-cut $(A, B)$ is a proper superset of the edge cut of $(A', B')$, then there exist two sets  $Q \subseteq J_A$ and $P \subseteq J_B$ such that $(A' \cup P \cup (J_A \setminus Q), B' \cup Q \cup (J_B \setminus P))$ is a $d$-cut and at least one of $P$ and $Q$ is nonempty. Informally,~$J_A$ and~$J_B$ are those vertices outside of the kernel that will create additional cut edges by being on a different side of the cut than their monochromatic neighborhood.

We now want to characterize those sets $Q \subseteq J_A$ and $P \subseteq J_B$ that give a~$d$-cut. To this end, for every $u \in S_A'$, let $h(u)$ be the number of neighbors of $u$ that are in $B'$.
Analogously, for every $u \in S_B'$, let $h(u)$ be the number of neighbors of $u$ that are in $A'$.
Now if $(A' \cup P \cup (J_A \setminus Q) , B' \cup Q) \cup (J_B \setminus P))$ is a $d$-cut of $G$, then it must be that  adding $P$ to $A'$ and adding $Q$ to $B'$ do not violate the $d$-cut property for the cut $(A, B)$.
We need to enumerate all pairs $(P, Q)$ that satisfies the above properties.
To formalize this, we say that a pair of sets $(P, Q) \in 2^{J_A} \times 2^{J_B}$ is {\em legal} with respect to a $d$-cut $(A', B')$ of $H$ if the following conditions are satisfied:
\begin{enumerate}[(i)]
	\item every vertex of $P$ has at most $d$ neighbors in $B'$,
	\item every vertex of $Q$ has at most $d$ neighbors in $A'$,
	\item every $u \in S_A'$ has at most $d - h(u)$ neighbors in $Q$, and
	\item every $u \in S_B'$ has at most $d - h(u)$ neighbors in $P$.
\end{enumerate}

Finally, we call a nonempty cut $(A, B)$ of $G$ a {\em legal extension} of a nonempty $d$-cut $(A', B')$ of $H$ if $A = A' \cup P \cup (J_A \setminus Q)$ and $B = B' \cup Q \cup (J_B \setminus P)$ such that $(P, Q)$ is a legal pair with respect to $(A', B')$. Observe that~$E(A',B')\subseteq E(A,B)$  and that every edge of $E(A,B)\setminus E(A',B')$ is incident with at least one vertex that is not in~$A'\cup B'$.
Based on the definition of legal extension, the following lemma holds true.

\begin{lemma}
\label{lemma:legal-extension-property}
Let $(A', B')$ be a $d$-cut of $H$ and $(A, B)$ be a cut of $G$ that is a legal extension of $(A', B')$.
Then, $(A, B)$ is a $d$-cut of $G$.
Conversely, for every $d$-cut $(A, B)$ of $G$, there exists a unique $d$-cut $(A', B')$ of $H$ such that $(A, B)$ is a legal extension of $(A', B')$.
\end{lemma}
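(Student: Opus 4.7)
The plan is to prove the two directions separately: the forward direction is a case check on each vertex of $G$, and the converse direction crucially uses Lemma~\ref{lem:removed-mono-2}, which guarantees that vertices removed by the kernelization have monochromatic neighborhoods.

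For the forward direction, I assume $(P,Q)$ is legal with respect to a $d$-cut $(A',B')$ of $H$ and let $(A,B)=(A'\cup P\cup(J_A\setminus Q),\,B'\cup Q\cup(J_B\setminus P))$. I then verify the $d$-cut condition for every vertex of $G$, split according to the part of the partition it lies in. The clean observation driving this is that every vertex of $I\setminus Z$ has all its neighbors in $S\subseteq V(H)$; in particular vertices in $P\subseteq J_B$, $J_B\setminus P$, $Q\subseteq J_A$, and $J_A\setminus Q$ have their entire neighborhood inside $B'$ or $A'$. For $v\in P$ the bound of $d$ follows from condition~(i), for $v\in Q$ from condition~(ii), while vertices in $J_A\setminus Q$ and $J_B\setminus P$ have zero neighbors on the opposite side of $(A,B)$. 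For $v\in S_A'$, the count of its neighbors in $B$ is exactly $h(v)+|N_G(v)\cap Q|$, because vertices in $J_B$ have their neighborhood inside $B'$ and hence are not adjacent to any element of $A'$; condition~(iii) gives the required bound of $d$. The case $v\in S_B'$ is symmetric via condition~(iv). For $v\in I_A'\cup I_B'\subseteq Z$, we have $N_G(v)\subseteq S\subseteq V(H)$, so the bound follows directly from $(A',B')$ being a $d$-cut of $H$. Any unmarked isolated vertex in $I\setminus Z$ contributes no edges at all.

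For the converse direction, given a $d$-cut $(A,B)$ of $G$, I set $A':=A\cap V(H)$ and $B':=B\cap V(H)$. Since $H$ is an induced subgraph of $G$, the restriction of any $d$-cut of $G$ is a $d$-cut of $H$, so $(A',B')$ is a $d$-cut of $H$. The key step is that Lemma~\ref{lem:removed-mono-2} forces $N_G(u)$ to be monochromatic in $H$ for every $u\in V(G)\setminus V(H)$, so every non-isolated $u\in I\setminus Z$ satisfies either $N_G(u)\subseteq A'$ or $N_G(u)\subseteq B'$, i.e., $u\in J_A\cup J_B$. Setting $P:=A\cap J_B$ and $Q:=B\cap J_A$ then yields $A=A'\cup P\cup(J_A\setminus Q)$ and $B=B'\cup Q\cup(J_B\setminus P)$, up to the placement of isolated vertices of $I\setminus Z$ which does not affect the edge cut and hence does not affect the cut in the sense used throughout the paper. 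The four legality conditions are read off directly from the fact that $(A,B)$ is a $d$-cut of $G$: conditions~(i) and~(ii) bound the degree of a vertex of $P$ or $Q$ on the opposite side of $(A,B)$, and for condition~(iii) the number of neighbors of $u\in S_A'$ in $B$ equals exactly $h(u)+|N_G(u)\cap Q|$ (using again that vertices of $J_B$ have no neighbor in $A'$), hence is at most $d$; condition~(iv) is symmetric.

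Uniqueness of $(A',B')$ is immediate, since any $d$-cut of $H$ for which $(A,B)$ is a legal extension must satisfy $A'\subseteq A$ and $B'\subseteq B$, forcing $A'=A\cap V(H)$ and $B'=B\cap V(H)$. The main technical obstacle is merely the careful bookkeeping of the vertex classes $S_A',S_B',I_A',I_B',P,Q,J_A\setminus Q,J_B\setminus P$; the real content of the lemma is concentrated in the invocation of Lemma~\ref{lem:removed-mono-2}, which is what makes the partition of $I\setminus Z$ into $J_A$ and $J_B$ meaningful and allows us to capture every $d$-cut of $G$ as a legal extension of a unique $d$-cut of $H$.
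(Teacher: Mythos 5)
Your proof takes essentially the same approach as the paper's: for the forward direction, a vertex-by-vertex check split over the classes $S_A', S_B', I_A', I_B', P, Q, J_A\setminus Q, J_B\setminus P$ using the legality conditions; for the converse, setting $A' := A\cap V(H)$, $B' := B\cap V(H)$ and observing that induced subgraphs of $d$-cuts are $d$-cuts. Two small points of comparison: you invoke Lemma~\ref{lem:removed-mono-2} explicitly to justify that every non-isolated vertex of $I\setminus Z$ lands in $J_A\cup J_B$ (the paper only establishes this in the discussion preceding the lemma and glosses over it in the proof itself, writing merely ``Clearly, $(A,B)$ is a legal extension''), and your uniqueness argument is slightly cleaner -- you note directly that $A'\subseteq A$ and $B'\subseteq B$ force $A'=A\cap V(H)$, whereas the paper argues via the location of the extra cut edges. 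The paper also explicitly dispatches the case of an empty edge cut at the start of the converse direction, whereas you only address unmarked isolated vertices; neither treatment confronts the corner case where $A\cap V(H)$ or $B\cap V(H)$ could be empty (e.g.\ when $A$ consists solely of unmarked low-degree vertices of $I$), so your proof is faithful to the paper there as well.
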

\begin{proof}
We give the proof of the items in the given order.
For the first item, let $(A', B')$ be a $d$-cut of $H$ and $(A, B)$ be a cut of $G$ that is a legal extension of $(A', B')$.
Let $S_A' = S \cap A'$, $S_B' = S \cap B'$, $I_A' = Z \cap A'$, and $I_B' = Z \cap B'$.
Since $(A', B')$ is a $d$-cut of $H$, every $u \in A'$ (respectively, every $u \in B'$) has at most $d$ neighbors in~$B'$ (respectively, in $A'$).

Since $(A, B)$ is a legal extension of $(A', B')$, there exist $P \subseteq J_B$ and $Q \subseteq J_A$ such that every $u \in S_A'$ has at most $d - h(u)$ neighbors in $Q$ and every $u \in S_B'$ has at most $d - h(u)$ neighbors in $P$.
Moreover, $A = A' \cup P \cup (J_A \setminus Q)$ and $B = B' \cup Q \cup (J_B \setminus P)$.
Observe that $u \in S_A'$ has $h(u)$ neighbors in $B'$ and at most $d - h(u)$ neighbors in $Q$ due to condition (iii).
A vertex $u \in S_A'$ does not have any neighbor in $J_B \setminus P$ as the vertices of $J_B$ have neighbors only in $B'$.
Therefore, $u \in A'$ has at most $d$ neighbors in~$B$.
Similarly, consider any $u \in S_B'$.
As $(A, B)$ is legal extension, $u$ must have at most $d - h(u)$ neighbors in $P$ by condition (iv) and exactly $h(u)$ neighbors in $A'$.
Since the vertices of $J_A \setminus Q$ have neighbors only in $A'$, therefore, $u \in S_B'$ has at most $d$ neighbors in~$A$.
Finally, by  condition~(i) in the definition of legal extension, every vertex in $P$ has at most $d$ neighbors in $B'$ (hence in $B$) and every vertex of $J_A \setminus Q$ has only neighbors in $A'$, therefore, every $u \in A$ has at most $d$ neighbors in~$B$.
Similarly, by condition~(ii) in the definition of legal extension, every vertex in $Q$ has at most $d$ neighbors in $A'$ (hence in $A$) and every vertex of $J_B \setminus P$ has only neighbors in $B'$, therefore, every $u \in B$ has at most $d$ neighbors in $A$.
Since $A \cap B = \emptyset$ by construction and $A \cup B = V(G)$, hence, it follows that $(A, B)$ is a $d$-cut of $G$.

For the second item, let $(A, B)$ be a $d$-cut of $G$. If the edge cut of~$(A,B)$ is empty, then~$H$ also contains an empty cut which is equivalent to~$(A,B)$. Hence, assume that this is not the case.   
Consider $A' = A \cap V(H)$ and $B' = B \cap V(H)$.
Since $H$ is an induced subgraph of $G$ and $(A, B)$ is a $d$-cut, we have that $(A', B')$ is a $d$-cut of $H$. 
Clearly, $(A, B)$ is a legal extension of $(A', B')$. Moreover, it cannot be an extension of another $d$-cut of~$H$ as the only additional edges in the edge cut of $(A, B)$ compared to $(A',B')$ are incident with vertices of~$V(G)\setminus V(H)$.
\end{proof}

Now, we define the notion of ``equivalence'' between two $d$-cuts of $G$ and one $d$-cut of $G$ with one $d$-cut of $H$.
Two distinct nonempty $d$-cuts $(A,B)$ and $(\widehat A, \widehat B)$ of $G$ are {\em equivalent} if both $(A, B)$ and $(\widehat A, \widehat B)$ are legal extensions of a some $d$-cut $(A', B')$ of $H$.
Additionally, a $d$-cut $(A, B)$ of $G$ is {\em equivalent} to a $d$-cut $(A', B')$ of $H$ if either both $E_G(A, B) = \emptyset$ and $E_H(A', B') = \emptyset$ or $(A, B)$ is a legal extension of $(A', B')$.
\todo[inline]{Diptapriyo: should we define this with respect to only $d$-cuts? or with respect to any edge set? C: I would think: only $d$-cuts.}

\subparagraph{Challenges and Designing the Solution-Lifting Algorithm.}
Now, we discuss how we design the solution-lifting algorithm and how we address the challenges in related to that.
Our solution-lifting needs to ensure the following aspects.
\begin{enumerate}
	\item Given a $d$-cut $(A', B')$ of $H$, our solution-lifting algorithm must output all legal extensions of $(A', B')$, with polynomial-delay.
	
	\item For every $d$-cut $(A, B)$ of $G$, there exists a unique $d$-cut $(A', B')$ of $H$ such that $(A, B)$ is outputted by the solution-lifting algorithm when $(A', B')$ is given.
\end{enumerate}

The first challenge is to ensure that no $d$-cut of $G$ is omitted and the second challenge is to ensure that no $d$-cut of $G$ is output for two distinct $d$-cuts of $H$.
Now, our solution-lifting algorithm works as follows to address the above-mentioned challenges.

\begin{algorithm}[t]
   \DontPrintSemicolon
\SetKwFunction{enumeqvt}{{\sc EnumEqvt}}
\KwData{$d$-Cut~$(A^*,B^*)$ in~$H$,\; $J_A \leftarrow \{u \in I \setminus Z \mid N_G(u) \subseteq A^*, N_G(u) \neq \emptyset\}$,\;
	$J_B \leftarrow \{u \in I \setminus Z \mid N_G(v) \subseteq B^*, N_G(u) \neq \emptyset\}$\; }
\SetKwProg{Fn}{}{}{}

\Fn{\enumeqvt{$P, Q, D$}}{
	
	
        \SetKw{KwOutput}{output}
	\KwOutput $E(A^* \cup P \cup (J_A \setminus Q), B^* \cup Q \cup (J_B \setminus P))$ \tcp*{current edge cut}
	\ForEach{$v \in V(G)\setminus (V(H)\cup P \cup Q \cup D)$}{
			\If{$v \in J_B$ and $(P \cup \{v\}, Q)$ is legal with respect to  $(A^*, B^*)$}{
				{\enumeqvt}$(P \cup \{v\}, Q, D)$\;
			}
			\If{$v \in J_A$ and $(P, Q \cup \{v\})$ is legal with respect to   $(A^*, B^*)$}{
				{\enumeqvt}$(P, Q \cup \{v\}, D)$\;
			} 
			$D \leftarrow D \cup \{v\}$ \tcp*{Now consider cuts where~$v$ remains with neighbors }
		}
              }
            
\caption{Solution-Lifting Algorithm for {\enumdcut}}
\label{alg:two-new-all-d-cut}
\end{algorithm}

\begin{lemma}
\label{lemma:all-d-cut-enum-eqv-d-cut}
Let $(A^*, B^*)$ be a $d$-cut of $H$.
Then, there is an enumeration algorithm that enumerates all $d$-cuts of $G$ that are legal extensions of $(A^*, B^*)$ without duplicates with polynomial-delay.
\end{lemma}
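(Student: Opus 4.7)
The plan is to analyze Algorithm~\ref{alg:two-new-all-d-cut} launched with the initial call \textsc{EnumEqvt}$(\emptyset,\emptyset,\emptyset)$. Four properties must be verified: every emitted cut is a legal extension, every legal extension is emitted, no cut is emitted twice, and the delay between consecutive emissions is polynomial. Correctness is immediate: every recursive call is guarded by a legality check, so each reachable state has $(P,Q)$ legal with respect to $(A^*,B^*)$, and therefore by Lemma~\ref{lemma:legal-extension-property} the cut produced at the top of the call is a $d$-cut of $G$ that is a legal extension of $(A^*,B^*)$.

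For completeness, fix any legal extension of $(A^*,B^*)$ corresponding to a legal pair $(P^\star,Q^\star)\in 2^{J_B}\times 2^{J_A}$. The first key observation is that legality is hereditary: each of the four conditions in the definition either bounds the individual degree of a single vertex in $P^\star$ or $Q^\star$, or bounds the number of neighbors that a vertex of $S$ has inside $P^\star$ or $Q^\star$. Shrinking $P^\star$ or $Q^\star$ only decreases these counts and preserves the per-vertex bounds. Using this, one traces a descending path in the recursion tree that respects the for-loop order: at loop variable $v$, take the first branch if $v\in P^\star$, the second branch if $v\in Q^\star$, and otherwise let $v$ be pushed into $D$. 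Heredity guarantees that every branch taken along this path is legal, and the path terminates in a call with $(P,Q)=(P^\star,Q^\star)$, whose first operation is to output the desired cut.

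I expect the main obstacle to be ruling out duplicates, since several recursive calls are launched during a single for-loop and the sets $D$ evolve across sibling subcalls. The crucial point is that the pair $(P,Q)$ alone determines the emitted edge cut: by the very definition of $J_A$ and $J_B$, every vertex $v\in J_A\cup J_B$ satisfies $N_G(v)\neq\emptyset$, so flipping $v$ between its natural side (with its monochromatic neighborhood) and the opposite side changes at least one edge in or out of the cut; consequently distinct pairs $(P,Q)$ yield distinct edge cuts. It therefore suffices to show that the recursion reaches each legal $(P^\star,Q^\star)$ along at most one path. The for-loop fixes a global ordering of $V(G)\setminus V(H)$, and along any root-to-leaf path each vertex $v\notin V(H)$ is processed exactly once at its enclosing call: it is either placed in $P$ via branch~(i), placed in $Q$ via branch~(ii), or appended to $D$ when neither branch is taken. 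Thus $(P^\star,Q^\star)$ uniquely determines, for every such $v$, which of the three options was used, and hence uniquely determines the path.

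For polynomial delay, observe that every call emits exactly one cut at its entry, so between two consecutive emissions the algorithm only performs bookkeeping: finishing the remaining iterations of the current for-loop (each iteration runs a polynomial-time legality test and either descends or extends $D$), returning to the parent, and iterating through its remaining iterations, until a new recursive call is triggered. The recursion depth is at most $|J_A|+|J_B|\le|V(G)|$ because $|P\cup Q|$ strictly grows along every recursive branch, and each for-loop has at most $|V(G)|$ iterations, each of polynomial cost. Combining these bounds gives polynomial delay, as required.
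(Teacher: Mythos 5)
Your proof is correct and takes essentially the same approach as the paper: it analyzes the same recursive backtracking algorithm \textsc{EnumEqvt} and verifies soundness, completeness, absence of duplicates, and polynomial delay. You fill in details the paper leaves terse (heredity of the legality conditions, uniqueness of the recursion path to each $(P,Q)$, and injectivity of the map $(P,Q)\mapsto E_G(A,B)$), and the only cosmetic deviation is launching with $D=\emptyset$ rather than the paper's optimization of seeding $D$ with the high-degree vertices of $I\setminus V(H)$; this is harmless since such vertices always fail the legality test.
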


\begin{proof}
Our enumeration algorithm {\enumeqvt} is recursive backtracking algorithm which maintains three sets $P$,~$Q$, and~$D$. In a call {\enumeqvt$(P,Q,D)$}, we search for all legal extensions~$(P',Q')$ of $(A^*, B^*)$ such that~$P\subseteq P'$ and~$Q\subseteq Q'$ are vertices from~$J_B$ and~$J_A$ and~$P'$ and~$Q'$ avoid~$D$.
We initialize $D$ to be the set of all vertices from $I \setminus V(H)$ that have degree more than $d$.
In addition, we initialize $P = Q = \emptyset$.
In other words, the initial call is {\enumeqvt}$(\emptyset, \emptyset, D)$.

Observe that the initial definition of~$D$ is correct in the sense that every legal extension must avoid~$D$: Every vertex~$u$ of~$D$ has a monochromatic neighborhood of size at least~$d+1$.
Hence, $u$ cannot be added to $A^*$ (respectively, to $B^*$) if $N_G(u) \subseteq B^*$ (respectively, if $N_G(u) \subseteq A^*$).

At the first step of {\enumeqvt}($P, Q, D$), we output the edge cut of the current $d$-cut $(A^* \cup P \cup (J_A \setminus Q), B^* \cup Q \cup (J_B \setminus P))$.
Next, we consider one-by-one the vertices~$v \notin D \cup P \cup Q$. For each such~$v$, we check whether it can be added to~$P$ or~$Q$, that is, we check whether  
\begin{enumerate}[(i)]
	\item $v \in J_B$ and $(P \cup \{v\}, Q)$ is legal with respect to $(A^*, B^*)$, or
	\item $(P, Q \cup \{v\})$ is legal with respect to $(A^*, B^*)$ and $v \in J_A$.
\end{enumerate}
If (i) is satisfied, then $(A^* \cup P \cup \{v\} \cup (J_A \setminus Q), B^* \cup Q \cup (J_B \setminus P))$ is a legal extension of $(A^*, B^*)$.
Then, the enumeration algorithm recursively calls itself with~$(P \cup \{v\}, Q, D)$.

If (ii) is satisfied, then $(A^* \cup P \cup (J_A \setminus Q), B^* \cup Q \cup \{v\} \cup (J_B \setminus P))$ is a legal extension of $(A^*, B^*)$.
Then, the enumeration algorithm recursively calls itself with $(P, Q \cup \{v\}, D)$.
If neither of the above situations occur but a vertex $v \notin D \cup P \cup Q$ exists, then the algorithm recursively calls itself with $(P, Q, D \cup \{v\})$.
When there does not exist any vertex outside $P \cup Q \cup D$ but in $J_A \cup J_B$ that are of degree at most $d$, then the algorithm terminates.

{\bf Delay Bound:} First, observe that at each node of the tree defined by the recursive calls, the total time spent at that node (without the time for the recursive calls) is polynomial in~$n$. Moreover, for each recursive call, one~$d$-cut is output. Thus, to obtain a bound on the delay it is sufficient to bound the number of nodes visited between to outputs.
Now, after a recursive call, the algorithm visits at most~$n$ nodes before either spawning a new recursive call or terminating, since the tree has depth at most~$n$.  Therefore, the algorithm has polynomial delay.

{\bf Enumeration without duplicates:} We have to ensure that no $d$-cut of $G$ that is a legal extension of $(A^*, B^*)$ is outputted twice.
Observe that every legal extension $(A, B)$ of $(A^*, B^*)$ satisfies that $A = A^* \cup P \cup (J_A \setminus Q)$ and $B = B^* \cup Q \cup (J_B \setminus P)$.
At every node of the tree, a legal extension $(A, B)$ of $(A^*, B^*)$ is output for a different pair~$(P, Q)$.
Hence, no legal extension of $(A^*, B^*)$ is output twice.
\end{proof}

{\ThmVCalldCuts*}


\begin{proof}
Our enumeration kernelization has two parts.
The first part is kernelization algorithm and the second part is solution-lifting algorithm.
We assume without loss of generality that $(G, S, k)$ be an input instance such that $|S| = k$ and $S$ is a vertex cover with at most $2{\vc}(G)$ vertices.
The kernelization algorithm invokes the {\em marking scheme} and outputs the instance $(H, S, k)$.
Due to Observation \ref{obs:kernel-size-vc-all-d-cut-enum-new}, $H$ has $\OO(k^2)$ vertices.

The solution-lifting algorithm works as follows.
Let $(A^*, B^*)$ be a $d$-cut of $H$.
We invoke Lemma \ref{lemma:all-d-cut-enum-eqv-d-cut} to output all $d$-cuts of $G$ that are legal extensions of $(A^*, B^*)$ without duplicates with delay in polynomial.
Additionally, due to Lemma \ref{lemma:legal-extension-property}, for every $d$-cut $(A, B)$ of $G$, there is a unique $d$-cut $(A', B')$ of $H$ such that $(A, B)$ is a legal extension of $(A', B')$.
Therefore, the property (ii) of Definition \ref{defn:poly-delay-enum-kernel} is satisfied.
This completes the proof that {\enumdcut} admits a polynomial-delay enumeration kernelization with $\OO(d {\vc}^2)$ vertices.
\end{proof}


\subsection{Enumerating Maximal $d$-Cuts}
\label{sec:improved-vc-kernels}

\iflong
In this section, we show an enumeration kernel for {\enummaxdcut}.
We again assume that the input is $(G, S, k)$  where $S$ is a vertex cover of $G$ such that $|S| = k \leq 2{\vc}(G)$ and denote $I = V(G) \setminus S$.
Let $I_{d}$ be the vertices of $I$ that has degree at most $d$ in $G$, and $I_{d+1}$ be the vertices of $I$ that has degree at least $d+1$ in $G$.
For every nonempty $T \in {{S}\choose{\leq d}}$, we define $\widehat{I_T}$, the set of all vertices from $I \cap V(G)$ each of which have neighborhood exactly $T$.
Formally, $\widehat{I_T} = \{u \in I_d \cap V(G) \mid N_G(u) = T\}$.

\subparagraph*{Overview of our Enumeration Kernelization:} In Section \ref{sec:vc-marking-scheme}, we describe the kernelization algorithm that involves a marking scheme to construct $\widetilde{H}$, a subgraph of $G$.
Then, we add some important structures to $\widetilde{H}$ by attaching pendant cliques to construct the output graph $H$ of the kernelization.
Next, in Section \ref{sec:vc-props-eqv-d-cuts}, we first prove that every $d$-cut of $H$ is a $d$-cut of $\widetilde{H}$ and vice versa.
Next, we establish some relations between the $d$-cuts of~$\widetilde{H}$ and the~$d$-cuts of $G$ by defining equivalence classes among them.
After that, in Section \ref{sec:vc-duplicate-enum}, we illustrate a challenge that can cause enumeration of a $d$-cut of $G$ that is equivalent to two distinct $d$-cuts of $H$.
Then, we provide a notion of `suitability' that circumvents this challenge.
Finally, in Section~\ref{sec:vc-final-algorithms}, we illustrate how we can use the previous lemmas to design a solution-lifting algorithm that eventually proves our main result Theorem~\ref{thm:vc-result-main-1} of this section.

\subsubsection{Marking Scheme and Kernelization Algorithm}
\label{sec:vc-marking-scheme}

In this section, we describe the kernelization algorithm.
We use the following marking procedure ${\sf MarkVC}(G, S, k)$.

\begin{figure}
\centering
	\includegraphics[scale=0.35]{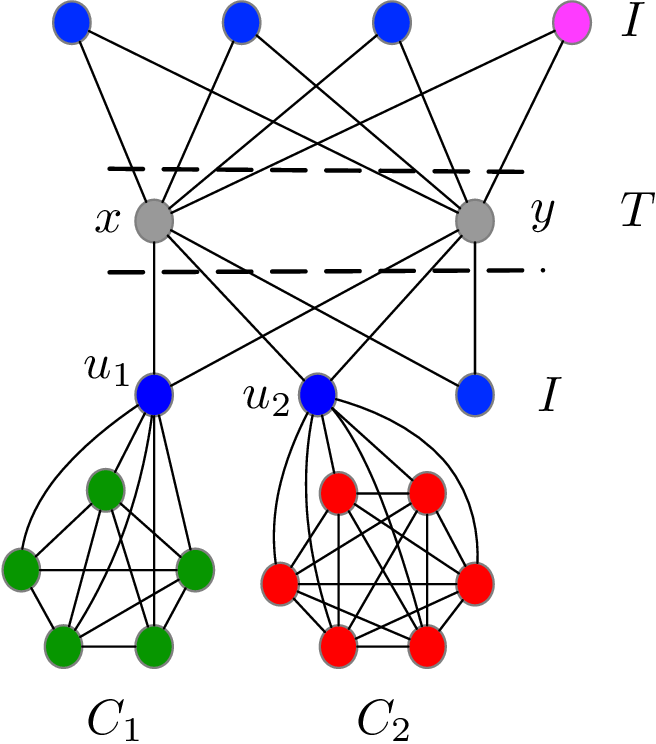}
	\caption{An illustration of attaching pendant cliques for $d = 2$ with the set $T = \{x, y\} \subseteq S$. The green, blue, and pink colored vertices are the vertices of $\widehat{I_T}$. The marked vertices from $\widehat{I_T}$ are colored blue and unmarked vertices are colored pink. The pendant clique $C_1$ with 5 vertices is attached to $u_1$ the pendant clique $C_2$ with 6 vertices is attached to $u_2$. The graph $H$ consists of all vertices except the pink vertex.}
\label{fig:vc-H-construction}	
\end{figure}

\subparagraph{\underline{Procedure ${\sf MarkVC}(G, S, k)$.}}
\begin{enumerate}
	\item Initialize $H^* := G$ and $S_0 := \emptyset$.
	\item Mark one arbitrary isolated vertex of $H^*$ if at least one exists.
	
	\item For every nonempty set $T \in {{S}\choose{\leq d}}$,
	 mark $\min\{2d+2, |\widehat{I_T}|\}$ vertices from $\widehat{I_T}$.
	If there is a vertex $w \in \widehat{I_T}$ that is unmarked, then we consider $d$ arbitrary marked vertices $u_1,\ldots,u_{d} \in \widehat{I_T}$.
	For every $i \in [d]$, attach a pendant clique $C_i$ of $2d+i$ vertices to $u_i$ and set $S_0 := S_0 \cup C_i$.
	We refer to Figure \ref{fig:vc-H-construction} for an illustration with $d = 2$ and $|T| = d$.
	
	\item For every nonempty set $T \in {{S}\choose{d+1}}$ mark $\min\{2d+1,|\bigcap\limits_{x \in T} N_G(x) \cap I_{d+1}|\}$ arbitrary vertices from $\bigcap\limits_{x \in T} N_G(x) \cap I_{d+1}$.
\end{enumerate}

We use $Z$ to denote the set of all {\em marked vertices} from $I$ and let $H = H^*[S \cup Z \cup S_0]$.
It is straightforward to see that $H$ can be constructed in polynomial time.
Every pendant clique~$C$ of at least $2d+1$ vertices and at most $3d$ vertices that has been attached to a few selected vertices of $Z$ are connected components of $H[S_0]$.
Hence, we often refer to these connected components of $H[S_0]$ as ``pendant cliques in $S_0$''.
The following observation illustrates an upper bound on the number of vertices in $H$.

\begin{observation}
\label{obs:size-H-bound}
The number of vertices in $S_0$ is $\OO(d^3 |S|^d)$ and every connected component of $H[S_0]$ is a clique with at least $2d+1$ and at most $3d$ vertices.
Moreover, $S \cup S_0$ is a vertex cover of $H$ and the total number of vertices in $H$ is at most $18d^3 |S|^{d+1}$.
\end{observation}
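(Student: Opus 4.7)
\medskip
\noindent\textbf{Proof plan.} The statement has four parts and I would tackle them separately, reading the bounds directly off the marking procedure \textsf{MarkVC}.

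First, I would note that the set $S_0$ only grows in step~3 of \textsf{MarkVC}: for each nonempty $T \in \binom{S}{\le d}$ with an unmarked vertex in $\widehat{I_T}$, exactly $d$ pendant cliques $C_1,\ldots,C_d$ are added with $|C_i|=2d+i$. Since the cliques are attached at vertices $u_i \in Z$ (not in $S_0$) and are otherwise disjoint from $V(G)$, each $C_i$ forms a connected component of $H[S_0]$. This immediately gives the clique-size bounds $2d+1 \le |C_i| \le 3d$, together with the structural claim on the components of $H[S_0]$.

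Second, for the cardinality of $S_0$, I would bound $\sum_{i=1}^{d}(2d+i) \le 3d^{2}$ per set $T$, and use the standard estimate $\binom{|S|}{\le d} \le d\,|S|^{d}$ to conclude $|S_0|\le 3d^{3}|S|^{d} = \OO(d^{3}|S|^{d})$.

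Third, to show $S \cup S_0$ is a vertex cover of $H$, I would argue as follows. Since $S$ is a vertex cover of $G$, the set $I=V(G)\setminus S$ is independent in $G$, hence so is $Z \subseteq I$. The only edges of $H$ that are not edges of $G[S\cup Z]$ are the edges inside the pendant cliques attached in step~3 (both endpoints in $S_0$) and the edges joining each $C_i$ to its attachment vertex in $Z$ (one endpoint in $S_0$). Therefore every edge of $H$ has at least one endpoint in $S \cup S_0$.

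Finally, for the total vertex count I would write $|V(H)| = |S| + |Z| + |S_0|$ and bound $|Z|$ using the marking caps: at most one isolated vertex, at most $(2d+2)$ per $T \in \binom{S}{\le d}$, and at most $(2d+1)$ per $T \in \binom{S}{d+1}$. Using $\binom{|S|}{\le d}\le d\,|S|^{d}$ and $\binom{|S|}{d+1}\le |S|^{d+1}$, the dominant term $(2d+1)|S|^{d+1}$ emerges, and combining all contributions yields a polynomial in $d$ times $|S|^{d+1}$. The only work here is a routine constant-chasing check that the sum is at most $18d^{3}|S|^{d+1}$ for every $d\ge 1$; I expect this to be the most tedious, but not difficult, part. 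No step presents a conceptual obstacle — the entire observation is bookkeeping that follows directly from inspecting \textsf{MarkVC}.
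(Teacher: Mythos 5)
Your proposal is correct and follows essentially the same bookkeeping approach as the paper: read the bounds off the steps of $\textsf{MarkVC}$, identify each attached pendant clique as a connected component of $H[S_0]$ with size between $2d+1$ and $3d$, and combine $|S|+|Z|+|S_0|$ for the total. Your breakdown of $|Z|$ (one isolated vertex, $2d+2$ per nonempty $T\in\binom{S}{\le d}$, and $2d+1$ per $T\in\binom{S}{d+1}$) is slightly tighter than the paper's single-term estimate $1+(2d+2)\sum_{i=1}^{d+1}\binom{|S|}{i}$, and your vertex-cover argument makes explicit what the paper states only in one sentence; otherwise the reasoning is identical.
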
 

\begin{proof}
Observe that the number of marked vertices from $I$ is at most $1 + (2d+2)\sum\limits_{i=1}^{d+1} {{|S|}\choose{i}}$.
As~$Z$ is the set of all marked vertices from $I$, there are $\OO(d^2 |S|^{d+1})$ vertices in $Z$.
In particular $|Z| \leq 6d^2 |S|^{d+1}$.
Moreover, for every $T \in {{S}\choose{d}}$, the construction ensures that $H$ has at most $2d + 2$ vertices from $\widehat{I_T}$ .
Out of these $2d+2$ vertices from $\widehat{I_T}$ there are at most $d$ vertices, each of which has a pendant clique attached to.
Each of these attached pendant cliques have at most $3d$ vertices and they are in $S_0$.
Hence, every connected component of $H[S_0]$ has at least $2d+1$ vertices and at most $3d$ vertices.
Therefore, the total number of vertices in $H$ is at most $18d^3 |S|^{d+1}$ and the number of vertices in $S_0$ is $\OO(d^3 |S|^d)$.
Also, as $S$ is a vertex cover of $G$ and by construction, $Z \subseteq I$, it follows that $S \cup S_0$ is a vertex cover of $H$.
\end{proof}


\subsubsection{Structural Properties and Equivalence Classes of $d$-Cuts}
\label{sec:vc-props-eqv-d-cuts}

Our marking scheme is sufficient to prove that $G$ has a $d$-cut if and only if $H$ has a $d$-cut.
To design an enumeration kernelization, it is not sufficient to prove that the output instance is equivalent to the input instance.
We need to establish that the information of all $d$-cuts of $G$ are available in the output graph $H$ which is the kernel.
Consider the graph $\widetilde{H} = H - S_0$ which is an induced subgraph of $G$.
This section is devoted to prove the following three crucial items.
\begin{enumerate}[(i)]
	\item First we prove that the collection of all $d$-cut of $H$ and the collection of all $d$-cuts of $\widetilde{H}$ are precisely the same.
	\item Then, we provide some specific relations and equivalence classes among the $d$-cuts of $\widetilde{H}$ and the $d$-cuts of $G$.
	These specific relations are crucial to establish that $\widetilde{H}$ (and hence $H$) has the information about all the $d$-cuts of $G$.
\end{enumerate}

Now, we move on to proving the first item.
Our next observation illustrates that for every pendant clique~$C$ that is attached to some vertex of $\widetilde{H}$, the set of edges incident to $C$ are disjoint from any $d$-cut of $H$.
This observation will be sufficient to prove that the first item above (the set of all $d$-cuts of $\widetilde{H}$ and the $d$-cuts of $H$ are precisely the same).
We use Lemma \ref{lem:neighborhood-break-up} and Observation \ref{obs:size-H-bound} to prove the following observation.

\begin{observation}
\label{obs:vc-S0-empty-H-intersection}
Let $F$ be a $d$-cut of $H$ and $uv \in E(H)$ such that $u, v \in S_0$ or $u \in S_0, v \in Z$, then $uv \notin F$.
\end{observation}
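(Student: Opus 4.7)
The plan is to show that every edge considered lies within a single monochromatic vertex set, and hence cannot be cut by any $d$-cut of $H$. The key ingredient will be Lemma \ref{lem:neighborhood-break-up}, specifically items (\ref{it:clique-mono}) and (\ref{it:d-neighbor-mono}), combined with the structural facts from Observation \ref{obs:size-H-bound} about the pendant cliques in $S_0$.

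First, I would use Observation \ref{obs:size-H-bound} to reduce the statement to a single pendant clique together with its attachment vertex. Indeed, by that observation each connected component of $H[S_0]$ is a clique with at least $2d+1$ vertices, so if $uv \in E(H)$ with $u,v \in S_0$, then $u$ and $v$ lie in the same pendant clique $C \subseteq S_0$. In the case $u \in S_0$ and $v \in Z$, the construction of $H$ via ${\sf MarkVC}$ guarantees that the only neighbor outside $S_0$ of any vertex of $S_0$ is the attachment vertex of its pendant clique; hence $v$ is precisely the attachment vertex of the clique $C$ containing $u$.

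Next, I would argue that $C \cup \{w\}$ is monochromatic for every pendant clique $C \subseteq S_0$ and its attachment vertex $w \in Z$. Since $|C| \ge 2d+1$, Lemma \ref{lem:neighborhood-break-up}(\ref{it:clique-mono}) gives that $C$ is monochromatic. Then $w$ has at least $2d+1 > d$ neighbors inside the monochromatic set $C$ (namely, all of $C$), so Lemma \ref{lem:neighborhood-break-up}(\ref{it:d-neighbor-mono}) yields that $C \cup \{w\}$ is monochromatic.

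Finally, to conclude, I would apply the definition of monochromatic: for any $d$-cut $(A,B)$ of $H$ with edge cut $F = E_H(A,B)$, both endpoints of $uv$ lie inside $C \cup \{w\}$, hence on the same side of $(A,B)$, and so $uv \notin F$. I do not anticipate a real obstacle here; the main point is the reduction to the ``single pendant clique plus attachment vertex'' situation via Observation \ref{obs:size-H-bound} and the construction, after which the two items of Lemma \ref{lem:neighborhood-break-up} immediately finish the argument.
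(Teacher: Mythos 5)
Your proof is correct and follows essentially the same approach as the paper: both reduce to the observation that $u$ and $v$ lie inside a pendant clique of $S_0$ together with its attachment vertex, and then apply Lemma~\ref{lem:neighborhood-break-up} to conclude that this set is monochromatic. The only cosmetic difference is that you absorb the attachment vertex via item~(\ref{it:d-neighbor-mono}) after first invoking item~(\ref{it:clique-mono}) on $C$ alone, while the paper directly notes that the pendant clique together with its attachment vertex is itself a clique of size at least $2d+2$ and applies item~(\ref{it:clique-mono}) once.
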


\begin{proof}
Let the premise of the statement be true and $uv \in E(H)$ such that either $u, v \in S_0$ or $u \in S_0, v \in Z$.
It follows from Observation \ref{obs:size-H-bound} that every connected component of $H[S_0]$ is a clique with at least $2d+1$ vertices and at most $3d$ vertices.
Furthermore, if $u, v \in S_0$ then $u, v \in C$ for a connected component $C$ of $H[S_0]$ that is a clique at least $2d+1$ vertices.
Additionally, every vertex of $S_0$ is adjacent to some vertex of $Z$ such that $Z \subseteq I \cap V(G)$.
In case $u \in S_0$ and $v \in Z$, then $u, v \in \widehat C$ that is also a clique with at least $2d+2$ vertices.
Let $(A, B)$ be the corresponding $d$-cut of $H$ such that $F = E_H(A, B)$. 
Since $(A, B)$ is a $d$-cut of $H$ and $uv$ is an edge of a clique $\widehat C$ containing at least $2d+1$ vertices, it follows from Lemma \ref{lem:neighborhood-break-up} that either $\widehat C \subseteq A$ or $\widehat C \subseteq B$.
Therefore, $uv \notin F$.
\end{proof}

The above observation implies that if an edge of $H$ is incident to the vertices of $S_0$, then that edge cannot be present in any $d$-cut of $H$.
We use Observation \ref{obs:vc-S0-empty-H-intersection} to prove the following statement, saying that the collection of all $d$-cuts of $H$ and the collection of all $d$-cuts of $\widetilde{H}$ are precisely the same.


\begin{lemma}
\label{lemma:vc-d-cuts-without-S0}
A set of edges $F$ is a (maximal) $d$-cut of~$\widetilde{H}$ if and only if $F$ is a (maximal) $d$-cut of~$H$.
\end{lemma}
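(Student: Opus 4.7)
The plan is to exploit Observation \ref{obs:vc-S0-empty-H-intersection} in both directions, thinking of $d$-cuts as edge sets and using that $\widetilde{H} = H - S_0$ is an induced subgraph.

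For the forward direction, I would start with a $d$-cut $F$ of $\widetilde{H}$ corresponding to some bipartition $(A, B)$ of $V(\widetilde H)$. To extend this to a cut of $H$, observe that every connected component of $H[S_0]$ is a pendant clique $C_i$ attached to a unique vertex $u_i \in Z \subseteq V(\widetilde H)$; the only edges of $H$ incident to $C_i$ are inside $C_i$ or between $C_i$ and $u_i$. I would place each such $C_i$ entirely on the same side of the cut as $u_i$, yielding a bipartition $(A', B')$ of $V(H)$. Since no new edge of the edge cut is created (all edges incident to $S_0$ stay within one side), the edge cut of $(A',B')$ in $H$ is exactly $F$. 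Moreover, vertices of $S_0$ contribute $0$ cross-neighbors, and for each $u_i$, its cross-degree is unchanged compared with $\widetilde H$; every other vertex has the same cross-neighbors it had in $\widetilde H$. Hence $(A',B')$ is a $d$-cut of $H$ with edge cut $F$.

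For the backward direction, let $F$ be a $d$-cut of $H$ with bipartition $(A_H, B_H)$. By Observation \ref{obs:vc-S0-empty-H-intersection}, no edge of $F$ is incident with any vertex of $S_0$ (because every such edge lies inside a clique of size at least $2d+1$ formed by the pendant clique together with its attachment vertex, and by item (i) of Lemma \ref{lem:neighborhood-break-up} this clique is monochromatic). Since the only edges of $H$ not in $\widetilde H$ are those incident to $S_0$, we conclude $F \subseteq E(\widetilde H)$. Now consider the restricted bipartition $(A_H \cap V(\widetilde H), B_H \cap V(\widetilde H))$ of $V(\widetilde H)$; its edge cut in $\widetilde H$ is exactly $F$. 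Because $\widetilde H$ is an induced subgraph of $H$, each vertex's cross-neighbors in $\widetilde H$ are a subset of its cross-neighbors in $H$, so the $d$-cut condition is inherited. Thus $F$ is a $d$-cut of $\widetilde H$.

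Combining the two directions, the collections of $d$-cuts of $H$ and of $\widetilde H$ coincide as edge sets. The claim about maximality is then immediate: if $F$ is a $d$-cut of $H$ that is not maximal in $H$, then some $F' \supsetneq F$ is a $d$-cut of $H$, hence a $d$-cut of $\widetilde H$, showing $F$ is not maximal in $\widetilde H$ either, and symmetrically in the other direction.

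I do not expect a serious obstacle here: the argument is essentially bookkeeping around the pendant-clique structure of $S_0$, and the critical content is already packaged in Observation \ref{obs:vc-S0-empty-H-intersection}. The only subtle point is to verify that extending a cut of $\widetilde H$ to $H$ by copying each pendant clique to its attachment vertex's side neither creates new cross-edges nor increases the cross-degree of any vertex, which is straightforward given that each $C_i$ has all its external neighbors at the single vertex $u_i$.
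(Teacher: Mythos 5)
Your proof is correct and follows essentially the same route as the paper: both directions hinge on Observation~\ref{obs:vc-S0-empty-H-intersection} to show that no $d$-cut of~$H$ uses edges incident to~$S_0$, and the forward direction extends a cut of~$\widetilde H$ to~$H$ by placing each pendant clique on the same side as its attachment vertex. Your explicit handling of the maximality claim at the end is a small addition the paper leaves implicit, but the argument is the same.
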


\begin{proof}
We first give the backward direction ($\Leftarrow$) of the proof.
Let $(A, B)$ denote the bipartition such that $F = E_H(A, B)$ is a $d$-cut of $H$.
It follows from Observation \ref{obs:vc-S0-empty-H-intersection} that if an edge $uv \in E(H)$ is incident to the vertices of $S_0$, then $uv \notin F$.
Hence, the edges $E_H(A \setminus S_0, B \setminus S_0)$ and the edges $E_H(A, B)$ are precisely the same, implying that $F$ is a $d$-cut of $\widetilde{H}$.

For the forward direction ($\Rightarrow$), let $(A, B)$ be the bipartition corresponding to the $d$-cut $F = E_{\widetilde{H}}(A, B)$ of $\widetilde{H}$.
Note that every connected component of $H[S_0]$ is a clique with at least $2d+1$ vertices.
Furthermore, if $C$ is a connected component of $H[S_0]$, then every vertex of $C$ is adjacent to a unique vertex $u \in I_d$.
Therefore, if $u \in A$, then we add the vertices of $C$ to $A$.
Otherwise, if $u \in B$, then we add the vertices of $C$ to $B$.
This completes our construction of a $d$-cut of $H$ and the set of edges remain the same as $E_{\widetilde{H}}(A, B)$.
This completes the proof that $F$ is also a $d$-cut of $H$.
\end{proof}


\subparagraph*{Information of all $d$-cuts of $G$ are available in $\widetilde{H}$.}
Now, we move on to illustrating why the graph $H$ contains the information of all the $d$-cuts of $G$.
As every $d$-cut of $H$ is also a $d$-cut of $\widetilde{H}$ which is a subgraph of $G$, we move on to establish relations among the $d$-cuts of $G$ and the $d$-cuts of $\widetilde{H}$.

Given a nonempty set $T \in {{S}\choose{\leq d}}$, we define $L_T = \{xy \in E(G) \mid x \in T, y \in \widehat{I_T}\}$, the set of all edges that have one endpoint in $T$ and other endpoint in $\widehat{I_T}$.
Additionally, let $I_T = \widehat{I_T} \cap Z$, the vertices from $\widehat{I_T}$ that have been marked by {\sf MarkVC}($G, S, k$).
We call a nonempty set $T \in {{S}\choose{\leq d}}$ {\em good} in $G$ (respectively in $\widetilde{H}$) if $|\widehat{I_T} \cap V(G)| \leq 2d+1$ (respectively $|V(\widetilde{H}) \cap \widehat{I_T}| \leq 2d+1$).
A nonempty set $T \in {{S}\choose{\leq d}}$ is {\em bad} in $G$ (respectively in $\widetilde{H}$) if $T$ is not a good subset of $S$ in $G$ (respectively in $\widetilde{H}$).
Observe from the {\sf MarkVC}($G, S, k$) that if a nonempty set $T \in {{S}\choose{\leq d}}$ is good in $G$, then every $u \in \widehat{I_T}$ is marked in $Z$ and hence are also in $I_T$.
Therefore, every vertex of $\widehat{I_T}$ is also in $\widetilde{H}$ and $|\widehat{I_T} \cap V(\widetilde{H})| \leq 2d+1$.
Hence, $T \subseteq S$ is also a good subset in $\widetilde{H}$.
Similarly, if a nonempty set $T \in {{S}\choose{\leq d}}$ is bad in $G$, then exactly $2d+2$ vertices from $\widehat{I_T}$ are marked and are present in $\widetilde{H}$.
Then, $T$ is also a bad subset of $S$ in $\widetilde{H}$.

Let $\YY \subseteq {{S}\choose{\leq d}}$ denote the collection of all bad subsets of $S$ in $G$ (respectively in $\widetilde{H}$).
We define $L(\YY) = \cup_{T \in \YY} L_T$ and $I(\YY)$ the set of all vertices of $I_d$ each of which have neighborhood exactly $T$ such that $T \in \YY$.
Formally, $I(\YY) = \bigcup_{T \in \YY} \widehat{I_T}$.
Our next lemma establishes that the marking scheme preserves all the $d$-cuts of $G - I(\YY)$.


\begin{lemma}
\label{lemma:vc-solutions-higher-degree}
A set of edges $F \subseteq E(G)$ is a $d$-cut of $G - I(\YY)$ if and only if $F$ is a $d$-cut of $\widetilde{H} - I(\YY)$. 
\end{lemma}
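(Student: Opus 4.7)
The graph $\widetilde{H} - I(\YY)$ is an induced subgraph of $G - I(\YY)$; the only vertices present in the latter but not the former are those of $I_{d+1} \setminus Z$, that is, high-degree vertices in $I$ that were not marked by step (4) of ${\sf MarkVC}$. The entire argument rests on one structural claim: for every such vertex $v$, the neighborhood $N_G(v)$ is monochromatic in every $d$-cut of $\widetilde{H} - I(\YY)$ and in every $d$-cut of $G - I(\YY)$. Granted this claim, both directions are routine bookkeeping.

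First I would establish the monochromaticity claim. Fix $v \in I_{d+1} \setminus Z$ and any two vertices $x, y \in N_G(v) \subseteq S$. Since $\deg_G(v) \geq d+1$, we can pick a subset $T \in \binom{N_G(v)}{d+1}$ that contains both $x$ and $y$. Because $v \notin Z$, step (4) of the marking scheme did not mark $v$ for the set $T$; by the min--rule in that step this can only happen if $|\bigcap_{x\in T} N_G(x) \cap I_{d+1}| > 2d+1$, in which case exactly $2d+1$ such vertices were marked. All these marked vertices lie in $Z \cap I_{d+1}$, hence in $V(\widetilde{H} - I(\YY))$ (they are not in $I_d$, so not in $I(\YY)$), and they all have $T$ in their neighborhood. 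Applying item~(\ref{it:common-neighborhood-mono}) of Lemma~\ref{lem:neighborhood-break-up} to the graph $\widetilde{H} - I(\YY)$ (respectively $G - I(\YY)$, where the same $2d+1$ witnesses are also present) shows that $T$, and in particular $\{x,y\}$, is monochromatic. Varying $x,y$ gives the claim.

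For the direction ($\Leftarrow$), start from a $d$-cut $(A',B')$ of $\widetilde{H} - I(\YY)$ with $F = E_{\widetilde{H}-I(\YY)}(A',B')$. Extend it to a bipartition $(A,B)$ of $V(G - I(\YY))$ by placing each $v \in I_{d+1} \setminus Z$ on the unique side that contains $N_G(v)$ (well-defined by the claim). Since all of $v$'s neighbors end up on its own side, $v$ contributes no edge to the cut and causes no violation for vertices in $S$, so $(A,B)$ is a $d$-cut of $G - I(\YY)$ with edge cut exactly $F$.

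For the direction ($\Rightarrow$), start from a $d$-cut $(A,B)$ of $G - I(\YY)$ with $F = E_{G-I(\YY)}(A,B)$. By the claim applied inside $G - I(\YY)$, every $v \in I_{d+1}\setminus Z$ lies on the same side of $(A,B)$ as all of $N_G(v)$, so no edge of $F$ is incident to $I_{d+1}\setminus Z$, which gives $F \subseteq E(\widetilde{H} - I(\YY))$. Setting $A' = A \cap V(\widetilde{H})$ and $B' = B \cap V(\widetilde{H})$ then yields a bipartition of $\widetilde{H} - I(\YY)$ whose edge cut is $F$, and the $d$-cut condition is inherited because $\widetilde{H} - I(\YY)$ is an induced subgraph of $G - I(\YY)$.

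The main obstacle is the structural claim, which is really where the design choice of marking exactly $\min\{2d+1, \ldots\}$ vertices per $(d+1)$-subset in step~(4) pays off; once it is in hand, the two inclusions of edge cuts are immediate. A minor subtlety I would verify explicitly is the nondegeneracy of the resulting bipartitions (both sides nonempty): in either direction, any empty restriction would force $F=\emptyset$ and one can then check that the nontrivial side still intersects $V(\widetilde{H})$, since every vertex of $I_{d+1}\setminus Z$ has neighbors in $S \subseteq V(\widetilde{H})$.
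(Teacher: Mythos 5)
Your proof is correct and follows essentially the same route as the paper's: both reduce the argument to showing that $N_G(v)$ is monochromatic (in $\widetilde{H}-I(\YY)$ and $G-I(\YY)$) for every unmarked high-degree vertex $v\in I_{d+1}\setminus Z$, and both obtain this from item~(\ref{it:common-neighborhood-mono}) of Lemma~\ref{lem:neighborhood-break-up} together with the fact that leaving $v$ unmarked in step~(4) of ${\sf MarkVC}$ forces at least $2d+1$ marked witnesses with the same $(d+1)$-set of neighbors. Your presentation of the monochromaticity claim is actually somewhat cleaner than the paper's: where the paper splits into the cases $|N_G(v)|=d+1$ and $|N_G(v)|>d+1$ and runs a three-set $T_1,T_2,T_3$ contradiction in the latter, you observe directly that any two neighbors $x,y$ lie together in some $(d+1)$-subset $T\subseteq N_G(v)$, which is monochromatic by the same argument, and pairwise-same-side immediately gives full monochromaticity. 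The two directions then follow identically in both write-ups, and your remark about nondegeneracy is a minor hygiene point the paper leaves implicit.
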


\begin{proof}
For simplicity of presentation, we let $\vec{H} = \widetilde{H} - I(\YY)$ and $\vec{G} = G - I(\YY)$.
If every vertex of $\vec{G}$ is also a vertex of $\vec{H}$, then $\vec{G}$ and $\vec{H}$ are precisely the same graph and the statement trivially holds true.
So, we assume that there is $v \in \vec{G}$ that is not in $\vec{H}$.
Observe that if a vertex $v \in \widehat{I_T}$ for some $T \in {{S}\choose{\leq d}} \setminus \YY$, then $v$ is marked by the procedure ${\sf MarkVC}(G, S, k)$ and present in $\vec{H}$.
Therefore, any vertex $v$ that is in $\vec{G}$ but not in $\vec{H}$ must be in $I_{d+1}$.
We consider an arbitrary $v \in I_{d+1}$ that is not marked by the procedure ${\sf MarkVC}(G, S, k)$.

First, we give the backward direction ($\Leftarrow$) of the proof.
Suppose that $(A', B')$ be the bipartition corresponding to a $d$-cut $F$ of $\vec{H}$.
We prove the existence of a $d$-cut $(A, B)$ of $\vec{G}$ such that $F = E_{\vec{G}}(A, B)$.
At the very beginning, we initialize $A := A'$ and $B := B'$.
Consider a vertex $v$ that is in $\vec{G}$ but not in $\vec{H}$.
Then, such a vertex $v$ satisfies that $v \in I_{d+1}$ that is in $\vec{G}$ but not in $\vec{H}$.
	Then, for every $T \subseteq N_G(v)$ with $|T| = d+1$ the procedure ${\sf MarkVC}(G, S, k)$ did not mark $v$ from the vertices of $\bigcap\limits_{x \in T} (N_G(x) \cap I_{d+1})$.
	If $v$ has exactly $d+1$ neighbors in $S$, then $T = N_G(v)$ is a unique set with $d+1$ vertices.
	Due to Lemma \ref{lem:neighborhood-break-up}, either $T \subseteq A'$ or $T \subseteq B'$.
	If $T \subseteq A'$, then set $A := A \cup \{v\}$.
	Otherwise, $T \subseteq B'$, we set $B := B \cup \{v\}$.
	
	If $v$ has more than $d+1$ neighbors in $S$, then there are two distinct subsets $T_1, T_2 \subseteq N_G(v)$ with $d+1$ vertices each.
	The procedure ${\sf MarkVC}(G, S, k)$ did not mark $v$ from the vertices of $\bigcap\limits_{x \in T_1} (N_G(x) \cap I_{d+1})$ and $\bigcap\limits_{x \in T_2} (N_G(x) \cap I_{d+1})$.
	Then there are $2d+1$ marked vertices in $I_{d+1}$ the neighborhood of each of which contains $T_1$ as well as $T_2$.
	
	It follows from Lemma \ref{lem:neighborhood-break-up}, either $T_1 \subseteq A'$ or $T_1 \subseteq B'$.
	Moreover, it also holds from Lemma \ref{lem:neighborhood-break-up} that either $T_2 \subseteq A'$ or $T_2 \subseteq B'$.
	We have to argue that $T_1 \cup T_2 \subseteq A'$ or $T_1 \cup T_2 \subseteq B'$.
	For the sake of contradiction, let $T_1 \subseteq A'$ and $T_2 \subseteq B'$ (the case of $T_1 \subseteq B'$ and $T_2 \subseteq A'$ is symmetric).
	We construct a set $T_3$ by taking $d$ vertices from $T_1$ and one vertex of $T_2 \setminus T_1$.
	By our choice $T_3 \cap A', T_3 \cap B' \neq \emptyset$.
	Observe that $T_3 \subseteq N_G(v)$ and has $d+1$ vertices.
	Then this assumption on $T_3$ implies that $T_3 \cap A', T_3 \cap B' \neq \emptyset$.
	
	Note that $v \in \bigcap\limits_{x \in T_3} (N_G(x) \cap I_{d+1})$ and the procedure ${\sf MarkVC}(G, S, k)$ did not mark $v$. 
	It means that there are at least $2d+1$ vertices in $I$ the neighborhood of which contains $T_3$.
	But then from Lemma \ref{lem:neighborhood-break-up}, it follows that $T_3 \subseteq A'$ or $T_3 \subseteq B'$.
	This contradicts the choice of $T_3$ that intersects both $A'$ and $B'$.
	Therefore, all subsets of ${{N_G(v)}}$ with $d+1$ vertices either are contained in $A'$ or is contained in $B'$ implying that $N_G(v) \subseteq A'$ or $N_G(v) \subseteq B'$.
	
	If $N_G(v) \subseteq A'$, then set $A := A \cup \{v\}$, otherwise $N_G(v) \subseteq B'$, and we set $B := B \cup \{v\}$.
	We repeat this for every vertex $v$ that is in $\vec{G}$ but not in $\vec{H}$.
	This construction of $A$ and $B$ preserves that $E_{\vec{G}}(A, B) = E_{\vec{H}}(A', B') = F$.
	Therefore, $(A, B)$ is a $d$-cut of $\vec{G}$ such that $F = E_{\vec{G}}(A, B)$.
	This completes the correctness proof of the backward direction.
	
	\medskip
	
	Let us now give the forward direction ($\Rightarrow$) of the proof.
	Suppose that $(A, B)$ be the corresponding vertex bipartition of a $d$-cut $F$ of $\vec{G}$ such that $F = E_{\vec{G}}(A, B)$.
	We argue that $F$ is a $d$-cut of $\vec{H}$ as well.
	For the sake of contradiction, we assume that there is $uv \in E(\vec{G}) \setminus E(\vec{H})$ such that $uv \in F$.
	As $\vec{H}$ is an induced subgraph of $\vec{G}$ and the vertices of $\vec{G} \setminus \vec{H}$ forms an independent set, it must be that either $u \notin V(\vec{H})$ or $v \notin V(\vec{H})$.
	Without loss of generality, we assume that $u \notin V(\vec{H})$.
	Then, $u$ is one of the unmarked vertices from $I_{d+1}$, and we assume that $u \in A$.
	Consider an arbitrary $T \in {{N_G(u)}\choose{d+1}}$ such that $v \in T$.
	There are at least $2d+1$ marked vertices in $\bigcap\limits_{x \in T} (N_G(x) \cap I_{d+1})$ and $u$ is excluded from marking by the procedure ${\sf MarkVC}(G, S, k)$.
	In particular, there are at least $2d+2$ vertices in $G$ the neighborhood of which contains $T$.
	Then, due to Lemma \ref{lem:neighborhood-break-up}, it follows that either $T \subseteq A$ or $T \subseteq B$.
	But, by our choice $u \in A$.
	Since $v \in T$ and $uv \in F$, it must be that $v \in B$.
	But then $T \subseteq B$ implies that $u \in A$ has at least $d+1$ neighbors in $B$.
	This contradicts our initial assumption that $F = E_{\vec{G}}(A, B)$ is a $d$-cut of $\vec{G}$.
	Therefore, every edge $uv \in F$ is also an edge of $\vec{H}$.
Since $\vec{H}$ is a subgraph of $\vec{G}$, this completes our proof that $F$ is a $d$-cut of $\vec{H}$.
\end{proof}


Given an arbitrary subset $T \in \YY$ and $u \in \widehat{I_T}$, we define $J_T^u = \{ux \mid x \in T\}$; i.e. the set of edges with one endpoint $u$ and the other endpoint in $T$. 
A bad subset $T \in \YY$ is {\em broken} by an edge set $F$ if there exists $u \in \widehat{I_T}$ such that $J^u_T \not\subset F$.
A bad subset $T \in \YY$ is {\em occupied} by an edge set $F \subseteq E(G)$ if $T$ is not broken by $F$.
Equivalently, if a bad subset $T \in \YY$ is occupied by an edge set $F$, then for every $u \in \widehat{I_T}$, either $J^u_T \subseteq F$, or $J^u_T$ is disjoint from $F$.
We prove the following observation about the bad subsets $T \in \YY$ can be proved by using Lemma \ref{lem:neighborhood-break-up}.


\begin{observation}
\label{obs:broken-case-understanding}
Let an edge set $F$ be a $d$-cut of $G$ (respectively, a $d$-cut of $\widetilde{H}$).
Then, every bad subset $T \in \YY$ is occupied by $F$.
\end{observation}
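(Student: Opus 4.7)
The plan is to apply item~(\ref{it:common-neighborhood-mono}) of Lemma~\ref{lem:neighborhood-break-up} directly to the set $T$ itself. The definition of a \emph{bad} subset $T \in \YY$ guarantees that $|\widehat{I_T} \cap V(G)| > 2d+1$ (and, in the case of $\widetilde{H}$, the marking scheme of ${\sf MarkVC}(G,S,k)$ marked at least $2d+2$ vertices of $\widehat{I_T}$, so $|\widehat{I_T} \cap V(\widetilde{H})| \geq 2d+2$ as well). Every vertex of $\widehat{I_T}$ lies in $V(G)\setminus T$ and has neighborhood containing $T$ (in fact equal to $T$). So the hypothesis of item~(\ref{it:common-neighborhood-mono}) of Lemma~\ref{lem:neighborhood-break-up} is met, which yields that $T$ is monochromatic both in $G$ and in $\widetilde{H}$.

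Now fix any $d$-cut $F = E(A,B)$ of $G$ (the argument for $\widetilde{H}$ is identical). Without loss of generality, assume $T \subseteq A$. The next step is to examine a single vertex $u \in \widehat{I_T}$. Since $N_G(u)=T\subseteq A$, there are only two possibilities. If $u \in A$, then all edges of $J^u_T$ have both endpoints in $A$, hence $J^u_T \cap F = \emptyset$. If instead $u \in B$, then every edge of $J^u_T$ crosses the cut, hence $J^u_T \subseteq F$. In either case, $J^u_T$ is either wholly contained in $F$ or disjoint from $F$, which is exactly the equivalent reformulation of ``occupied'' stated after the definition. Since $u \in \widehat{I_T}$ was arbitrary, $T$ is occupied by $F$.

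The only subtlety I anticipate is making sure that the cardinality hypothesis of Lemma~\ref{lem:neighborhood-break-up}(\ref{it:common-neighborhood-mono}) is triggered in \emph{both} settings; for $G$ this is literally the definition of bad, and for $\widetilde{H}$ it follows from Step~3 of the marking procedure which, by definition of bad, marks the full quota of $2d+2$ vertices from $\widehat{I_T}$ and keeps them in $\widetilde{H}$. Everything else is a two-line case distinction on which side of the cut $u$ lies on, so no technical obstacle is expected.
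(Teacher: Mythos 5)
Your proof is correct and follows essentially the same argument as the paper: invoke item~(\ref{it:common-neighborhood-mono}) of Lemma~\ref{lem:neighborhood-break-up} (using the badness of $T$, i.e., $|\widehat{I_T}| > 2d+1$, to trigger the monochromaticity hypothesis), and then do a two-case split on which side of the cut each $u \in \widehat{I_T}$ lies. Your added remark that the hypothesis must also be verified in $\widetilde{H}$ (via the marking scheme retaining $2d+2$ vertices of $\widehat{I_T}$) is a small extra precision the paper compresses into the one-line statement that bad subsets in $G$ remain bad in $\widetilde{H}$.
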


\begin{proof}
We give the proof for $G$.
As bad subset $T \in \YY$ in $G$ is a bad subset of $S$ in $\widetilde{H}$, the proof arguments for $\widetilde{H}$ are exactly the same.

Let $(A, B)$ be the vertex bipartition of a $d$-cut $F$ of $G$ such that $F = E_G(A, B)$ and $T \in \YY$ be a bad subset of $S$.
If $T \in \YY$ is a bad subset of $S$ in $G$, there are more than $2d+1$ vertices in $\widehat{I_T}$.
It follows from Lemma \ref{lem:neighborhood-break-up} that $T \subseteq A$ or $T \subseteq B$.
Let $T \subseteq A$ (arguments for $T \subseteq B$ is symmetric).
Then, for any $u \in \widehat{I_T}$, if $u \in A$, then $J^u_T \cap F = \emptyset$.
If $u \in B$, then $J^u_T \subseteq F$.
This completes the proof that $T$ is occupied by $F$.
\end{proof}

\subparagraph{Equivalence Classes of Edge Sets:} 
With the above observation in hand, we define the notion of `equivalence' among the edge sets of $G$ and the edge sets of $\widetilde{H}$ as follows.

\begin{definition}
\label{defn:vc-eqv-d-cuts}
Two sets of edges $F_1$ and $F_2$ are {\em equivalent} if
\begin{enumerate}[(i)]
	\item $F_1 \setminus L(\YY) = F_2 \setminus L(\YY)$,
	\item for every $T \in \YY$, $|F_1 \cap L_T| = |F_2 \cap L_T|$, and
	\item every bad subset $T \in \YY$ is occupied by $F_1$ if and only if $T$ is occupied by $F_2$.
\end{enumerate}
\end{definition}

It is not hard to see that the above definition gives an {\em equivalence relation}.
This definition not only holds for two edge sets of $G$, but also holds for two edge sets of $\widetilde{H}$ as well as for one edge set of $G$ and one edge set of $\widetilde{H}$.
From now onwards, for every edge set $F$ of $G$ (respectively, of $\widetilde{H}$), we consider the partition $F = F_b \uplus F_g$ such that $F_b = \bigcup_{T \in \YY} (F \cap L_T)$ and $F_g = F \setminus F_b$.
Observe that $F_b$ is the intersection of $F$ with $\cup_{T \in \YY} L_T$ and $F_g = F \setminus L(\YY)$.
Our next lemma is a crucial structural property saying that any edge set equivalent to a $d$-cut is also a $d$-cut.

\begin{lemma}
\label{lemma:equivalence-connection-H-G}
Let $F' \subseteq E(\widetilde{H})$ be a (maximal) $d$-cut  $d$-cut, respectively) of $\widetilde{H}$.
Then, every $F \subseteq E(G)$ that is equivalent to $F'$ is a (maximal) $d$-cut of $G$.
\end{lemma}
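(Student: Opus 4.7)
The plan is to construct a bipartition $(A,B)$ of $V(G)$ with $E_G(A,B)=F$ by extending the bipartition $(A',B')$ of $\widetilde{H}$ corresponding to $F'$. Since $|\widehat{I_T}\cap V(\widetilde{H})|\geq 2d+2$ for every bad $T\in\YY$, item~(iv) of Lemma~\ref{lem:neighborhood-break-up} implies that every such $T$ is monochromatic in $(A',B')$; after fixing an orientation I may assume $T\subseteq A'$ for every bad $T$. Restricting $(A',B')$ to $V(\widetilde{H})\setminus I(\YY)$ yields a $d$-cut of $\widetilde{H}-I(\YY)$ with edge set $F'_g=F_g$ (where the equality uses condition (i) of Definition~\ref{defn:vc-eqv-d-cuts}). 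I then apply Lemma~\ref{lemma:vc-solutions-higher-degree} to lift this bipartition to a bipartition $(A^*,B^*)$ of $V(G)\setminus I(\YY)$ realizing the same edge set $F_g$. Finally, using that $F$ occupies every bad $T$ by condition~(iii), I extend $(A^*,B^*)$ to $(A,B)$ of $V(G)$ by placing each $u\in\widehat{I_T}$ into $B$ when $J^u_T\subseteq F$ and into $A$ otherwise. A direct case check gives $E_G(A,B)\cap E(G-I(\YY))=F_g$ and $E_G(A,B)\cap L_T=F\cap L_T$ for each bad $T$, so $E_G(A,B)=F$.

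To verify the degree constraint the crucial observation is that occupation of $T$ combined with condition~(ii) forces a uniform count: for each bad $T$ and each $x\in T$, the number of $L_T$-edges of $F$ incident to $x$ equals the number of $u\in\widehat{I_T}$ with $J^u_T\subseteq F$, which by $|F\cap L_T|=|F'\cap L_T|$ coincides with the same count for $F'$. Summing over all $T\in\YY$ containing $x$ and combining with condition~(i) for edges outside $L(\YY)$, the degree of every $x\in S$ in $F$ equals its degree in $F'$, which is at most $d$ since $F'$ is a $d$-cut of $\widetilde{H}$. The other vertex classes are direct: each $v\in Z\setminus I(\YY)$ has all its edges outside $L(\YY)$ so its degree in $F$ equals its degree in $F'$; each $u\in I(\YY)$ has degree $|T|\leq d$ or $0$ in $F$ by construction; and each $v\in I_{d+1}\setminus Z$ has no edge of $F$ incident to it at all, because $F\setminus L(\YY)=F'\setminus L(\YY)\subseteq E(\widetilde{H})$ while $v\notin V(\widetilde{H})$ and $v\notin \widehat{I_T}$ for any $T$.

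The hard part will be the maximality statement. Assume $F'$ is maximal in $\widetilde{H}$ and suppose, for contradiction, that $F\subsetneq F^+$ for some $d$-cut $F^+$ of $G$; I will exhibit a $d$-cut $F'^+$ of $\widetilde{H}$ with $F'\subsetneq F'^+$. First, mimicking the forward-direction argument of Lemma~\ref{lemma:vc-solutions-higher-degree} together with item~(iv) of Lemma~\ref{lem:neighborhood-break-up}, no edge of $F^+$ is incident to any $v\in I_{d+1}\setminus Z$: in any bipartition realizing $F^+$, $N_G(v)$ must be monochromatic, and since $|N_G(v)|\geq d+1>d$, $v$ must lie on the same side as $N_G(v)$, so no edge of $F^+$ touches $v$. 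I then define $F'^+$ by copying $F^+\cap (E(\widetilde{H})\setminus L(\YY))$ and, for each bad $T$, by augmenting the set $U_T\subseteq \widehat{I_T}\cap V(\widetilde{H})$ underlying $F'\cap L_T=\bigcup_{u\in U_T} J^u_T$ to a superset $U_T^+$ of size $|F^+\cap L_T|/|T|$, setting $F'^+\cap L_T=\bigcup_{u\in U_T^+} J^u_T$. This is feasible since $|F^+\cap L_T|/|T|\leq d<2d+2\leq |\widehat{I_T}\cap V(\widetilde{H})|$. A bipartition argument parallel to the first half of the proof shows that $F'^+$ is a $d$-cut of $\widetilde{H}$, $F'\subseteq F'^+$ by construction, and since any edge in $F^+\setminus F$ lies in $E(\widetilde{H})\setminus L(\YY)$ or in $L(\YY)$, a straightforward counting comparison gives $|F'^+|>|F'|$. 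Hence $F'\subsetneq F'^+$, contradicting maximality of $F'$.
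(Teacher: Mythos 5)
Your proof is essentially correct and takes a cleaner, more unified route than the paper. Where the paper splits into three cases according to which of $F_b',F_g'$ is empty and in two of those cases gives only an indirect argument that $F$ is an edge cut (``the deletion of $F$ increases the number of connected components''), you construct the witnessing bipartition $(A,B)$ of $V(G)$ explicitly in all cases by lifting $(A',B')$ through $G-I(\YY)$ and then placing each vertex of $\widehat{I_T}$ according to whether $J^u_T\subseteq F$; this avoids the paper's implicit gap of going from ``deleting $F$ disconnects $G$'' to ``$F$ is realized by some bipartition.'' Your degree-counting argument for $S$ (per-$T$ cardinality matching plus condition~(i) for the non-$L(\YY)$ edges) reaches the same conclusion as the paper's counting but packages it more uniformly. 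Most notably, you actually prove the parenthetical maximality claim by producing $F'^+\supsetneq F'$ in $\widetilde{H}$ from any hypothetical $F^+\supsetneq F$ in $G$; the paper's proof of this lemma asserts the maximal variant but never argues it, so your maximality part fills a genuine gap.

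Two points need tightening. First, ``after fixing an orientation I may assume $T\subseteq A'$ for every bad $T$'' is not a legitimate normalization: distinct bad sets $T$ can lie on opposite sides of $(A',B')$, so the orientation must be chosen per $T$, not globally. The fix is local and does not affect the rest of the argument — for each bad $T$, use item~(\ref{it:common-neighborhood-mono}) of Lemma~\ref{lem:neighborhood-break-up} to conclude $T$ is monochromatic, and then place $u\in\widehat{I_T}$ on the side opposite $T$ exactly when $J^u_T\subseteq F$. Second, in the maximality part, the sentence ``a bipartition argument parallel to the first half of the proof shows that $F'^+$ is a $d$-cut of $\widetilde{H}$'' is imprecise: the first half goes from a $d$-cut of $\widetilde{H}$ to one of $G$, whereas here you need the reverse (from $F^+$ a $d$-cut of $G$ to $F'^+$ a $d$-cut of $\widetilde{H}$). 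The argument does go through symmetrically — restrict $(A^+,B^+)$ to $V(\widetilde{H})\setminus I(\YY)$ and re-place each $\widehat{I_T}\cap V(\widetilde{H})$ according to $U_T^+$ — but this should be spelled out rather than waved at as ``parallel.'' You should also explicitly check $U_T\subseteq U_T^+$ (which follows from $F\subseteq F^+$ via $|U_T|=|F\cap L_T|/|T|\leq|F^+\cap L_T|/|T|=|U_T^+|$, plus the freedom to choose $U_T^+$ to contain $U_T$). With these repairs the proof is sound.
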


\begin{proof}
Let $F' = E_{\widetilde{H}}(A', B')$ be a $d$-cut $F'$ of $\widetilde{H}$ and $F \subseteq E(G)$ is equivalent to $F'$.
If $F' = \emptyset$, then $G$ and $\widetilde{H}$ are disconnected graphs and $F = \emptyset$ is the only edge set that is equivalent to $F'$.
Then, $F$ is a $d$-cut of $G$.
We consider the nontrivial case when $F' \neq \emptyset$.
It suffices to prove that 
\begin{itemize}
	\item every $u \in V(G)$ has at most $d$ edges incident in $F$, and
	\item there is a vertex bipartition $(A, B)$ of $G$ such that $F = E_G(A, B)$ or the deletion of $F$ increases the number of connected components in $G$.
\end{itemize}

We consider the partition $F' = F_g' \uplus F_b'$ such that $F_b' = \cup_{T \in \YY} (F' \cap L_T)$.
Then $F_g' = F' \setminus L(\YY)$.
There are three cases.

\begin{description}
	\item[Case (i):] $F_g' \neq \emptyset$ but $F_b' = \emptyset$. 
	Observe that $F_g'$ is a $d$-cut of $\widetilde{H} - I(\YY)$.
	Due to Lemma \ref{lemma:vc-solutions-higher-degree}, $F_g'$ is a $d$-cut of $G - I(\YY)$.
	As $F$ is equivalent to $F'$, it must be that $F \setminus L(\YY) = F' \setminus L(\YY) = F_g'$.
	Since $F_b' = \emptyset$, for every $T \in \YY$, $F' \cap L_T = \emptyset$.
	Then, $F \cap L_T = \emptyset$ for every $T \in \YY$ implying that $F = F'$.
	It remains to establish that $F$ is a $d$-cut of $G$.
	We extend $(A', B')$ into a vertex bipartition of $(A, B)$ such that $F = F' = E_G(A, B)$.
	We initialize $A = A'$ and $B = B'$.
	Since $F'$ is a $d$-cut of $\widetilde{H}$, due to Observation \ref{obs:broken-case-understanding}, every $T \in \YY$ is occupied by $F'$.
	Furthermore, $F' \cap L_T = \emptyset$ as $F_b' = \emptyset$.
	Then, for every $u \in I_T$, $J^u_T \cap F' = \emptyset$.
	In such a case, it must be that either $T \cup I_T \subseteq A'$ or $T \cup I_T \subseteq B'$.
	If $T \cup I_T \subseteq A'$, then $A := A \cup \widehat{I_T}$.
	Otherwise, $T \cup I_T \subseteq B'$, we set $B := B \cup \widehat{I_T}$.
	We repeat this procedure for every $T \in \YY$.
	Observe that this procedure does not add any edge apart from the ones in $F' (= F)$ and hence $F = E_G(A, B)$.
	Therefore, $F$ is a $d$-cut of $G$.
	
	\item[Case (ii):] $F_g' = \emptyset$ but $F_b' \neq \emptyset$.
	Since $F$ is equivalent to $F'$, it must be that $F \setminus L(\YY) = F_g' = \emptyset$.
	Therefore, for every $u \in I_{d+1}$, no edge of $F$ is incident to $u$.
	For every $T \in \YY$, $|F \cap L_T| = |F' \cap L_T|$.
	As $F_b' \neq \emptyset$, we focus only on the bad subsets $T \in \YY$ such that $F' \cap L_T \neq \emptyset$.
	Due to Observation \ref{obs:broken-case-understanding}, every $T \in \YY$ is occupied by $F'$.
	As $F$ is equivalent to $F'$, every $T \in \YY$ is occupied by $F$.
	If $F' \cap L_T \neq \emptyset$ for some $T \in \YY$, then there is $u \in I_T$ such that $J^u_T \subseteq F'$.
	For every $u \in T$, since $|F \cap L_T| = |F' \cap L_T|$, if there are exactly $\alpha_T$ vertices in $w \in I_T$ satisfying that $J^w_T \subseteq F'$, then there are exactly $\alpha_T$ vertices $w \in \widehat{I_T}$ satisfying that $J^w_T \subseteq F$.
	Therefore for every $u \in S$, the number of edges in $F$ incident to $u$ remains the same as the number of edges of $F'$ incident to $u$.
	Since $F'$ is a $d$-cut of $G$ at most $d$ edges of $F'$ are incident to $u$, it follows that at most $d$ edges of $F$ are incident to $u$.
	Since there is $w \in \widehat{I_T}$ such that $J^w_T \subseteq F$, for some $T \in \YY$, therefore, all edges incident to $w$ are in $F$.
	Hence, the deletion of $F$ from $G$ increases the number of connected components.
	Therefore, $F$ is a $d$-cut of $G$.
		
	\item[Case (iii):] $F_g', F_b' \neq \emptyset$.
	Since $F$ is equivalent to $F'$, it must be that $F \setminus L(\YY) = F' \setminus L(\YY) = F_g'$.
	Therefore, for every $u \in I_{d+1}$, the edges of $F'$ incident to $u$ and the edges of $F$ incident to $u$ are precisely the same.
	As $F'$ is a $d$-cut, at most $d$ edges of $F$ are incident to every $u \in I_{d+1}$.
	Similarly, for every $u \in I_d$, there are at most $d$ edges in $G$ that are incident to $u$.
	Therefore, for every $u \in I_d$, at most $d$ edges of $F$ are incident to $u$.
	
	We focus on the vertices $u \in S$.
	Since $F$ is equivalent to $F'$, the edges of $F \setminus L(\YY)$ that are incident to $u$ remains the same as the set of edges of $F' \setminus L(\YY)$ that are incident to $u$.
	As $F'$ is a $d$-cut of $\widetilde{H}$, due to Observation \ref{obs:broken-case-understanding}, every $T \in \YY$ is occupied by $F'$.
	As $F$ is equivalent to $F'$, every $T \in \YY$ is also occupied by $F$.
	For every $T \in \YY$, if $F' \cap L_T \neq \emptyset$, we focus on $\alpha_T$, the number of vertices $w \in I_T$ such that $J^w_T \subseteq F'$.
	As $|F \cap L_T| = |F' \cap L_T|$, there are $\alpha_T$ vertices $w \in \widehat{I_T}$ satisfies that $J^w_T \subseteq F$.
	Therefore, $\alpha_T$ edges of $F' \cap L_T$ are incident to $u$ that is the same as the number of edges in $F \cap L_T$ incident to $u$.
	
	Based on the above mentioned arguments, the number of edges of $F'$ incident to $u$ remains the same as the number of edges of $F$ incident to $u$.
	Hence, at most $d$ edges of $F$ are incident to $u$.
	Finally, observe that there is $T \in \YY$ such that $J^w_T \subseteq F$.
	Hence, all edges incident to $w$ are in $F$.
	This ensures us that the deletion of $F$ increases the number of connected components in $G$.
	Therefore, $F$ is a $d$-cut of $G$.
\end{description}
Since the above cases are mutually exhaustive, this completes the proof.
\end{proof}

From the above lemma, it is not hard to see that if $F$ is a $d$-cut of $G$, and $F'$ is equivalent to $F$, then $F'$ is a $d$-cut of $G$.

\subsubsection{A Challenge to Avoid Duplicate Enumeration}
\label{sec:vc-duplicate-enum}

It is clear from the lemmas in the previous sections that given a $d$-cut $F'$ of $H$ that is also a $d$-cut of $\widetilde{H}$, our solution-lifting algorithm should output a collection of $d$-cuts of $G$ that are equivalent to $F'$.
But given two distinct $d$-cuts $F_1$ and $F_2$ of $H$, it is possible there exists a $d$-cut $F$ of $G$ that contains an edge not in $H$ and is equivalent to both $F_1$ and $F_2$.
We have to ensure that our solution-lifting algorithm outputs $F$ when exactly one $d$-cut of $H$ is given.
To circumvent this challenge/difficulty, we define the notion of `suitable' bad subsets from $\YY$ using the information of the pendant cliques of $S_0$ that are attached to $\widetilde{H}$ as follows.

Observe that $F$ is equivalent to both $F_1$ and $F_2$, but differs from both due to the edges present in $F \cap L_T$ for some $T \in \YY$. 
Based on  procedure {\sf MarkVC}($G, S, k$), if there is a vertex in $\widehat{I_T} \setminus I_T$ (that is unmarked), then we choose $d$ arbitrarily marked vertices $u_1,\ldots,u_d \in I_T$.
By construction of $H$, for every $i \in [d]$, we have attached a pendant clique $C_T^i$ with $2d+i$ vertices into $u_i$.
In particular, $C_T^i$ is a connected component of $H[S_0]$.
We use the sizes of these attached pendant cliques to define the notion of suitable for every bad subset $T \in \YY$.

\begin{definition}
\label{defn:vc-r-suitable-subsets}	
Let $F'$ be a $d$-cut of $H$.
A bad subset $T \in \YY$ is {\em $r$-suitable} for some $r > 0$ with respect to $F'$ if 
\begin{enumerate}[(i)]
	\item $F' \cap L_T \neq \emptyset$, 
	\item there is a set $X$ of exactly $r$ vertices $v_1,\ldots,v_r \in I_T$ such that for every $i \in [r]$, $J^{v_i}_T \subseteq F'$, and the pendant clique of $S_0$ attached to $v_i$ has exactly $2d+i$ vertices, and
	\item for every $u \in I_T \setminus X$, $J^u_T \cap F' = \emptyset$.
\end{enumerate}
\end{definition}

%

\begin{figure}[t]
\centering
	\includegraphics[scale=0.3]{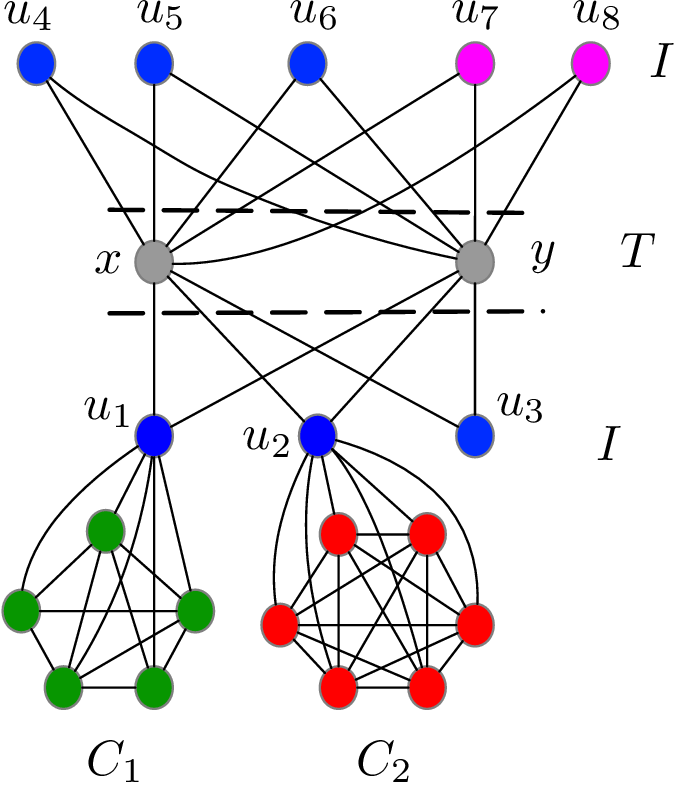}
	\caption{An illustration of $r$-suitable subsets for $d=2$. The graph $G$ contains all the vertices except the two pendant cliques $C_1, C_2$. The graph $H$ contains all but the pink vertices $u_7$ and $u_8$. 
	The set $T = \{x, y\}$ is 1-suitable with respect to the $d$-cut $F_1 = E_H(A_1, B_1)$ of $H$ such that $B_1 = C_1 \cup \{u_1\}$ and $A_1 = V(H) \setminus B'$.
	In addition, $T$ is 2-suitable with respect to the $d$-cut $F_2 = E_H(A_2, B_2)$ of $H$ such that $B_2 = C_1 \cup C_2 \cup \{u_1, u_2\}$ and $A_2 = V(H) \setminus B_2$.
	Finally, $T$ is unsuitable with respect to the $d$-cut $F_3 = E_H(A_3, B_3)$ such that $B_3 = C_2 \cup \{u_2, u_3\}$ and $A_3 = V(H) \setminus B_3$.
	We also provide an illustration of Lemma \ref{lemma:vc-equivalence-of-d-cuts}.
	Consider the $d$-cut $F = E_G(A, B)$ of $G$ such that $B = \{u_2, u_7\}$ and $A = V(G) \setminus B$.
	Then, $F_2$ is a $d$-cut of $G$  equivalent to $F$ and $T$ is 2-suitable with respect to $F_2$.}
\label{fig:r-suitable-sets}
\end{figure}


If a bad subset $T \in \YY$ is not $r$-suitable with respect to the $d$-cut $F'$ of $H$ for any $r > 0$, then $T$ is {\em unsuitable} with respect to $F'$.
We refer to Figure \ref{fig:r-suitable-sets} for an illustration of 1-suitable and 2-suitable sets with respect to a $d$-cuts for $d=2$.
The same figure also illustrates the existence of a $d$-cut with respect to which the bad subset is unsuitable.
Using the above definition of $r$-suitable bad subsets, we prove the following lemma that is central to the correctness of our solution-lifting algorithm.
In particular, we prove that for every (maximal) $d$-cut, there exists a (maximal) $d$-cut in $H$ satisfying some crucial properties.


\begin{lemma}
\label{lemma:vc-equivalence-of-d-cuts}
Let $F \subseteq E(G)$.
Then, the following statements are true.
\begin{enumerate}[(i)]
	\item If $F$ is a (maximal) $d$-cut of $G$ and $F \subseteq E(\widetilde{H})$, then $F$ is a (maximal) $d$-cut of $H$.
	\item If $F$ is a (maximal) $d$-cut  of $G$ and $F \not\subset E(\widetilde{H})$, then $H$ has a unique (maximal) $d$-cut  $F'$ that satisfies the following two properties.
	\begin{itemize}
		\item $F'$ is equivalent to $F$, and
		\item for every $T \in {\YY}$, if $F \cap L_T \not\subset E(\widetilde{H})$, then there is $r > 0$ such that $T$ is $r$-suitable with respect to $F'$.
	\end{itemize} 
\end{enumerate}
We refer to Figure \ref{fig:r-suitable-sets} for an illustration of this lemma.
\end{lemma}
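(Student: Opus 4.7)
The plan is to dispatch part (i) as a direct consequence of earlier lemmas and to devote the bulk of the argument in part (ii) to a canonical ``reshuffling'' of cut edges from unmarked $\widehat{I_T}$-vertices to marked vertices of $I_T$ whose attached pendant cliques uniquely pin down the choice.

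For part (i), since $\widetilde{H}$ is an induced subgraph of $G$ and $F \subseteq E(\widetilde{H})$, $F$ is a $d$-cut of $\widetilde{H}$, hence of $H$ by Lemma \ref{lemma:vc-d-cuts-without-S0}. For the maximal case, assuming a strict extension $F'' \supsetneq F$ in $H$ would, by Lemma \ref{lemma:vc-d-cuts-without-S0}, yield a $d$-cut of $\widetilde{H}$, whose bipartition extends to $V(G)$ by placing each $v \in V(G) \setminus V(\widetilde{H})$ on the side of its monochromatic neighborhood (monochromaticity from items (\ref{it:clique-mono}) and (\ref{it:common-neighborhood-mono}) of Lemma \ref{lem:neighborhood-break-up}), contradicting maximality of $F$ in $G$.

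For part (ii), I would first invoke Lemma \ref{lemma:vc-solutions-higher-degree} to conclude $F \setminus L(\YY) \subseteq E(\widetilde{H})$, so every edge of $F$ that lies outside $E(\widetilde{H})$ belongs to $L_T$ for some bad $T \in \YY$. Observation \ref{obs:broken-case-understanding} ensures each bad $T$ is occupied by $F$, so $\alpha_T := |F \cap L_T|/|T|$ counts the vertices $u \in \widehat{I_T}$ with $J^u_T \subseteq F$ and satisfies $\alpha_T \leq d$ by the $d$-cut property. I would then define $F'$ canonically: retain $F \setminus L(\YY)$ and, for each $T \in \YY$ with $F \cap L_T \not\subset E(\widetilde{H})$ (so $\alpha_T \geq 1$ and ${\sf MarkVC}$ attached pendant cliques), adjoin $\bigcup_{i=1}^{\alpha_T} J^{v_i}_T$ where $v_i \in I_T$ is the unique marked vertex carrying a pendant clique of size $2d+i$; for the remaining $T \in \YY$ set $F' \cap L_T := F \cap L_T$. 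To verify that $F'$ is a $d$-cut of $H$, I would extend a bipartition $(A,B)$ of $V(G)$ realizing $F$ to $(A',B')$ of $V(H)$: keep the placement of every vertex of $V(\widetilde{H})$ except that for each bad $T$ undergoing reshuffling, place $v_1,\ldots,v_{\alpha_T}$ opposite to $T$ and the remaining vertices of $I_T$ on the same side as $T$; then place each pendant clique in $S_0$ on the same side as its attached vertex, which keeps the edges of $H[S_0]$ and those between $S_0$ and $V(H)\setminus S_0$ uncut, consistent with Observation \ref{obs:vc-S0-empty-H-intersection}. Equivalence of $F'$ and $F$ as well as the $\alpha_T$-suitability property then hold by construction, and uniqueness is pinned down by the distinct pendant clique sizes $2d+1,\ldots,2d+\alpha_T$ singling out the same $v_1,\ldots,v_{\alpha_T}$. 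For the maximal case, any strict extension of $F'$ in $H$ would, under the reverse reshuffling, produce a strict extension of $F$ in $G$, contradicting maximality.

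The main obstacle I anticipate is the degree-preservation check: at each $x \in T$ for a bad $T$ where reshuffling occurs, one must argue that exchanging $\alpha_T$ cut-neighbors in $\widehat{I_T} \setminus I_T$ for $\alpha_T$ cut-neighbors in $I_T$ leaves the edge count at $x$ unchanged, and at each $v_i$ and each unmarked vertex of $I_T$ that switches sides, the $d$-cut inequality is preserved. The maximal direction similarly hinges on showing that the construction is reversible, so that any extra cut edge appearing in $H$ can be traced back to an extra cut edge in $G$; this edge-counting analysis will follow the same case structure that appears in the proof of Lemma \ref{lemma:equivalence-connection-H-G}.
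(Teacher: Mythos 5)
Your proposal follows essentially the same route as the paper's proof: part~(i) is a direct appeal to Lemma~\ref{lemma:vc-d-cuts-without-S0}, and in part~(ii) both arguments build the same canonical $F'$ by replacing, for each bad $T$ with $F\cap L_T\not\subset E(\widetilde{H})$, the set $F\cap L_T$ by $\bigcup_{i\le \alpha_T}J^{v_i}_T$ where $v_i\in I_T$ is the marked vertex carrying the pendant clique of size $2d+i$, with $r$-suitability and uniqueness then read off from the distinct clique sizes exactly as in the paper. Your explicit construction of a witnessing bipartition $(A',B')$ for $F'$ is in fact a tighter way to close the argument than the paper's check that the degree bound holds and that removing $F'$ increases the component count (properties which by themselves do not certify that an edge set is an edge cut); the one small imprecision is that the $\alpha_T$ cut-neighbors of $x\in T$ being exchanged need not all lie in $\widehat{I_T}\setminus I_T$ (some may already be marked), though the counting still works since exactly $\alpha_T$ vertices of $\widehat{I_T}$ are opposite $T$ before and after the reshuffle.
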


\begin{proof}
Given $F \subseteq E(G)$, we prove the statements in the given order.
\begin{enumerate}[(i)]
	\item For the first item, the premise says that $F$ is a (maximal) $d$-cut of $G$ and $F \subseteq E(\widetilde{H})$.
	Since $\widetilde{H}$ is a subgraph of $G$, it follows that $F$ is a (maximal) $d$-cut of $\widetilde{H}$.
	Due to Lemma~\ref{lemma:vc-d-cuts-without-S0}, it follows that $F$ is a (maximal) $d$-cut of $H$.
	

	\item Let $F \not\subset E(\widetilde{H})$ and $F = E_G(A, B)$.
	By definition, $F = F_b \uplus F_g$ such that $F_b = \bigcup\limits_{T \in \YY} (F \cap L_T)$.
	If $F_b = \emptyset$, then $F_g$ is precisely a $d$-cut of $G - I(\YY)$.
	Then due to Lemma~\ref{lemma:vc-solutions-higher-degree}, $F_g$ is also a $d$-cut of $\widetilde{H} - I(\YY)$.
	In such a case, $F \subseteq E(\widetilde{H})$ contradicting the assumption that $F \not\subset E(\widetilde{H})$.
	So, it must be that $F_b \neq \emptyset$.
	We justify the existence of a $d$-cut $F'$ of $H$ that satisfies the properties required to prove the lemma.
	
	Initialize $F' := F$.
	Since $F_b \neq \emptyset$, there is a bad subset $T \in \YY$ such that $F \cap L_T \not\subset E(\widetilde{H})$.
	We consider every bad subset $T \in \YY$ one by one such that $F \cap L_T \not\subset E(\widetilde{H})$.
	Due to Observation \ref{obs:broken-case-understanding}, every bad subset $T \in \YY$ is occupied by $F$.
	Hence, for every $v \in \widehat{I_T}$, either $J^v_T \subseteq F$ or $J^v_T \cap F = \emptyset$.
	Since $F \cap L_T \neq \emptyset$ and consists of an edge that is not in $\widetilde{H}$, there exists $v \in \widehat{I_T} \setminus I_T$ (unmarked vertex not in $\widetilde{H}$) such that $J^v_T \subseteq F$.
	Then, there are exactly $2d+2$ vertices in $I_T$.
	Moreover, there are $d$ vertices $v_1,\ldots,v_d \in I_T$ (that are marked) such that a pendant clique $C_i$ of $2d+i$ vertices is attached to $v_i$ and $C_i$ is a connected component of $H[S_0]$.
	
	We focus on $r$, the number of vertices $w \in \widehat{I_T}$ such that $J^w_T \subseteq F$.
	Since $F \cap L_T \neq \emptyset$, clearly $r > 0$.
	We choose the vertices $X' = \{v_1,\ldots,v_r\} \subseteq I_T$ such that the pendant clique of $S_0$ attached to $v_i$ has $2d+i$ vertices.
	Then, we set $F' := (F' \setminus L_T) \cup (\bigcup\limits_{i \in [r]} J^{v_i}_T)$.
	We repeat this replacement procedure for every $T \in \YY$ such that $F \cap L_T$ contains an edge that is not in $\widetilde{H}$.
	Observe that for two distinct subsets $T_1, T_2 \in \YY$, this replacement procedure involves the vertex subsets from $I_{T_1}$ and $I_{T_2}$ that are pairwise disjoint from each other.
	Therefore, every pair of edge set replacement involves pair of edge sets that are disjoint from each other.
	
	This replacement procedure keeps $F_g$ unchanged and
	ensures us that for every $T \in \YY$, if $F \cap L_T$ has an edge that is not in $\widetilde{H}$, then $|F \cap L_T| = |F' \cap L_T|$.
	Hence, $F'$ is equivalent to $F$.
	As $F_g \subseteq E(G - I(\YY))$, $F_g$ is a $d$-cut of $E(G - I(\YY))$.
	Due to Lemma \ref{lemma:vc-solutions-higher-degree}, $F_g$ is a $d$-cut of $\widetilde{H} - I(\YY)$.
	Every $u \in I_d$, has at most  $d$ edges incident in $\widetilde{H}$.
	Hence, for every $u \in I_d$, there are at most $d$ edges that are incident to $F'$.
	
	Finally, observe that by construction, if $F \cap L_T \neq \emptyset$, then there is $w \in I_T$ such that $J^w_T \subseteq F'$.
	This means that deletion of $F'$ from $\widetilde{H}$ increases the number of connected components in $\widetilde{H}$.
	Hence, $F'$ is a $d$-cut of $\widetilde{H}$. 
	Due to Lemma \ref{lemma:vc-d-cuts-without-S0}, $F'$ is a $d$-cut of $H$.
	
	It remains to establish that for every $T \in \YY$, if $F \cap L_T \not\subset E(\widetilde{H})$, then $T$ is $r$-suitable with respect to $F'$.
	By construction of $F'$, for every $T \in \YY$, if $F \cap L_T \not\subset E(\widetilde{H})$, then a unique set $X' \subseteq I_T$ is chosen such that $E(X', T) = F' \cap L_T$ and the sizes of the pendant cliques attached to the vertices of $X'$ are $\{2d+1,\ldots,2d+r\}$.
	Therefore, $T$ is $r$-suitable with respect to $F'$ and the construction of $F'$ is unique.
	This completes the proof for $d$-cuts.
	Proof arguments for maximal $d$-cuts are similar.
\end{enumerate}
This completes the proof of the lemma.
\end{proof}

\subsubsection{Designing the Enumeration Kernelization Algorithms}
\label{sec:vc-final-algorithms}

Finally, we are ready to illustrate how we can use structural properties of equivalent $d$-cuts, and $r$-suitable subsets $T \in \YY$ to design our enumeration kernelizations.
We describe our the solution-lifting algorithm that is one of the central parts of our result.
Given a $d$-cut of $H$, our next two lemma statements illustrate the algorithms that enumerate a collection of $d$-cuts of $G$ in polynomial-delay such that property (ii) of Definition \ref{defn:poly-delay-enum-kernel} is satisfied.
The correctness proofs of the next lemma crucially relies on the correctness of the Lemma~\ref{lemma:vc-equivalence-of-d-cuts}.

\begin{lemma}
\label{lemma:vc-soln-lifting-algo-2}
Suppose that $(H, S \cup S_0, |S \cup S_0|)$ be the output instance obtained after invoking the marking procedure on the input instance $(G, S, k)$.
Furthermore, let $\FF(G)$ be the collection of all maximal $d$-cuts of $G$ and $\FF(H)$ be the collection of all maximal $d$-cuts of $H$.
Then given a maximal $d$-cut $F^*$ of $H$, then there is an algorithm that enumerates a collection $\CC(F^*)$ of maximal $d$-cuts of $G$ in $k^{\OO(d)}$-delay such that
\begin{enumerate}[(i)]
	\item every $F \in \CC(F^*)$ is equivalent to $F^*$, and
	\item if $F \neq F^*$ for some $F \in \CC(F^*)$, then $F$ has an edge that is in $G$ but not in $H$.
\end{enumerate}
Moreover, $\{\CC(F^*) \mid F^* \in \FF(H)\}$ is a partition of $\FF(G)$.
\end{lemma}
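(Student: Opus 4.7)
The plan is to describe the class $\CC(F^*)$ as the set of all maximal $d$-cuts of $G$ assigned to $F^*$ by the canonical map supplied by Lemma \ref{lemma:vc-equivalence-of-d-cuts}, and to enumerate it with polynomial delay by backtracking over the combinatorial freedom in the equivalence class (Definition \ref{defn:vc-eqv-d-cuts}). That map sends every $F \in \FF(G)$ with $F \subseteq E(\widetilde{H})$ to itself, viewed as an element of $\FF(H)$ via Lemma \ref{lemma:vc-equivalence-of-d-cuts}(i); and sends every $F \not\subseteq E(\widetilde{H})$ to the unique $F^* \in \FF(H)$ that is equivalent to $F$ and $r$-suitable at every bad $T$ where $F \cap L_T$ contains an edge outside $\widetilde{H}$, as provided by Lemma \ref{lemma:vc-equivalence-of-d-cuts}(ii).

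Given $F^*$, for each $T \in \YY$ the algorithm computes the set $X_T^{F^*} := \{v \in I_T : J^v_T \subseteq F^*\}$ with $r_T = |X_T^{F^*}|$, and identifies $\YY^\star(F^*)$, the bad subsets at which $F^*$ is $r_T$-suitable with $r_T > 0$ and $|\widehat{I_T}| > 2d+2$ (Definition \ref{defn:vc-r-suitable-subsets}). It first outputs $F^*$, and then, following the pattern of Algorithm \ref{alg:two-new-all-d-cut}, backtracks over all tuples $(X_T)_{T \in \YY^\star(F^*)}$ with $X_T \subseteq \widehat{I_T} \cap V(G)$, $|X_T| = r_T$, and, for each $T$, either $X_T = X_T^{F^*}$ or $X_T \cap (\widehat{I_T} \setminus I_T) \neq \emptyset$. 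For each such tuple the algorithm emits the cut obtained from $F^*$ by replacing $\bigcup_{v \in X_T^{F^*}} J^v_T$ with $\bigcup_{v \in X_T} J^v_T$ for every $T \in \YY^\star(F^*)$ and leaving all other edges unchanged. Processing the subsets in a fixed order prevents duplicate emissions.

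Properties (i) and (ii) then follow from Lemma \ref{lemma:equivalence-connection-H-G}: by construction, each emitted $F$ agrees with $F^*$ on $E(G) \setminus L(\YY)$, satisfies $|F \cap L_T| = |F^* \cap L_T|$ for every $T$, and occupies the same bad subsets, so $F$ is equivalent to $F^*$ and hence a maximal $d$-cut of $G$. If $F \neq F^*$, then some $X_T$ contains a vertex $v \in \widehat{I_T} \setminus I_T \subseteq V(G) \setminus V(H)$, so the edges of $J^v_T$ lie in $G$ but not in $H$. The partition property follows from Lemma \ref{lemma:vc-equivalence-of-d-cuts}: the canonical map $F \mapsto F^*$ is well-defined and surjective (every $F \in \FF(G)$ is in the image and can be reconstructed from its tuple $(X_T^F)_{T \in \YY^\star(F^*)}$ together with $F^*$), and uniqueness of the canonical associate yields pairwise disjoint classes.

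The main obstacle will be the delay bound $k^{\OO(d)}$. The recursion depth is at most $|\YY^\star(F^*)| \leq |\YY| = \OO(k^d)$, and at each level we commit to one of at most $\binom{|\widehat{I_T}|}{r_T}$ choices with $r_T \leq d$. By arranging the backtracking so that every recursive call emits exactly one equivalent cut (in the spirit of Algorithm \ref{alg:two-new-all-d-cut}), the subtree traversed between two consecutive outputs contains $k^{\OO(d)}$ nodes for fixed $d$, delivering the claimed delay.
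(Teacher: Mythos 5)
Your proposal follows essentially the same route as the paper's proof: split $F^*$ into $F^*_g$ and the parts in $L(\YY)$, identify the bad sets $T$ at which $F^*$ is $r$-suitable (via the sizes of the pendant cliques), branch recursively over replacement sets $X_T$ that are either the original $X_T^{F^*}$ or contain an unmarked vertex, and invoke Lemma~\ref{lemma:equivalence-connection-H-G} for correctness and Lemma~\ref{lemma:vc-equivalence-of-d-cuts}(ii) for the partition property. Two minor points to tighten: (1) if you output $F^*$ explicitly ``first'' and also enumerate the tuple $(X_T)_T=(X_T^{F^*})_T$, you emit $F^*$ twice — either omit the initial emission or exclude the all-original tuple from the backtracking; (2) the suggestion to re-arrange the tree so that every recursive call emits a cut is not immediate here as in Algorithm~\ref{alg:two-new-all-d-cut}, because at an internal node $(J,\DD)$ with $\DD\neq\emptyset$ the accumulated set $J$ is not yet a complete cut; the paper resolves this by emitting only at leaves and bounding the delay by the depth of the tree (at most $|\YY|=k^{\OO(d)}$ nodes between consecutive leaves), which is the cleaner argument.
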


\begin{proof}
Let $F^* \in \FF(H)$ be a maximal $d$-cut of $H$.
If $F^* = \emptyset$, then we just output $F = F^*$ as the only $d$-cut of $G$.
Clearly, $\emptyset$ is the only possible equivalent $d$-cut of $\emptyset$.

Now, we consider when $F^* \neq \emptyset$.
It follows from Lemma \ref{lemma:vc-d-cuts-without-S0} that $F^*$ is a $d$-cut of $\widetilde{H}$.
Suppose $(A^*, B^*)$ denotes the vertex bipartition corresponding to the $d$-cut $F^*$ of $\widetilde{H}$.
Recall that $\YY \subseteq {{S}\choose{\leq d}}$ is the collection of all bad subsets of $S$, and $L(\YY) = \bigcup\limits_{T \in \YY} L_T$.
We consider the partition $F^* = F^*_b \uplus F^*_g$ such that $F_g^* = F^* \setminus L(\YY)$ and $F^*_b = \cup_{T \in \YY} (F^* \cap L_T)$.
Observe that $F^*_g = F^* \setminus L(\YY)$.

	If $F^*_b = \emptyset$, then $F^*$ is a $d$-cut of $\widetilde{H} - I(\YY)$.
	Then, due to Lemma \ref{lemma:vc-solutions-higher-degree}, $F^*$ is a $d$-cut of $G - I(\YY)$.
	Since $F^*$ is equivalent to itself and $F^*$ is a $d$-cut of $\widetilde{H}$, it follows from Lemma \ref{lemma:equivalence-connection-H-G} that $F^*$ is a $d$-cut of $G$.
	Then our algorithm just outputs $F^*$ for the instance $(G, S, k)$.
	
	Otherwise it must be that $F_b \neq \emptyset$.
	We use the following procedure to compute a collection of maximal $d$-cuts of $G$ with $k^{\OO(d)}$-delay.
	The algorithm is recursive and takes parameter as $(J, \DD)$ as follows.
	We refer to Algorithm \ref{alg:two} for a detailed pseudocode.
	
	\medskip
	
	\noindent
	{\bf Algorithm Description:}
	Initialize $J = F_g$ and $\DD \subseteq \YY$ such that if $F^* \cap L_T \neq \emptyset$, then $T \in \DD$.
	Note that by checking every $T \in \YY$ one by one, we can construct $\DD$.
	There are $k^d$ distinct subsets in $\YY$.
	
	\begin{enumerate}
		\item If $\DD = \emptyset$, then output $J$.
		\item Else, take an arbitrary $T \in \DD$ and let $F_T = F^* \cap L_T$.
		Note that $T$ is a bad subset of $S$ in $G$.
		Observe that every $uv \in F_T$ has one endpoint $u \in T$ and the other endpoint $v \in {I_T}$.
		Suppose $P \subseteq I_T$ such that $E(P, T) = F_T$ and $|P| = \beta$.
		Note that $\beta > 0$.
		Either $T$ is $\beta$-suitable with respect to $F^*$ or $T$ is not suitable with respect to $F^*$.
		We can check this in polynomial time as follows.
		For every $w \in P$, let $b_w$ denotes the number of vertices in the pendant clique $C_w$ that is attached to $w$.
		Formally, let $Q_P = \{b_w \mid w \in P\}$.
		By definition, if $Q_P = \{2d+1,\ldots,2d+\beta\}$, then $T$ is $\beta$-suitable with respect to $F^*$.
		Otherwise, $T$ is not $\beta$-suitable with respect to $F^*$ for any $\beta > 0$.
		It implies that $T$ is unsuitable with respect to $F^*$.
		
		If $T$ is unsuitable with respect to $F^*$, then we recursively call the algorithm with parameter $J \cup F_T$ and $\DD \setminus \{T\}$.
		Otherwise, perform the following steps.
	
		\item We are now in the case that $T$ is $\beta$-suitable with respect to $F^*$.
		Consider $P \subseteq I_T$ of $\beta$ vertices such that $E(P, T) = F^* \cap L_T$.
		For every set $X'$ of $\beta$ vertices from $\widehat{I_T}$ such that $X' = P$ or $X'$ contains a vertex not in $H$, consider the edge set $E(X', T) = \{uv \mid u \in X', v \in T\}$. 
		For every such set $X'$ of $\beta$ vertices from $\widehat{I_T}$, recursively call the algorithm with parameters $J \cup E(X', T)$ and $\DD \setminus \{T\}$.
	\end{enumerate}
	
	This completes the description (see Algorithm \ref{alg:two} for a pseudocode) of backtracking enumeration algorithm.
	
	Given a $d$-cut $F^* = E(A^*, B^*)$ of $H$, we claim that this enumeration algorithm (or Algorithm \ref{alg:two}) runs in $k^{\OO(d)}$-delay and outputs a collection of $d$-cuts of $G$ that are equivalent to $F^*$.
	Note that the enumeration algorithm keeps $F_g = F^* \setminus L(\YY)$ unchanged in each of the outputted edge sets.
	Hence, for every outputted edge set $F$, it holds that $F \setminus L(\YY) = F^* \setminus L(\YY)$.
%
	Let $T \in \YY$ be a bad set in $G$ such that $F^* \cap L_T \neq \emptyset$.
	Since $T$ is a bad subset of $S$ in $G$ and $F^*$ is a $d$-cut of $\widetilde{H}$, it follows from Observation \ref{obs:broken-case-understanding} that $T$ is occupied by $F^*$.
	If $P \subseteq I_T$ be the set of $\beta > 0$ vertices such that  $E(P, T) = F^* \cap L_T$, then 
	Algorithm \ref{alg:two}, either only enumerates $P$, or it enumerates $P$ and every collection of subsets of $\widehat{I_T}$ having $\beta$ vertices that contains at least one vertex that is not in $H$.
	Since $\beta = |P|$, for any new set $F$ in the collection of edge sets, it follows that for any bad subset $T \in \YY$ that $|F \cap L_T| = |F^* \cap L_T|$.
	Moreover, any bad subset was occupied by $F^*$ and remains occupied by $F$.
	Since the set of edges from $F_g$ are unchanged, $F$ is equivalent to $F^*$.
	This establishes (i) of the statement.
	As $F$ is equivalent to $F^*$, due to Lemma \ref{lemma:equivalence-connection-H-G}, $F$ is a $d$-cut of $G$.
	
	If $\DD \neq \emptyset$, the algorithm picks every set $T \in \DD$ one by one.
	If there is $\beta > 0$ such that $T$ is $\beta$-suitable with respect to $F^*$, then our backtracking enumeration algorithm chooses a collection of set $X'$ of $\beta$ vertices from $\widehat{I_T}$ one by one and constructs a unique set of edges $E(X', T)$.
	Then, the enumeration algorithm recursively calls with parameters $J \cup E(X', T)$ and $\DD \setminus \{T\}$.
	Note that $|X'| = \beta \leq d$ as $T \subseteq A^*$.
	This recursive algorithm enumerates the subsets of $\widehat{I_T}$ with $\beta$ vertices, and the depth of the recursion is $|\DD|$. 
	Hence, this is a standard backtracking enumeration algorithm with $n^{\beta}$ branches, and the depth of the recursion is at most $k^{d}$ as $\DD \subseteq {{S}\choose{\leq d}}$.
	The $d$-cuts are outputted only in the leaves of this recursion tree when $\DD = \emptyset$.
	Hence, the delay between two consecutive distinct $d$-cuts is at most $k^{\OO(d)}$.
	As every internal node involves $(k^d)^{\OO(1)}$-time computation, it follows that the delay between outputting two distinct equivalent $d$-cuts is $k^{\OO(d)}$.
	
	\medskip
	
	It remains to establish that $\{F \in \CC(F^*) \mid F^* \in \FF(H)\}$ is a partition of $\FF(G)$.
	Let $F \in \FF(G)$.
	If $F \subseteq E(\widetilde{H})$, then due to Lemma \ref{lemma:vc-equivalence-of-d-cuts}, $F$ is a $d$-cut of $H$, and $F \in \CC(F)$.
	If $F \not\subset E(\widetilde{H})$, then 
	it follows from Lemma \ref{lemma:vc-equivalence-of-d-cuts} that there exists a unique $d$-cut $F'$ of $G$ such that
	\begin{itemize}
		\item $F'$ is equivalent to $F$, and
		\item for every $T \in \YY$, if $F \cap L_T \not\subset E(\widetilde{H})$, then there is $r > 0$ such that $T$ is $r$-suitable with respect to $F'$.
	\end{itemize}
	
	By the description of our enumeration algorithm (Algorithm \ref{alg:two}), given $F' \in \FF(H)$, if $T$ is $r$-suitable with respect to $F'$ for some $r >0$, then $F \cap L_T$ was enumerated such that $F' \cap L_T = F \cap L_T$.
	Therefore, $F \in \CC(F')$.
	Due to the uniqueness provided by Lemma \ref{lemma:vc-equivalence-of-d-cuts}, it follows that $\{\CC(F') \mid F' \in \FF(H)\}$ is a partition of $\FF(G)$.   

	As we have already established that the delay is $k^{\OO(d)}$ and the properties (i) and (ii) are satisfied by our enumeration algorithm, this completes the proof of the lemma.
	The proof argumentations for enumeration of maximal $d$-cuts are exactly the same.
\end{proof}

\begin{algorithm}[h]
\SetKwFunction{enumeqv}{{\sc EnumEqv}}
\enumeqv{$J$, $\DD$}{

	$F^* = E(A^*, B^*)$\;
	\If{$\DD = \emptyset$}{
		Output $J$\;
	}
	\For{every $T \in \DD$}{
		$P \leftarrow$ the vertices of ${I_T}$ whose neighborhood is $T$ and $E(P, T) = F^* \cap L_T$\;
		Let $|P| = \beta$\;
		\If{If $T$ is not $\beta$-suitable with respect to $F^*$}{
			\enumeqv{$J \cup E(P, T)$, $\DD \setminus \{T\}$}\;
		}
		\Else{
		Recall $P$, the collection of 
		$\beta$ vertices from $\widehat{I_T}$\;
		\For{every $X' \in {\widehat{{I_T}}\choose{\beta}}$ such that $X' = P$ or $X' \cap V(G - H) \neq \emptyset$}{
			$E(X', T) = \{uv \mid u \in X', v \in T\}$\;
			\enumeqv{$J \cup E(X', T)$, $\DD \setminus \{T\}$}\;
		}
		}
	}
}
\caption{Solution Lifting Algorithm for {\enumdcut}}
\label{alg:two}
\end{algorithm}

{\ThmVCOne*}

\begin{proof}
Recall that $(G, S, k)$ is the input instance such that $|S| = k \leq 2{\vc}(G)$ and $S$ is a vertex cover of $G$.
	We divide the situation into two cases.

	\begin{description}
			\item[Case (i):] First, we discuss a trivial case when $G$ has at most $k^d$ vertices.
	Since $G$ has at most $k^d$ vertices, the kernelization algorithm returns the same instance $(H, S, k)$ such that $H = G$.
	Given a $d$-cut $F'$ of $H$, the solution-lifting algorithm just outputs $F = F'$.
	Clearly, $F$ is a $d$-cut of $G$.
	Kernelization algorithm runs in polynomial time and property (ii) of Definition \ref{defn:poly-delay-enum-kernel} is satisfied.
	This gives us a polynomial-delay enumeration kernelization with at most $k^d$ vertices.
	
		\item[Case (ii):] Otherwise, $G$ has more than $k^d$ vertices.
	The kernelization algorithm invokes the marking scheme ${\sf MarkVC}(G, S, k)$ as described earlier this section and constructs the graph $H$ such that $S_0 \cup S$ is a vertex cover of $H$.
	It follows from Observation \ref{obs:size-H-bound} that $H$ has $\OO(d^3 k^{d+1})$ vertices such that $|S_0 \cup S|$ is $\OO(d^3 k^d)$.
	Output the instance $(H, S_0 \cup S, |S_0 \cup S|)$.
	This completes the description of the kernelization algorithm that runs in polynomial-time.
	
	\medskip
	
	Given a $d$-cut $F'$ of $H$, the solution-lifting algorithm invokes Lemma \ref{lemma:vc-soln-lifting-algo-2} and enumerates a collection of $d$-cuts of $G$ with $k^{\OO(d)}$-delay.
	As $n > k^d$, it follows that the delay is $n^{\OO(1)}$.
	Hence, our solution lifting algorithm has polynomial delay.
	Moreover, by Lemma \ref{lemma:vc-soln-lifting-algo-2}, the property (ii) of Definition \ref{defn:poly-delay-enum-kernel} is satisfied.
	
	\end{description}
	As $k \leq 2{\vc}$, {\enummaxdcut}, when parameterized by ${\vc}$ (the vertex cover number) admits a polynomial-delay enumeration kernel with $\OO(d^3 {\vc}^{d+1})$ vertices.
\end{proof}


\else

In this section, we give a short sketch of our polynomial-delay enumeration kernel of polynomial size for {\enummaxdcut} when vertex cover number is the parameter.
Due to lack of space, we give a short proof sketch (details can be found in the appendix).

{\ThmVCOne*}

\begin{proof}[Proof sketch.]
We assume that the input is $(G, S, k)$  where $S$ is a vertex cover of $G$ such that $|S| = k \leq 2{\vc}(G)$ and denote $I = V(G) \setminus S$.
Let $I_{d}$ be the vertices of $I$ that has degree at most $d$ in $G$, and $I_{d+1}$ be the vertices of $I$ that has degree at least $d+1$ in $G$.
For every nonempty $T \in {{S}\choose{\leq d}}$, we define $\widehat{I_T}$, the set of all vertices from $I \cap V(G)$ each of which have neighborhood exactly $T$.
Formally, $\widehat{I_T} = \{u \in I_d \cap V(G) \mid N_G(u) = T\}$.
Our kernelization algorithm and solution-lifting algorithm work as follows.

\noindent
{\bf Kernelization Algorithm:} If $G$ has an isolated vertex, then we mark that vertex.
For every nonempty set $T \in {{S}\choose{\leq d}}$, we mark $\min\{2d + 2, |\widehat{I_T}|\}$ vertices from $\widehat{I_T}$.
If there is an unmarked vertex from $\widehat{I_T}$, then we choose $d$ arbitrary marked vertices $u_1,\ldots,u_d$ from $\widehat{I_T}$, and for every $i \in [d]$, attach a pendant clique $C_i$ with $2d+i$ vertices into $u_i$.
Subsequently, for every $T \in {{S}\choose{d+1}}$, we mark $\min\{2d+1, |\cap_{x \in T} (N_G(x) \cap I_{d+1})$ vertices from $N_G(x) \cap I_{d+1}$.
After this procedure is complete, we construct the graph $H$ by deleting the unmarked vertices from $I$.
It is not very hard to observe that ${\vc}(H)$ is $\OO(d^3 k^d)$ and $|V(H)|$ is $\OO(d^3 k^{d+1})$.

\noindent
{\bf Solution-Lifting Algorithm:} Given a $d$-cut $(A', B')$ of $H$, our solution-lifting algorithm works as follows.
At the very first step, we initialize $A = A'$ and $B = B'$.
We look at every subset $T \in {{S}\choose{\leq d}}$ one by one such that $|\widehat{I_T}|$ has some unmarked vertices in $G$.
Due to item (\ref{it:common-neighborhood-mono}) of Lemma \ref{lem:neighborhood-break-up}, $T \subseteq A'$ or $T \subseteq B'$.
Without loss of generality, we assume that $T \subseteq A'$.
We call $T$ an {\em $r$-suitable} set for some $r > 0$ with respect to the $d$-cut $(A', B')$, if $E(T, \widehat{I_T}) \cap E_H(A', B') \neq \emptyset$, $|\widehat{I_T} \cap B'| = r$, for every vertex of $u \in \widehat{I_T} \cap B'$, there is a pendant clique attached to $u$ and the sizes of these pendant cliques are $\{2d+1,\ldots,2d+r\}$.
If $T$ is not $r$-suitable set for any $r > 0$ with respect to $(A', B')$, then we keep $\widehat{I_T} \cap A$ and $\widehat{I_T} \cap B$ unchanged.
In case $T$ is $r$-suitable for some $r > 0$ with respect to $(A', B')$, then we first enumerate $\widehat{I_T} \cap B'$ itself, and all sets $Z \subseteq \widehat{I_T}$ of size exactly $r$ such that $Z$ contains at least one unmarked vertex from $\widehat{I_T}$.
Next, we replace the vertices of $B' \cap \widehat{I_T}$ by $Z$ in $B$ and modify $A$ accordingly.
For each such set $T \in {{S}\choose{d}}$ that is $r$-suitable for some $r > 0$, we repeat this process.
Observe that there are ${{n}\choose{d}}$ branches and there are at most $k^{\OO(d)}$ many such sets $T \in {{S}\choose{\leq d}}$ that are $r$-suitable for some $r > 0$ with respect to $(A', B')$.
Hence, the depth of the recursion is $k^{\OO(d)}$.
At every leaf, we output a $d$-cut $(A, B)$ of $G$.
It is not very hard to observe that at every branching node, the computation is polynomial-time, and one $d$-cut is outputted at every leaf node.
Given that the input graph has more than $k^{d+1}$ vertices, this algorithm runs with polynomial-delay.
For more details on the correctness, we refer to the appendix.
\end{proof}

\fi

\iflong
\section{Parameterization by the Neighborhood Diversity}
\label{sec:neighborhood-diversity-modular-width}
In this section, we consider the neighborhood diversity (${\ndd}$) of the input graph as the parameter.
We assume that the neighborhood diversity of the input graph $G$ is $k$, i.e. ${\ndd}(G) = k$.
Let $G$ be a graph, a neighborhood decomposition with minimum number of modules can be obtained in polynomial-time \cite{Lampis12}.
Let $\UU = \{X_1,\ldots,X_k\}$ be a minimum-size neighborhood decomposition of $G$.
A module $X$ is an {\em independent module} if $G[X]$ is an independent set and is a {\em clique module} if $G[X]$ is a clique.
Let $\GG$ be the {\em quotient graph} for $\UU$, i.e. $\{X_1,\ldots,X_k\}$ is the set of nodes of $\GG$.
Two distinct nodes $X_i$ and $X_j$ are {\em adjacent} in $\GG$ if and only if a vertex of the module $X_i$ and a vertex of the module $X_j$ are adjacent to each other in $G$.
Since $X_i$ and $X_j$ are modules, if a vertex of $X_i$ is adjacent to a vertex in $X_j$, then vertex of $X_i$ is adjacent to every vertex of $X_j$.

A module is {\em trivial} if it is an independent module and is not adjacent to any other module.
Formally, a module $X$ of $G$ is a trivial module if $X$ is an independent set of $G$ and $N_G(X) = \emptyset$.
An independent module $X$ is {\em nice} if it is not a trivial module and has at most $d$ neighbors outside and not a trivial module.
Formally, $X$ is a nice module of $G$ if $X$ is independent and $1 \leq |N_G(X)| \leq d$.

Observe that there can be at most one trivial module in $G$.
Moreover, a trivial module is an isolated node in the quotient graph $\GG$.
If a module $X$ is nice, then it is adjacent to at most $r \leq d$ modules $X_1,\ldots,X_r$ such that $1 \leq \sum_{i=1}^r |X_i| \leq d$.
If a nice module $X$ is adjacent to exactly $r$ modules $X_1,\ldots,X_r$, we say that $\{X_1,\ldots,X_r\}$ is a {\em small module-set}.
If $r = 1$, then for simplicity, we say that $X_1$ is a {\em small module}.
The set of vertices $\cup_{i=1}^r X_i$ spanned by this small module-set $\{X_1,\ldots,X_r\}$ is a {\em small subset}.

\subsection{Marking Scheme and Kernelization Algorithm}
\label{sec:nd-marking-scheme}
Let $(G, \UU, k)$ denote the input instance.
We perform the following procedure ${\sf MarkND}(G, \UU, k)$ that is exploited by all the kernelizations in this section.

\subparagraph{Procedure ${\sf MarkND}(G, \UU, k)$:}
\begin{itemize}
	\item Initialize $H^* := G$ and $S_1 := S_0 := \emptyset$.
	\item If $X$ is a trivial module of $H^*$, then mark an arbitrary vertex of $X$.
	\item Let $X$ be a nice module of $G$.
	If $X$ has at most $3d+2$ vertices, then mark all vertices of $X$.
	Otherwise, if $X$ has more than $3d+2$ vertices, then mark $3d + 2$ vertices of $X$.
	\item If a nice module $X$ has $3d+2$ marked vertices but $|X| > 3d+2$, then choose arbitrary $d$ marked vertices $u_1,\ldots,u_d \in X$.
		For every $i \in [d]$, attach a pendant clique $C_i$ with $2d + i$ vertices into $u_i$.
	Set $S_0 := S_0 \cup C_i$ and $S_1 := S_1 \cup \{u_1,\ldots,u_d\}$.
	We refer to Figure \ref{fig:nd-attach-clique} for an illustration.
	
	Replace the module $X$ by $X \setminus \{u_1,\ldots,u_d\}$.
		Additionally, for every $i \in [d]$, create modules $\{u_i\}$ and $C_i$.
	
	\item If a module $X$ of $G$ is neither trivial nor nice, then mark $\min\{2d+1, |X|\}$ vertices of $X$.
\end{itemize}

\begin{figure}[t]
\centering
	\includegraphics[scale=0.3]{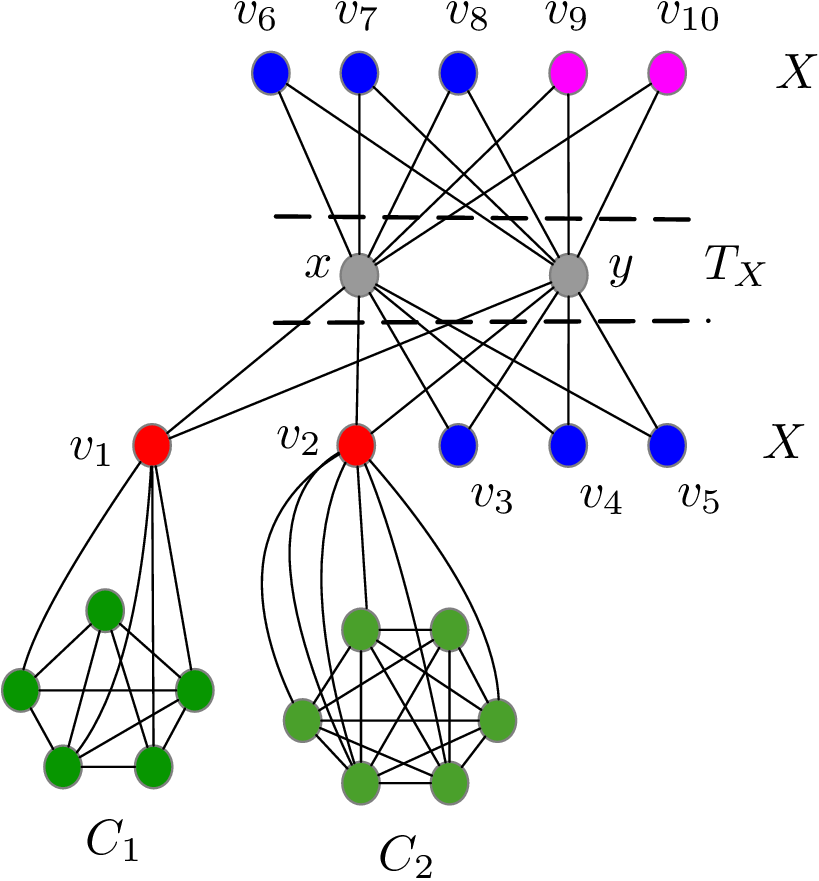}	
	\caption{An illustration of attaching pendant cliques when $d = 2$ and $X$ is a nice module. The input graph $G$ contains all but the pendant cliques $C_1$ and $C_2$. The red and blue vertices from $X$ are marked and $N_G(X) = T_X$. The pink vertex is unmarked.}
\label{fig:nd-attach-clique}
\end{figure}

Let $Z$ be the set of all marked vertices and observe that $S_1 \subseteq Z$.
Consider $H = H^*[Z \cup S_0]$ and $\widetilde{H} = H - S_0$.
We prove the following lemma that gives us an upper bound on the number of vertices and some properties of the output graph $H$ which is the kernel.
It is not very hard to observe that every vertex of $S_1$ is a singleton module that has a pendant clique in $S_0$ attached to it.
We call these modules that are  single vertices of $S_1$ as {\em $S_1$-singleton}. 

\begin{lemma}
\label{lemma:nd-graph-H-property}
Let $(G, \UU, k)$ and $H$ be the output graph.
Then, ${\nd}(H) \leq (2d+1)k$ and $H$ has at most $(3d^2 + 3d+2)k$ vertices.
\end{lemma}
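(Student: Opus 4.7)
The strategy is to bound both $|V(H)|$ and ${\nd}(H)$ by examining the contribution of each of the $k$ modules $X \in \UU$ separately, split according to which branch of ${\sf MarkND}(G,\UU,k)$ applies to $X$: trivial, nice with $|X|\le 3d+2$, nice with $|X|>3d+2$, and neither trivial nor nice.

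For the vertex count I would enumerate the four cases. A trivial module contributes at most $1$ vertex; a non-trivial, non-nice module contributes at most $2d+1$ marked vertices; a nice module with $|X|\le 3d+2$ contributes at most $3d+2$ vertices; and a nice module with $|X|>3d+2$ contributes $3d+2$ marked vertices plus the $d$ pendant cliques $C_1,\ldots,C_d$ of total size $\sum_{i=1}^d (2d+i) = 2d^2 + d(d+1)/2$. A short calculation shows $(3d+2) + 2d^2 + d(d+1)/2 \le 3d^2 + 3d + 2$ for all $d\ge 1$, so every module of $\UU$ contributes at most $3d^2+3d+2$ vertices to $H$; summing over the $k$ modules gives the claimed bound $|V(H)|\le (3d^2+3d+2)k$.

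For the neighborhood diversity, I would exhibit an explicit neighborhood decomposition of $H$ with at most $2d+1$ modules per original $X \in \UU$. In the trivial, small-nice, and non-nice cases, the marked subset of $X$ forms a single module of $H$: its vertices are still either an independent set or a clique, and still share a common external neighborhood inside $H$ (the marking scheme only deletes vertices from $X$, never changing the external neighborhood of those that remain). The critical case is a large nice module $X$, which is replaced in $H$ by three kinds of modules: (i) the independent module $X\setminus\{u_1,\ldots,u_d\}$, whose vertices share the external neighborhood $N_G(X)$ in $H$; (ii) the $d$ singleton modules $\{u_i\}$ from $S_1$, which cannot be merged because their external neighborhoods $N_G(X)\cup C_i$ are pairwise distinct since the $C_i$ are pairwise disjoint; and (iii) the $d$ pendant cliques $C_i$, each of which is a clique module with external neighborhood $\{u_i\}$. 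This yields $1+d+d = 2d+1$ modules, and summing gives ${\nd}(H)\le(2d+1)k$.

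The main obstacle is item (iii) of the argument above: one must verify that the surgical splitting of a large nice module into a ``core'' plus $d$ pendant-clique gadgets really produces a clean modular structure in $H$. In particular, one has to check that no vertex of $H$ outside $X$ becomes adjacent to some $u_i$ without being adjacent to the rest of $X\setminus\{u_1,\dots,u_d\}$ (which follows because $u_i$ inherited only its original external neighbors $N_G(X)$ together with the fresh clique $C_i$), and that the $u_i$'s cannot be amalgamated with one another or with $C_j$'s. Once case (iii) is handled, the two claimed bounds are immediate by summation.
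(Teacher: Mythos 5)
Your proof is correct and follows essentially the same approach as the paper: bound the contribution of each module $X \in \UU$ by case analysis on which branch of ${\sf MarkND}$ applies, observing that a large nice module is replaced by a reduced independent module, $d$ singleton modules in $S_1$, and $d$ pendant-clique modules in $S_0$, for $2d+1$ modules total. Your arithmetic (using the exact clique sizes $2d+i$ and verifying $\frac{5d^2+7d+4}{2} \le 3d^2+3d+2$ for $d \ge 1$) is slightly more careful than the paper's, which simply upper-bounds each pendant clique by $3d$ vertices, but both yield the stated bounds.
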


\begin{proof}
First we show the bound on the number of vertices.
Observe that for each of the $k$ modules, at most $3d+2$ vertices have been marked.
If a nice module $X$ has more than $3d+2$ vertices, then it is broken into a module $X'$ of $2d+2$ vertices, and $d$ modules that are in $S_1$.
Additionally, $d$ additional clique modules are constructed in $S_0$ each of which have size at most $3d$.
Therefore, for every module $X$, at most $2d+1$ modules are created.
As $G$ has $k$ modules, the number of vertices in $S_0$ is at most $3d^2 k$.
Therefore, the total number of vertices in $H$ is at most $(3d^2 + 3d + 2)k$.
Observe that every single vertex of $S_1$ is a module in $G$, and every connected component of $S_0$ is a module of $G$.
Since every module $X$ of $G$ produces at most $2d+1$ modules in $H$, the number of modules in $H$ is at most $(2d+1){\nd}(G) = (2d+)k$ and every such module is either a clique or an independent set.
Therefore, ${\nd}(H) \leq (2d+1)k$.
\end{proof}

\subsection{Equivalence Classes and Properties of Minimal $d$-Cuts}
\label{sec:nd-eqv-classes}

Now, we move on to establishing some relations between the $d$-cuts of $H$ and the $d$-cuts of $G$.
Similar as before, we prove the following two items.
\begin{itemize}
	\item The set of all $d$-cuts of $H$ and the set of all $d$-cuts of $\widetilde{H}$ are precisely the same.
	\item Since $\widetilde{H}$ is an induced subgraph of $G$, we provide some specific relations among the $d$-cuts of $G$ and the $d$-cuts of $\widetilde{H}$.
	These specific relations are crucial to establish that the information of every $d$-cut of $G$ is available in $\widetilde{H}$, and hence in $H$.
\end{itemize}

Note that the addition of the pendant cliques in $S_0$ uses similar strategy that we have in Section \ref{sec:vc-marking-scheme}.
By similar arguments as in Observation \ref{obs:vc-S0-empty-H-intersection}, we can prove that every edge of $H$ that is incident to the vertices of $S_0$ are disjoint from every $d$-cut of $H$.
Therefore, we can prove the following lemma using arguments similar to Lemma \ref{lemma:vc-d-cuts-without-S0}.
We state the lemma without proof as the argumentations are similar.

\begin{lemma}
\label{lemma:nd-d-cuts-without-S0}
Let $F$ be an edge set. Then, $F$ is a $d$-cut (minimal or maximal $d$-cut, respectively) of $H$ if and only if $F$ is a $d$-cut (minimal or maximal $d$-cut, respectively) of $\widetilde{H}$.
\end{lemma}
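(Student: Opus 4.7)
The plan is to mimic the argument used for Lemma \ref{lemma:vc-d-cuts-without-S0}. The key structural fact driving the proof is that every connected component of $H[S_0]$ is a pendant clique $C_i$ of size $2d+i\ge 2d+1$ attached to a unique vertex $u_i \in S_1 \subseteq Z$, and that $C_i \cup \{u_i\}$ is itself a clique on at least $2d+2$ vertices. Thus by item~(\ref{it:clique-mono}) of Lemma~\ref{lem:neighborhood-break-up}, $C_i \cup \{u_i\}$ is monochromatic in every $d$-cut of $H$. First I would state this as a short preliminary observation (the analogue of Observation~\ref{obs:vc-S0-empty-H-intersection}): no edge of $H$ incident to a vertex of $S_0$ belongs to any $d$-cut of $H$.

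With that observation in hand, the backward direction $(\Leftarrow)$ is essentially bookkeeping. Given a $d$-cut $F = E_{\widetilde H}(A,B)$ of $\widetilde H$, extend the bipartition to $H$ by placing each pendant clique $C_i \subseteq S_0$ on the same side as its attachment vertex $u_i$. No edge of $H$ incident to $S_0$ then crosses the cut, so the edge cut in $H$ equals $F$, and the $d$-bound is preserved vertex-by-vertex (vertices in $S_0$ gain zero cut-neighbors; vertices in $Z$ are unaffected since $C_i$ lies on the same side as $u_i$).

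For the forward direction $(\Rightarrow)$, take a $d$-cut $F = E_H(A',B')$ of $H$. By the preliminary observation, no edge of $F$ is incident to $S_0$, so $F \subseteq E(\widetilde H)$, and $(A' \setminus S_0, B' \setminus S_0)$ is a bipartition of $V(\widetilde H)$ whose edge cut equals $F$. Since $\widetilde H$ is an induced subgraph of $H$, the degree condition is inherited, so $F$ is a $d$-cut of $\widetilde H$.

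For the minimal and maximal variants, I would argue that the correspondence $F \leftrightarrow F$ on edge sets is inclusion-preserving in both directions. Since the $d$-cuts of $H$ and of $\widetilde H$ are exactly the same edge sets, the inclusion order on the set of $d$-cuts is identical in the two graphs, and hence the minimal (resp.\ maximal) $d$-cuts coincide. The only potential subtlety is whether the bijection truly identifies edge sets (rather than vertex bipartitions), but this is immediate from the definition since $\widetilde H$ is an induced subgraph of $H$ and $E(\widetilde H) = E(H) \setminus \{e \in E(H) : e \text{ incident to } S_0\}$, combined with the preliminary observation that the latter edges are never in any $d$-cut of $H$. There is no real obstacle here; the main care is in writing the preliminary observation cleanly and then invoking it at both points.
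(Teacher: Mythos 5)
Your proof is correct and follows exactly the approach the paper intends (the paper explicitly says the argument is "similar to Lemma~\ref{lemma:vc-d-cuts-without-S0}" and omits the details): establish the analogue of Observation~\ref{obs:vc-S0-empty-H-intersection} via item~(\ref{it:clique-mono}) of Lemma~\ref{lem:neighborhood-break-up}, use it to show no $d$-cut of $H$ touches $S_0$, and extend a $d$-cut of $\widetilde H$ to $H$ by placing each pendant clique with its attachment vertex. Your explicit remark that the edge-set bijection is inclusion-preserving, so minimal and maximal $d$-cuts coincide, is the right way to handle the parenthetical variants and is slightly more careful than what the paper spells out.
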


Now, we move on to establish the relations between the $d$-cuts of $\widetilde{H}$ and the $d$-cuts of $G$.
If $X$ is a nice module of $G$,
observe that $X \cap V(\widetilde{H})$ is a nice module in $\widetilde{H}$.
For every nice module $X$ of $G$ (respectively of $\widetilde{H}$), let $T_X = N_G(X)$.
By definition, if $X$ be a nice module of $G$, then $X$ is an independent module and $T_X = N_G(X)$ is a small subset.
Moreover, $T_X$ is the union of the vertex subsets of a collection of modules.
A small subset $T_X$ is {\em bad} if $|X| \geq 3d+2$.
A small subset $T_X$ is {\em good} if $|X| \leq 3d+1$.
Let $\YY$ denotes the collection of all small subsets that are bad and $I(\YY)$ denotes the vertices of all the nice modules $X$ such that $T_X \in \YY$.
Formally, $\YY = \{T_X : |X| \geq 3d+2$ and $X$ is nice$\}$ and $I(\YY) = \{u \in X \mid T_X \in \YY\}$.
Observe that if a small subset $T_X$ is good in $G$, then $T_X$ is good in $\widetilde{H}$.
Similarly, if a small subset $T_X$ is bad in $G$, then $T_X$ is bad in $\widetilde{H}$.
We often use `$T_X$ is a bad subset' to mean that $T_X$ is a small subset that is bad.
Moreover, if a module $X$ is nice in $G$, then $X \setminus V(\widetilde{H})$ remains a nice module in $\widetilde{H}$.

For every $T_X \in \YY$, we define $L(T_X)$ as the set of edges with one endpoint in the nice module $X$ and the other endpoint in $T_X$.
We define $L(\YY) = \cup_{T_X \in \YY} L(T_X)$.
Our next lemma proves that the marking scheme preserves all $d$-cuts of $G - I(\YY)$.

\begin{lemma}
\label{lemma:nd-G-minus-I-Y}
A set of edges $F$ is a $d$-cut of $G - I(\YY)$ if and only if $F$ is a $d$-cut of $\widetilde{H} - I(\YY)$.
\end{lemma}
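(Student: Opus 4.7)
The plan is to follow the template of Lemma \ref{lemma:vc-solutions-higher-degree}, its vertex-cover analogue. The core claim is that for every vertex $v \in V(G - I(\YY)) \setminus V(\widetilde H - I(\YY))$, belonging to some module $X$, the set $X \cup T_X$ is monochromatic in every $d$-cut of $G - I(\YY)$. Once this is established, the two directions of the equivalence follow by restricting or extending the bipartition while keeping the edge set $F$ unchanged.

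I first classify the unmarked survivors in $G - I(\YY)$ from the procedure ${\sf MarkND}(G, \UU, k)$: they lie in (a) the trivial module, (b) an independent non-trivial non-nice module $X$, which forces $|T_X| \geq d+1$ and $|X| \geq 2d+2$, or (c) a clique module $X$ with $|X| \geq 2d+2$. In each case $T_X \cap I(\YY) = \emptyset$: if $X$ were adjacent to a bad nice module $Y$, niceness would force $|X| \leq |T_Y| \leq d$, contradicting $|X| \geq 2d+2$. Hence the entire neighborhood of $v$ survives in $G - I(\YY)$. I then prove the monochromaticity of $X \cup T_X$ as follows. Case (a) is trivial since $T_X = \emptyset$. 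In case (b), the $2d+1$ marked vertices of $X$ lie outside $T_X$ and have neighborhood exactly $T_X$, so item (\ref{it:common-neighborhood-mono}) of Lemma \ref{lem:neighborhood-break-up} makes $T_X$ monochromatic, and then item (\ref{it:d-neighbor-mono}) together with $|T_X| \geq d+1$ forces every vertex of $X$ onto the same side. In case (c), item (\ref{it:clique-mono}) makes $X$ monochromatic, and applying item (\ref{it:d-neighbor-mono}) to each vertex of $T_X$, which has $|X| \geq d+1$ neighbors in $X$, yields that $X \cup T_X$ is monochromatic.

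With the monochromaticity claim in hand, the ($\Rightarrow$) direction is immediate: given $F = E_{G - I(\YY)}(A, B)$, no edge of $F$ is incident with an unmarked vertex, so restricting $(A, B)$ to $V(\widetilde H - I(\YY))$ preserves the entire edge cut $F$ and inherits the $d$-degree bound at each retained vertex. For ($\Leftarrow$), given a $d$-cut $(A', B')$ of $\widetilde H - I(\YY)$ with edge cut $F$, I extend by placing each unmarked $v$ of module $X$ on the side of $(A', B')$ that contains $T_X \cap V(\widetilde H)$; exactly the same items of Lemma \ref{lem:neighborhood-break-up} applied now inside $\widetilde H - I(\YY)$ show that this intersection is nonempty and monochromatic, and the resulting bipartition introduces no new crossing edge. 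The main subtlety, and the expected obstacle, is the clique case (c), where the color of $X$ must be propagated to $T_X$ via item (\ref{it:d-neighbor-mono}); the size bound $|X| \geq 2d+2 \geq d+1$ makes this step immediate and also rules out adjacency to bad nice modules, ensuring $T_X \subseteq V(G - I(\YY))$.
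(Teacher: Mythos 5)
Your proof is correct and takes essentially the same route as the paper's: both arguments reduce the equivalence to the observation that the neighborhood of each unmarked survivor is monochromatic via items (\ref{it:clique-mono}), (\ref{it:d-neighbor-mono}), (\ref{it:common-neighborhood-mono}) of Lemma~\ref{lem:neighborhood-break-up}, then restrict or extend the bipartition without altering the edge cut. You do two small things better than the paper. First, you state the underlying invariant explicitly (for every module $X$ with an unmarked survivor, $X\cup T_X$ is monochromatic), which the paper leaves implicit in a case-by-case argument. Second, you actually justify that $T_X\cap I(\YY)=\emptyset$ --- i.e., that the neighborhood of such a module $X$ is unaffected by deleting $I(\YY)$ --- via the niceness/size contradiction $|X|\le |T_Y|\le d < 2d+2\le|X|$; the paper only gestures at this (``$N_{\vec H}(X)$ does not contain a nice module''), and moreover its stated bound ``$|X|\ge 3d+3$'' for the independent non-nice case should really be $|X|\ge 2d+2$, which is all your argument needs. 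One minor loose end in your phrasing of the backward direction: for a clique module $X$ that is itself a connected component (so $T_X=\emptyset$), the rule ``place $v$ on the side of $T_X\cap V(\widetilde H)$'' is vacuous; one should say ``place $v$ on the side of $(X\cup T_X)\cap V(\widetilde H)$,'' which is what your case (c) analysis actually establishes and is how the paper phrases it. This is a trivial edge case and does not affect the substance of the argument.
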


\begin{proof}
For the simplicity of presentation, let $\vec{H} = \widetilde{H} - I(\YY)$ and $\vec{G} = G - I(\YY)$.

First we give the backward direction ($\Leftarrow$) of the proof.
Let $F$ be a $d$-cut of $\vec{H}$ with the corresponding vertex bipartition $(A', B')$.
We construct a $d$-cut $(A, B)$ of $\vec{G}$ as follows.
Initialize $A := A'$ and $B := B'$.
We consider every vertex $u$ that is in $\vec{G}$ but not in $\vec{H}$ one by one.
If $u \in X$ for some independent module $X$ of $\vec{G}$, then $N_{\vec{G}}(X)$ has at least $d+1$ vertices.
In particular, if $N_{\vec{G}}(X)$ does not contain a nice module $Y$ of $G$.
Therefore, $\widehat{T} = N_{\vec{G}}(X)$ has at least $d+1$ vertices.
Since $u \notin V(\vec{H})$, $X$ must be a module of $G$ with at least $3d+3$ vertices.
Due to Lemma \ref{lem:neighborhood-break-up}, $\widehat{T} \subseteq A'$ or $\widehat{T} \subseteq B'$.
If $\widehat{T} \subseteq A'$, then we set $A := A \cup \{u\}$.
If $\widehat{T} \subseteq B'$, then we set $B := B \cup \{u\}$.
Otherwise, $u \in X$ for a clique module $X$ of $\vec{G}$.
As $X$ is a clique module of $\vec{G}$, it must be that $|X| \geq 2d+2$.
Our marking scheme has marked $2d+1$ vertices from $X$.
Hence, $X \cap \vec{H}$ has $2d+1$ vertices.
Due to Lemma \ref{lem:neighborhood-break-up}, $X \subseteq A'$ or $X \subseteq B'$.
If $X \subseteq A'$, then we set $A := A \cup \{u\}$.
If $X \subseteq B'$, then we set $B := B \cup \{u\}$.
Observe that the above construction of $(A, B)$ ensures us that $E_{\vec{G}}(A, B) = E_{\vec{H}}(A', B') = F$.
Therefore, $F$ is a $d$-cut of $\vec{H}$.

Now, we move on to prove the forward direction ($\Rightarrow$).
Let $F = E_{\vec{G}}(A, B)$ be a $d$-cut of $\vec{G}$.
It is sufficient to prove that every edge $uv \in F$ is an edge of $\vec{H}$, then we are done.
Consider an edge $uv \in F$ such that $uv \notin E(\vec{H})$.
The first case occurs when $u, v \in X$ for a clique module $X$ of $\vec{G}$.
Then $u \notin V(\vec{H})$ or $v \notin V(\vec{H})$ (or both).
Our marking scheme has marked $2d+1$ vertices from $X$, implying that $X$ has more than $2d+1$ vertices in $\vec{G}$.
Since $X$ is a clique with at least $2d+1$ vertices, due to Lemma \ref{lem:neighborhood-break-up}, $X \subseteq A$ or $X \subseteq B$.
Then, $uv \notin F$ which leads to a contradiction.
The second case occurs when $u \in X$ for an independent module $X$ of $\vec{G}$ such that $u$ is not a vertex of $\vec{H}$.
Observe that $\widehat{T} = N_{\vec{G}}(X)$ has at least $d+1$ vertices.
Since $u \notin V(\vec{H})$, $|X| \geq 3d+3$.
Due to Lemma \ref{lem:neighborhood-break-up}, $\widehat{T} \subseteq A$ or $\widehat{T} \subseteq B$.
As $F$ is a $d$-cut and $|\widehat{T}| \geq d+1$, it must be that either $X \cup \widehat{T} \subseteq A$ or $X \cup \widehat{T} \subseteq B$.
In such a case, $uv \notin F$ which leads to a contradiction.
As the above two cases are mutually exhaustive, this completes the proof.
\end{proof}

Let $X$ be a nice module in $G$ (respectively in $\widetilde{H}$) such that $T_X = N_G(X)$ is a small subset that is bad.
For every $u \in X$, define $J(T_X, u)$ the set of edges with one endpoint $u$ and other endpoint in $T_X$.
More formally, $J(T_X, u) = \{uv \mid v \in T_X\}$.
A bad subset $T_X \in \YY$ is {\em broken} by an edge set $F$ if there exists $u \in X$ such that $J(T_X, u) \not\subset F$.
A bad subset $T_X \in \YY$ is {\em occupied} by an edge set $F$ if it is not broken by $F$.
Observe that if a bad subset $T_X \in \YY$ is occupied by $F$, then for every $u \in X$, either $J(T_X, u) \subseteq F$ or $J(T_X, u) \cap F = \emptyset$.
By similar arguments as Observation \ref{obs:broken-case-understanding}, we can prove the following observation.

\begin{observation}
\label{obs:nd-broken-subset-understanding}
If $F$ is a $d$-cut of $G$ (respectively of $\widetilde{H}$) and $T_X \in \YY$ be a bad subset, then $T_X$ is occupied by $F$.
\end{observation}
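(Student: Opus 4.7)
The plan is to mimic the proof of Observation~\ref{obs:broken-case-understanding} almost verbatim, with the only change being how we identify the monochromaticity of the ``outside'' set. Let $(A,B)$ be the vertex bipartition of the $d$-cut $F$ (either of $G$ or of $\widetilde{H}$), so that $F = E(A,B)$. Fix a bad subset $T_X \in \YY$, so $X$ is a nice module with $|X|\ge 3d+2$, and $T_X = N_G(X)$ satisfies $|T_X|\le d$. (When the ambient graph is $\widetilde{H}$, use $X\cap V(\widetilde{H})$, which still has $3d+2\ge 2d+1$ vertices by the marking scheme.)

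The first step is to observe that every vertex in $X$ (respectively in $X\cap V(\widetilde{H})$) is adjacent to every vertex of $T_X$, since $X$ is an independent module whose outside neighborhood is exactly $T_X$. Therefore more than $2d$ vertices in $V(G)\setminus T_X$ (respectively in $V(\widetilde{H})\setminus T_X$) have $T_X$ contained in their neighborhood. Applying item~(\ref{it:common-neighborhood-mono}) of Lemma~\ref{lem:neighborhood-break-up} then yields that $T_X$ is monochromatic, so without loss of generality $T_X \subseteq A$.

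Given this, the proof finishes by a straightforward case analysis for each $u\in X$: if $u\in A$, then all edges between $u$ and $T_X$ lie inside $A$ and hence $J(T_X,u)\cap F = \emptyset$; if instead $u\in B$, then every edge between $u$ and $T_X$ crosses the bipartition, so $J(T_X,u)\subseteq F$. In either case the alternative required by the definition of ``occupied'' holds, and we conclude that $T_X$ is occupied by $F$.

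The only mild subtlety — and the step most worth double-checking — is that the monochromaticity argument goes through uniformly in both settings (the original graph $G$ and the subgraph $\widetilde{H}$). This is ensured by the marking scheme: we keep at least $3d+2$ vertices of every nice module in $\widetilde{H}$, which is strictly more than $2d$, so item~(\ref{it:common-neighborhood-mono}) of Lemma~\ref{lem:neighborhood-break-up} can be invoked in $\widetilde{H}$ as well. Beyond this bookkeeping check there is no real obstacle.
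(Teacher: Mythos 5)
Your proof is correct and follows the same route the paper takes (it explicitly states that the argument mirrors Observation~\ref{obs:broken-case-understanding}): you identify that the nice module $X$ (or $X \cap V(\widetilde{H})$) provides more than $2d$ common neighbors of $T_X$, invoke item~(\ref{it:common-neighborhood-mono}) of Lemma~\ref{lem:neighborhood-break-up} to conclude $T_X$ is monochromatic, and then the case analysis on each $u\in X$ is immediate. The bookkeeping check at the end — that the marking scheme keeps at least $3d+2>2d$ vertices of every bad nice module in $\widetilde{H}$ — is exactly the point that makes the argument transfer, so nothing is missing.
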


With the previous definitions and the above observation in hand, we move on to define `equivalence' between two edge sets of $G$ and two edge sets of $\widetilde{H}$ as follows.

\begin{definition}
\label{defn:nd-eqv-d-cuts}
Two edge sets $F_1$ and $F_2$ are {\em equivalent} if the following properties are satisfied.
\begin{itemize}
	\item $F_1 \setminus L(\YY) = F_2 \setminus L(\YY)$, 
	\item for every $T_X \in \YY$, $|F_1 \cap L(T_X)| = |F_2 \cap L(T_X)|$, and
	\item for every $T_X \in \YY$, $T_X$ is occupied by $F_1$ if and only if $T_X$ is occupied by $F_2$.
\end{itemize}	
\end{definition}

Observe that the above definition is an equivalence relation of $d$-cuts.
This definition is not only applicable for two edge sets of $G$ and two edge sets of $\widetilde{H}$, but also applicable between an edge set of $G$ and an edge set of $\widetilde{H}$.
From now onward, for every edge set $F \subseteq E(G)$ (respectively, $F \subseteq E(\widetilde{H})$), we consider a partition $F = F_b \uplus F_g$ such that $F_b = F \cap (\cup_{T_X \in \YY} L(T_X))$ and $F_g = F \setminus L(\YY)$.
Our next lemma proves that an edge set equivalent to a give $d$-cut is also a $d$-cut.

\begin{lemma}
\label{lemma:nd-eqv-connection-H-and-G}
Let $F' \subseteq E(\widetilde{H})$ be a $d$-cut (minimal or maximal $d$-cut respectively) of $\widetilde{H}$.
Then, every $F \subseteq E(G)$ that is equivalent to $F'$ is a $d$-cut (minimal or maximal $d$-cut) of $G$.
\end{lemma}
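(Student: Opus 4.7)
The plan is to mirror the proof of Lemma~\ref{lemma:equivalence-connection-H-G} from the vertex-cover section, adapted to the module-based setting. First I would write $F = F_g \uplus F_b$ with $F_g = F \setminus L(\YY)$ and $F_b = F \cap L(\YY)$, and similarly for $F'$. The equivalence relation forces $F_g = F'_g$. Since every edge of $\widetilde H$ incident to $I(\YY)$ lies in $L(\YY)$ (each nice module $X$ is independent with neighborhood exactly $T_X$), the set $F'_g$ is a $d$-cut of $\widetilde H - I(\YY)$, and applying Lemma~\ref{lemma:nd-G-minus-I-Y} yields that $F_g$ is a $d$-cut of $G - I(\YY)$.

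Next I would construct a bipartition $(A,B)$ of $V(G)$ realizing $F$ as its edge cut. Start from the bipartition $(A',B')$ witnessing $F'$ on $V(\widetilde H)$ and extend module by module. For an extra vertex $u$ in a clique module $X$ with $|X| \geq 2d+2$, item~(\ref{it:clique-mono}) of Lemma~\ref{lem:neighborhood-break-up} shows that the $2d+1$ marked vertices of $X$ are monochromatic in $(A',B')$, so place $u$ on that side. For a non-trivial non-nice independent module $X$ with $|X| > 2d+1$, item~(\ref{it:common-neighborhood-mono}) of Lemma~\ref{lem:neighborhood-break-up} applied to $T = N_G(X)$ makes $T$ monochromatic in $(A',B')$, and since $|T| > d$, item~(\ref{it:d-neighbor-mono}) forces $X$ to lie on the same side; place $u$ there. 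Finally, for $u \in X$ in a bad nice module (so $T_X \in \YY$), Observation~\ref{obs:nd-broken-subset-understanding} gives that $T_X$ is occupied by $F'$, and a short case distinction on whether $F' \cap L(T_X)$ is empty shows $T_X$ is monochromatic in $(A',B')$; say $T_X \subseteq A'$. Place $u \in B$ exactly when $J(T_X,u) \subseteq F$ and in $A$ otherwise; this is well-defined because $T_X$ is also occupied by $F$.

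I would then verify $F = E_G(A,B)$ by a short case analysis on edges of $G$: edges with both endpoints in $V(\widetilde H) \setminus I(\YY)$ are cut exactly as in $(A',B')$, matching $F_g = F'_g$; edges with at least one endpoint in $V(G) \setminus V(\widetilde H)$ not lying in $L(\YY)$ miss $F_g \subseteq E(\widetilde H)$, and the module placements keep both endpoints on the same side; edges in $L(T_X)$ are in the cut precisely when $J(T_X,u) \subseteq F$, by construction. The $d$-cut property follows vertex by vertex: for $v \in T_X$, the equality $|F \cap L(T_X)| = |F' \cap L(T_X)|$ combined with occupancy forces $v$'s $L(T_X)$-degree in $F$ to match its degree in $F'$; for $v$ in a bad nice module, the only $F$-edges incident to $v$ are in $L(T_X)$ and number at most $|T_X| \leq d$; all remaining vertex degrees coincide with those in $F'$ through $F_g = F'_g$ and hence are bounded by $d$.

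For the maximal and minimal cases I would argue by contradiction: a strict extension $F \cup \{e\}$ (respectively a strict reduction $F \setminus \{e\}$) that is a $d$-cut of $G$ would, via the construction of the previous paragraph applied to the canonical representative in $\widetilde H$, yield a strict extension (respectively strict reduction) of $F'$ that is a $d$-cut of $\widetilde H$, contradicting the maximality (minimality) of $F'$; Lemma~\ref{lemma:nd-d-cuts-without-S0} permits free movement between $H$ and $\widetilde H$. The main obstacle I foresee is the bookkeeping around bad nice modules $X$: one must reconcile the side chosen for $T_X$ in $(A',B')$ with the requirement that exactly $|F \cap L(T_X)| / |T_X|$ vertices of $X$ are placed on the opposite side, so that the assembled $(A,B)$ produces $F$ as its edge cut with no spurious cut edges among extra vertices.
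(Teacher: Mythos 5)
Your overall plan matches the paper's: decompose $F = F_g \uplus F_b$ with $F_g = F\setminus L(\YY)$, use the equivalence to conclude $F_g = F'_g$, invoke Lemma~\ref{lemma:nd-G-minus-I-Y} on the good part, and then verify the $d$-cut properties. What you do differently is to build an explicit witnessing bipartition $(A,B)$; the paper's proof of Lemma~\ref{lemma:nd-eqv-connection-H-and-G} (which mirrors the three-case argument of Lemma~\ref{lemma:equivalence-connection-H-G}) instead verifies degree bounds vertex by vertex and that removing $F$ disconnects $G$. Your constructive route is arguably cleaner, since it is the bipartition that actually certifies that $F$ is an edge cut; the paper's disconnection argument leaves that step implicit.

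There is, however, one point you flag as a ``main obstacle'' but do not actually close, and it does need closing. You say to \emph{start} from $(A',B')$ on $V(\widetilde H)$ and ``extend module by module,'' yet for a bad nice module $X$ the equivalence only asserts $|F\cap L(T_X)| = |F'\cap L(T_X)|$, not $F\cap L(T_X) = F'\cap L(T_X)$. Thus the \emph{marked} vertices $u\in X\cap V(\widetilde H)$ that sit opposite $T_X$ in $(A',B')$ (those with $J(T_X,u)\subseteq F'$) can be entirely different from the ones designated by $F$. Simply extending $(A',B')$ therefore produces a cut that contains both the $F'$-stars and the new $F$-stars, overshooting $F$. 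The fix is to override $(A',B')$ on all of $X$ (marked or not): place $u\in X$ opposite $T_X$ iff $J(T_X,u)\subseteq F$. This is safe because every edge incident to $X$ lies in $L(T_X)$, so the $F_g$ part of the cut is untouched, and $T_X$ is monochromatic in $(A',B')$ by Lemma~\ref{lem:neighborhood-break-up}(\ref{it:common-neighborhood-mono}) applied to the $3d+2$ marked vertices of $X$ (the case distinction on whether $F'\cap L(T_X)$ is empty is not needed). Your phrase ``for $u\in X$'' rather than ``for an extra vertex $u\in X$'' hints at this, but the surrounding ``extend'' framing and the unresolved obstacle remark leave it ambiguous. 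Your sketch of the minimal/maximal cases is also quite loose (a witness $\hat F\subsetneq F$ need not be $F\setminus\{e\}$, and the ``canonical representative'' would have to be argued to be a strict subset/superset of $F'$), but the paper itself only gestures at these cases by analogy with the general $d$-cut argument, so this part is comparably terse.
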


\begin{proof}
Proof arguments have similarities with the arguments of Lemma \ref{lemma:equivalence-connection-H-G}.
We provide a short sketch here highlighting the differences.
Consider a $d$-cut $F'$ of $\widetilde{H}$ and the partition $F' = F_b' \uplus F_g'$ such that $F_b' = \cup_{T_X \in \YY} (F' \cap L(T_X))$ and $F_g' = F' \setminus L(\YY)$.
The first case occurs when $F_b' = \emptyset$ and $F_g' \neq \emptyset$.
If $F_b' = \emptyset$, then $F_g'$ is a $d$-cut of $\widetilde{H} - I(\YY)$.
Then, due to Lemma \ref{lemma:nd-G-minus-I-Y}, $F_g'$ is a $d$-cut of $G - I(\YY)$.
As $F_b' = \emptyset$, for every marked vertex $u$ from a nice module $X$ such that $T_X \in \YY$, $J(T_X, u) \cap F' = \emptyset$.
Since $F$ is equivalent to $F'$, it must be that for every $v \in X$, $J(T_X, v) \cap F = \emptyset$.
Hence, $F$ is a $d$-cut of $G$. 

The second case occurs when $F_g' = \emptyset$ and $F_b' \neq \emptyset$.
Since $F_b' = \cup_{T_X \in \YY} (F' \cap L(T_X))$, we focus on every $T_X \in \YY$ such that $F' \cap L(T_X) \neq \emptyset$.
Since $F$ is equivalent to $F'$, due to Observation \ref{obs:nd-broken-subset-understanding} it must be that $T_X$ is occupied by $F$ and $|F \cap L(T_X)| = |F' \cap L(T_X)|$.
Hence, for every $u$ in the nice module $X$, $J(T_X, u) \subseteq F$ or $J(T_X, u) \cap F = \emptyset$.
We focus on the number of vertices $v \in X \cap V(\widetilde{H})$ such that $J(T_X, v) \subseteq F'$.
If there are $r$ vertices in $v \in X \cap V(\widetilde{H})$ such that $J(T_X, v) \subseteq F'$, then there are $r$ vertices $w \in X$ such that $J(T_X, w) \subseteq F$.
By arguments similar to Lemma \ref{lemma:equivalence-connection-H-G}, we can prove that every vertex $u$ has at most $d$ edges of $F$ incident to $u$.
Since there is $w \in X$ for some nice module $X$ such that $J(T_X, w) \subseteq F$, the deletion of $F$ increases the number of components in $G$.
Therefore, $F$ is a $d$-cut of $G$.

Finally, the last case occurs when $F_g', F_b' \neq \emptyset$.
As $F$ is equivalent to $F'$, it must be that $F' \setminus L(\YY) = F \setminus L(\YY) = F_g'$.
Moreover, there is a nice module $X$ such that $T_X \in \YY$ and $w \in X$ satisfying $E(T_X, X) \cap  F \neq \emptyset$.
Therefore, there is $w \in X$ such that $J(T_X, w) \subseteq F$.
By arguments similar to Case (iii) of Lemma \ref{lemma:equivalence-connection-H-G} proof, we can argue that $F$ is a $d$-cut of $G$.
\end{proof}

\subparagraph*{Special Property of Minimal $d$-Cuts.}
Now, we illustrate a special property of minimal $d$-cut that we will need to design the solution-lifting algorithm.
Consider a nice module $X$ with more than $3d+3$ vertices such that $T_X \in \YY$.
Let $F = F_b \uplus F_g$ be a minimal $d$-cut of $G$ (respectively, a minimal $d$-cut of $\widetilde{H}$) such that $F_b = F \cap (\cup_{T_X \in \YY} L(T_X))$.
As $F$ is a $d$-cut of $G$ (respectively a $d$-cut of $\widetilde{H}$), due to Observation \ref{obs:nd-broken-subset-understanding}, $T_X$ is occupied by $F$.
We can use this observation to prove the following observation that is crucial for every minimal $d$-cut of $G$ (respectively of $\widetilde{H}$).

\begin{observation}
\label{obs:nd-special-minimal-d-cut-property}
Let $F$ be a minimal $d$-cut of a connected graph $G$ (respectively, a connected graph $\widetilde{H}$).
Consider the partition $F = F_b \uplus F_g$ such that $F_b = F \cap (\cup_{T_X \in \YY} L(T_X))$.
If $F_b \neq \emptyset$, then $F_g = \emptyset$.
\end{observation}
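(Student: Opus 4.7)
The plan is to argue by contradiction: assume $F$ is a minimal $d$-cut in the connected graph with both $F_b\neq\emptyset$ and $F_g\neq\emptyset$, and construct a $d$-cut whose edge set equals $F_b\subsetneq F$, contradicting minimality. Let $(A,B)$ be the partition with $F=E_G(A,B)$. Since each nice module $X$ with $T_X\in\YY$ satisfies $|X|\ge 3d+2>2d$, item~(\ref{it:common-neighborhood-mono}) of Lemma~\ref{lem:neighborhood-break-up} forces $T_X$ to be monochromatic in $(A,B)$, and by Observation~\ref{obs:nd-broken-subset-understanding} each such $T_X$ is occupied by $F$. Thus for every $v\in X$ either $J(T_X,v)\subseteq F$ or $J(T_X,v)\cap F=\emptyset$.

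The candidate cut will be $(A'',B'')$ defined by
\[
B'':=\bigcup_{X:\,T_X\in\YY}\bigl\{v\in X : J(T_X,v)\subseteq F\bigr\},\qquad A'':=V(G)\setminus B''.
\]
First I need to show the partition is nontrivial. Because $F_b\neq\emptyset$, some bad subset $T_X$ contributes an edge to $F_b$, so by occupancy some $v\in X$ lies in $B''$. For $A''\neq\emptyset$, I will observe that any module $U\subseteq T_X$ cannot itself be nice: such a $U$ would satisfy $X\subseteq N_G(U)$, giving $|N_G(U)|\ge|X|\ge 3d+2>d$ and contradicting the definition of a nice module. Since $B''$ is contained in the union of nice modules whose neighborhood lies in $\YY$, this yields $T_X\cap B''=\emptyset$ and hence $T_X\subseteq A''$.

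The main verification is that $(A'',B'')$ is a $d$-cut with $E_G(A'',B'')=F_b$. Every $v\in B''$ lies in some nice module $X$, so $N_G(v)=T_X\subseteq A''$; thus $v$ has exactly $|T_X|\le d$ neighbors in $A''$ and all edges at $v$ belong to $F_b$. For $u\in A''$, the only candidates for neighbors of $u$ in $B''$ are vertices in $X\cap B''$ for those nice $X$ with $u\in T_X$ and $T_X\in\YY$; each such neighbor $v$ contributes an edge $uv\in J(T_X,v)\subseteq F_b\subseteq F$. Since $F$ is a $d$-cut, $u$ is incident to at most $d$ edges of $F$, hence has at most $d$ neighbors in $B''$. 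Therefore $(A'',B'')$ is a $d$-cut whose edge cut is precisely $F_b$, and $F_g\neq\emptyset$ gives $F_b\subsetneq F$, contradicting the minimality of $F$.

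The only nonroutine step is the structural observation that a module sitting entirely inside $T_X$ cannot itself be nice once $|X|\ge 3d+2$; the remaining verification is a clean degree-accounting that charges each $B''$-neighbor of a vertex $u\in A''$ to a distinct edge of $F$ already incident to $u$. The same argument transfers verbatim to $\widetilde{H}$, since $\widetilde{H}$ is an induced subgraph in which every bad small subset $T_X\in\YY$ remains bad and the relevant nice modules retain a common neighborhood of size more than $2d$.
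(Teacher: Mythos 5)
Your proof is correct but takes a genuinely different route from the paper's. The paper also argues by contradiction from $F_b,F_g\neq\emptyset$, but instead of a global construction it performs a single local move: it picks one vertex $u\in X$ with $J(T_X,u)\subseteq F$ (which exists by occupancy), notes that $T_X\subseteq A$ by monochromaticity and $N_G(u)=T_X$, so that moving $u$ from $B$ to $A$ deletes exactly $J(T_X,u)$ from the edge cut and adds nothing, and observes that since $F_g$ survives this move, the resulting partition is still a nonempty $d$-cut whose edge set is strictly contained in $F$. Your proof instead relocates \emph{all} occupied vertices of the bad nice modules at once and verifies that the resulting edge cut is exactly $F_b$, i.e., that $F_b$ is itself a $d$-cut --- a stronger conclusion. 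To make that global construction work you need the extra structural fact, absent from the paper's argument, that no module contained in $T_X$ can be nice once $|X|\geq 3d+2$, so that $T_X\cap B''=\emptyset$. Both arguments reach the same contradiction; the paper's one-vertex move is lighter, while yours yields the sharper characterization that the bad part of any $d$-cut is again a $d$-cut. (One tiny omission: you show $E_G(A'',B'')\subseteq F_b$ and assert equality; the reverse inclusion follows from occupancy, and in any case the proper inclusion $E_G(A'',B'')\subseteq F_b\subsetneq F$ together with connectivity of $G$ already gives the contradiction.)
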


 \begin{proof}
 We give the proof for the graph $G$. 
 Let $F = F_b \uplus F_g$ be a minimal $d$-cut of a connected graph.
 Then, $F \neq \emptyset$.
 Assume for the sake of contradiction that $F_g, F_b \neq \emptyset$.
 Furthermore, we assume that the vertex bipartition for $F$ is $(A, B)$.
 Since $F_b \neq \emptyset$, there is $T_X \in \YY$ such that $F \cap L(T_X) \neq \emptyset$.
 As $T_X \in \YY$ is a bad subset, due to Observation \ref{obs:broken-case-understanding}, for every $u \in X$, either $J(T_X, u) \subseteq F$ or $J(T_X, u) \cap F = \emptyset$.
 Since $F \cap L(T_X) \neq \emptyset$, there exists $u \in X$ such that $J(T_X, u) \subseteq F$ and $J(T_X, u) \neq \emptyset$.
 Then, $T_X \subseteq A$ and $u \in B$.
 But, $F_g \neq \emptyset$.
 Therefore, moving $u$ from $B$ to $A$ preserves the property that the resulting partition $(A \cup \{u\}, B \setminus \{u\})$ corresponds to a nonempty $d$-cut that contains $F_g (\neq \emptyset)$ and the edge set of which is a proper subset of $F$.
 This contradicts the minimality of $F$.
 Therefore, if $F_g \neq \emptyset$, then $F_b = \emptyset$.
 \end{proof}

\subsection{Defining the Notion of Suitable to Avoid Duplicate Enumeration}
\label{sec:nd-duplicate-challenge}

It is possible that there are two distinct $d$-cuts $F_1$ and $F_2$ of $H$ (hence, $d$-cuts of $\widetilde{H}$) such that there is a $d$-cut $F$ of $G$ that is equivalent to both $F_1$ and $F_2$.
We need to ensure in our solution-lifting algorithm that $F$ should be outputted only when one of $F_1$ or $F_2$ is given but not both.
To circumvent this challenge, we define the notion of suitable bad subsets.
Before that, we illustrate a small difference in the modular decompositions of $H$ and $\widetilde{H}$.
Recall from {\sf MarkND}($G, \UU, k$) that if $X$ is a nice module of $G$ with more than $3d+2$ vertices, then $X \cap V(\widetilde{H})$ has been broken into a module $\widetilde{X}$ of $2d+2$ vertices, and each of the $d$ other marked vertices from $X$ have become singleton modules that are in $S_1$.
We use $S_1(X)$ to denote these $d$ singleton modules.
Observe that for every singleton module $z$ in $S_1(X)$, there is a pendant clique in $S_0$ attached to $z$ and $N_{\widetilde{H}}(z) = T_X$.

\begin{definition}
\label{defn:nd-suitable-bad-subsets}
Let $Y$ be a nice module in $G$ such that $T_Y \in \YY$ is a bad subset and $F'$ is a $d$-cut of $H$.
Then, $T_Y$ is said to be {\em $r$-suitable} for some $r > 0$ with respect to $F'$ if
\begin{itemize}
	\item $F' \cap L(T_Y) \neq \emptyset$,
	\item there is a set of $r$ singleton modules $v_1,\ldots,v_r \in S_1(Y)$ in $H$ such that for every $i \in [r]$, $J(T_Y, v_i) \subseteq F'$, and the pendant clique of $S_0$ attached to $v_i$ has exactly $2d+i$ vertices, and
	\item for every $w \in Y \setminus S_1(Y)$, $J(T_Y, w) \cap F' = \emptyset$.
\end{itemize}
We refer to Figure \ref{fig:nd-suitable-subsets} for an illustration.
\end{definition}

\begin{figure}[t]
\centering
	\includegraphics[scale=0.3]{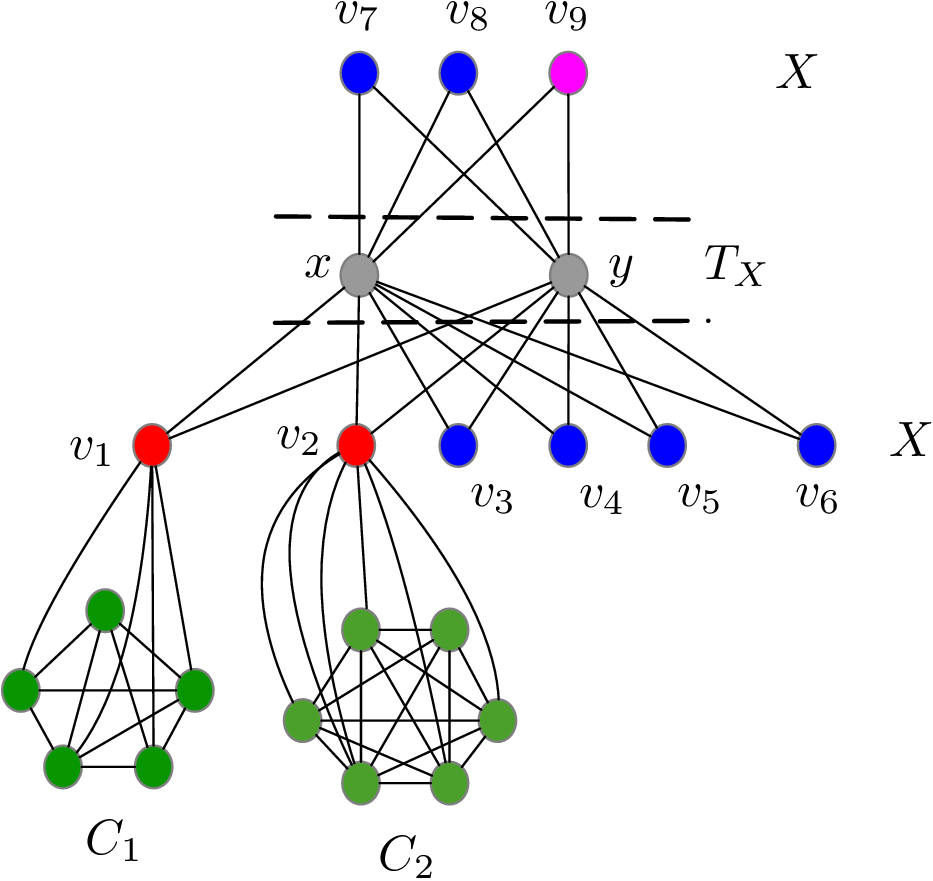}
	\caption{An illustration of suitable subsets for $d=2$.
	The graph $G$ contains all the vertices except the two pendant cliques $C_1, C_2$. The graph $H$ contains all but the pink vertices $u_7$ and $u_8$. 
	The set $T_X$ is 1-suitable with respect to the $d$-cut $F_1 = E_H(A_1, B_1)$ of $H$ such that $B_1 = C_1 \cup \{u_1\}$ and $A_1 = V(H) \setminus B_1$.
	In addition, $T$ is 2-suitable with respect to the $d$-cut $F_2 = E_H(A_2, B_2)$ of $H$ such that $B_2 = C_1 \cup C_2 \cup \{u_1, u_2\}$ and $A_2 = V(H) \setminus B_2$.
	Finally, $T$ is unsuitable with respect to the $d$-cut $F_3 = E_H(A_3, B_3)$ such that $B_3 = C_2 \cup \{u_2, u_3\}$ and $A_3 = V(H) \setminus B_3$.
	We also provide an illustration of Lemma \ref{lemma:nd-equivalence-existence-d-cuts}.
	Consider the $d$-cut $F = E_G(A, B)$ of $G$ such that $B = \{u_2, u_7\}$ and $A = V(G) \setminus B$.
	Then, $F_2$ is a $d$-cut of $G$  equivalent to $F$ and $T$ is 2-suitable with respect to $F_2$.}
\label{fig:nd-suitable-subsets}
\end{figure}

If a bad subset $T_Y \in \YY$ is not $r$-suitable for any $r > 0$ with respect to the $d$-cut $F'$, then $T_Y$ is {\em unsuitable} with respect to $F'$.
Using this notion of $r$-suitable bad subsets, next lemma proves that for every $d$-cut of $G$, there is a $d$-cut in $H$ satisfying some special properties.

\begin{lemma}
\label{lemma:nd-equivalence-existence-d-cuts}
Let $F$ be an edge set of $G$.
Then, the following statements hold true.
\begin{enumerate}[(i)]
	\item If $F$ is a $d$-cut (minimal or maximal $d$-cut, respectively) of $G$, and $F \subseteq E(\widetilde{H})$, then $F$ is a $d$-cut (minimal or maximal $d$-cut, respectively) of $H$.
	\item If $F$ is a $d$-cut (maximal $d$-cut, respectively) of $G$, and $F \not\subset E(\widetilde{H})$, then $H$ has a unique $d$-cut (maximal $d$-cut, respectively) $F'$ such that
	\begin{itemize}
		\item $F'$ is equivalent to $F$, and
		\item for every nice module $X$ of $G$, if $T_X \in \YY$ and $F \cap L(T_X) \not\subset \widetilde{H}$, then $T_X$ is $r$-suitable for some $r > 0$ with respect to $F'$.
	\end{itemize} 
	\item If $F$ is a minimal $d$-cut of $G$ containing an edge not in $\widetilde{H}$, then there is a unique nice module $X$ of $G$ such that $T_X \in \YY$ and $H$ has a unique minimal $d$-cut $F'$ that is equivalent to $F$ and $T_X$ is 1-suitable with respect to $F'$.
\end{enumerate}
\end{lemma}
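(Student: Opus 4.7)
The plan is to mirror the structure of Lemma~\ref{lemma:vc-equivalence-of-d-cuts}, adapted to the neighborhood-diversity setting, and to use Observation~\ref{obs:nd-special-minimal-d-cut-property} to dispatch the new minimal case (iii).

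For part (i) the argument is immediate: since $\widetilde{H}$ is an induced subgraph of $G$ and $F \subseteq E(\widetilde{H})$, the set $F$ remains a $d$-cut (retaining minimality or maximality) of $\widetilde{H}$, and Lemma~\ref{lemma:nd-d-cuts-without-S0} transfers this to $H$.

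For part (ii) I would construct $F'$ explicitly. Starting from $F' := F \setminus L(\YY)$, which is a $d$-cut of $\widetilde{H} - I(\YY)$ by Lemma~\ref{lemma:nd-G-minus-I-Y}, I would process each $T_X \in \YY$ in turn. Observation~\ref{obs:nd-broken-subset-understanding} guarantees that every bad subset is occupied by $F$, so writing $r_X$ for the number of vertices $u \in X$ with $J(T_X, u) \subseteq F$, we have $|F \cap L(T_X)| = r_X \cdot |T_X|$ and the $d$-cut property forces $r_X \leq d$ (each $t \in T_X$ would otherwise have more than $d$ crossing neighbors). If $F \cap L(T_X) \subseteq E(\widetilde{H})$, I would copy $F \cap L(T_X)$ verbatim into $F'$; otherwise I would replace it by $J(T_X, v_1) \cup \ldots \cup J(T_X, v_{r_X})$, where $v_1, \ldots, v_{r_X} \in S_1(X)$ are the singleton modules whose attached pendant cliques have the specific sizes $2d+1, \ldots, 2d+r_X$. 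Then I would verify, one property at a time, that $F'$ is equivalent to $F$ in the sense of Definition~\ref{defn:nd-eqv-d-cuts} (same component outside $L(\YY)$, matching cardinalities on each $L(T_X)$, occupancy preserved on each $T_X$); that $T_X$ is $r_X$-suitable with respect to $F'$ whenever $F \cap L(T_X) \not\subseteq E(\widetilde{H})$ (directly from Definition~\ref{defn:nd-suitable-bad-subsets}); and that $F'$ is a $d$-cut of $H$ (by extending the associated cut of $\widetilde{H}$ to the pendant cliques in $S_0$ using item~\ref{it:clique-mono} of Lemma~\ref{lem:neighborhood-break-up}, then invoking Lemma~\ref{lemma:nd-d-cuts-without-S0}). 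Uniqueness of $F'$ follows because the strictly increasing sequence of pendant-clique sizes $2d+1, \ldots, 2d+r_X$ pins down a single subset of $S_1(X)$.

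For part (iii) I would first invoke Observation~\ref{obs:nd-special-minimal-d-cut-property}: since $F$ has an edge not in $\widetilde{H}$, this edge lies in $L(\YY)$, so $F_b \neq \emptyset$, which forces $F_g = \emptyset$. Hence $F \subseteq L(\YY)$ and is fully described by the pairs $(T_X, u)$ with $J(T_X, u) \subseteq F$. The key step is to show there is exactly one such pair. Assume towards contradiction that two distinct pairs $(T_X, u_1)$ and $(T_Y, u_2)$ exist, and let $(A, B)$ be the partition realizing $F$ with $u_1 \in B$ and $T_X \subseteq A$. Moving $u_1$ to $A$ yields the partition $(A \cup \{u_1\}, B \setminus \{u_1\})$; since $X$ is an independent module with $N_G(u_1) = T_X$, no new cut edges appear, so the new edge cut is $F \setminus J(T_X, u_1)$. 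This is strictly smaller yet still nonempty (it contains $J(T_Y, u_2)$, which is disjoint from $J(T_X, u_1)$ because $u_1 \neq u_2$), and both sides remain nonempty (a case split shows $B \setminus \{u_1\}$ still contains either $u_2$ or $T_Y$). This yields a strictly smaller $d$-cut of $G$, contradicting minimality. Hence $F = J(T_X, u)$ for a unique module $X$ and unique vertex $u \in X$, and $u$ must be unmarked since $F$ has an edge outside $\widetilde{H}$. I would then set $F' := J(T_X, v)$ where $v$ is the unique singleton of $S_1(X)$ whose attached pendant clique $C$ has size $2d+1$, verify that the partition $(\{v\} \cup C,\, V(H) \setminus (\{v\} \cup C))$ realizes edge cut $F'$ and is a minimal $d$-cut of $H$ equivalent to $F$, and observe that $T_X$ is $1$-suitable with respect to $F'$ by construction; uniqueness of $F'$ again follows from the rigidity of the pendant-clique size $2d+1$.

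The main obstacle I anticipate is the minimality argument in part (iii): the moving argument has to handle several subcases (whether $u_2 \in A$ or $u_2 \in B$, and whether $X=Y$ with $u_1 \neq u_2$) and each time guarantee that the new partition is nontrivial and still satisfies the $d$-cut degree bound. The remainder of the proof is a bookkeeping exercise closely parallel to Lemma~\ref{lemma:vc-equivalence-of-d-cuts}.
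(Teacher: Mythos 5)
Your proof is correct and follows essentially the same construction as the paper's: part~(i) via the induced-subgraph observation and Lemma~\ref{lemma:nd-d-cuts-without-S0}, part~(ii) via the per-module replacement by the canonical $r_X$ vertices of $S_1(X)$ (your bookkeeping builds $F'$ up from $F_g$ whereas the paper initializes $F' := F$ and overwrites each $F'\cap L(T_X)$, but these are the same construction), and part~(iii) via $F_g=\emptyset$ from Observation~\ref{obs:nd-special-minimal-d-cut-property} followed by the canonical $2d{+}1$-clique vertex. The only substantive divergence is that you spell out the ``move $u_1$ across'' contradiction argument to justify the uniqueness of the pair $(T_X,u)$ in part~(iii), whereas the paper asserts this directly ``due to minimality''; your expanded argument is sound (the new edge cut $F\setminus J(T_X,u_1)$ is a strict, nonempty subset since $u_1$'s neighborhood $T_X$ is entirely on the other side, so it remains a $d$-cut and contradicts minimality) and is a useful clarification.
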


\begin{proof}
Let $F \subseteq E(G)$.
We prove the items in the given order.
\begin{enumerate}[(i)]
	\item Suppose that $F$ is a $d$-cut of $G$ and $F \subseteq E(\widetilde{H})$.
	Since $\widetilde{H}$ is a subgraph of $G$, $F$ is a $d$-cut of $\widetilde{H}$.
	Due to Lemma \ref{lemma:nd-d-cuts-without-S0}, $F$ is a $d$-cut of $H$.
	The proof arguments for minimal and maximal $d$-cuts are exactly the same.
	
	\item We give the proof for $d$-cuts.
	The arguments for maximal $d$-cuts are exactly the same.
	Let $F$ be a $d$-cut of $G$ containing an edge not in $\widetilde{H}$.
	Consider the partition $F = F_b \uplus F_g$ such that $F_g = F \setminus L(\YY)$.
	As $F$ contains an edge not in $\widetilde{H}$, it must be that $F_b \neq \emptyset$.
	We construct a $d$-cut $F'$ of $H$ satisfying the desired properties.
	
	Initialize $F' := F$.
	In particular, there is a nice module $X$ of $G$ such that $T_X \in \YY$ and $F \cap L(T_X)$ contains an edge not in $\widetilde{H}$.
	Then, there are $r > 0$ vertices $w \in X$ such that $J(T_X, w) \subseteq F$.
	Exactly $3d+2$ vertices from $X$ are marked and $S_1(X)$ contains $d$ vertices $v_1,\ldots,v_d$ such that there is a pendant clique in $S_0$ with $2d+i$ vertices attached to $v_i$.
	We choose $v_1,\ldots,v_r$ from $S_1(X)$ and set $F' := (F' \setminus L(T_X)) \cup (\bigcup\limits_{i \in [r]} J(T_X, v_i)$.
	
	We repeat this procedure for every nice module $X$ of $G$ such that $T_X \in \YY$ and $F \cap L(T_X)$ contains an edge not in $\widetilde{H}$.
	As two distinct nice modules are pairwise disjoint, the replacement procedure involves edge sets that are pairwise disjoint from each other. 
	This replacement procedure keeps $F_g$ unchanged and ensures us that for every nice module $X$ of $G$, if $T_X \in \YY$ and $F \cap L(T_X)$ has an edge not in $\widetilde{H}$, then $|F' \cap L(T_X)| = |F \cap L(T_X)|$.
	Therefore, $F'$ is equivalent to $F$.
	It is not very hard to observe that for every vertex $u \in V(\widetilde{H})$, at most $d$ edges of $F'$ are incident to $u$.
	As $F_b' \neq \emptyset$ and there is $v \in X \cap V(\widetilde{H})$ such that $J(T_X, v) \subseteq F'$, the deletion of $F'$ increases the number of connected components of $\widetilde{H}$.
	Therefore, $F'$ is a $d$-cut of $\widetilde{H}$.
	
	\item Let $F$ be a minimal $d$-cut of $G$ containing an edge not in $\widetilde{H}$.
	Consider the partition $F = F_b \uplus F_g$ such that $F_g = F \setminus L(\YY)$.
	As $F$ contains an edge not in $\widetilde{H}$, it must be that $F_b \neq \emptyset$.
	Then, due to Observation \ref{obs:nd-broken-subset-understanding}, $F_g = \emptyset$.
	In particular, there is a nice module $X$ of $G$ and an unmarked vertex $u \in X$ (that is not in $\widetilde{H}$) such that $J(T_X, u) \subseteq F$.
	Due to the minimality of $F$, $J(T_X, u) = F$ and $X$ is unique.
	Then, exactly $3d+2$ vertices from $X$ are marked and there is a marked vertex $v \in X \cap V(\widetilde{H})$ that has a pendant clique in $S_0$ with exactly $2d+1$ vertices attached to $v$.
	We consider the edge set $F' = J(T_X, v)$.
	Note that $F'$ is a minimal $d$-cut of $\widetilde{H}$ and is equivalent to $F$.
	Due to Lemma \ref{lemma:nd-d-cuts-without-S0}, $F'$ is a minimal $d$-cut of $H$.
	As the pendant clique of $S_0$ attached to $v$ has $2d+1$ vertices, $T_X$ is 1-suitable with respect to $F'$. 
\end{enumerate}
This completes the proof of the lemma.
\end{proof}

\subsection{Designing the Enumeration Kernelizations}
\label{sec:nd-enum-kernels}

Now, we are ready to use the notion of suitable as defined in the previous section, and describe our solution-lifting algorithm that will be central to prove our result.

\begin{lemma}
\label{lemma:nd-miniml-d-cut-enumeration}
Let $\FF(H)$ be the collection of all minimal $d$-cuts of $H$ and $\FF(G)$ be the collection of all minimal $d$-cuts of $G$.
Then, given $F' \in \CC(F)$ there is an algorithm that enumerates a collection $\CC(F')$ of minimal $d$-cuts of $G$ in polynomial-time such that
\begin{itemize}
	\item every $F \in \CC(F')$ is equivalent to $F'$, and
	\item if $F \in \CC(F')$ such that $F \neq F'$, then $F'$ contains an edge not in $H$.
\end{itemize} 
Furthermore, $\{\CC(F') \mid F' \in \FF(H)\}$ is a partition of $\FF(G)$.
\end{lemma}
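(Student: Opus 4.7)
The plan is to drive a polynomial-time algorithm by the classification in Lemma~\ref{lemma:nd-equivalence-existence-d-cuts}: every minimal $d$-cut of $G$ either already belongs to $\FF(H)$ via item~(i), or equals $J(T_X, u)$ for an unmarked vertex $u$ in a unique nice module $X$ with $T_X \in \YY$, in which case item~(iii) pairs it with a unique $F' \in \FF(H)$ for which $T_X$ is $1$-suitable. The algorithm inspects $F'$, decides which of these two buckets its equivalence class in $\FF(G)$ belongs to, and enumerates the bucket.

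On input $F' \in \FF(H)$, I would first scan the nice modules $X$ of $G$ with $T_X \in \YY$ (at most $k$ of them) and check in polynomial time whether $T_X$ is $1$-suitable with respect to $F'$; this amounts to verifying that $F' \cap L(T_X) = J(T_X, v)$ for a unique $v \in S_1(X)$ whose pendant clique in $S_0$ has exactly $2d+1$ vertices, and that $J(T_X, w) \cap F' = \emptyset$ for the remaining marked $w \in X$. If no such $X$ exists, set $\CC(F') := \{F'\}$. Otherwise, minimality of $F'$ together with Observation~\ref{obs:nd-special-minimal-d-cut-property} forces $F' = J(T_X, v)$ and uniqueness of $X$, and I set $\CC(F') := \{F'\} \cup \{J(T_X, u) : u \in X \setminus V(H)\}$, which is computable in polynomial time since $X$ is an explicit module of $G$.

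For correctness I would argue that every element of $\CC(F')$ is a minimal $d$-cut of $G$ equivalent to $F'$. For $F'$ itself, $F' \subseteq E(\widetilde{H})$ and Lemma~\ref{lemma:nd-equivalence-existence-d-cuts}(i) yields that $F'$ is also a minimal $d$-cut of $G$. For each $F = J(T_X, u)$ with $u \in X \setminus V(H)$, both $F$ and $F'$ have empty $F_g$-part, satisfy $|F \cap L(T_X)| = |T_X| = |F' \cap L(T_X)|$, and both occupy $T_X$, so $F$ and $F'$ are equivalent by Definition~\ref{defn:nd-eqv-d-cuts}; Lemma~\ref{lemma:nd-eqv-connection-H-and-G} (combined with Lemma~\ref{lemma:nd-d-cuts-without-S0}) then promotes $F$ to a minimal $d$-cut of $G$, and every edge of $F$ is incident with $u \notin V(H)$, giving the second bullet.

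The main obstacle is arguing that $\{\CC(F') : F' \in \FF(H)\}$ actually partitions $\FF(G)$, since multiple minimal $d$-cuts of $H$ can be mutually equivalent: for a fixed nice module $X$ with $T_X \in \YY$, the edge sets $J(T_X, w)$ for marked $w \in X$ are all minimal $d$-cuts of $H$ and all equivalent to the canonical $J(T_X, v)$ with pendant clique $2d+1$. Coverage follows from Lemma~\ref{lemma:nd-equivalence-existence-d-cuts}: any $F \in \FF(G)$ with $F \subseteq E(\widetilde{H})$ is enumerated by $\CC(F)$ itself (non-$1$-suitable branch), and any $F$ with an edge outside $E(\widetilde{H})$ is enumerated exactly once inside $\CC(F')$ for the $F'$ supplied by item~(iii). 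Disjointness holds because the $1$-suitable branch is the only one that emits cuts other than $F'$, and those extra outputs are of the form $J(T_X, u)$ with $u \notin V(H)$ and hence lie outside $\FF(H)$, while the uniqueness clause in item~(iii) guarantees that each such $F$ has only one $1$-suitable preimage in $\FF(H)$.
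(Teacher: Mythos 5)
Your proposal is correct and follows essentially the same approach as the paper: both reduce to the case $F_b'\neq\emptyset$ via Observation~\ref{obs:nd-special-minimal-d-cut-property}, both use minimality to force $F' = J(T_X,v)$ for a single $v$, both test $1$-suitability to decide whether to emit the unmarked-vertex cuts $J(T_X,u)$, and both establish the partition property from items~(i) and~(iii) of Lemma~\ref{lemma:nd-equivalence-existence-d-cuts}. The only cosmetic difference is that the paper first branches on whether $F_g'$ or $F_b'$ is empty and checks suitability inside the latter branch, whereas you scan directly for a $1$-suitable module and recover the case split as a consequence; the resulting algorithm and correctness argument are the same.
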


\begin{proof}
Let $\FF(H)$ be the collection of all minimal $d$-cuts of $H$ and $\FF(G)$ be the collection of all minimal $d$-cuts of $G$.
Given a minimal $d$-cut $F'$ of $H$, due to Lemma \ref{lemma:nd-d-cuts-without-S0}, it follows that $F'$ is a minimal $d$-cut of $\widetilde{H}$.
We consider the partition $F' = F_b' \uplus F_g'$ such that $F_b' = F' \cap (\cup_{T_X \in \YY} L(T_X))$.
Due to Observation \ref{obs:nd-special-minimal-d-cut-property}, $F_b' = \emptyset$ or $F_g' = \emptyset$ or both.
If $F_g' = F_b' = \emptyset$, then $\emptyset$ is the unique minimal $d$-cut of $H$.
Our enumeration algorithm just outputs $\emptyset$ that is clearly equivalent to $F'$ and is a unique minimal $d$-cut of $G$.

We consider the case $F_g' \neq \emptyset$.
Due to Observation \ref{obs:nd-special-minimal-d-cut-property}, $F_b' = \emptyset$.
Our enumeration algorithm just outputs $F = F'$ that is same as $F_g'$.
As $F_g' \subseteq E(\widetilde{H}) - I(\YY)$, $F_g'$ is a $d$-cut of $\widetilde{H} - I(\YY)$.
Due to Lemma \ref{lemma:nd-G-minus-I-Y}, $F_g'$ is a $d$-cut of $G - I(\YY)$.
Since $F (= F')$ is equivalent to $F'$ and $F'$ is a minimal $d$-cut of $\widetilde{H}$, due to Lemma \ref{lemma:nd-eqv-connection-H-and-G}, $F$ is a $d$-cut of $G$.

Finally, we consider when $F_b' \neq \emptyset$.
Due to Observation \ref{obs:nd-special-minimal-d-cut-property}, $F_g' = \emptyset$.
In fact, due to the minimality of $F'$ and $F_b' \neq \emptyset$, there is a unique nice module $X$ of $G$ such that $J(T_X, u) = F'$ for some $u \in V(\widetilde{H}) \cap X$.
We check if $T_X$ is 1-suitable with respect to $F'$.
We can check this in polynomial-time by looking at the pendant clique of $S_0$ attached to $u$.
If the size of the attached pendant clique has exactly $2d+1$ vertices then $T_X$ is 1-suitable with respect to $F'$, otherwise $T_X$ is unsuitable.
If $T_X$ is 1-suitable with respect to $F'$, then output $F'$ and $J(T_X, w)$ for every $w \in X \setminus V(\widetilde{H})$.
Observe that every outputted edge set is a minimal $d$-cut that is equivalent to $F'$.
In fact, if an outputted $d$-cut is not $F'$ itself, then it contains an edge not in $H$.
If $T_X$ is not suitable, then just output $F'$ itself.

Note that in each of the above cases, the collection of $d$-cuts can be outputted in polynomial-time.
We move on to establish that $\{\CC(F') \mid F' \in \FF(H)\}$ is a partition of $\FF(G)$.
Let $F \in \FF(G)$.
If $F \subseteq E(\widetilde{H})$, then due to Lemma \ref{lemma:nd-equivalence-existence-d-cuts}, $F$ is a $d$-cut of $H$, and $F \in \CC(F)$.
If $F$ contains an edge not in $\widetilde{H}$, then due to Lemma \ref{lemma:nd-equivalence-existence-d-cuts}, there is a unique nice module $X$ of $G$ and $T_X \in \YY$ such that $H$ has a unique minimal $d$-cut $F'$ equivalent to $F$ and $T_X$ is 1-suitable with respect to $F'$.
From the description of our enumeration algorithm, note that $F$ is outputted only when $F'$ is given.
Therefore, $\{\CC(F') \mid F' \in \FF(H)\}$ is a partition of $\FF(G)$.
\end{proof}

\begin{lemma}
\label{lemma:nd-all-d-cut-enumeration}
Let $\FF(H)$ be the collection of all $d$-cuts (respectively, maximal $d$-cuts) of $H$ and $\FF(G)$ be the collection of all $d$-cuts (respectively, maximal $d$-cuts) of $G$.
Then, given $F' \in \CC(F)$ there is an algorithm that enumerates a collection $\CC(F')$ of $d$-cuts (respectively, maximal $d$-cuts) of $G$ in $n^{\OO(1)}$-delay such that
\begin{itemize}
	\item every $F \in \CC(F')$ is equivalent to $F'$, and
	\item if $F \in \CC(F')$ such that $F \neq F'$, then $F'$ contains an edge not in $H$.
\end{itemize} 
Furthermore, $\{\CC(F') \mid F' \in \FF(H)\}$ is a partition of $\FF(G)$.
\end{lemma}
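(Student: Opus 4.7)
The plan is to adapt the backtracking procedure of Lemma~\ref{lemma:vc-soln-lifting-algo-2} (Algorithm~\ref{alg:two}) to the neighborhood-diversity setting, replacing the role of the common-neighborhood set $\widehat{I_T}$ by the nice module $X$ with neighborhood $T_X \in \YY$. Given a (maximal) $d$-cut $F' = E_H(A^*,B^*)$ of $H$, which by Lemma~\ref{lemma:nd-d-cuts-without-S0} is also a (maximal) $d$-cut of $\widetilde{H}$, I decompose it as $F' = F_b' \uplus F_g'$ with $F_b' = \bigcup_{T_X \in \YY}(F' \cap L(T_X))$ and $F_g' = F' \setminus L(\YY)$. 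If $F_b' = \emptyset$, then by Lemma~\ref{lemma:nd-G-minus-I-Y} combined with Lemma~\ref{lemma:nd-eqv-connection-H-and-G} the edge set $F'$ is already a (maximal) $d$-cut of $G$, so the algorithm simply outputs $F'$ and stops. Otherwise, I collect $\DD = \{X : T_X \in \YY,\ F' \cap L(T_X) \neq \emptyset\}$ and run a recursive procedure $\textsc{EnumEqv}(J,\DD')$, initialized with $J = F_g'$ and $\DD' = \DD$.

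At each call, if $\DD' = \emptyset$ the algorithm outputs $J$ and returns; otherwise it picks an arbitrary $X \in \DD'$. By Observation~\ref{obs:nd-broken-subset-understanding} the bad subset $T_X$ is occupied by $F'$, so there is a unique $P_X \subseteq X \cap V(\widetilde{H})$ of some size $\beta_X \geq 1$ with $F' \cap L(T_X) = E(P_X, T_X)$. In polynomial time I test whether $T_X$ is $\beta_X$-suitable with respect to $F'$ by checking whether the sizes of the pendant cliques of $S_0$ attached to the vertices of $P_X$ are exactly $\{2d+1,\ldots,2d+\beta_X\}$, as required by Definition~\ref{defn:nd-suitable-bad-subsets}. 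If $T_X$ is unsuitable the algorithm recurses on $(J \cup E(P_X,T_X),\ \DD' \setminus \{X\})$. If $T_X$ is $\beta_X$-suitable, the algorithm iterates over every size-$\beta_X$ subset $Y \subseteq X$ with $Y = P_X$ or $Y \cap (V(G)\setminus V(H)) \neq \emptyset$, and for each such $Y$ recurses on $(J \cup E(Y, T_X),\ \DD' \setminus \{X\})$.

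For correctness, every output $J$ satisfies $J \setminus L(\YY) = F_g'$, $|J \cap L(T_X)| = |F' \cap L(T_X)|$ for every $T_X \in \YY$, and occupies every bad subset; hence $J$ is equivalent to $F'$ in the sense of Definition~\ref{defn:nd-eqv-d-cuts}, and Lemma~\ref{lemma:nd-eqv-connection-H-and-G} implies that $J$ is a (maximal) $d$-cut of $G$. Moreover, any equivalent $J$ differing from $F'$ must use a vertex of $V(G)\setminus V(H)$ on some bad module, matching the second bulleted property. For the partition claim I invoke Lemma~\ref{lemma:nd-equivalence-existence-d-cuts}: to every (maximal) $d$-cut $F$ of $G$ it assigns a unique (maximal) $d$-cut $F'$ of $H$ equivalent to $F$ and for which the suitability pattern on each relevant $T_X$ matches exactly the branch that the algorithm, when called with that $F'$, takes while enumerating $E(Y,T_X)$. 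Therefore each $F \in \FF(G)$ is produced exactly once as some $J \in \CC(F')$. The recursion tree has depth at most $|\YY| \leq k$, each node does $n^{\OO(1)}$ work, at most $\binom{|X|}{\beta_X} \leq n^{d}$ children are spawned, and between two consecutive outputs only $\OO(k)$ ancestors are revisited, giving polynomial delay.

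The main obstacle will be the maximal variant, where I must ensure that substituting $P_X$ by a differently-chosen $Y \subseteq X$ cannot create a strictly larger $d$-cut in $G$ that would violate maximality. This is where Lemma~\ref{lemma:nd-eqv-connection-H-and-G} does the heavy lifting, as it already guarantees that equivalence preserves both the property of being a $d$-cut and the property of being maximal; I will only need to observe that any edge of $E(G)$ incident to a vertex of $V(G)\setminus V(H)$ either lies in $L(\YY)$ (handled by equivalence) or lies inside a clique module forced to be monochromatic by item~\ref{it:clique-mono} of Lemma~\ref{lem:neighborhood-break-up}, so that no further ``extension'' edges exist beyond those already accounted for.
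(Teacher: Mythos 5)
Your proposal follows the same route as the paper's proof: decompose $F'$ into $F_g'$ and $F_b'$, run the same backtracking procedure \textsc{EnumEqv} over the bad modules (substituting every size-$\beta_X$ subset of $X$ containing an unmarked vertex when $T_X$ is suitable, and keeping $F'\cap L(T_X)$ otherwise), appeal to Lemma~\ref{lemma:nd-eqv-connection-H-and-G} for correctness of each output, and appeal to Lemma~\ref{lemma:nd-equivalence-existence-d-cuts} for the partition claim, with the same polynomial-delay bookkeeping. Your extra remark on the maximal case is a nice sanity check but is already subsumed by Lemma~\ref{lemma:nd-eqv-connection-H-and-G}, which is exactly how the paper handles it.
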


\begin{proof}
Let $\FF(H)$ be the collection of all $d$-cuts of $H$ and $\FF(G)$ be the collection of all $d$-cuts of $G$.
We describe the proof for $d$-cuts of $H$.
The arguments for maximal $d$-cuts are exactly the same.

First, we describe the enumeration algorithm.
Let $F'$ be a $d$-cut of $H$.
Due to Lemma \ref{lemma:nd-d-cuts-without-S0}, $F'$ is a $d$-cut of $\widetilde{H}$.
If $F' = \emptyset$, our algorithm only outputs $F = F'$ that is the only edge set equivalent to $F'$.
Due to Lemma \ref{lemma:nd-eqv-connection-H-and-G}, $F$ is a $d$-cut of $G$.
Otherwise, we consider the partition $F' = F_b' \uplus F_g'$.
If $F_b' = \emptyset$, then our algorithm just outputs $F = F'$.
Observe that this is the only edge set equivalent to $F'$.
Due to Lemma \ref{lemma:nd-eqv-connection-H-and-G}, $F$ is a $d$-cut of $G$.

We consider the case when $F_b' \neq \emptyset$.
For every nice module $X$ of $G$, we look at $T_X = N_G(X)$ one by one such that $T_X \in \YY$.
There are at most $k$ distinct nice modules and hence, at most $k$ distinct small subsets. 
\begin{enumerate}
	\item We initialize $\DD$, the collection of all small subsets $T_X \in \YY$ that are bad and $F' \cap L(T_X) \neq \emptyset$ and $J = F_g'$.
	
	\item As a base case, if $\DD = \emptyset$, then we return $J$.
	Otherwise, we perform the following steps.
	
	\item We look at every $T_X \in \YY$ one by one $F(T_X) = F' \cap L(T_X) \neq \emptyset$.
	Due to the marking scheme {\sf MarkND}($G, \UU, k$), exactly $3d+2$ vertices from $X$ are marked but $X$ has some unmarked vertex in $G$.
	We check if $T_X$ is $r$-suitable for some $r > 0$ with respect to $F'$.
	We can check this in polynomial-time as follows.
	First, we look at $P \subseteq X \cap V(\widetilde{H})$ such that $E(P, T_X) = F(T_X)$.
	Since $F(T_X) \neq \emptyset$, $|P| = h > 0$.
	If there are pendant cliques of $S_0$ attached to the vertices of $P$ and the sizes of those pendant cliques are $\{2d+1,\ldots,2d+h\}$, then $T_X$ is $h$-suitable with respect to $F'$.
	Otherwise, $T_X$ is not suitable.
	If $T_X$ is $r$-suitable for some $r > 0$, then we enumerate $P$ and all subsets $P'$ of $h$ vertices from $X$ that contains a vertex not in $H$.
	For each such enumerated subset $P'$ of $h$ vertices (including $P' = P$ itself), we set $F^* = E(P', T_X)$ and recursively call the algorithm with $J := J \cup F^*$, and $\DD := \DD \setminus \{T_X\}$.
	If $T_X$ is not suitable, then we just recursively call the algorithm with $J := J \cup F(T_X)$, and $\DD := \DD \setminus \{T_X\}$.	
\end{enumerate}

The above algorithm is a backtracking enumeration algorithm and it enumerates edge sets that are equivalent to $F'$.
Due to Lemma \ref{lemma:nd-eqv-connection-H-and-G}, every enumerated edge set is a $d$-cut of $G$.
Observe that the depth of the search tree is $k = {\nd}(G)$, and there are $n$ branches.
In every node, the time taken is polynomial, and we output the $d$-cuts only in the leaves.
Moreover, the time take between two consecutive leaves is polynomial.
As $k < n$, it follows that the delay is $n^{\OO(1)}$.

We are yet to establish that $\{\CC(F') \mid F' \in \FF(H)\}$ is a partition of $\FF(G)$.
Consider $F \in \FF(G)$.
If $F \subseteq E(\widetilde{H})$, then due to Lemma \ref{lemma:nd-equivalence-existence-d-cuts}, $F$ is a $d$-cut of $H$.
Therefore, $F \in \CC(F)$.
If $F$ contains an edge not in $\widetilde{H}$, then due to Lemma \ref{lemma:nd-equivalence-existence-d-cuts}, $H$ has a unique $d$-cut $F'$ such that for every nice module $X$ of $G$, if $T_X \in \YY$ and $F \cap T_X$ has an edge not in $\widetilde{H}$, then $T_X$ is $r$-suitable with respect to $F'$ for some $r > 0$.
Based on the description of the enumeration algorithm, when $F'$ is given, $F$ is outputted.
Therefore, $F \in \CC(F')$.
As $F'$ is unique, this completes the proof that $\{\CC(F') \mid F' \in \FF(H)\}$ is a partition of $\FF(G)$.
\end{proof}

With the above mentioned lemmas in hand, we are ready to prove our result in this section.

{\ThmNDComb*}

\begin{proof}
Each of the enumeration kernelizations have two parts, kernelization algorithm, and solution-lifting algorithm.
\begin{itemize}
	\item We give the proof for {\enummindcut}.
	Given $(G, \UU, k)$, our kernelization algorithm invokes {\sf MarkND}($G, \UU, k)$ and outputs the graph $H$.
	Due to Lemma \ref{lemma:nd-graph-H-property}, ${\ndd}(H) \leq (2d+1)k$ and $H$ has $\OO(d^2 k)$ vertices.
	Given a minimal $d$-cut $F'$ of $H$, our solution-lifting algorithm invokes Lemma \ref{lemma:nd-miniml-d-cut-enumeration} and outputs a collection of $d$-cuts in polynomial-time.
	Due to Lemma \ref{lemma:nd-miniml-d-cut-enumeration}, the property (ii*) of fully-polynomial enumeration kernel is satisfied.
	As $k = {\ndd}(G)$, {\enummindcut} parameterized by ${\ndd}(G)$ admits a fully-polynomial enumeration kernel with $\OO(d^2 {\ndd})$ vertices.
	
	\item We give the proof for {\enumdcut}. The proof for {\enummaxdcut} is exactly the same.
	Given $(G, \UU, k)$, our kernelization algorithm invokes marking scheme {\sf MarkND}($G, \UU, k$) and outputs $H$.
	Due to Lemma \ref{lemma:nd-graph-H-property}, ${\ndd}(H) \leq (2d+1)k$ and $H$ has $\OO(d^2 k)$ vertices.
	Given a $d$-cut $F'$ of $H$, the solution-lifting algorithm invokes Lemma \ref{lemma:nd-all-d-cut-enumeration} to output a collection of $d$-cuts of $G$ with polynomial-delay that satisfies (ii) of Definition \ref{defn:poly-delay-enum-kernel}.
	As $k = {\ndd}(G)$, the kernelization algorithm and solution-lifting algorithm together constitute a polynomial-delay enumeration kernel with $\OO(d^2 {\ndd})$ vertices.
\end{itemize}
This completes the proof of the theorem.
\end{proof}

\section{Parameterization by the Clique Partition Number}
\label{sec:clique-partition}

In this section, we consider the clique partition number, i.e. ${\clp}$ of the input graph as the parameter.
Since the clique partition number of $G$ is the complement of the chromatic number of the complement graph of $G$, for any $k \geq 3$, it is NP-complete to decide whether ${\clp}(G) = k$.
Therefore, we assume that a clique partition $\{C_1,\ldots,C_k\}$ of $G$ with the minimum number of cliques is given to us with the input graph.
We denote $(G, \CC, k)$ the input instance where $\CC$ is a clique-partition of $G$ into $k$ cliques.
Clearly, by our assumption ${\clp}(G) = k$.
We call a clique $C \in \CC$ {\em large} if $|C| \geq 2d+1$.
If a clique $C \in \CC$ is not a large clique, we call it a {\em small} clique.

\begin{reduction rule}
\label{rule:splitting-small-cliques-partition}
Let $C = \{u_1,\ldots,u_i\}$ be a small clique.
Then, replace $C$ by $i$ distinct cliques $\{u_1\},\ldots,\{u_i\}$.
\end{reduction rule}

Observe that the above reduction rule does not change the graph.
The collection of cliques in $\CC$ gets updated and $|\CC|$ increases to at most $2dk$.
We first apply Reduction Rule \ref{rule:splitting-small-cliques-partition} exhaustively.
We do not apply this above reduction rule anymore after that.
When the above reduction rule is not applicable, it holds that any small clique has just one vertex.
The next reduction rule is the following.

\begin{reduction rule}
\label{rule:clique-partition-merging-large-cliques}
If there are two cliques $C_i, C_j \in \CC$ such that there is $u \in C_i$ having more than $d$ neighbors in $C_j$ or there is $u \in C_j$ having more than $d$ neighbors in $C_i$, then make every vertex of $C_i$ adjacent to every vertex of $C_j$.
Replace $C_i$ and $C_j$ from $\CC$ by a new clique $C' = C_i \cup C_j$.
Reduce $k$ by 1.
\end{reduction rule}

We apply Reduction Rule \ref{rule:clique-partition-merging-large-cliques} only when Reduction Rule \ref{rule:splitting-small-cliques-partition} is not applicable.
When the above reduction rule is applicable for a pair of cliques $C_i, C_j \in \CC$, then either $C_i$ or $C_j$ (or both) is (or are) a large clique(s). 
Observe that for any $d$-cut $(A, B)$ and for any (large) clique $C$, it holds that $C \subseteq A$ or $C \subseteq B$.
Hence, it is not possible that $C_i \subseteq A$ and $C_j \subseteq B$.
It means that either $C_i \cup C_j \subseteq A$ or $C_i \cup C_j \subseteq B$.
We apply Reduction Rules \ref{rule:splitting-small-cliques-partition} exhaustively followed by the Reduction Rule \ref{rule:clique-partition-merging-large-cliques} exhaustively to obtain the instance $(G_1, \CC_1, k_1)$.
After that we do not apply the above mentioned reduction rules anymore.
Let $(G_1, \CC_1, k_1)$ be the obtained instance after exhaustive application of Reduction Rule \ref{rule:clique-partition-merging-large-cliques} and let $\CC_1$ be the set of obtained cliques.
Note that for every two distinct cliques $C_i, C_j \in \CC_1$, any $u \in C_i$ or $u \in C_j$ can have at most $d$ neighbors in $C_j$ or in $C_i$, respectively.
But it is possible that some $u \in C_i$ may have more than $d$ neighbors outside $C_i$.
If a vertex $u \in C_i$ has at most $d$ neighbors outside $C_i$, then we say that $u$ is a {\em low} vertex.
On the other hand if $u \in C_i$ has more than $d$ neighbors outside $C_i$, then we say that $u$ is a {\em high} vertex.
%
%
\begin{lemma}
\label{lemma:clique-partition-step-1-preservation}
A set of edges $F \subseteq E(G)$ is a $d$-cut of $G$ if and only if $F$ is a $d$-cut of $G_1$.
\end{lemma}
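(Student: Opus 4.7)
The plan is to first observe that Reduction Rule~\ref{rule:splitting-small-cliques-partition} does not modify the graph at all---it only refines the clique partition---so it does not affect the set of edges nor any $d$-cut. Consequently it suffices to prove the statement for a single application of Reduction Rule~\ref{rule:clique-partition-merging-large-cliques}, and then extend by induction on the number of times this rule is applied, since $G_1$ is obtained from $G$ by finitely many such applications.

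First I would prove the key structural claim: whenever Reduction Rule~\ref{rule:clique-partition-merging-large-cliques} merges two cliques $C_i$ and $C_j$, the set $C_i \cup C_j$ is already monochromatic in the current graph. Since Rule~\ref{rule:splitting-small-cliques-partition} has been exhausted, every clique in the current partition is either a singleton or has at least $2d+1$ vertices. The trigger for the merge rule requires some vertex, say $u \in C_i$, to have more than $d$ neighbors in $C_j$, so $|C_j| \geq d+1 \geq 2$, which forces $C_j$ to be a large clique of size at least $2d+1$. By item~(\ref{it:clique-mono}) of Lemma~\ref{lem:neighborhood-break-up}, $C_j$ is monochromatic; by item~(\ref{it:d-neighbor-mono}), $C_j \cup \{u\}$ is monochromatic as well. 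If $C_i$ is a singleton, this already gives monochromaticity of $C_i \cup C_j$; otherwise $C_i$ is itself a large clique, hence monochromatic by item~(\ref{it:clique-mono}), and item~(\ref{it:overlap-mono}) applied to the nonempty intersection $\{u\}$ yields that $C_i \cup C_j$ is monochromatic.

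With this claim in hand, both directions of the biconditional follow for a single rule application producing graph $G'$ from the current graph. For the forward direction, if $F = E_G(A, B)$ is a $d$-cut of $G$, then $C_i \cup C_j$ lies entirely in $A$ or entirely in $B$, so none of the new edges inserted by the rule crosses $(A, B)$; thus $E_{G'}(A, B) = F$, and since no vertex gains any cross-neighbor, $F$ remains a $d$-cut of $G'$. For the backward direction, suppose $F \subseteq E(G)$ is a $d$-cut of $G'$ with $F = E_{G'}(A', B')$. Since $E(G) \subseteq E(G')$, every edge of $F$ already lies in $G$ and crosses $(A', B')$, giving $F \subseteq E_G(A', B')$; conversely $E_G(A', B') \subseteq E_{G'}(A', B') = F$. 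Hence $F = E_G(A', B')$, and the $d$-cut condition transfers because passing from $G'$ to $G$ only removes edges and therefore only decreases cross-neighbor counts. Iterating this equivalence over all invocations of Rule~\ref{rule:clique-partition-merging-large-cliques} yields the lemma. The only nontrivial point is the monochromaticity claim, which must handle the case where one of the cliques is a singleton separately; but Lemma~\ref{lem:neighborhood-break-up} supplies precisely the tools needed.
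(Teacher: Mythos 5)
Your proposal is correct and follows essentially the same approach as the paper's: Rule~\ref{rule:splitting-small-cliques-partition} is a no-op on the graph, and whenever Rule~\ref{rule:clique-partition-merging-large-cliques} fires, the two cliques being merged form a monochromatic set (via items~(\ref{it:clique-mono}), (\ref{it:d-neighbor-mono}), and (\ref{it:overlap-mono}) of Lemma~\ref{lem:neighborhood-break-up}), so the inserted edges can never cross a $d$-cut. Your write-up is somewhat more careful than the paper's terse version in that it explicitly states the single-step claim and then iterates by induction over applications of Rule~\ref{rule:clique-partition-merging-large-cliques}, and it cleanly separates the singleton/large case analysis, but the underlying argument is the same.
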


\begin{proof}
Let $F = (A, B)$ be a $d$-cut of $G$.
Observe that Reduction Rule \ref{rule:splitting-small-cliques-partition} does not change the graph.
We apply Reduction Rule \ref{rule:clique-partition-merging-large-cliques} only after an exhaustive application of Reduction Rule \ref{rule:splitting-small-cliques-partition}.
Consider two cliques $C, D$ such that $E(C, D)$ is not a $d$-cut.
Then, it must be that $D$ (or $C$) is a large clique.
It implies that $C \cup D \subseteq A$ or $C \cup D \subseteq B$.
Hence, no edge $uv \in E(G_1) \setminus E(G)$ can be present in a $d$-cut of $G$.
It means that $F$ is a $d$-cut of $G_1$.
By similar argument as above, we can prove that any $d$-cut of $G_1$ is also a $d$-cut of $G$.
\end{proof}

%
%

After we obtain the instance $(G_1, \CC_1, k_1)$, we invoke a marking procedure ${\sf MarkCP}(G_1, \CC_1, k_1)$ that works as follows.
\begin{enumerate}
	\item Suppose there is a clique $C \in \CC_1$ and $i$ distinct large cliques $C_1, C_2,\ldots,C_i \in \CC_1$ such that $C$ has a low vertex $u$ having neighbors only in each of the cliques $C_1,\ldots,C_i$.
	Formally, we assume that there is $u \in C$ such that $u$ is a low vertex, $N_G(u) \setminus C \subseteq \bigcup\limits_{j=1}^i C_j$, and $\sum\limits_{j=1}^i |N_G(u) \cap C_j| = \ell \leq d$.
	Then we mark $u$ and the vertices in $\bigcup\limits_{j=1}^i (N_G(u) \cap C_j)$.
	Repeat this marking scheme for every $i \leq d$ cliques and every $\ell \leq d$.
	\item If there is a clique $C \in \CC_1$ and $i$ distinct large cliques $C_1, C_2,\ldots,C_i \in \CC_1$ such that $C$ has a high vertex $u$ satisfying $\sum\limits_{j = 1}^i |N_G(u) \cap C_j| > d$ but $\sum\limits_{j = 1}^{i-1} |N_G(u) \cap C_j| \leq d$ then mark $u$ and all the vertices in $\bigcup\limits_{j=1}^{i} (N_G(u) \cap C_j)$.
	Let $\bigcup\limits_{j=1}^{i} (N_G(u) \cap C_j)$ has $\ell$ vertices.
	Informally, if for a clique $C \in \CC_1$ and $i$ distinct large cliques $C_1,\ldots,C_i \in \CC_1$, it happens that $u \in C$ is a high vertex but has at most $d$ cumulative neighbors in $C_1,\ldots,C_{i-1}$ but has more than $d$ cumulative neighbors in $C_1,\ldots,C_i$, then mark $u$ and all its neighbors from $C_1,\ldots,C_i$.
	Repeat this process for every $i \leq d+1$ and observe that $\ell \leq 2d$.
\end{enumerate}

Informally, the above marking scheme has marked $u \in C$ and all the neighbors of $u$ outside $C$ in other large cliques if there are at most $d$ distinct neighbors.
Otherwise, it has marked $u \in C$ and at most $2d$ neighbors in other large cliques. 
After performing ${\sf MarkCP}(G_1, \CC_1, k_1)$, we apply the following reduction rule that removes some unnecessary unmarked vertices from large cliques.

\begin{reduction rule}
\label{rule:clique-partition-marking-vertices}
Let $C$ be a large clique in $\CC_1$ and $X \subseteq C$ be the unmarked vertices of $C$.
Remove $\min\{|X|, |C| - 2d - 1\}$ unmarked vertices from $C$.
\end{reduction rule}

Let $(G', \CC', k')$ be the obtained after applying the marking scheme ${\sf MarkCP}(G_1, \CC_1, k_1)$ followed by an exhaustive application of Reduction Rule \ref{rule:clique-partition-marking-vertices} on $(G_1, \CC_1, k_1)$.

Let $\CC'$ be the set of cliques such that $\bigcup\limits_{C \in \CC'} V(C) = V(G')$.
The following observation is simple to see.

\begin{obs}
\label{obs:d-cut-property-clique-partition}
Let $F = (A', B')$ be a $d$-cut of $G'$ and $\CC' = \{C_1',\ldots,C_r'\}$ such that $r \leq (2d+1)k$.
Then, there is a partition $I \uplus J = \{1,\ldots,r\}$ such that
$A' = \bigcup_{i \in I} C_i'$ and $B' = \bigcup_{j \in J} C_j'$.
\end{obs}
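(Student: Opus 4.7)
The plan is to prove the observation in two short steps: first show that every clique in $\CC'$ is either a singleton or has at least $2d+1$ vertices, and then invoke item~(\ref{it:clique-mono}) of Lemma~\ref{lem:neighborhood-break-up} to conclude that every such clique is monochromatic with respect to any $d$-cut of $G'$.

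For the first step, I would track the invariant ``every clique in the current partition is either a singleton or has at least $2d+1$ vertices'' through the sequence of reduction rules. After exhaustively applying Reduction Rule~\ref{rule:splitting-small-cliques-partition}, every clique of size at most $2d$ has been split into singletons, so the invariant holds. Reduction Rule~\ref{rule:clique-partition-merging-large-cliques} is only applicable when at least one of the merged cliques is large (as noted in the text preceding its statement, since a singleton has at most one neighbor in any other clique), so after merging the resulting clique $C_i \cup C_j$ has size at least $2d+2$. Hence the invariant is preserved by Rule~\ref{rule:clique-partition-merging-large-cliques}. Finally, Reduction Rule~\ref{rule:clique-partition-marking-vertices} only removes unmarked vertices from large cliques while keeping at least $2d+1$ vertices in each, so large cliques remain large and singletons are untouched. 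Thus in $\CC' = \{C_1', \ldots, C_r'\}$, every $C_i'$ is either a singleton or a clique with at least $2d+1$ vertices.

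For the second step, fix any $d$-cut $(A', B')$ of $G'$ and any $C_i' \in \CC'$. If $|C_i'| = 1$, then $C_i'$ is trivially contained in $A'$ or in $B'$. Otherwise $|C_i'| \geq 2d+1$, and item~(\ref{it:clique-mono}) of Lemma~\ref{lem:neighborhood-break-up} (applied to the graph $G'$) yields that $C_i'$ is monochromatic, i.e.\ $C_i' \subseteq A'$ or $C_i' \subseteq B'$. Setting $I = \{i \in [r] : C_i' \subseteq A'\}$ and $J = \{i \in [r] : C_i' \subseteq B'\}$ then gives a partition of $[r]$ with $A' = \bigcup_{i \in I} C_i'$ and $B' = \bigcup_{j \in J} C_j'$, proving the observation.

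There is no real obstacle here; the entire content of the statement is the structural invariant maintained by the three reduction rules, together with the monochromaticity of cliques of size at least $2d+1$ that was already established in Lemma~\ref{lem:neighborhood-break-up}. The bound $r \leq (2d+1)k$ is a bookkeeping consequence of the fact that each of the original $k$ cliques contributes at most $2d$ singletons (from Rule~\ref{rule:splitting-small-cliques-partition}) or at most one large clique, and subsequent merging or shrinking can only decrease the clique count; this is not strictly needed for monochromaticity but follows from the same analysis.
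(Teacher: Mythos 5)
Your proof is correct and is the natural argument for this observation (which the paper states without proof, calling it ``simple to see''): after the reduction rules every $C_i' \in \CC'$ is either a singleton or a clique with at least $2d+1$ vertices, and item~(\ref{it:clique-mono}) of Lemma~\ref{lem:neighborhood-break-up} then forces each such $C_i'$ to be monochromatic, which yields the partition. One minor misstatement worth cleaning up: your parenthetical reason for why Reduction Rule~\ref{rule:clique-partition-merging-large-cliques} always involves a large clique is not quite right; the actual reason is that a vertex having more than $d$ neighbors in $C_j$ forces $|C_j| > d$, and after Reduction Rule~\ref{rule:splitting-small-cliques-partition} every non-singleton clique already has at least $2d+1$ vertices, hence $C_j$ is large. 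This does not affect the validity of your argument.
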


\noindent
We prove the following two lemmas that are among the critical ones to prove our final theorem statement.


\begin{lemma}
\label{lemma:clique-partition-d-cut-preservation}
Let $(G, \CC, k)$ be an instance of {\enumdcut} (or {\enummindcut} or {\enummaxdcut}) such that $V(G)$ is partitioned into a set $\CC$ of $k$ cliques. 
There is an algorithm that runs in $k^{d+1}n^{\OO(1)}$-time and constructs $(G', \CC', k')$ such that $k' \leq k$ and $|V(G')|$ is $\OO(k^{d+2})$.
\end{lemma}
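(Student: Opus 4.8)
The plan is to show that the kernelization procedure already described---namely, exhaustive application of Reduction Rules~\ref{rule:splitting-small-cliques-partition} and~\ref{rule:clique-partition-merging-large-cliques}, followed by the marking procedure ${\sf MarkCP}$ and exhaustive application of Reduction Rule~\ref{rule:clique-partition-marking-vertices}---runs in the claimed time and outputs an instance of the claimed size. So the proof is essentially a bookkeeping argument that assembles facts already at hand: Lemma~\ref{lemma:clique-partition-step-1-preservation} gives that $G$ and $G_1$ have the same $d$-cuts, so $k' \le k_1 \le k$ follows because each merge in Reduction Rule~\ref{rule:clique-partition-merging-large-cliques} strictly decreases $k$ and the splitting rule does not affect the count we care about (we measure $k'$ against the original clique-partition number, so $k' \le k$ is immediate as no rule ever increases the number of \emph{large} cliques beyond what we started with, and the singleton cliques created by splitting do not contribute to the final count after re-merging---this point needs to be stated carefully).

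First I would verify the running time. Reduction Rule~\ref{rule:splitting-small-cliques-partition} is applied at most once per small clique, so $O(k)$ times, each in linear time. Reduction Rule~\ref{rule:clique-partition-merging-large-cliques} strictly decreases $k$, hence is applied at most $k$ times; checking its applicability requires examining all pairs of cliques and counting cross-neighborhoods, which is $n^{O(1)}$ per application. The marking procedure ${\sf MarkCP}$ is the dominant cost: for each clique $C$ and each choice of $i \le d$ large cliques $C_1,\dots,C_i$ and each target $\ell \le d$, we do polynomial work; there are at most $\binom{k}{d} = O(k^d)$ choices of the large cliques and $O(k)$ choices of $C$, giving $k^{d+1} n^{O(1)}$ total. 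Finally, Reduction Rule~\ref{rule:clique-partition-marking-vertices} is applied at most once per large clique after marking, in polynomial time. Summing, the total running time is $k^{d+1} n^{O(1)}$, as claimed.

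Next I would bound $|V(G')|$. After Reduction Rule~\ref{rule:clique-partition-marking-vertices} is applied exhaustively, every large clique $C \in \CC'$ retains at most $\max\{2d+1, (\text{number of marked vertices in } C)\}$ vertices; more precisely, $C$ keeps exactly its marked vertices plus enough unmarked ones to reach $2d+1$ if it had fewer marked vertices. So it suffices to bound the total number of marked vertices. In step~1 of ${\sf MarkCP}$, for each clique $C$ (there are at most $2dk$ of them after splitting), each subset of $i \le d$ large cliques ($O(k^d)$ choices), and each $\ell \le d$, we mark one vertex $u$ of $C$ together with $\ell \le d$ of its neighbors, i.e., $O(d)$ vertices per triple. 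In step~2, for each clique $C$ and each $i \le d+1$, we mark one high vertex and at most $2d$ of its neighbors, again $O(d)$ vertices, with $O(k^{d+1})$ choices of cliques. Hence the total number of marked vertices is $O(d^2 \cdot k \cdot k^d \cdot d) = O(d^3 k^{d+1})$; treating $d$ as a constant this is $O(k^{d+1})$, and together with the slack of at most $2d+1 = O(1)$ unmarked vertices per large clique and $O(dk)$ singleton small cliques, we get $|V(G')| = O(k^{d+2})$ (the extra factor of $k$ absorbing the polynomial-in-$d$ terms for non-constant $d$, which matches the stated $O(k^{d+2})$ bound).

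\textbf{The main obstacle} I anticipate is nailing down the inequality $k' \le k$ cleanly. The splitting rule increases the number of cliques (to as much as $2dk$), and while the merging rule decreases it again, one must argue that the \emph{final} count $k'$ never exceeds the \emph{original} clique partition number $k$---not merely $2dk$. The cleanest way is to observe that $G'$ is obtained from $G$ only by \emph{adding} edges (the merges) and \emph{deleting} vertices (Rule~\ref{rule:clique-partition-marking-vertices}), both of which can only decrease the clique partition number; since $G$ had a clique partition into $k$ cliques and every operation preserves or improves this, $\clp(G') \le k$, and since $\CC'$ is a valid clique partition of $G'$ witnessing this, we may take $k' = |\CC'| \le$ (the count we report) and in any case $k' \le k$. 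I would also double-check that $\CC'$ as maintained by the algorithm is genuinely a clique partition of $G'$---this follows because Rule~\ref{rule:clique-partition-merging-large-cliques} makes the merged set a clique by adding all missing edges, and Rule~\ref{rule:clique-partition-marking-vertices} only removes vertices, so cliquehood of each part is preserved throughout.
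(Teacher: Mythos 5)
Your proof runs the same algorithm and uses the same accounting as the paper, so at the structural level the approaches match; however, there are two genuine errors. First, the assertion that ``$\CC'$ is a valid clique partition of $G'$ witnessing $\clp(G') \le k$'' is false: the partition $\CC'$ the algorithm maintains has $k_1 \le 2dk$ parts, because Reduction Rule~\ref{rule:splitting-small-cliques-partition} turns every small clique into singletons and the merging rule need not re-merge them all. Your surrounding observation --- that $G'$ arises from $G$ only by adding edges and deleting vertices, hence $\clp(G') \le k$ --- is correct, but to report $k' \le k$ literally the algorithm must output a different partition, e.g.\ $\{C_i \cap V(G') : C_i \in \CC\}$, and then Lemma~\ref{lemma:d-cut-preservation-clique-partition} and the solution-lifting step would need to be phrased in terms of that partition. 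You were right to single this out as the main obstacle: the paper's own proof only establishes $k' = k_1 \le 2dk$, so the lemma statement and the paper's proof actually disagree on $k'$.

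Second, your marked-vertex count undercounts Step~2 of ${\sf MarkCP}$. Step~2 iterates over a clique $C$ (at most $2dk$ of them) \emph{and} a choice of up to $d+1$ other large cliques ($\OO(k^{d+1})$ choices), marking $\OO(d)$ vertices per iteration; that is $\OO(d^2 k^{d+2})$ marked vertices, not the $\OO(k^{d+1})$ you wrote. Your total $\OO(d^3 k^{d+1})$ is therefore too small, and once Step~2 is tallied correctly the $\OO(k^{d+2})$ bound follows directly, making the parenthetical hand-wave about ``the extra factor of $k$'' unnecessary. For reference, the paper counts differently --- per large clique, showing each retains $\OO(k^{d+1})$ vertices and multiplying by the number of cliques --- but your per-iteration accounting is equally valid once Step~2 is corrected.
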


\begin{proof}
Let us first describe the algorithm.
Given $(G, \CC, k)$, at the first phase, the algorithm performs Reduction Rules \ref{rule:splitting-small-cliques-partition} exhaustively followed by Reduction Rule \ref{rule:clique-partition-merging-large-cliques} exhaustively to obtain $(G_1, \CC_1, k_1)$ such that $V(G_1)$ can be partitioned into a set $\CC_1$ of $k_1 \leq 2dk$ cliques.
In the second phase, the algorithm performs the marking scheme ${\sf MarkCP}(G_1, \CC_1, k_1)$.
In the third phase, the algorithm applies Reduction Rule \ref{rule:clique-partition-marking-vertices} on $(G_1,\CC_1, k_1)$ exhaustively to obtain $(G', \CC', k')$ such that $V(G')$ is partitioned into a set $\CC'$ of $k'$ cliques.
By construction, $k' = k_1 \leq 2dk$.
Furthermore, the subroutine ${\sf MarkCP}(G_1, \CC_1, k_1)$ takes $k^{d+1} n^{\OO(1)}$-many operations.
This completes the description of the polynomial-time algorithm to construct the output instance $(G', \CC', k')$.

Next, we justify that $|V(G')|$ is $\OO(k^{d+2})$.
Observe that by construction, if $C \in \CC$ is a small clique of $G$, then $C$ is broken into $|C|$ different small cliques in $\CC'$ and the resultant cliques are small cliques of $G'$.
Otherwise, consider the case when $C \in \CC$ is a large clique.
The algorithm preserves the fact that either $C$ remains a large clique in $\CC'$ or is replaced by another large clique $C'$ in $\CC'$.
If we justify that every large clique of $\CC'$ has $\OO(k^{d+1})$ vertices, then we are done.
Consider an arbitrary large clique $\hat C \in \CC'$.
For every marked vertex $u \in \hat C$, we associate a set $\DD_u$ of at most $d+1$ large cliques from $\CC' \setminus \{\hat C\}$ such that $u$ has some neighbors in every $C' \in \DD_u$ and all those neighbors are marked neighbors.
Observe that for every marked vertex $u \in \hat C$, the algorithm has marked at least one and at most $2d$ neighbors $v_1,\ldots,v_i$ ($i \leq 2d$) that are outside $C$.
Hence, for every marked vertex $u \in \hat C$, we can associate a set $\DD_{u}$ of at most $d+1$ cliques in $\CC'$.
On the other hand, let $\DD$ be a set of at most $d+1$ large cliques $C_1,\ldots,C_i \in \CC'$ with vertices $u_j \in C_j$ for $j \leq i$.
Then, the marking procedure ${\sf MarkCP}$ marks $d$ neighbors of $u_j$ in the clique $\hat C$.
Hence, the number of marked vertices in $\hat C$ is at most $\sum\limits_{i=1}^{d+1} d{{k}\choose{i}}$.
It implies that every large clique $\hat C \in \CC'$ has $\OO(k^{d+1})$ vertices.
This completes the justification that $|V(G')|$ is $\OO(k^{d+2})$.
\end{proof}

The following lemma illustrates a bijection between the collection of $d$-cuts of $G'$ to the collection of $d$-cuts of $G_1$.

\begin{lemma}
\label{lemma:d-cut-preservation-clique-partition}
Let $\CC_1 = \{\hat C_1,\ldots,\hat C_r\}$ and $\CC' = \{C_1',\ldots,C_r'\}$ be the clique-partition of $G_1$ and $G'$ respectively and $I \uplus J = \{1,\ldots,r\}$ be a partition of $\{1,\ldots,r\}$.
Then, $\hat F = (\bigcup\limits_{i \in I} \hat C_i, \bigcup\limits_{j \in J} \hat C_j)$ is a $d$-cut of $G_1$ if and only if $F' = (\bigcup\limits_{i \in I} C_i', \bigcup\limits_{j \in J} C_j')$ is a $d$-cut of $G'$.
\end{lemma}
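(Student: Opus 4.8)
The plan is to show that the marking procedure ${\sf MarkCP}$ together with Reduction Rule \ref{rule:clique-partition-marking-vertices} only deletes unmarked vertices from large cliques, and that the deleted vertices never affect whether a particular ``clique-respecting'' cut is a $d$-cut. The key structural fact (which I would state first, as a claim) is that in $G_1$ and in $G'$, every clique $\hat C_i$, $C_i'$ is monochromatic — this follows because if $|C| \geq 2d+1$ then item~(\ref{it:clique-mono}) of Lemma~\ref{lem:neighborhood-break-up} applies, and if $|C| < 2d+1$ then after Reduction Rule~\ref{rule:splitting-small-cliques-partition} the clique is a singleton and hence trivially monochromatic. Thus by Observation~\ref{obs:d-cut-property-clique-partition} every $d$-cut of $G_1$ (resp.\ $G'$) is of the form $(\bigcup_{i\in I}\hat C_i, \bigcup_{j\in J}\hat C_j)$ for some partition $I\uplus J$ of $\{1,\dots,r\}$, so the two families of cuts are indexed by the same set of partitions; it remains to show the $d$-cut property transfers.

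First I would fix a partition $I\uplus J=\{1,\dots,r\}$ and the corresponding cuts $\hat F=(\hat A,\hat B)$ of $G_1$ and $F'=(A',B')$ of $G'$. For the direction ``$\hat F$ is a $d$-cut of $G_1$ $\Rightarrow$ $F'$ is a $d$-cut of $G'$'': since $G'$ is an induced subgraph of $G_1$ (the reduction rule only deletes vertices, it does not add edges), and $A'=\hat A\cap V(G')$, $B'=\hat B\cap V(G')$, every vertex of $V(G')$ has at most as many neighbors across the cut in $G'$ as in $G_1$, so the $d$-cut property is inherited. Conversely, suppose $F'$ is a $d$-cut of $G'$; I must check that every vertex $u\in V(G_1)$ has at most $d$ neighbors across the cut $\hat F$. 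If $u$ was deleted, it lies in some large clique $\hat C_i$ and all of $\hat C_i$ is on the same side, so $u$'s only cross-cut neighbors are in other cliques $\hat C_j$ with $j$ on the opposite side. Here is where the marking scheme does its work: by the design of ${\sf MarkCP}$, for each vertex $v$ that was \emph{kept} in a clique $\hat C_i$ and is a ``low'' vertex (at most $d$ neighbors outside $\hat C_i$), the procedure marked $v$ together with all its outside neighbors, and for a ``high'' vertex it marked $v$ together with enough outside neighbors (at least $d$ in the critical set of cliques) — but crucially, by Reduction Rule~\ref{rule:clique-partition-marking-vertices}, in every clique $C'$ that receives cross-cut edges we keep at least $2d+1$ vertices, and in particular we keep at least $d$ of the relevant neighbors of any deleted vertex. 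The argument is then: a deleted vertex $u\in\hat C_i$ has, for each opposite-side clique $\hat C_j$ in which it has $\ell_{ij}>0$ neighbors, a corresponding set of marked neighbors of the kept vertices; using monochromaticity of $\hat C_j$ and the fact that ${\sf MarkCP}$ marked, for the clique $\hat C_i$, a witness vertex together with $d$ of its neighbors inside each such $\hat C_j$, one deduces that the cut restricted to $V(G')$ already ``sees'' at least as constrained a configuration, so $F'$ being a $d$-cut forces $\sum_j \ell_{ij}\le d$ and hence $u$ has at most $d$ cross-cut neighbors in $G_1$. Symmetrically one checks vertices in cliques $\hat C_j$ that receive a deleted neighbor, and vertices that were kept are handled directly since $F'$ already bounds their cross-cut degree and deletion only removes neighbors.

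The main obstacle I expect is the bookkeeping in this converse direction: making precise the claim that ``whatever cross-cut degree a deleted vertex would incur, some kept marked vertex already witnesses a violation in $G'$.'' One has to be careful about the two cases in ${\sf MarkCP}$ (low vs.\ high vertices, and the thresholds $\ell\le d$ vs.\ $\ell\le 2d$), and about the fact that the relevant ``witness'' is not necessarily $u$ itself but could be a different vertex of $\hat C_i$ with the same external neighborhood pattern — the marking scheme iterates over \emph{all} choices of $i\le d$ (resp.\ $i\le d+1$) cliques and all target sizes $\ell$, which is exactly what guarantees a witness is retained. I would organize the converse as: (a) reduce to checking a single deleted vertex $u\in\hat C_i$; (b) let $\DD=\{\hat C_{j}:j\in J \text{ (opposite side)},\ N_{G_1}(u)\cap\hat C_j\neq\emptyset\}$; (c) argue $|\DD|\le d+1$ and that ${\sf MarkCP}$ marked a vertex $u'\in\hat C_i$ with $N_{G_1}(u')\setminus\hat C_i$ hitting the same cliques in $\DD$ with $d$ marked neighbors in each, and that $u'$ survived Reduction Rule~\ref{rule:clique-partition-marking-vertices}; (d) conclude that if $u$ had more than $d$ cross-cut neighbors, then $u'$ would have more than $d$ cross-cut neighbors in $G'$ under $F'$, contradicting that $F'$ is a $d$-cut. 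Together with the (easy) forward direction and Lemma~\ref{lemma:clique-partition-step-1-preservation}, this yields the claimed bijection between $d$-cuts of $G'$ and $d$-cuts of $G_1$ (and hence of $G$). The same correspondence preserves the subset order on edge cuts because the edge cut of $(\bigcup_{i\in I}\hat C_i,\bigcup_{j\in J}\hat C_j)$ is determined, cross-clique pair by cross-clique pair, by $I$ and $J$ in a way that is monotone in the obvious sense — which is what will let us lift minimality and maximality in the proof of Theorem~\ref{thm:clique-partition-result}.
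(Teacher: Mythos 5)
Your overall strategy matches the paper's: the forward direction is immediate since $G'$ is an induced subgraph of $G_1$, and the backward direction is argued by contradiction, producing a marked witness that already violates the $d$-cut property in $G'$. The paper splits this into two cases depending on whether the offending vertex $u$ is marked (survives into $G'$) or unmarked (deleted), which lines up with your ``deleted $u$'' case and your ``vertices in cliques that receive a deleted neighbor'' case.

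That said, two parts of your plan would not survive being written out. Your step~(c) claims ${\sf MarkCP}$ marked a vertex $u'$ whose external neighborhood hits the cliques in $\DD$ ``with $d$ marked neighbors in each,'' and earlier you assert that Reduction Rule~\ref{rule:clique-partition-marking-vertices} ``keeps at least $d$ of the relevant neighbors of any deleted vertex.'' Neither reflects what the procedure actually does: ${\sf MarkCP}$ chooses one representative per \emph{pattern} (source clique, target-clique sequence, cumulative size) and marks that representative together with \emph{all} of its neighbors in the designated cliques (at most $d$ total in the low case, between $d+1$ and $2d$ total in the high case); and Reduction Rule~\ref{rule:clique-partition-marking-vertices} only controls the surviving size of each large clique, with no guarantee that any particular deleted vertex's neighbors are among the survivors. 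Your concluding catch-all, ``vertices that were kept are handled directly since $F'$ already bounds their cross-cut degree and deletion only removes neighbors,'' also points the inequality the wrong way for this direction: deleting vertices only \emph{lowers} a kept vertex's cross-cut degree when passing from $G_1$ to $G'$, so a bound of $d$ in $G'$ does not transfer to $G_1$ when some cross-cut neighbors of that kept vertex were deleted. That case needs the very same witness machinery as the deleted-$u$ case: because the offending vertex has cumulative more than $d$ neighbors spread over a set of opposite-side cliques, some ordered prefix of those cliques realizes a high-vertex pattern for which ${\sf MarkCP}$ retained a representative $\hat u$ in the same source clique together with $\hat u$'s more-than-$d$ neighbors in those cliques, all of which survive into $G'$, contradicting that $F'$ is a $d$-cut. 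Replacing (c) with this witness statement and dropping the incorrect catch-all makes your plan a sound reorganization of the paper's proof.
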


\begin{proof}
The forward direction ($\Rightarrow$) is easy to see because $G'$ is a subgraph of $G_1$.

Let us give the proof of the backward direction ($\Leftarrow$).
Suppose that for a partition $I \uplus J = \{1,\ldots,r\}$, $(\bigcup\limits_{i \in I} C_i', \bigcup\limits_{j \in J} C_j')$ is a $d$-cut of $G'$.
We assume for the sake of contradiction that $\hat F = (\bigcup\limits_{i \in I} \hat C_i, \bigcup\limits_{j \in J} \hat C_j)$ is not a $d$-cut of $G_1$.
Then there exists $u \in \bigcup\limits_{i \in I} \hat C_i$ such that $u$ has $d+1$ neighbors in $\bigcup\limits_{j \in J} \hat C_j$.

Let us consider the case when $u$ is a marked vertex, implying that $u \in \bigcup\limits_{i \in I} C_i'$.
Then, there is $v \in N_{G_1}(u)$ such that $v$ is an unmarked neighbor and $v \notin V(G')$.
Then, there exists $t \in \{1,\ldots,r\}$ such that $v \in \hat C_t \setminus C_t'$.
Recall that at most $2d$ neighbors of $u$ were marked by the marking procedure.
Since there is a neighbor $v \in \hat C_t$ is unmarked, $u$ had overall at most $d$ neighbors in some other large cliques $\hat C_1,\ldots,\hat C_p$.
It means that $\sum\limits_{\ell = 1}^p (|N_{G_1}(u) \cap \hat C_{\ell}|) \leq d$ but $\sum\limits_{\ell = 1}^p (|N_{G_1}(u) \cap \hat C_{\ell}|) + |N_{G_1}(u) \cap \hat C_t| > d$.
In particular, $u$ is a high vertex.
But then more than $d$ marked neighbors of $u$ are present in $G'$.
As at most $d$ marked neighbors of $u$ are already present in $\bigcup\limits_{j \in J} \hat C_j$, the same set of marked neighbors of $u$ were also present in $\bigcup\limits_{j \in J} C_j'$.
But, it is also true that there is an unmarked neighbor $v \in \hat C_t$.
But then $u$ has other marked neighbors in $C_t'$ that are also present in $\bigcup\limits_{j \in J} C_j'$.
But then $u$ has a total of more than $d$ marked neighbors present in $\bigcup\limits_{j \in J} C_j'$.
This leads to a contradiction to the hypothesis that $F'$ is a $d$-cut of $G'$.

The other case is when $u$ is unmarked vertex of $\hat C_p$ for some $p \leq r$.
Let us consider these $d+1$ neighbors of $u$ that are distributed across $j \geq 2$ distinct cliques from $\{\hat C_i \mid i \in J\}$.
A reason $u$ is unmarked because there is some other marked vertex $\hat u \in \hat C_p$ such that $\hat u$ also has at least $d+1$ neighbors in the same set of $j \geq 2$ distinct cliques from $\{\hat C_i \mid i \in J\}$.
As $\hat u$ is marked in $G_1$, it must be that $\hat u \in G'$.
In fact in $\hat u \in \bigcup\limits_{i \in I} C_i'$ has at least $d+1$ neighbors in $\bigcup\limits_{j \in J} C_j'$.
Then it leads to a contradiction to the hypothesis that $F'$ is also not a $d$-cut of $G'$.
This completes the proof.
\end{proof}

The above lemma provides us a proper bijection from the set of all $d$-cuts of $G_1$ to all the $d$-cuts of $G'$.
Using this characterization (Lemma \ref{lemma:d-cut-preservation-clique-partition}) and Lemma \ref{lemma:clique-partition-d-cut-preservation}, we can prove our result of this section.
Kernelization algorithms exploit the procedure {\sf MarkCP} and the reduction rules above.

{\ThmCLP*}

\begin{proof}
We prove this for {\enumdcut}.
The proof arguments for {\enummaxdcut} and {\enummindcut} will be similar.
Let $(G, \CC, k)$ be an input instance to {\enumdcut} and $\{C_1,\ldots,C_k\}$ be the set of $k$ cliques satisfying $V(G) = C_1 \uplus \cdots \uplus C_k$.
If $|V(G)| \leq k^{d+2}$, then we output $(G, \CC, k)$ as the output instance.
So, we can assume without loss of generality that $|V(G)| > k^{d+2}$.
We invoke Lemma \ref{lemma:clique-partition-d-cut-preservation} to obtain $(G', \CC', k')$.
Since the algorithm by Lemma \ref{lemma:clique-partition-d-cut-preservation} runs in $k^{d+1}n^{\OO(1)}$-time, and we invoke this algorithm only when $n > k^{d+2}$, this reduction algorithm runs in polynomial-time.
This completes the description of the reduction algorithm and $|V(G')|$ is $\OO(k^{d+2})$.

The solution lifting algorithm works as follows.
Let $F'$ be a $d$-cut of $(G', \CC', k')$ such that $\CC'$ be the set of cliques for the clique partition of $G'$.
It follows from Observation \ref{obs:d-cut-property-clique-partition} that for any clique $C' \in \CC'$, $E(G'[C']) \cap F' = \emptyset$.
Let $I \uplus J = \{1,\ldots,k'\}$ be a partition such that $F' = E(\bigcup\limits_{i \in I} C_i', \bigcup\limits_{j \in J} C_j')$.
The solution lifting algorithm first computes $\hat F = E(\bigcup\limits_{i \in I} \hat C_i, \bigcup\limits_{j \in J} \hat C_j)$.
Due to Lemma \ref{lemma:d-cut-preservation-clique-partition}, it holds that $E(\bigcup\limits_{i \in I} \hat C_i, \bigcup\limits_{j \in J} \hat C_j)$ is a $d$-cut of $G_1$.
After that the solution lifting algorithm outputs $F = \hat F$ as the $d$-cut of $G$.
Again from Lemma \ref{lemma:clique-partition-step-1-preservation}, $F$ is a $d$-cut of $G$.
Clearly, both the reduction algorithm and the solution lifting algorithms run in polynomial-time.
Hence, it is a bijective enumeration kernel with $\OO(k^{d+2})$ vertices.
\end{proof}

\else


\section{Parameterization by the Neighborhood Diversity and Clique Partition Number}
\label{sec:nd-clique-partition}

In this section, we summarize the proofs of our other results, Theorems \ref{thm:nd-result-main-1} and \ref{thm:clique-partition-result}.
Due to lack of space, we provide only a short sketch for both proofs (details can be found in the appendix).

{\ThmNDComb*}

\begin{proof}[Proof sketch.]
Let $G$ be the input graph.
First, we invoke a polynomial-time algorithm by Lampis \cite{Lampis12} to compute a neighborhood decomposition $\UU = (X_1,\ldots,X_k)$ with the minimum module number~$k={\ndd}(G)$.
Now, if $G$ has at most $(3d+2)k$ vertices, then we output $H = G$ itself which gives a bijective enumeration kernelization.
Otherwise, our kernelization algorithm and solution-lifting algorithm work as follows.
It is well-known that every module $X_i$ is clique or an independent set.
If $X_i$ is a clique, then we call it a {\em clique module}, and if $X_i$ is an independent set, we call it an {\em independent module}.
We say that $X_i$ is a {\em nice} module if $X_i$ is an independent module, and $|N(X_i)| \leq d$.
Clearly, a nice module $X_i$ is is adjacent to at most $d$ other modules that together span at most $d$ vertices.

\noindent
{\bf Kernelization Algorithm.} For each of the three problems, {\enumdcut}, {\enummindcut}, and {\enummaxdcut}, our kernelization algorithm is the same.
If a module $X_i$ is independent and not adjacent to any other module, then mark an arbitrary vertex from~$X_i$.
For every other nice module $X_i \in \UU$, we mark $\min\{3d+2, |X_i|\}$ vertices.
If $X_i$ is a nice module and has some unmarked vertex (in $G$), then we pick $d$ arbitrary marked vertices $u_1,\ldots,u_d$ from $X_i$.
Subsequently, for every $i \in [d]$, we attach a pendant clique $C_i$ with $2d+i$ vertices into $u_i$.
For every other module $X_i$ that is not nice and that is adjacent to some other module, mark $\min\{2d+1, |X_i|\}$ vertices.
After this procedure is complete, we remove the unmarked vertices of $G$ to obtain the graph $H$, and output $H$.
It is not hard to observe that ${\ndd}(H) \leq (2d+1)k$ and $|V(H)| \leq (3d^2 + 3d + 2)k$ which is $\OO(d^2 {\ndd}(G))$.

\noindent
{\bf Solution-Lifting Algorithm:}
We describe the solution-lifting algorithm for {\enumdcut}.
The solution-lifting algorithms for {\enummindcut} and {\enummaxdcut} have similar flavors.
Consider a $d$-cut $(A', B')$ of $H$.
Observe that for every nice module $X_i$ of $G$ with at least $3d+3$ vertices, there are exactly $3d+2$ vertices that are in $H$.
Out of these $3d+2$ vertices, there are exactly $d$ vertices to which pendant cliques of size $2d+1,\ldots,3d$ are attached.
We consider the edge cut $F' = E_H(A', B')$, and initialize $A := A', B := B'$.
As $X_i$ is a nice module, $T = N_G(X_i)$ has at most $d$ vertices.
Due to item (\ref{it:common-neighborhood-mono}) of Lemma \ref{lem:neighborhood-break-up}, $T \subseteq A'$ or $T \subseteq B'$.
We assume without loss of generality that $T \subseteq A'$.
We call $T$ an {\em $r$-suitable} set with respect to $(A', B')$ if $E(T, X_i) \cap E_H(A', B') \neq \emptyset$, $X_i \cap B'$ has exactly $r$ vertices, and the sizes of the pendant cliques attached to the vertices of $X_i \cap B'$ are of size $\{2d+1,\ldots,2d+r\}$.
We look at every nice module $X_i$ that has more than $3d+2$ vertices in $G$ one by one.
If $N_G(X_i)$ is not $r$-suitable with respect to $(A', B')$, then we keep $X_i \cap B$ and $X_i \cap A$ unchanged.
If $N_G(X_i)$ is $r$-suitable with respect to $(A', B')$, then we first enumerate all possible sets $Z \in {{X_i}\choose{r}}$ such that $Z$ contains at least one unmarked vertex from $X_i$.
Subsequently, we replace the vertices of $X_i \cap B$ by $Z$ into $B$, and modify $A$ accordingly.
We repeat this for every nice module that has more than $3d+2$ vertices in $G$.
As there are at most $k$ nice modules, and this enumeration has at most ${{n}\choose{d}}$ branches for each of the nice modules, the depth of the recursion is at most $n$.
At every leaf, we output a $d$-cut $(A, B)$.
Observe that the computation at each branching node runs in polynomial time, and we output a $d$-cut in every leaf node.
Therefore, this solution-lifting algorithm runs in polynomial-delay.
For further details, we refer to the appendix.
\end{proof}

{\ThmCLP*}

\begin{proof}[Proof sketch.]
Since the clique partition number of $G$ is the complement of the chromatic number of the complement graph of $G$, for any $k \geq 3$, it is NP-complete to decide whether ${\clp}(G) = k$.
Therefore, we assume that a clique partition $\{C_1,\ldots,C_k\}$ of $G$ with the minimum number of cliques is given to us with the input graph.
We denote $(G, \CC, k)$ the input instance where $\CC$ is a clique-partition of $G$ into $k$ cliques.
Clearly, by our assumption ${\clp}(G) = k$.
We call a clique $C \in \CC$ {\em large} if $|C| \geq 2d+1$.
If a clique $C \in \CC$ is not a large clique, we call it a {\em small} clique.

\begin{reduction rule}
\label{rule:splitting-small-cliques-partition}
Let $C = \{u_1,\ldots,u_i\}$ be a small clique.
Then, replace $C$ by $i$ distinct cliques $\{u_1\},\ldots,\{u_i\}$.
\end{reduction rule}

Observe that the above reduction rule does not change the graph.
The collection of cliques in $\CC$ gets updated and $|\CC|$ increases to at most $2dk$.
We first apply Reduction Rule \ref{rule:splitting-small-cliques-partition} exhaustively.
We do not apply this above reduction rule anymore after that.
When the above reduction rule is not applicable, it holds that any small clique has just one vertex.
The next reduction rule is the following.

\begin{reduction rule}
\label{rule:clique-partition-merging-large-cliques}
If there are two cliques $C_i, C_j \in \CC$ such that there is $u \in C_i$ having more than $d$ neighbors in $C_j$ or there is $u \in C_j$ having more than $d$ neighbors in $C_i$, then make every vertex of $C_i$ adjacent to every vertex of $C_j$.
Replace $C_i$ and $C_j$ from $\CC$ by a new clique $C' = C_i \cup C_j$.
Reduce $k$ by 1.
\end{reduction rule}

If a vertex $u \in C_i$ has at most $d$ neighbors outside $C_i$, then we say that $u$ is a {\em low} vertex.
On the other hand if $u \in C_i$ has more than $d$ neighbors outside $C_i$, then we say that $u$ is a {\em high} vertex.
After we obtain the instance $(G_1, \CC_1, k_1)$, we invoke a marking procedure ${\sf MarkCP}(G_1, \CC_1, k_1)$ that works as follows.

\noindent
{\bf Procedure ${\sf MarkCP}(G_1, \CC_1, k_1)$:}
{\bf (i)} Suppose there is a clique $C \in \CC_1$ and $i$ distinct large cliques $C_1, C_2,\ldots,C_i \in \CC_1$ such that $C$ has a low vertex $u$ having neighbors only in each of the cliques $C_1,\ldots,C_i$.
	Formally, we assume that there is $u \in C$ such that $u$ is a low vertex, $N_G(u) \setminus C \subseteq \bigcup_{j=1}^i C_j$, and $\sum_{j=1}^i |N_G(u) \cap C_j| = \ell \leq d$.
	Then we mark $u$ and the vertices in $\bigcup_{j=1}^i (N_G(u) \cap C_j)$.
	Repeat this marking scheme for every $i \leq d$ cliques and every $\ell \leq d$.
	{\bf (ii)} If there is a clique $C \in \CC_1$ and $i$ distinct large cliques $C_1, C_2,\ldots,C_i \in \CC_1$ such that $C$ has a high vertex $u$ satisfying $\sum_{j = 1}^i |N_G(u) \cap C_j| > d$ but $\sum_{j = 1}^{i-1} |N_G(u) \cap C_j| \leq d$ then mark $u$ and all the vertices in $\bigcup_{j=1}^{i} (N_G(u) \cap C_j)$.
	Let $|\bigcup_{j=1}^{i} (N_G(u) \cap C_j)| = \ell$.
	Informally, if for a clique $C \in \CC_1$ and $i$ distinct large cliques $C_1,\ldots,C_i \in \CC_1$, it happens that $u \in C$ is a high vertex but has at most $d$ cumulative neighbors in $C_1,\ldots,C_{i-1}$ but has more than $d$ cumulative neighbors in $C_1,\ldots,C_i$, then mark $u$ and all its neighbors from $C_1,\ldots,C_i$.
	Repeat this process for every $i \leq d+1$ and observe that $\ell \leq 2d$.

After performing ${\sf MarkCP}(G_1, \CC_1, k_1)$, we apply the following reduction rule that removes some unnecessary unmarked vertices from large cliques.

\begin{reduction rule}
\label{rule:clique-partition-marking-vertices}
Let $C$ be a large clique in $\CC_1$ and $X \subseteq C$ be the unmarked vertices of $C$.
Remove $\min\{|X|, |C| - 2d - 1\}$ unmarked vertices from $C$.
\end{reduction rule}

\noindent
Let $(G', \CC', k')$ be obtained after applying the marking scheme ${\sf MarkCP}(G_1, \CC_1, k_1)$ followed by an exhaustive application of Reduction Rule \ref{rule:clique-partition-marking-vertices} on $(G_1, \CC_1, k_1)$.
We can prove that every reduction rule and the procedure ${\sf MarkCP}$ preserves a bijection between the $d$-cuts of the constructed graph and the $d$-cuts of $G$.
With this, we complete the proof of this result.
For more details on the omitted proofs, we refer to the appendix.
\end{proof}

\fi

\section{Conclusions and Future Research}
\label{sec:conclusion}

We have initiated the study of (polynomial-delay) enumeration kernelization for variants of {\dcut} problem. We have provided (polynomial-delay) enumeration kernels when parameterized by the size of vertex cover, the size of neighborhood diversity and clique partition number of the graph.
In addition, we have also explained why we are unlikely to obtain a polynomial-delay enumeration kernel when parameterized by vertex-integrity or treedepth.
The negative result of the Proposition \ref{prop:no-poly-kernel-treedepth} can also be extended when we consider a combined parameter -- treedepth, the number of edges in the cut and maximum degree of the graph.
Obtaining our positive results has required proving some properties that are more specific to $d$-cuts of a graph for any arbitrary $d \geq 2$ in addition to the properties specific to the parameter we consider (see Theorem 7 of \cite{GomesS21}).
We believe that a lot of follow-up research works can be done for {\enumdcut} and the other enumeration variants. In particular, we have realized that there are some challenges to obtain (polynomial-delay) enumeration kernels when we consider structurally smaller parameters such as feedback vertex set size or cluster vertex deletion set size.
An interesting future research direction would be to explore the above-mentioned other structural parameters, the status of which are open even for $d=1$, i.e. for the enumeration variant of the {\MC} problem.
Another interesting research direction would be to identify a natural parameterized problem whose decision version admits polynomial kernel but whose  enumeration version does not admit a polynomial-delay kernelization of polynomial size.
Obtaining such a negative result would likely need to rely on a new lower bound framework.



\begin{thebibliography}{10}

\bibitem{AravindKK17}
N.~R. Aravind, Subrahmanyam Kalyanasundaram, and Anjeneya~Swami Kare.
\newblock On structural parameterizations of the matching cut problem.
\newblock In Xiaofeng Gao, Hongwei Du, and Meng Han, editors, {\em
  Combinatorial Optimization and Applications - 11th International Conference,
  {COCOA} 2017, Shanghai, China, December 16-18, 2017, Proceedings, Part {II}},
  volume 10628 of {\em Lecture Notes in Computer Science}, pages 475--482.
  Springer, 2017.

\bibitem{AravindKK22}
N.~R. Aravind, Subrahmanyam Kalyanasundaram, and Anjeneya~Swami Kare.
\newblock Vertex partitioning problems on graphs with bounded tree width.
\newblock {\em Discret. Appl. Math.}, 319:254--270, 2022.

\bibitem{AravindS21}
N.~R. Aravind and Roopam Saxena.
\newblock An {FPT} algorithm for matching cut and d-cut.
\newblock In Paola Flocchini and Lucia Moura, editors, {\em Combinatorial
  Algorithms - 32nd International Workshop, {IWOCA} 2021, Ottawa, ON, Canada,
  July 5-7, 2021, Proceedings}, volume 12757 of {\em Lecture Notes in Computer
  Science}, pages 531--543. Springer, 2021.

\bibitem{BentertFNN19}
Matthias Bentert, Till Fluschnik, Andr{\'{e}} Nichterlein, and Rolf
  Niedermeier.
\newblock Parameterized aspects of triangle enumeration.
\newblock {\em J. Comput. Syst. Sci.}, 103:61--77, 2019.

\bibitem{BodlaenderJK14}
Hans~L. Bodlaender, Bart M.~P. Jansen, and Stefan Kratsch.
\newblock Kernelization lower bounds by cross-composition.
\newblock {\em {SIAM} J. Discret. Math.}, 28(1):277--305, 2014.

\bibitem{Bonsma09}
Paul~S. Bonsma.
\newblock The complexity of the matching-cut problem for planar graphs and
  other graph classes.
\newblock {\em J. Graph Theory}, 62(2):109--126, 2009.

\bibitem{ChenHLLP21}
Chi{-}Yeh Chen, Sun{-}Yuan Hsieh, Ho{\`{a}}ng{-}Oanh Le, Van~Bang Le, and
  Sheng{-}Lung Peng.
\newblock Matching cut in graphs with large minimum degree.
\newblock {\em Algorithmica}, 83(5):1238--1255, 2021.

\bibitem{Chvatal84}
Vasek Chv{\'{a}}tal.
\newblock Recognizing decomposable graphs.
\newblock {\em J. Graph Theory}, 8(1):51--53, 1984.

\bibitem{CreignouKPSV19}
Nadia Creignou, Markus Kr{\"{o}}ll, Reinhard Pichler, Sebastian Skritek, and
  Heribert Vollmer.
\newblock A complexity theory for hard enumeration problems.
\newblock {\em Discret. Appl. Math.}, 268:191--209, 2019.

\bibitem{CreignouKMMOV19}
Nadia Creignou, Ra{\"{\i}}da Ktari, Arne Meier, Julian{-}Steffen M{\"{u}}ller,
  Fr{\'{e}}d{\'{e}}ric Olive, and Heribert Vollmer.
\newblock Parameterised enumeration for modification problems.
\newblock {\em Algorithms}, 12(9):189, 2019.

\bibitem{CreignouMMSV17}
Nadia Creignou, Arne Meier, Julian{-}Steffen M{\"{u}}ller, Johannes Schmidt,
  and Heribert Vollmer.
\newblock Paradigms for parameterized enumeration.
\newblock {\em Theory Comput. Syst.}, 60(4):737--758, 2017.

\bibitem{CyganFKLMPPS15}
Marek Cygan, Fedor~V. Fomin, Lukasz Kowalik, Daniel Lokshtanov, D{\'{a}}niel
  Marx, Marcin Pilipczuk, Michal Pilipczuk, and Saket Saurabh.
\newblock {\em Parameterized Algorithms}.
\newblock Springer, 2015.

\bibitem{Damaschke06}
Peter Damaschke.
\newblock Parameterized enumeration, transversals, and imperfect phylogeny
  reconstruction.
\newblock {\em Theor. Comput. Sci.}, 351(3):337--350, 2006.

\bibitem{GomesS21}
Guilherme de~C.~M.~Gomes and Ignasi Sau.
\newblock Finding cuts of bounded degree: Complexity, {FPT} and exact
  algorithms, and kernelization.
\newblock {\em Algorithmica}, 83(6):1677--1706, 2021.

\bibitem{Diestel-Book}
Reinhard Diestel.
\newblock {\em Graph Theory, 4th Edition}, volume 173 of {\em Graduate texts in
  mathematics}.
\newblock Springer, 2012.

\bibitem{DowneyF13}
Rodney~G. Downey and Michael~R. Fellows.
\newblock {\em Fundamentals of Parameterized Complexity}.
\newblock Texts in Computer Science. Springer, 2013.

\bibitem{FominLSZ19}
Fedor~V Fomin, Daniel Lokshtanov, Saket Saurabh, and Meirav Zehavi.
\newblock {\em {Kernelization: Theory of Parameterized Preprocessing}}.
\newblock Cambridge University Press, 2019.

\bibitem{GolovachKKL22}
Petr~A. Golovach, Christian Komusiewicz, Dieter Kratsch, and Van~Bang Le.
\newblock Refined notions of parameterized enumeration kernels with
  applications to matching cut enumeration.
\newblock {\em J. Comput. Syst. Sci.}, 123:76--102, 2022.

\bibitem{JansenS23}
Bart M.~P. Jansen and Bart van~der Steenhoven.
\newblock Kernelization for counting problems on graphs: Preserving the number
  of minimum solutions.
\newblock In Neeldhara Misra and Magnus Wahlstr{\"{o}}m, editors, {\em 18th
  International Symposium on Parameterized and Exact Computation, {IPEC} 2023,
  September 6-8, 2023, Amsterdam, The Netherlands}, volume 285 of {\em LIPIcs},
  pages 27:1--27:15. Schloss Dagstuhl - Leibniz-Zentrum f{\"{u}}r Informatik,
  2023.

\bibitem{KomusiewiczKL20}
Christian Komusiewicz, Dieter Kratsch, and Van~Bang Le.
\newblock Matching cut: Kernelization, single-exponential time fpt, and exact
  exponential algorithms.
\newblock {\em Discret. Appl. Math.}, 283:44--58, 2020.

\bibitem{KMS25}
Christian Komusiewicz, Diptapriyo Majumdar, and Frank Sommer.
\newblock Polynomial-size enumeration kernelizations for long path enumeration.
\newblock In {\em Graph-Theoretic Concepts in Computer Science, 27th
  International Workshop, {WG} 2025, Otzenhausen, Germany, 2025, Proceedings},
  2025.
\newblock To appear, full version available at
  https://doi.org/10.48550/arXiv.2502.21164.

\bibitem{KratschL16}
Dieter Kratsch and Van~Bang Le.
\newblock Algorithms solving the matching cut problem.
\newblock {\em Theor. Comput. Sci.}, 609:328--335, 2016.

\bibitem{Lampis12}
Michael Lampis.
\newblock Algorithmic meta-theorems for restrictions of treewidth.
\newblock {\em Algorithmica}, 64(1):19--37, 2012.

\bibitem{LeL19}
Ho{\`{a}}ng{-}Oanh Le and Van~Bang Le.
\newblock A complexity dichotomy for matching cut in (bipartite) graphs of
  fixed diameter.
\newblock {\em Theor. Comput. Sci.}, 770:69--78, 2019.

\bibitem{LeT22}
Van~Bang Le and Jan~Arne Telle.
\newblock The perfect matching cut problem revisited.
\newblock {\em Theor. Comput. Sci.}, 931:117--130, 2022.

\bibitem{LokshtanovM0Z24}
Daniel Lokshtanov, Pranabendu Misra, Saket Saurabh, and Meirav Zehavi.
\newblock Kernelization of counting problems.
\newblock In Venkatesan Guruswami, editor, {\em 15th Innovations in Theoretical
  Computer Science Conference, {ITCS} 2024, January 30 to February 2, 2024,
  Berkeley, CA, {USA}}, volume 287 of {\em LIPIcs}, pages 77:1--77:23. Schloss
  Dagstuhl - Leibniz-Zentrum f{\"{u}}r Informatik, 2024.

\bibitem{LokshtanovPRS17}
Daniel Lokshtanov, Fahad Panolan, M.~S. Ramanujan, and Saket Saurabh.
\newblock Lossy kernelization.
\newblock In Hamed Hatami, Pierre McKenzie, and Valerie King, editors, {\em
  Proceedings of the 49th Annual {ACM} {SIGACT} Symposium on Theory of
  Computing, {STOC} 2017, Montreal, QC, Canada, June 19-23, 2017}, pages
  224--237. {ACM}, 2017.

\bibitem{LuckePR22}
Felicia Lucke, Dani{\"{e}}l Paulusma, and Bernard Ries.
\newblock Finding matching cuts in {H}-free graphs.
\newblock In Sang~Won Bae and Heejin Park, editors, {\em 33rd International
  Symposium on Algorithms and Computation, {ISAAC} 2022, December 19-21, 2022,
  Seoul, Korea}, volume 248 of {\em LIPIcs}, pages 22:1--22:16. Schloss
  Dagstuhl - Leibniz-Zentrum f{\"{u}}r Informatik, 2022.

\bibitem{PatrignaniP01}
Maurizio Patrignani and Maurizio Pizzonia.
\newblock The complexity of the matching-cut problem.
\newblock In Andreas Brandst{\"{a}}dt and Van~Bang Le, editors, {\em
  Graph-Theoretic Concepts in Computer Science, 27th International Workshop,
  {WG} 2001, Boltenhagen, Germany, June 14-16, 2001, Proceedings}, volume 2204
  of {\em Lecture Notes in Computer Science}, pages 284--295. Springer, 2001.

\bibitem{Savage82}
Carla~D. Savage.
\newblock Depth-first search and the vertex cover problem.
\newblock {\em Inf. Process. Lett.}, 14(5):233--237, 1982.

\bibitem{Thilikos21}
Dimitrios~M. Thilikos.
\newblock Compactors for parameterized counting problems.
\newblock {\em Comput. Sci. Rev.}, 39:100344, 2021.

\bibitem{Thurley07}
Marc Thurley.
\newblock Kernelizations for parameterized counting problems.
\newblock In Jin{-}yi Cai, S.~Barry Cooper, and Hong Zhu, editors, {\em Theory
  and Applications of Models of Computation, 4th International Conference,
  {TAMC} 2007, Shanghai, China, May 22-25, 2007, Proceedings}, volume 4484 of
  {\em Lecture Notes in Computer Science}, pages 703--714. Springer, 2007.

\bibitem{Wasa16arXiv}
Kunihiro Wasa.
\newblock Enumeration of enumeration algorithms.
\newblock {\em CoRR}, abs/1605.05102, 2016.

\end{thebibliography}

\end{document}